\def\showauthornotes{0}
\def\showtableofcontents{1}
\def\showkeys{0}
\def\showdraftbox{0}
\def\showcolorlinks{1}
\def\usemicrotype{1}
\def\showfixme{0}
\def\writemode{1}
\def\arxivmode{1}
\newtheorem{theorem}{Theorem}[section]
\newtheorem*{theorem*}{Theorem}
\newtheorem{proposition}[theorem]{Proposition}
\newtheorem*{proposition*}{Proposition}
\newtheorem{lemma}[theorem]{Lemma}
\newtheorem*{lemma*}{Lemma}
\newtheorem{corollary}[theorem]{Corollary}
\newtheorem*{conjecture*}{Conjecture}
\newtheorem*{fact*}{Fact}
\newtheorem*{hypothesis*}{Hypothesis}
\theoremstyle{definition}
\newtheorem{definition}[theorem]{Definition}
\theoremstyle{remark}
\newtheorem*{claim*}{Claim}
\newtheorem{remark}[theorem]{Remark}
\newtheorem*{remark*}{Remark}
\newtheorem*{observation*}{Observation}
\let\mathbb\varmathbb
\newcommand{\savehyperref}[2]{\texorpdfstring{\hyperref[#1]{#2}}{#2}}
\newcommand{\Sref}[1]{\hyperref[#1]{\S\ref*{#1}}}
\newcommand{\Authornote}[2]{{\sffamily\small\color{red}{[#1: #2]}}}
\newcommand{\Authornotecolored}[3]{{\sffamily\small\color{#1}{[#2: #3]}}}
\newcommand{\Authorcomment}[2]{{\sffamily\small\color{gray}{[#1: #2]}}}
\newcommand{\Authorstartcomment}[1]{\sffamily\small\color{gray}[#1: }
\newcommand{\Authorfnote}[2]{\footnote{\color{red}{#1: #2}}}
\newcommand{\Authorfixme}[1]{\Authornote{#1}{\textbf{??}}}
\newcommand{\Authormarginmark}[1]{\marginpar{\textcolor{red}{\fbox{\Large #1:!}}}}
\newcommand{\Authornote}[2]{}
\newcommand{\Authornotecolored}[3]{}
\newcommand{\Authorcomment}[2]{}
\newcommand{\Authorstartcomment}[1]{}
\newcommand{\Authorfnote}[2]{}
\newcommand{\Authorfixme}[1]{}
\newcommand{\Authormarginmark}[1]{}
\newcommand{\Dnote}{\Authornote{D}}
\newcommand{\TMnote}{\Authornotecolored{blue}{T}}
\newcommand{\Dcomment}{\Authorcomment{D}}
\definecolor{forestgreen(traditional)}{rgb}{0.0, 0.27, 0.13}
\definecolor{brightgreen}{rgb}{0.0, 0.87, 0.13}
\newcommand{\Jnote}{\Authornotecolored{brightgreen}{J}}
\newcommand{\paren}[1]{(#1)}
\newcommand{\Paren}[1]{\left(#1\right)}
\newcommand{\Brac}[1]{\left[#1\right]}
\newcommand{\abs}[1]{\lvert#1\rvert}
\newcommand{\card}[1]{\lvert#1\rvert}
\newcommand{\Set}[1]{\left\{#1\right\}}
\newcommand{\norm}[1]{\lVert#1\rVert}
\newcommand{\Norm}[1]{\left\lVert#1\right\rVert}
\newcommand{\Bignorm}[1]{\Big\lVert#1\Big\rVert}
\newcommand{\iprod}[1]{\langle#1\rangle}
\newcommand{\Esymb}{\mathbb{E}}
\newcommand{\Psymb}{\mathbb{P}}
\DeclareMathOperator*{\E}{\Esymb}
\DeclareMathOperator*{\ProbOp}{\Psymb}
\renewcommand{\Pr}{\ProbOp}
\newcommand{\textparen}[1]{\text{(#1)}}
\newcommand{\because}[1]{\textparen{because #1}}
\renewcommand{\because}[1]{\textparen{because #1}}
\newcommand{\super}[2]{#1^{\paren{#2}}}
\newcommand{\vbig}{\vphantom{\bigoplus}}
\newcommand{\defeq}{\stackrel{\mathrm{def}}=}
\newcommand{\seteq}{\mathrel{\mathop:}=}
\newcommand{\from}{\colon}
\newcommand{\Mid}{\;\middle\vert\;}
\newcommand{\mper}{\,.}
\newcommand{\mcom}{\,,}
\newcommand\bdot\bullet
\DeclareMathOperator{\poly}{poly}
\DeclareMathOperator{\supp}{supp}
\DeclareMathOperator{\dist}{dist}
\newcommand{\N}{\mathbb N}
\newcommand{\R}{\mathbb R}
\newcommand{\cA}{\mathcal A}
\newcommand{\cB}{\mathcal B}
\newcommand{\cC}{\mathcal C}
\newcommand{\cH}{\mathcal H}
\newcommand{\cM}{\mathcal M}
\newcommand{\cN}{\mathcal N}
\newcommand{\cR}{\mathcal R}
\renewcommand{\le}{\leqslant}
\renewcommand{\ge}{\geqslant}
\newcommand{\draftbox}{\begin{center}
  \fbox{%
    \begin{minipage}{2in}%
      \begin{center}%
          \Large\textsc{Working Draft}\\%
        Please do not distribute%
      \end{center}%
    \end{minipage}%
  }%
\end{center}
\vspace{0.2cm}}
\newcommand{\draftbox}{}
\let\epsilon=\varepsilon
\numberwithin{equation}{section}
\newcommand\MYcurrentlabel{xxx}
\newcommand{\MYstore}[2]{%
  \global\expandafter \def \csname MYMEMORY #1 \endcsname{#2}%
}
\newcommand{\MYload}[1]{%
  \csname MYMEMORY #1 \endcsname%
}
\newcommand{\MYnewlabel}[1]{%
  \renewcommand\MYcurrentlabel{#1}%
  \MYoldlabel{#1}%
}
\newcommand{\MYdummylabel}[1]{}
\newcommand{\torestate}[1]{%
  \let\MYoldlabel\label%
  \let\label\MYnewlabel%
  #1%
  \MYstore{\MYcurrentlabel}{#1}%
  \let\label\MYoldlabel%
}
\newcommand{\restatetheorem}[1]{%
  \let\MYoldlabel\label
  \let\label\MYdummylabel
  \begin{theorem*}[Restatement of \prettyref{#1}]
    \MYload{#1}
  \end{theorem*}
  \let\label\MYoldlabel
}
\newcommand{\restatelemma}[1]{%
  \let\MYoldlabel\label
  \let\label\MYdummylabel
  \begin{lemma*}[Restatement of \prettyref{#1}]
    \MYload{#1}
  \end{lemma*}
  \let\label\MYoldlabel
}
\newcommand{\restateprop}[1]{%
  \let\MYoldlabel\label
  \let\label\MYdummylabel
  \begin{proposition*}[Restatement of \prettyref{#1}]
    \MYload{#1}
  \end{proposition*}
  \let\label\MYoldlabel
}
\newcommand{\restatefact}[1]{%
  \let\MYoldlabel\label
  \let\label\MYdummylabel
  \begin{fact*}[Restatement of \prettyref{#1}]
    \MYload{#1}
  \end{fact*}
  \let\label\MYoldlabel
}
\newcommand{\restate}[1]{%
  \let\MYoldlabel\label
  \let\label\MYdummylabel
  \MYload{#1}
  \let\label\MYoldlabel
}
\newcommand{\addreferencesection}{
  \phantomsection
  \addcontentsline{toc}{section}{References}
}
\newcommand{\e}{\epsilon}
\newcommand{\eps}{\epsilon}
\let\origparagraph\paragraph
\renewcommand{\paragraph}[1]{\origparagraph{#1.}}
\DeclareMathOperator{\Id}{\mathrm{Id}}
\DeclareUrlCommand\email{}
\newcommand{\whp}{\text{w.h.p.}\xspace}
\newcommand{\inner}[1]{\langle #1 \rangle}
\DeclareMathOperator{\tE}{\tilde{\mathbb E}}
\DeclareMathOperator{\tO}{\tilde{O}}
\DeclareMathOperator*{\pE}{\tilde{\mathbb E}}
\newcommand{\ot}{\otimes}
\let\pref=\prettyref
\newcommand{\trans}{\mkern-1mu\mathsf{T}}
\newcommand*{\transpose}[1]{{#1}\mathclose{\vphantom{#1}}^{\trans}}
\newcommand*{\dyad}[1]{#1#1\mathclose{\vphantom{#1}}^{\trans}}
\DeclareSymbolFont{mysymbols}{OMS}{cmsy}{b}{n}
\DeclareMathSymbol{\proves}{\mathrel}{mysymbols}{96}
\let\vdash\proves
\DeclareMathSymbol{\bettertop}{\mathrel}{mysymbols}{62}
\title{Polynomial-time tensor decompositions with sum-of-squares}
\author{%
Tengyu Ma\thanks{Princeton University. \protect\email{tengyu@cs.princeton.edu}.
Supported by Simons Award in Theoretical Computer Science, IBM PhD Fellowship, Dodds Fellowship and Siebel Scholarship.}
\and
Jonathan Shi\thanks{Cornell University, \protect\email{jshi@cs.cornell.edu}.
Supported by David Steurer's NSF CAREER award.}
\and
David Steurer\thanks{Cornell University, \protect\email{dsteurer@cs.cornell.edu}.
Supported by a Microsoft Research Fellowship, a Alfred P. Sloan Fellowship, an NSF CAREER award, and the Simons Collaboration for Algorithms and Geometry.}}
\begin{document}

\maketitle
\draftbox
\thispagestyle{empty}

\begin{abstract}
We give new algorithms based on the sum-of-squares method for tensor decomposition.
Our results improve the best known running times from quasi-polynomial to polynomial for several problems, including decomposing random overcomplete 3-tensors and learning overcomplete dictionaries with constant relative sparsity.
We also give the first robust analysis for decomposing overcomplete $4$-tensors in the smoothed analysis model.

A key ingredient of our analysis is to establish small spectral gaps in moment matrices derived from solutions to sum-of-squares relaxations.
To enable this analysis we augment sum-of-squares relaxations with spectral analogs of  maximum entropy constraints.
\end{abstract}

\clearpage

\ifnum\showtableofcontents=1
{
\tableofcontents
\thispagestyle{empty}
 }
\fi

\clearpage

\setcounter{page}{1}

\section{Introduction}

Tensors are arrays of (real) numbers with multiple indices---generalizing matrices (two indices) and vectors (one index) in a natural way.
They arise in many different contexts, e.g., moments of multivariate distributions, higher-order derivatives of multivariable functions, and coefficients of multivariate polynomials.
An important ongoing research effort aims to extend algorithmic techniques for vectors and matrices to more general tensors.
A key challenge is that many tractable matrix computations (like rank and spectral norm) become NP-hard in the tensor setting (even for just three indices) \cite{H90,HL13}.
However, recent work gives evidence that it is possible to avoid this computational intractability and develop provably efficient algorithms, especially for low-rank tensor decompositions,
by making suitable assumptions about the input and allowing for approximations
\cite{DBLP:conf/colt/AnandkumarGJ15, %
  DBLP:journals/corr/AnandkumarGJ14b, %
  DBLP:conf/approx/GeM15,
  DBLP:conf/colt/HopkinsSS15,
  HopkinsSSS16}.
These algorithms lead to the best known provable guarantees for a wide range of unsupervised learning problems
\cite{DBLP:journals/jmlr/AnandkumarGHKT14, %
  DBLP:conf/stoc/BhaskaraCMV14, %
  DBLP:conf/stoc/GoyalVX14, %
  DBLP:journals/jmlr/AnandkumarGHK14%
},
including learning mixtures of Gaussians \cite{DBLP:conf/stoc/GeHK15},
Latent Dirichlet topic modeling \cite{DBLP:journals/algorithmica/AnandkumarFHKL15},
and dictionary learning~\cite{DBLP:conf/stoc/BarakKS15}.
Low-rank tensor decompositions are useful for these learning problems because they are often unique up to permuting the factors---in contrast, low-rank matrix factorizations are unique only up to unitary transformation.
In fact, as far as we are aware, in all natural situations where finding low-rank tensor decompositions is tractable, the decompositions are also unique.

We consider the following (symmetric) version of the tensor decomposition problem:
Let $a_1,\ldots,a_n\in \R^d$ be $d$-dimensional unit vectors.
We are given (approximate) access to the first $k$ moments $\cM_1,\ldots,\cM_k$ of the uniform distribution over $a_1,\ldots,a_n$, that is,
\begin{equation}
  \label{eq:moments}
  \cM_t = \tfrac 1n \sum_{i=1}^n a_i^{\otimes t}\quad \text{for $t\in \{1,\ldots,k\}$}\mper
\end{equation}
The goal is to approximately recover the vectors $a_1,\ldots,a_n$.
What conditions on the vectors $a_1,\ldots,a_n$ and the number of moments $k$ allow us to efficiently and robustly solve this problem?

A classical algorithm based on (simultaneous) matrix diagonalization \cite[attributed to Jennrich]{harshman1970foundations,zbMATH00446923} shows that whenever the vectors $a_1,\ldots,a_n$ are linearly independent, $k=3$ moments suffice to recover the vectors in polynomial time.
(This algorithm is also robust against polynomially small errors in the input moment tensors \cite{DBLP:conf/alt/AnandkumarGHKT15, DBLP:conf/stoc/GoyalVX14, DBLP:conf/stoc/BhaskaraCMV14}.)
Therefore an important remaining algorithmic challenge for tensor decomposition is the \emph{overcomplete} case, when the number of vectors (significantly) exceeds their dimension.
Several recent works studied this case with different assumptions on the vectors and the number of moments.
In this work, we give a unified algorithmic framework for overcomplete tensor decomposition that achieves---and in many cases surpasses---the previous best guarantees for polynomial-time algorithms.

In particular, some decompositions that previously required quasi-polynomial time to find are reduced to polynomial time in our framework, including the case of general tensors with order logarithmically large in its overcompleteness $n/d$ \cite{DBLP:conf/stoc/BarakKS15} and random order-3 tensors with rank $n \le d^{3/2}/\log^{O(1)}(d)$ \cite{DBLP:conf/approx/GeM15}.
Iterative methods may also achieve fast local convergence guarantees for incoherent order-3 tensors with rank $o(d^{3/2})$, which become global convergence guarantees under no more than constant overcompleteness \cite{DBLP:journals/jmlr/AnandkumarGHKT14}.
In the smoothed analysis model, where each vector of the desired decomposition is assumed to have been randomly perturbed by an inverse polynomial amount, polynomial-time decomposition was achieved for order-5 tensors of rank up to $d^2/2$ \cite{DBLP:conf/stoc/BhaskaraCMV14}.
Our framework extends this result to order-4 tensors, for which the corresponding analysis was previously unknown for any superconstant overcompleteness.

The starting point of our work is a new analysis of the aforementioned matrix diagonalization algorithm that works for the case when $a_1,\ldots,a_n$ are linearly independent.
A key ingredient of our analysis is a powerful and by now standard concentration bound for Gaussian matrix series \cite{zbMATH05946839,DBLP:journals/focm/Tropp12}.
An important feature of our analysis is that it is captured by the sum-of-squares (SoS) proof system in a robust way.
This fact allows us to use Jennrich's algorithm as a rounding procedure for sum-of-squares relaxations of tensor decomposition, which is the key idea behind improving previous quasi-polynomial time algorithms based on these relaxations \cite{DBLP:conf/stoc/BarakKS15,DBLP:conf/approx/GeM15}.

The main advantage that sum-of-squares relaxations afford for tensor decomposition is that they allow us to efficiently hallucinate faithful \emph{higher-degree moments} for a distribution given only its lower-degree moments.
We can now run classical tensor decomposition algorithms like Jennrich's on these hallucinated higher-degree moments (akin to \emph{rounding}).
The goal is to show that those algorithms work as well as they would on the true higher moments.
What is challenging about it is that the analysis of Jennrich's algorithm relies on small spectral gaps that are difficult to reason about in the sum-of-squares setting.
(Previous sum-of-squares based methods for tensor decomposition also followed this outline but used simpler, more robust rounding algorithms which required quasi-polynomial time.)

To this end, we view solutions to sum-of-squares relaxations as \emph{pseudo-distributions}, which generalize classical probability distributions in a way that takes computational efficiency into account.\footnote{In particular, the set of constant-degree moments of $n$-variate pseudo-distributions admits an $n^{O(1)}$-time separation oracle based on computing eigenvectors.}
More concretely, pseudo-distributions are indistinguishable from actual distributions with respect to tests captured by a restricted system of proofs, called \emph{sum-of-squares proofs}.

An interesting feature of how we use pseudo-distributions is that our relaxations search for  pseudo-distributions of large \emph{entropy} (via an appropriate surrogate).
This objective is surprising, because when we consider convex relaxations of NP-hard search problems, the intended solutions typically correspond to atomic distributions which have entropy $0$.
Here, high entropy in the pseudo-distribution allows us to ensure that rounding results in a useful solution.
This appears to be related to the way in which many randomized rounding procedures use maximum-entropy distributions~\cite{gharan2014new}, but differs in that the aforementioned rounding procedures focus on the entropy of the rounding process rather than the entropy (surrogate) of the solution to the convex relaxation.
A measure of ``entropy'' has also been directly ascribed to pseudo-distributions previously \cite{Lee:2015:LBS:2746539.2746599}, and the principle of maximum entropy has been applied to pseudo-distributions as well \cite{BHKKMP16}, but these have previously occurred separately, and our application is the first to encode a surrogate notion of entropy directly into the sum-of-squares proof system.

\medskip

Our work also takes inspiration from a recent work that uses sum-of-squares techniques to design fast spectral algorithms for a range of problems including tensor decomposition \cite{HopkinsSSS16}.
Their algorithm also proceeds by constructing surrogates for higher moments and applying a classical tensor decomposition algorithm on these surrogates.
The difference is that the surrogates in \cite{HopkinsSSS16} are explicitly constructed as low-degree polynomial of the input tensor, whereas our surrogates are computed by sum-of-squares relaxations.
The explicit surrogates of \cite{HopkinsSSS16} allow for a direct (but involved) analysis through concentration bounds for matrix polynomials.
In our case, a direct analysis is not possible because we have very little control over the surrogates computed by sum-of-squares relaxations.
Therefore, the challenge for us is to understand to what extent classical tensor decomposition algorithms are compatible with the sum-of-squares proof system.
Our analysis ends up being less technically involved compared to \cite{HopkinsSSS16} (using the language of pseudo-distributions and sum-of-squares proofs).

\subsection{Results for tensor decomposition}
\label{sec:results}

Let $\{a_1,\ldots,a_n\}\subseteq\R^d$ be a set of unit vectors.
We study the task of approximately recovering this set of vectors given (noisy) access to its first $k$ moments \pref{eq:moments}.
We organize this overview of our results based on different kinds of assumptions imposed on the set $\{a_1,\ldots,a_n\}$ and the order of tensor/moments that we have access to.
All of our algorithms are randomized and may fail with some small probability over their internal randomness, say probability at most $0.01$.
(Standard arguments allow us to amplify this probability at the cost of a small increase in running time.)

\paragraph{Orthogonal vectors}

This scenario often captures the case of general linearly independent vectors because knowledge of the second moments of $a_1,\ldots,a_n$ allows us to orthonormalize the vectors (this process is sometimes called ``whitening'').
Many efficient algorithms are known in this case.
Our contribution here is in improving the error tolerance.
For a symmetric $3$-tensor $E\in(\R^{d})^{\otimes 3}$, we use $\norm{E}_{\{1\},\{2,3\}}$ to denote the spectral norm of $E$ as a $d$-by-$d^2$ matrix (using the first mode of $E$ to index rows and the last two modes of $E$ to index the columns).
This norm is at most $\sqrt d$ times the injective norm $\norm{E}_{\{1\},\{2\},\{3\}}$ (the maximum of $\langle  E,x\otimes y \otimes z \rangle$ over all unit vectors $x,y,z\in\R^d$).
The previous best error tolerance for this problem required the error tensor $E=T-\sum_{i=1}^n a_i^{\otimes 3}$ to have injective norm $\norm{E}_{\{1\},\{2\},\{3\}}\ll 1/d$.
Our algorithm requires only $\norm{E}_{\{1\},\{2,3\}}\ll 1$, which is satisfied in particular when $\norm{E}_{\{1\},\{2\},\{3\}}\ll 1/\sqrt d$.

\begin{theorem}
\torestate
{\label{thm:orthogonal-vectors}
  There exists a polynomial-time algorithm that given a symmetric $3$-tensor $T\in (\R^d)^{\otimes 3}$ outputs a set of vectors $\{a_1',\ldots,a_{n'}'\}\subseteq \R^d$ such that for every orthonormal set $\{a_1,\ldots,a_n\}\subseteq \R^d$, the Hausdorff distance\footnote{The Hausdorff distance $\dist_H(X,Y)$ between two finite sets $X$ and $Y$ measures the length of the largest gap between the two sets. Formally, $\dist_H(X,Y)$ is the maximum of $\max_{x\in X} \min_{y\in Y} \norm{x-y}$ and $\max_{y\in Y} \min_{x\in X} \norm{x-y}$.} between the two sets is at most
  \begin{equation}
    \dist_H\Paren{\Set{a_1,\ldots,a_n},\Set{a'_1,\ldots,a'_{n'}}}^2
    \le O(1)\cdot \Norm{T - \sum\nolimits_{i=1}^n a_i^{\otimes 3}}_{\{1\},\{2,3\}}\mper \label{eqn:48}
  \end{equation}}
\end{theorem}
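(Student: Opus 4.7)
The plan is to analyze the classical Jennrich random-contraction algorithm with careful perturbation bounds. The algorithm samples a standard Gaussian $g \sim N(0, I_d)$, forms the matrix $M := T(g,\cdot,\cdot) \in \R^{d\times d}$, computes its eigendecomposition, and returns the eigenvectors with sufficiently large eigenvalues; to cover every $a_i$, this is repeated for $\poly(n)$ independent draws of $g$, and the combined list is pruned by keeping only vectors $v$ with $\iprod{T,v^{\otimes 3}} \geq \Omega(1)$.

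Write $E := T - \sum_i a_i^{\otimes 3}$ and $\eta := \norm{E}_{\{1\},\{2,3\}}$, so that
\[
M \;=\; \underbrace{\tsum_i \iprod{a_i, g}\, a_i a_i^\trans}_{M_0} \;+\; \underbrace{E(g,\cdot,\cdot)}_{E_g}\mper
\]
The first step is to control $\norm{E_g}_{op}$. Expanding $E_g = \sum_{j=1}^d g_j E^{(j)}$, where $E^{(j)} \in \R^{d\times d}$ is the $j$-th slice of $E$ along its first mode, the noncommutative Khintchine / matrix Gaussian-series inequality of Tropp gives
\[
\E\,\norm{E_g}_{op} \;\leq\; O(\sqrt{\log d})\cdot \max\!\Bigl(\bignorm{\tsum_j E^{(j)}(E^{(j)})^{\trans}}_{op}^{1/2},\ \bignorm{\tsum_j (E^{(j)})^{\trans} E^{(j)}}_{op}^{1/2}\Bigr)\mper
\]
A direct unfolding identifies the first argument of the max with $\norm{E}_{\{2\},\{1,3\}}^2$ and the second with $\norm{E}_{\{3\},\{1,2\}}^2$; since both $T$ and $\sum_i a_i^{\otimes 3}$ are symmetric, so is $E$, and therefore both quantities equal $\eta^2$. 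Gaussian concentration of the Lipschitz spectral norm then upgrades this to $\norm{E_g}_{op} = \tO(\eta)$ with high probability.

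The second step is to track the eigenvectors under this perturbation. In the orthonormal basis $\{a_1,\ldots,a_n\}$ extended to $\R^d$, the matrix $M_0$ is diagonal with eigenvalues $\iprod{a_i,g}$ (i.i.d.\ $N(0,1)$) along the $a_i$ directions and zeros on the orthogonal complement. Fixing any $i$, the density bound for $N(0,1)$ together with a union bound over the other eigenvalues (including the zero eigenspace) shows that $\iprod{a_i,g}$ is separated from every other eigenvalue of $M_0$ by at least $\gamma$ with probability $\Omega(1-n\gamma)$. Conditional on this event, the Davis-Kahan $\sin\theta$ theorem applied to $M = M_0 + E_g$ bounds the angle between the corresponding eigenvector of $M$ and $a_i$ by $O(\norm{E_g}_{op}/\gamma) = \tO(\eta/\gamma)$. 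Choosing $\gamma \asymp \sqrt\eta$ yields angle $\tO(\sqrt\eta)$ and thus squared distance $\tO(\eta)$, matching \pref{eqn:48} (the polylogarithmic factor from the matrix Gaussian-series bound is absorbed into the $O(1)$ constant). Repeating over $\poly(n)$ independent draws of $g$ and union-bounding guarantees that every $a_i$ is $\sqrt\eta$-isolated in at least one draw and hence has a recovered vector nearby; the verification step $\iprod{T,v^{\otimes 3}} \geq \Omega(1)$ prunes spurious eigenvectors and controls the other direction of the Hausdorff distance.

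The main obstacle is handling clusters of close eigenvalues of $M_0$: whenever $|\iprod{a_i,g} - \iprod{a_j,g}|$ drops below $\norm{E_g}_{op}$, Davis-Kahan cannot isolate the two eigenvectors, and the recovered vector may rotate freely within $\sspan\{a_i,a_j\}$. The theorem's squared-distance scaling $\dist_H^2 \leq O(\eta)$ is precisely the slack needed to set the gap threshold at $\gamma = \sqrt\eta$ rather than $\Omega(1)$: this keeps the per-$g$ isolation probability bounded away from zero (in the relevant regime where $\eta$ is small compared to $1/n^2$) so that $\poly(n)$ random contractions suffice to cover all $n$ vectors, while still yielding angular error $\tO(\sqrt\eta)$ per recovered vector. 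Making this balance rigorous, rather than settling for a naive Davis-Kahan with gap $\Omega(1)$, is what distinguishes the argument from the textbook analysis of Jennrich's algorithm and is ultimately what the authors leverage to embed it into the sum-of-squares proof system.
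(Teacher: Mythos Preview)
Your approach—direct Jennrich on $T$ plus Davis--Kahan—is different from the paper's proof of this theorem, which feeds the constraint system $\cA=\{\iprod{T,u^{\otimes 3}}\ge 1-\eta,\|u\|^2=1\}$ into the general sum-of-squares decomposition algorithm (\pref{alg:tensor-general}, \pref{thm:general-tensor-decomposition}) with $P(u)=u$, verifies the hypothesis $\cA\vdash\sum_i\iprod{a_i,u}^4\ge 1-O(\eta)$ via Cauchy--Schwarz, and invokes the rounding scheme of \pref{thm:rounding-bks} on the \emph{pseudo-moments} (not on $T$ itself). The entropy constraint $\norm{\pE_{D(u)}\dyad u}\le(1+\eta)/(n-i+1)$ and the accuracy-boosting step (\pref{cor:boosting-accuracy}) are what deliver the clean $O(\eta)$ bound for all $\eta$. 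The paper does also present a direct spectral algorithm in the spirit of yours (Section~\ref{sec:nosos-orthogonal}), but there it conditions on a different event and explicitly imposes the extra assumption $\eta\le 1/\log d$.

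Your execution has two genuine gaps. First, the conditioning event ``$\iprod{a_i,g}$ is $\gamma$-separated from every other eigenvalue'' only has probability $\Omega(1-n\gamma)$, which is vacuous once $\gamma\gtrsim 1/n$; since you take $\gamma\asymp\sqrt\eta$, your argument collapses whenever $\eta\gtrsim 1/n^2$. You note this restriction in passing, but the theorem as stated carries no such assumption: for, say, $\eta=0.01$ and $n=d$ the bound $\dist_H^2\le O(\eta)$ is perfectly meaningful yet your isolation probability is negative. The fix (this is exactly what \pref{thm:orthogonal-spectral} does) is to condition instead on the \emph{extremal} event $\iprod{a_i,g}\ge(1+\delta)\sqrt{2\log d}$ and $\max_{j\ne i}|\iprod{a_j,g}|\le\sqrt{2\log d}$, which has probability $d^{-1-O(\delta)}$ irrespective of $n$ and yields gap $\Omega(\delta\sqrt{\log d})$.

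Second, the polylog is not ``absorbed into the $O(1)$ constant'': with your event the Davis--Kahan error is $O(\norm{E_g}/\gamma)=O(\sqrt{\log d}\cdot\eta/\sqrt\eta)=O(\sqrt{\log d}\cdot\sqrt\eta)$, hence squared error $O(\eta\log d)$, not $O(\eta)$. The extremal conditioning above also fixes this, because the gap scales with $\sqrt{\log d}$ and cancels the same factor in $\norm{E_g}$. Finally, your pruning threshold $\iprod{T,v^{\otimes 3}}\ge\Omega(1)$ only certifies constant closeness to some $a_i$; to get the $O(\sqrt\eta)$ bound in that direction of the Hausdorff distance you need threshold $1-O(\eta)$ (or an accuracy-boosting step like the paper's).
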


Under the additional assumption $\norm{T - \sum\nolimits_{i=1}^n a_i^{\otimes 3}}_{\{1\},\{2,3\}}\le 1/\log d$, the running time of the algorithm can be improved to $O(d^{1+\omega})\le d^{3.33}$ using fast matrix multiplication,
where $\omega$ is the number such that two $n \times n$ matrices can be multiplied
together in time $n^{\omega}$ (See Theorem~\ref{thm:orthogonal-alg}).

It is also possible to replace the spectral norm $\norm{\cdot}_{\{1\},\{2,3\}}$ in the above theorem statement by constant-degree sum-of-squares relaxations of the injective norm of $3$-tensors. (See Remark~\ref{remark:sosnorm} for details. )
If the error $E$ has Gaussian distribution $\cN(0,\sigma^2\cdot \Id^{\otimes 3}_d)$, then this norm is \whp bounded by $\sigma\cdot d^{3/4}(\log d)^{O(1)}$ \cite{DBLP:conf/colt/HopkinsSS15}, whereas the norm $\norm{\cdot}_{\{1\},\{2,3\}}$ has magnitude $\Omega(\sigma\cdot d)$. We prove 
~\pref{thm:orthogonal-vectors} in Section~\ref{subsec:orthogonal}. 

\paragraph{Random vectors}
We consider the case that $a_1,\ldots,a_n$ are chosen independently at random from the unit sphere of $\R^d$.
For $n\le d$, this case is roughly equivalent to the case of orthonormal vectors.
Thus, we are interested in the ``overcomplete'' case $n \gg d$, when the rank is larger than the dimension.
Previous work found the decomposition in quasi-polynomial time when $n \le d^{3/2}/\log^{O(1)}d$~\cite{DBLP:conf/approx/GeM15}, or in time subquadratic
in the input size when $n \le d^{4/3}/\log^{O(1)}d$~\cite{HopkinsSSS16}.
Our polynomial-time algorithm therefore is an improvement when $n$ is between
$d^{4/3}$ and $d^{3/2}$ (up to logarithmic factors).

\begin{theorem}
\torestate{
\label{thm:random-vectors}
  There exists a polynomial-time algorithm $A$ such that with probability $1-d^{-\omega(1)}$ over the choice of random unit vectors $a_1,\ldots,a_n\in\R^d$, every symmetric $3$-tensor $T\in (\R^d)^{\otimes 3}$ satisfies
  \begin{equation}
    \label{eq:13}
    \dist_H\Paren{\vbig A(T),\Set{a_1,\ldots,a_{n}}}^2
    \le O\Paren{\Paren{\frac n{d^{1.5}}}^{\Omega(1)} + \Norm{T - \sum\nolimits_{i=1}^n a_i^{\otimes 3}}_{\{1\},\{2,3\}}}
    \mper
  \end{equation}
}
\end{theorem}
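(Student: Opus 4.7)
The plan is to reduce \pref{thm:random-vectors} to the orthogonal case (\pref{thm:orthogonal-vectors}) via a ``lift and whiten'' strategy. I would set up a constant-degree sum-of-squares relaxation that searches for a pseudo-distribution $\mu$ over unit vectors $x\in\R^d$ whose low-degree pseudo-moments approximately match those of the uniform distribution over $\{a_1,\ldots,a_n\}$; in particular, $\tE_\mu[x^{\otimes 3}]\approx T$. Crucially, I would augment this relaxation with a spectral entropy-surrogate constraint, e.g.\ $\tE_\mu\bigl[(x^{\otimes 2})(x^{\otimes 2})^{\trans}\bigr]\preceq (C/d)\cdot\Id_{d^2}$, that prevents $\mu$ from collapsing onto a few directions. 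With high probability over the random $a_i$, the true uniform distribution over $\{a_1,\ldots,a_n\}$ satisfies all these constraints and hence certifies feasibility.

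Next, I would lift to $\R^{d^2}$ by setting $b_i\defeq a_i\otimes a_i$ and $\tilde T\defeq\tE_\mu\bigl[(x\otimes x)^{\otimes 3}\bigr]$, interpreted as an element of $(\R^{d^2})^{\otimes 3}$. If $\mu$ were the true uniform distribution one would have $\tilde T=\tfrac{1}{n}\sum_{i=1}^n b_i^{\otimes 3}$. The $b_i$ are not orthonormal, but for random $a_i$ with $n\le d^{3/2}/\polylog(d)$ their Gram matrix $\langle b_i,b_j\rangle=\langle a_i,a_j\rangle^2$ is close to the identity after correcting for the single large component in the direction of $\Id_d/\sqrt d\in\R^{d^2}$. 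I would then whiten $\tilde T$ using $M_2\defeq\tE_\mu\bigl[(x\otimes x)(x\otimes x)^{\trans}\bigr]$ to obtain a 3-tensor in $(\R^{d^2})^{\otimes 3}$ whose ideal decomposition has (nearly) orthonormal components, apply \pref{thm:orthogonal-vectors} to recover those components, and finally extract each $a_i$ as the top eigenvector of the corresponding reshape back to $\R^{d\times d}$.

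The main obstacle will be bounding the error between the whitened pseudo-tensor and its idealized target in the $\{1\},\{2,3\}$-spectral norm required by \pref{thm:orthogonal-vectors}. Here the entropy-surrogate constraint is essential: together with moment matching and the SoS proof system, it forces $\mu$ to place negligible mass in directions outside the span of $\{b_i\}$, so that whitening does not amplify slack in $\mu$ into large decomposition error. Carrying this out requires SoS-provable concentration bounds for the random moment tensors $\tfrac{1}{n}\sum_i a_i^{\otimes t}$, together with matrix Bernstein/Gaussian series bounds of the kind used in \pref{thm:orthogonal-vectors} but now transported into the pseudo-distribution setting. Once these are in hand, the $\{1\},\{2,3\}$-norm error scales as a power of $n/d^{3/2}$ plus the input error $\|T-\sum_i a_i^{\otimes 3}\|_{\{1\},\{2,3\}}$, and invoking \pref{thm:orthogonal-vectors} converts this into the claimed Hausdorff bound.
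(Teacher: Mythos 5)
Your high-level ingredients overlap substantially with the paper's proof: you correctly identify the lift $a_i\mapsto a_i^{\otimes 2}$, the need for a spectral ``entropy'' constraint on $\tE_\mu[\dyad{(x^{\otimes 2})}]$, and that the lifted vectors become near-orthonormal once the spurious direction $\Phi=\tfrac{1}{\sqrt d}\sum_i e_i^{\otimes 2}$ is projected out (the paper invokes exactly this fact from \cite{HopkinsSSS16} in \pref{lem:subspace_concentration}). However, the way you propose to \emph{close} the argument diverges in a way that creates a real gap. You want to materialize the 6th-moment tensor $\tilde T=\tE_\mu[(x\otimes x)^{\otimes 3}]$, whiten it by $M_2^{-1/2}$, and then feed the result to \pref{thm:orthogonal-vectors} as a black box, which requires a bound on $\|\tilde T' - \sum_i(b_i')^{\otimes 3}\|_{\{1\},\{2,3\}}$. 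That bound is not established and is not implied by your constraints. The entropy constraint $\tE_\mu[\dyad{(x^{\otimes 2})}]\preceq (C/d)\Id$ is a spectral \emph{upper} bound on the degree-4 moments; it does nothing to force the adversarially-chosen pseudo-moments of degree 6 to align with $\tfrac{1}{n}\sum_i a_i^{\otimes 6}$, and your assertion that it ``forces $\mu$ to place negligible mass in directions outside the span of $\{b_i\}$'' is unjustified --- a pseudo-distribution spreading fourth moments isotropically over the symmetric subspace also satisfies it. Moreover, $M_2$ is rank-deficient (its range is at most the symmetric subspace) and has small eigenvalues, so $(M_2^+)^{1/2}$ can have operator norm as large as $\sqrt{n}$; this data-dependent rescaling risks amplifying whatever slack $\mu$ has by a polynomial factor, destroying the target $(n/d^{1.5})^{\Omega(1)}$ error.

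The paper avoids both of these issues by never constructing an explicit 6th-moment tensor or applying \pref{thm:orthogonal-vectors} as a black box. Instead, its general framework \pref{thm:general-tensor-decomposition} (via \pref{alg:tensor-general}) runs the Jennrich-style rounding of \pref{thm:reproducing_bks} directly on pseudo-moments, and its two preconditions are genuinely weaker than your spectral-error bound: (a) near-orthonormality of the lifted components, which is achieved with the \emph{fixed} orthogonalizing map $P(u)=\Idsymphi u^{\otimes 2}$ (no pseudo-inverse, no data-dependent rescaling --- \pref{lem:subspace_concentration} shows the $c_i=\Idsymphi a_i^{\otimes 2}$ are already $1\pm\tO(\sqrt n/d + n/d^{1.5})$-orthonormal for random $a_i$); and (b) an SoS certificate $\cA\vdash_{O(1)}\sum_i\langle c_i,P(u)\rangle^4\ge(1-\epsilon')\|P(u)\|^4$, obtained by extending the Ge--Ma argument to 8th powers (\pref{lem:8th_power}) and transporting it from $a_i^{\otimes 2}$ to $c_i$ (\pref{prop:random_3_tensor}). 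The rounding theorem then only needs the second-moment entropy bound plus a lower bound on $\tE_D\langle c_{i^\star},P(u)\rangle^{k+2}$, both of which follow by averaging --- no full spectral control of the hallucinated higher moments is needed. To make your route work, you would need an additional argument showing either an SoS-certified spectral bound on the 6th pseudo-moment error, or a way to do data-independent whitening; as written, the proposal does not supply either, and the first is likely false as stated for general feasible pseudo-distributions.
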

Again it is possible to replace the spectral norm $\norm{\cdot}_{\{1\},\{2,3\}}$ in the above theorem statement by constant-degree sum-of-squares relaxations of the injective norm of $3$-tensors, which as mentioned before give better bounds for Gaussian error tensors. We prove~\pref{thm:random-vectors} in Section~\ref{sec:random_tensor}. 

\paragraph{Smoothed vectors}
Next, we consider a more general setup where the vectors $a_1,\ldots,a_n\in \R^d$ are smoothed, i.e., randomly perturbed.
This scenario is significantly more general than  random vectors.
Again we are interested in the overcomplete case $n\gg d$.
The previous best work \cite{DBLP:conf/stoc/BhaskaraCMV14} showed that the fifth moment of smoothed vectors $a_1,\ldots,a_n$ with $n\le d^2/2$ is enough to approximately recover the vectors even in the presence of a polynomial amount of error.
For fourth moments of smoothed vectors, no such result was known even for lower overcompleteness, say $n=d^{1.01}$.

We give an interpretation of the $4$-tensor decomposition algorithm FOOBI\footnote{The FOOBI algorithm is known to work for overcomplete $4$-tensors when there is no error in the input.
Researchers \cite{DBLP:conf/stoc/BhaskaraCMV14} asked if this algorithm tolerates a polynomial amount of error.
Our work answers this question affirmatively for a variant of FOOBI (based on sum-of-squares).} \cite{DBLP:journals/tsp/LathauwerCC07} as a special case of a sum-of-squares based decomposition algorithm.
We show that the sum-of-squares based algorithm works in the smoothed setting even in the presence of a polynomial amount of error.
We define a condition number $\kappa(\cdot )$ for sets of vectors $a_1,\ldots,a_n\in\R^d$ (a polynomial in the condition number of two matrices, one with columns $\{a_{i}^{\otimes 2}\mid i\in [n]\}$ and one with columns $\{a_i\otimes (a_i\otimes a_j-a_j\otimes a_i)\otimes a_j\mid i\neq j\in [n]\}$).
First, we show that the algorithm can tolerate error $\ll 1/\kappa$ which could be independent of the dimension.
Concretely, our algorithm will output a set of vectors $\hat{a}_1,\dots, \hat{a}_n$ which will be close to $\{a_1,\dots, a_n\}$ up to permutations and sign flip with a relative error that scales linearly in the relative error of the input and the condition number $\kappa$.
Second, we show that for smoothed vectors this condition number is at least inverse polynomial with probability exponentially close to $1$.

\begin{theorem}
\label{thm:smoothed-vectors}
  There exists a polynomial-time algorithm such that for every symmetric $4$-tensor $T\in (\R^d)^{\otimes 4}$ and every set $\{a_1,\ldots,a_n\}\subseteq \R^d$ of vectors not necessarily unit length, there exists a permutation $\pi:[n] \to [n]$ so that the output $\{a_1',\ldots,a_n'\}$ of the algorithm on input $T$ satisfies
  \begin{equation}
    \max_{i\in [n]} \frac{\Norm{a_i-a_{\pi(i)}'}}{\Norm{a_i}}
    \le O(1)\cdot
    \frac{\Norm{T-\sum\nolimits_{i=1}^n a_i^{\otimes 4}}_{\{1,2\},\{3,4\}}}
    {\sigma_n\left(\sum\nolimits_{i=1}^n \dyad{(a_i^{\otimes 2})}\right)}
    \cdot \kappa(a_1,\ldots,a_n)
    \mcom
  \end{equation}
  where $\sigma_n(A)$ refers to the $n$th singular value of the matrix $A$, here the smallest non-zero singular value.
\end{theorem}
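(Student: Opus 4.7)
The proof reinterprets the FOOBI algorithm of~\cite{DBLP:journals/tsp/LathauwerCC07} as the rounding of a constant-degree sum-of-squares relaxation. First, we flatten $T$ into a $d^2 \times d^2$ matrix $M_T$ via the $\{1,2\},\{3,4\}$ partition of its modes; in the noiseless case $M_T$ equals $M_\star \defeq \sum_i a_i^{\otimes 2}(a_i^{\otimes 2})^\trans$, whose column space $V_\star = \sspan\{a_i^{\otimes 2} : i \in [n]\}$ has dimension~$n$ and whose smallest nonzero singular value $\sigma_n(M_\star)$ is precisely the denominator appearing in the error bound. A Davis--Kahan argument bounds the operator-norm distance between $V_\star$ and the top-$n$ singular subspace $V_T$ of $M_T$ by $O(\|M_T - M_\star\|/\sigma_n(M_\star))$, which absorbs the first factor of the stated bound.

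The key structural fact underlying FOOBI is that a symmetric matrix in $V_\star$ has rank one if and only if it equals $c \cdot a_i a_i^\trans$ for some $i$ and scalar $c$. The quantitative stability of this characterization is exactly what the second block of $\kappa(a_1,\ldots,a_n)$ controls: the singular values of the matrix with columns $a_i \otimes (a_i \otimes a_j - a_j \otimes a_i) \otimes a_j$ quantify the linear independence of the ``commutator'' directions that must vanish in any rank-one element, so a lower bound on them yields a spectral gap between rank-one elements and the rest of $V_\star$. We formulate a sum-of-squares program in a vector variable $u \in \R^d$ with polynomial constraints enforcing (i) an appropriate normalization and (ii) $u^{\otimes 2} \in V_T$, and we check that the identification of rank-one elements admits a sum-of-squares proof with slack polynomial in $\kappa$. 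Consequently, low-order pseudo-moments of $u$ are forced close to a convex combination of the rank-one matrices $\{a_i a_i^\trans\}$.

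To extract individual components from this convex combination we must prevent the pseudo-distribution from collapsing onto a single atom. We therefore augment the relaxation with a spectral max-entropy constraint of the form $\tE[u^{\otimes 2}(u^{\otimes 2})^\trans] \sosle \alpha \cdot M_T$, calibrated so that the uniform distribution over an appropriate normalization of $\{a_i\}_{i\in[n]}$ remains feasible. This constraint certifies within the sum-of-squares system the spectral gap required for a Jennrich-style rounding: for independent Gaussians $g, g' \in \R^d$, the reweighted pseudo-moment matrices $R_g \defeq \tE[\iprod{g,u}^2 \cdot u u^\trans]$ and $R_{g'}$ behave up to perturbation like $\sum_i \iprod{g,a_i}^2 a_i a_i^\trans$ (with positive weights bounded away from zero by the entropy constraint), and their simultaneous diagonalization produces the desired approximations $a_{\pi(i)}'$ together with the matching permutation $\pi$.

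The main obstacle is propagating the two independent error sources---the input noise $\|T - \sum_i a_i^{\otimes 4}\|_{\{1,2\},\{3,4\}}$ and the sum-of-squares slack allowed in constraints (i)--(ii)---through the Jennrich rounding while keeping the dependence on $\kappa$ polynomial. We expect to handle this by a matrix-Bernstein bound for the Gaussian reweighting, showing that $R_g$ concentrates around its ideal form, combined with standard eigenvector stability applied against the spectral gap that the max-entropy constraint certifies inside the sum-of-squares system. The per-vector bound on $\|a_i - a_{\pi(i)}'\|/\|a_i\|$ then follows from summing the three perturbation contributions (subspace, pseudo-moment, rounding) and rescaling. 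A separate smoothed-analysis argument, carried out later in the paper, shows that $\kappa$ is polynomially bounded with overwhelming probability when the $a_i$ are randomly perturbed, which together with the present theorem gives the advertised polynomial-time guarantee for overcomplete $4$-tensor decomposition in the smoothed-analysis model.
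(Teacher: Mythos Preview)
Your high-level outline correctly identifies several ingredients the paper uses: flattening $T$ to a $d^2\times d^2$ matrix, Davis--Kahan to control the estimated subspace, the FOOBI uniqueness lemma and its dependence on $\sigma_{\min}(B)$, an SoS formulation in $u$ with the constraint that $u^{\otimes 2}$ is near the estimated subspace, and a max-entropy constraint on the pseudo-distribution.  There is, however, a genuine gap in the rounding step.

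You propose to round by forming the $d\times d$ matrices $R_g=\pE[\iprod{g,u}^2\,uu^\trans]$ for $g\in\R^d$ and applying Jennrich's simultaneous diagonalization, claiming that $R_g\approx\sum_i\iprod{g,a_i}^2 a_ia_i^\trans$.  But the theorem is meant to cover the overcomplete regime $n\gg d$ (up to $n\approx d^2$), where the vectors $a_1,\ldots,a_n$ are not even linearly independent in $\R^d$.  In that regime the matrices $\sum_i w_i\,a_ia_i^\trans$ for varying weights $w_i$ do \emph{not} share an eigenbasis consisting of the $a_i$, so simultaneous diagonalization in $\R^d$ cannot recover the components.  Your matrix-Bernstein/eigenvector-stability sketch does not address this: the ``ideal form'' $\sum_i\iprod{g,a_i}^2 a_ia_i^\trans$ is a rank-$d$ matrix built from $n>d$ terms, and there is no spectral gap isolating any individual $a_i$.

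The paper avoids this by lifting and \emph{whitening}: it feeds the general algorithm (\pref{alg:tensor-general}) the polynomial map $P(x)=(\tilde Q^+)^{1/2}x^{\otimes 2}\in\R^{d^2}$, where $\tilde Q$ is the rank-$n$ approximation of the flattened~$T$.  The crucial SoS statement proved (Propositions~7.4--7.5) is that the constraint $\{\|\Id_{\tilde S}x^{\otimes 2}\|^2\ge(1-O(\delta))\|x\|^4\}$ implies, in degree~$8$, that
\[
\sum_{i=1}^n\iprod{P(a_i),P(x)}^4\;=\;\|A^\trans Q^+x^{\otimes 2}\|_4^4\;\ge\;(1-\tau)\,\|P(x)\|_2^4,
\]
i.e.\ $\|\alpha\|_4^4\ge(1-\tau)\|\alpha\|_2^4$ for $\alpha=A^+x^{\otimes 2}$.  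This is exactly the ``nearly orthonormal'' condition required by \pref{thm:general-tensor-decomposition}; the vectors $P(a_1),\ldots,P(a_n)\in\R^{d^2}$ are, by construction, nearly orthonormal, and the Gaussian-contraction rounding is performed on $d^2\times d^2$ matrices $\pE[\iprod{g,P(u)^{\otimes k}}\,P(u)P(u)^\trans]$ rather than on $d\times d$ matrices.  Only after recovering $\hat c_i\approx P(a_i)$ does one undo the whitening and take a best rank-$1$ approximation to land back in~$\R^d$.  Your proposal is missing this whitening map and the specific $\ell_4$-vs-$\ell_2$ SoS inequality that reduces the problem to the orthonormal framework.
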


We say that a distribution over vectors $a_1,\ldots,a_n\in \R^d$ is $\gamma$-smoothed if $a_i=a^0_i+ \gamma\cdot g_i$, where $a_1^0,\ldots,a_n^0$ are fixed vectors and $g_1,\ldots,g_n$ are independent Gaussian vectors from $\cN(0,\tfrac 1 d \Id_d)$.

\begin{theorem}
\label{thm:smoothed-vectors-2}
  Let $\e>0$ and $n,d\in \N$ with $n\le d^2/10$.
  Then, for any $\gamma$-smoothed distribution over vectors $a_1,\ldots,a_n$ in $\R^d$,
  \begin{displaymath}
    \Pr\Set{\vbig \kappa(a_1,\ldots,a_n) \le \poly(d,\gamma)} \ge 1- \exp(-d^{\Omega(1)})\mper
  \end{displaymath}
\end{theorem}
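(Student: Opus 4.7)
The condition number $\kappa(a_1,\ldots,a_n)$ is by definition polynomial in the condition numbers of two matrices: $A \in \R^{d^2 \times n}$ with columns $\{a_i^{\otimes 2}\}_{i\in [n]}$, and $B \in \R^{d^4 \times n(n-1)}$ with columns $\{a_i \otimes (a_i \otimes a_j - a_j \otimes a_i) \otimes a_j\}_{i \neq j}$. The plan is to bound each separately. Upper bounds on the spectral norms of $A$ and $B$ follow routinely from Gaussian tail bounds on the norms $\|a_i\|$, so the task reduces to lower-bounding the smallest non-zero singular value of each matrix by $1/\poly(d,\gamma)$ with probability $1 - \exp(-d^{\Omega(1)})$.

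For $A$, I would apply the leave-one-out strategy in the style of Bhaskara--Charikar--Moitra--Vijayaraghavan. The bound $\sigma_n(A)^2 \ge \tfrac{1}{n} \min_i \|P_{V_{-i}^{\perp}} a_i^{\otimes 2}\|^2$ with $V_{-i} = \mathrm{span}\{a_j^{\otimes 2} : j \neq i\}$ reduces the task to lower-bounding each projection. Conditioning on $\{a_j\}_{j \neq i}$, the hypothesis $n \le d^2/10 \ll \binom{d+1}{2} = \dim \mathrm{Sym}^2 \R^d$ guarantees that $V_{-i}^{\perp} \cap \mathrm{Sym}^2 \R^d$ has dimension $\Omega(d^2)$. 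The projection $P_{V_{-i}^{\perp}} a_i^{\otimes 2}$ is a degree-$2$ polynomial in the Gaussian $g_i$, and its leading $\gamma^2 g_i^{\otimes 2}$ contribution has expected squared projected norm $\Omega(\gamma^4)$, since $\E g_i^{\otimes 2} (g_i^{\otimes 2})^{\top}$ has minimum eigenvalue $\Omega(1/d^2)$ on $\mathrm{Sym}^2 \R^d$ and the projector has rank $\Omega(d^2)$ there. Carbery--Wright anti-concentration gives an inverse-polynomial lower bound with inverse-polynomial failure probability; this is boosted to $\exp(-d^{\Omega(1)})$ by a Hanson--Wright-type tail bound for non-negative Gaussian chaos of bounded degree, using that the squared projection is effectively a sum of $\Omega(d^2)$ weakly dependent contributions around a mean of order $\gamma^4$. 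A union bound over the $n \le \poly(d)$ choices of $i$ completes the argument for $A$.

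For $B$, the same template applies but is substantially more delicate; this is the main obstacle. Each column $v_{ij} = a_i \otimes a_i \otimes a_j \otimes a_j - a_i \otimes a_j \otimes a_i \otimes a_j$ depends on \emph{two} smoothed vectors, so after leaving out $v_{ij}$ there remain $O(n)$ neighboring columns whose randomness is still coupled to $g_i$ or $g_j$. I would proceed in two stages. First, project off the $(n-2)(n-3) \le n^2 \le d^4/100$ columns $v_{kl}$ with $k,l \notin \{i,j\}$: their span sits in a subspace of $(\R^d)^{\otimes 4}$ whose appropriate symmetry class has dimension $\Omega(d^4)$, so an orthogonal complement of dimension $\Omega(d^4)$ remains. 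Second, the at most $4n-7$ remaining columns that still depend on $g_i$ or $g_j$ span an $O(n)$-dimensional subspace of polynomially bounded norm inside this complement, while the leading $\gamma^4$ contribution to $v_{ij}$ is $\gamma^4 \bigl(g_i^{\otimes 2} \otimes g_j^{\otimes 2} - (g_i \otimes g_j)^{\otimes 2}\bigr)$, whose squared length evaluates to $2\gamma^8 \|g_i\|^4 \|g_j\|^4 \sin^2 \theta_{ij}$ where $\theta_{ij}$ is the angle between $g_i$ and $g_j$. For independent Gaussian perturbations $\sin \theta_{ij}$ is bounded away from $0$ and $\|g_i\|,\|g_j\|$ concentrate around $1$ with overwhelming probability, so the leading piece has expected squared projection $\Omega(\gamma^8 / \poly(d))$. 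Carbery--Wright anti-concentration for degree-$4$ polynomials, combined with Gaussian hypercontractivity to upgrade the tail, and a union bound over the $O(n^2)$ pairs, then yields the claimed bound. The delicate point I anticipate is making precise the orthogonality between the ``fresh'' antisymmetric piece $g_i^{\otimes 2} \otimes g_j^{\otimes 2} - (g_i \otimes g_j)^{\otimes 2}$ and the $O(n)$ intermediate columns still involving $g_i$ or $g_j$: this requires carefully decomposing those columns along the $g_i, g_j$ directions and exploiting the antisymmetric factor $a_i \wedge a_j$ sitting inside $v_{ij}$.
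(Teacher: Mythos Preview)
Your high-level plan matches the paper: reduce to lower-bounding $\sigma_n(A)$ and $\sigma_{\min}(B)$, use the leave-one-out distance $d(M)=\min_i\|\Pi_{V_{-i}^\perp}M_i\|$ (Rudelson--Vershynin), and union-bound over columns. For $A$ the paper simply invokes the BCMV smoothed-analysis black box (their Theorem~3.9), which already delivers a $\poly(1/d,\gamma)$ lower bound with failure probability $\exp(-d^{\Omega(1)})$; your Carbery--Wright-plus-``boosting'' route aims at the same conclusion but the upgrade to exponential tails is not justified as written---Carbery--Wright controls small-ball probability, Hanson--Wright controls large deviations, and neither converts an inverse-polynomial anti-concentration bound into an exponential one without further structure.

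The real gap is precisely where you flag it: the coupled columns of $B$. In your second stage you project off the $O(n)$ columns $v_{k\ell}$ with $\{k,\ell\}\cap\{i,j\}\ne\emptyset$, but the span of those columns \emph{depends on $g_i$ and $g_j$}. After projecting onto its complement the residual is no longer a fixed-coefficient polynomial in $(g_i,g_j)$, so polynomial anti-concentration does not apply; invoking the antisymmetric factor $g_i\wedge g_j$ does not decouple this dependence.

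The paper's proof (Section~\ref{sec:foobi-condition-number}) supplies two ideas you are missing. First, \emph{coordinate decoupling}: partition $[d]$ into four equal blocks $L_1,\ldots,L_4$ and restrict $B$ to the rows indexed by $L_1\times L_2\times L_3\times L_4$; the $(i,j)$-column of this submatrix becomes $\tilde a_{i,L_1}\otimes(\tilde a_{i,L_2}\otimes\tilde a_{j,L_3}-\tilde a_{j,L_2}\otimes\tilde a_{i,L_3})\otimes\tilde a_{j,L_4}$, so the relevant Gaussian pieces $\tilde a_{i,L_1},\tilde a_{i,L_2},\tilde a_{j,L_3},\tilde a_{j,L_4}$ are now \emph{independent}. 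Second, \emph{enlarging the leave-one-out span}: instead of projecting off the exact span $V$ of the other columns (which still depends on $g_i,g_j$), project off a larger subspace $\hat V\supset V$ built from patterns such as $\tilde a_{k,L_1}\otimes\tilde a_{k,L_2}\otimes x\otimes y$ with $x,y\in\R^{d/4}$ arbitrary and $k\notin\{i,j\}$. The wildcards $x,y$ absorb every occurrence of $\tilde a_{i,*}$ or $\tilde a_{j,*}$ in the neighboring columns, so $\hat V$ is independent of $\tilde a_{i,L_1}\otimes\tilde a_{i,L_2}\otimes\tilde a_{j,L_3}\otimes\tilde a_{j,L_4}$, while $\dim\hat V\le O(nd^2)+O(d^2)<d^4/2$ thanks to $n\le d^2/10$. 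At this point the BCMV bound for projections of tensor products of independent smoothed vectors applies directly and gives the $\exp(-d^{\Omega(1)})$ tail. Your two-stage projection is trying to do the job of $\hat V$, but without the coordinate decoupling and the wildcard enlargement it cannot be made independent of the target randomness.
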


The above theorems together imply a polynomial-time algorithm for approximately decomposing overcomplete smoothed $4$-tensors even if the input error is polynomially large.
The error probability of the algorithm is exponentially small over the choice of the smoothing.
It is an interesting open problem to extend this result to overcomplete smoothed $3$-tensors, even for lower overcompleteness $n=d^{1.01}$. ~\pref{thm:smoothed-vectors} and ~\pref{thm:smoothed-vectors-2} are proved in Section~\ref{sec:foobi}. 

\paragraph{Separated unit vectors}

In the scenario, when inner products among the vectors $a_1,\ldots,a_n\in \R^d$ are bounded by $\rho<1$ in absolute value, the previous best decomposition algorithm shows that moments of order $(\log n)/\log \rho$ suffice \cite{DBLP:journals/corr/SchrammW15}.
Our algorithm requires moments of higher order (by a factor logarithmic in the desired accuracy) but in return tolerates up to constant spectral error.
This increased error tolerance also allows us to apply this result for dictionary learning with up to constant sparsity (see \pref{sec:applications}).

\begin{theorem}
\torestate{
\label{thm:separated-unit-vectors}
  There exists an algorithm $A$ with polynomial running time (in the size of its input) such that for all $\eta,\rho \in (0,1)$ and $\sigma\ge 1$, for every set of unit vectors $\{a_1,\ldots,a_n\}\subseteq \R^d$ with $\norm{\sum_{i=1}^n \dyad {a_i}}\le \sigma$ and $\max_{i\neq j}|\iprod{a_i,a_j}|\le \rho$, when the algorithm is given a symmetric $k$-tensor $T\in (\R^d)^{\otimes k}$ with $k\ge O\left(\frac{1+\log \sigma}{\log \rho}\right) \cdot \log(1/\eta)$, then its output $A(T)$ is a set of vectors $\{a_1',\ldots,a_{n'}'\}\subseteq \R^d$ such that
  \begin{equation}
    \dist_H\Paren{\vbig \{a_1'^{\otimes 2},\ldots,a_n'^{\otimes 2}\}, \{a_1^{\otimes 2},\ldots,a_n^{\otimes 2}\}}^2 \le O\Paren{\eta + \Norm{T - \sum\nolimits_{i=1}^n a_i^{\otimes k}}_{\{1,\ldots,\lfloor k/2 \rfloor\},\{\lfloor k/2\rfloor+1,\ldots,k\}}}\mper
  \end{equation}
}
\end{theorem}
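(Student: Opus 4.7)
The plan is to reduce this problem to the orthogonal vectors case of \pref{thm:orthogonal-vectors} by a lifting argument: the $m$-th tensor powers $v_i := a_i^{\otimes m}$ become nearly orthonormal whenever $m = \Omega\bigl(\log(\sigma/\eta)/\log(1/\rho)\bigr)$, which is satisfied by taking $m \approx k/4$ under the theorem's hypothesis. To make this precise, the Gram matrix $G$ of $\{v_i\}$ has $G_{ij}=\iprod{a_i,a_j}^m$, and a row-sum bound exploiting the spectral bound on $\sum_j a_j a_j^\top$ gives
\[
\sum_{j\ne i}|G_{ij}|\;\le\;\rho^{m-2}\cdot a_i^\top\Bigl(\sum\nolimits_j a_ja_j^\top\Bigr)a_i\;\le\;\sigma\rho^{m-2}\;=\;O(\eta)\mper
\]
I then define the orthonormalized lifts $\hat v_i$ as the columns of $VG^{-1/2}$ where $V=[v_1\mid\cdots\mid v_n]$; since $G$ is $O(\eta)$-close to the identity in operator norm, a standard perturbation of $G^{-1/2}$ around $I$ yields $\max_i\|v_i-\hat v_i\|\le O(\eta)$.

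Next, I would flatten $T$ into a $3$-tensor $\tilde T\in\R^{d^m}\otimes\R^{d^m}\otimes\R^{d^{k-2m}}$ with $m=\lceil k/4\rceil$, so that $m+m=\lceil k/2\rceil$ and the $\{1,2\},\{3\}$-spectral norm of $\tilde T-\sum_iv_i\otimes v_i\otimes w_i$ (with $w_i:=a_i^{\otimes k-2m}$) is exactly $\|T-\sum_ia_i^{\otimes k}\|_{\{1,\dots,k/2\},\{k/2+1,\dots,k\}}$. Substituting the orthonormalized lifts and absorbing the error from the previous step gives $\|\tilde T-\sum_i\hat v_i\otimes\hat v_i\otimes\hat w_i\|_{\{1,2\},\{3\}}=O\bigl(\eta+\|T-\sum_ia_i^{\otimes k}\|\bigr)$, where $\hat w_i$ denotes the analogous orthonormalization. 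I would then apply an asymmetric variant of \pref{thm:orthogonal-vectors}: contract mode $3$ against two independent random directions to obtain matrices of the form $\sum_ic_i\hat v_i\hat v_i^\top$ (plus error), whose simultaneous diagonalization recovers $\{\hat v_i\}$ up to squared Hausdorff distance $O\bigl(\eta+\|T-\sum_ia_i^{\otimes k}\|\bigr)$. By the triangle inequality this also approximates $\{v_i\}=\{a_i^{\otimes m}\}$.

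Finally, I would recover each $a_i$ (up to sign) from $\hat v_i'\approx a_i^{\otimes m}$ by unfolding $\hat v_i'$ as a $d\times d^{m-1}$ matrix and taking its top left singular vector; Wedin's theorem controls this step's error by the previous step's. Since the theorem's Hausdorff distance is measured on the sign-invariant tensor squares $a_i'^{\otimes 2}$, the sign ambiguity is harmless, and we obtain the required bound $\dist_H\bigl(\{a_i'^{\otimes 2}\},\{a_i^{\otimes 2}\}\bigr)^2\le O\bigl(\eta+\|T-\sum_ia_i^{\otimes k}\|_{\{1,\dots,k/2\},\{k/2+1,\dots,k\}}\bigr)$.

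The main obstacle I anticipate is proving the asymmetric extension of \pref{thm:orthogonal-vectors} with the required robustness: Jennrich-style rounding is standard in the noiseless case, but the robust analysis of $\|E\|_{\{1,2\},\{3\}}\le\epsilon$ with two equal-sized modes and a larger third mode requires essentially re-running the spectral-gap argument of \pref{thm:orthogonal-vectors} in this asymmetric setting. This should go through with minor modification--the key point is that after contracting mode $3$, the resulting random matrix pencil has the same orthogonal structure as in the symmetric case--but is the technical heart of the argument. A secondary obstacle is the sharpness of the calibration between $m\approx k/4$ and the hypothesis $k\ge O((1+\log\sigma)/\log(1/\rho))\cdot\log(1/\eta)$, which must be tight enough to absorb the $\log\sigma$ arising from the row-sum bound $\sigma\rho^{m-2}\le O(\eta)$ into the $(1+\log\sigma)$ factor.
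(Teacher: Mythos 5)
Your proposal takes a genuinely different route from the paper's, and it has a gap that is not a ``minor modification.'' The paper proves this theorem by reducing to the general SoS decomposition \pref{thm:general-tensor-decomposition}, running \pref{alg:tensor-general} with the constraints $\cA=\{\langle T, u^{\otimes k}\rangle \ge 1-\eta,\ \|u\|^2 = 1\}$ and the lift $P(u)=u^{\otimes\lceil k/4\rceil}$; the balanced error norm enters only through the one-line SoS step $\cA\vdash_{u,k}\sum_i\langle a_i,u\rangle^k\ge\langle T,u^{\otimes k}\rangle-\langle E,u^{\otimes k}\rangle\ge1-2\eta$, where $|\langle E,u^{\otimes k}\rangle|\le\|E\|_{\{1,\ldots,\lfloor k/2\rfloor\},\{\lfloor k/2\rfloor+1,\ldots,k\}}$ for unit $u$. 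You instead want to bypass SoS and run an explicit lift-and-Jennrich spectral algorithm on a $3$-tensor flattening.

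The obstacle is exactly the asymmetric robustness you identify as the ``technical heart,'' and it does not go through. When you contract the third (large) mode of $\tilde T\in\R^{d^m}\otimes\R^{d^m}\otimes\R^{d^{k-2m}}$ against a Gaussian $g$, the spectral norm of the error matrix $(\Id\otimes\Id\otimes g^{\trans})\tilde E$ is controlled, by the Gaussian matrix-series bound (\pref{thm:matrix_concentration}), by $\max\bigl\{\|\tilde E\|_{\{1\},\{2,3\}},\ \|\tilde E\|_{\{2\},\{1,3\}}\bigr\}$, \emph{not} by $\|\tilde E\|_{\{1,2\},\{3\}}$. For your flattening that is $\|E\|_{\{1,\ldots,m\},\{m+1,\ldots,k\}}$ with $m=\lceil k/4\rceil$, an unbalanced $d^{k/4}$-by-$d^{3k/4}$ unfolding rather than the balanced $\lfloor k/2\rfloor$-versus-$\lceil k/2\rceil$ unfolding in the theorem statement. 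These norms are genuinely different: for an i.i.d.\ Gaussian error tensor, for instance, the unbalanced norm exceeds the balanced one by a factor of order $d^{k/8}$. The paper flags this limitation explicitly for direct spectral algorithms of exactly this kind---see the paragraph immediately following the theorem statement, and \pref{remark:extension}, which establish precisely an unbalanced-norm version without SoS. Getting the balanced error norm is exactly what the SoS relaxation is buying; a direct Jennrich argument on a flattened tensor cannot reach it, because the random contraction inherently pays for the unbalanced unfolding. (As a secondary point, $2\lceil k/4\rceil\ne\lfloor k/2\rfloor$ unless $4\mid k$, so even your $\{1,2\},\{3\}$ unfolding does not exactly match the theorem's split, though this would only cost constants in the tensor order.)
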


We also show that a simple spectral algorithm with running time close to $d^{k}$ (the size of the input) achieves similar guarantees 
(see Remark~\ref{remark:extension}).
However, the error tolerance of this algorithm is in terms of an \emph{unbalanced} spectral norm: $\norm{T-\sum_{i=1}^n a_i^{\otimes k}}_{\{1,\ldots,k/3\},\{k/3+1,\ldots,k\}}$ (the spectral norm of the tensor viewed as a $d^{k/3}$-by-$d^{2k/3}$ matrix).
This norm is always larger than the balanced spectral norm in the theorem statement.
In particular, for dictionary learning applications, this norm is larger than $1$, which renders the guarantee of the simpler spectral algorithm vacuous in this case. We prove~\pref{thm:separated-unit-vectors} in Section~\ref{subsec:separated}. 

\paragraph{General unit vectors}
In this scenario, the number of moments that our algorithm requires is constant as long as $\sum_i \dyad {a_i}$ has constant spectral norm and the desired accuracy is constant.

\begin{theorem}
	\torestate{\label{thm:general-tensor-decomposition-general-components}
	There exists an algorithm $A$ (see \pref{alg:tensor-general-general-components}) with polynomial running time (in the size of its input) such that for all $\epsilon\in (0,1), \sigma \ge 1$, for every set of unit vectors $\{a_1,\ldots,a_n\}\subseteq \R^d$ with $\norm{\sum_{i=1}^n \dyad {a_i}}\le \sigma$ and every symmetric $2k$-tensor $T\in (\R^d)^{\otimes 2k}$ with $k\ge (1/\epsilon)^{O(1)}\cdot\log(\sigma)$ and $\Norm{T -\sum_i a_i^{\otimes 2k}}_{\{1,\ldots,k\},\{k+1,\ldots,2k\}}\le 1/3$, we have
	\begin{displaymath}
		\dist_H\Paren{\vbig A(T), \{a_1^{\otimes 2},\ldots,a_n^{\otimes 2}\}}^2
		\le O\Paren{\e}\mper
	\end{displaymath}}
\end{theorem}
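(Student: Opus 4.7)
The plan is to analyze a sum-of-squares relaxation at degree $\Theta(k)$ with $k=(1/\epsilon)^{O(1)}\log\sigma$, augmented with the spectral entropy constraints developed earlier in the paper, followed by a Jennrich-style rounding on the resulting pseudo-moment matrix.

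\textbf{Algorithm and feasibility.} The relaxation searches for a pseudo-distribution $\mu$ over unit $x\in\R^d$ with $\pE_\mu[x^{\otimes 2k}]$ within $O(1)$ of $T$ in the flattened spectral norm (leaving slack to absorb the $1/3$ input error), subject to a spectral entropy constraint of the form $\pE_\mu[(x^{\otimes k})(x^{\otimes k})^\top]\sosle (C\sigma/n)\,\Pisymm$. Feasibility is witnessed by the uniform distribution on $\{a_1,\ldots,a_n\}$: its $2k$-th moment is exactly $\tfrac{1}{n}\sum_i a_i^{\otimes 2k}$, which is within $1/3$ of $T$ by hypothesis; and by the Schur-product inequality applied to the Gram matrix of the $a_i$, $\norm{\sum_i(a_i^{\otimes k})(a_i^{\otimes k})^\top}\le\norm{\sum_i a_ia_i^\top}\le\sigma$, verifying the entropy constraint.

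\textbf{Clustering and structural lemma.} To analyze $\mu$, form a greedy packing $\{\tilde a_1,\ldots,\tilde a_m\}\subseteq\{a_1,\ldots,a_n\}$ with $\iprod{\tilde a_j,\tilde a_{j'}}^2\le 1-\epsilon$ for $j\ne j'$; every $a_i^{\otimes 2}$ is then $\sqrt{\epsilon}$-close to some $\tilde a_{j(i)}^{\otimes 2}$. The main structural claim, to be established via SoS-provable inequalities, is that $\mu$ is effectively supported on an $O(\sqrt{\epsilon})$-neighborhood of $\{\pm\tilde a_1,\ldots,\pm\tilde a_m\}$: for every unit $b\in\R^d$ whose lift $b^{\otimes 2}$ is at distance $>\sqrt{\epsilon}$ from every $\tilde a_j^{\otimes 2}$, $\pE_\mu[\iprod{x,b}^{2k}]=O(\epsilon)$. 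The key ingredients are the inequality $(1-\epsilon)^k\le\epsilon/\sigma$ (afforded by the large $k$) to decouple contributions from different clusters, and the entropy constraint to rule out concentration in spurious directions created by the input error.

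\textbf{Rounding and Hausdorff bound.} The structural lemma yields a spectral gap in the pseudo-moment matrix $M=\pE_\mu[(x^{\otimes k})(x^{\otimes k})^\top]$, with its top-$m$ subspace aligned with $\sspan\{\tilde a_j^{\otimes k}\}$ up to $O(\sqrt{\epsilon})$. Jennrich-style simultaneous diagonalization on random contractions of $M$, paralleling the rounding of \pref{thm:separated-unit-vectors}, extracts vectors close to $\tilde a_j^{\otimes k}$, which project back to candidates $a_j'^{\otimes 2}$ within $O(\sqrt{\epsilon})$ of $\tilde a_j^{\otimes 2}$; the triangle inequality through the cluster assignment $j(i)$ then yields $\dist_H(\{a_j'^{\otimes 2}\},\{a_i^{\otimes 2}\})^2 = O(\epsilon)$.

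\textbf{Main obstacle.} The main difficulty is the structural lemma in the presence of $O(1)$-magnitude slack in the moment constraint. Without the entropy constraint, an adversarially chosen $T$ within $1/3$ of $\sum_i a_i^{\otimes 2k}$ could make $\mu$ concentrate spuriously along directions far from the $a_i$; the spectral entropy constraint together with the $(1-\epsilon)^k\le\epsilon/\sigma$ decoupling SoS inequality is precisely the mechanism that rules out such concentration, and carrying out this combination entirely within the SoS proof system---so that every step applies to the pseudo-expectation $\pE_\mu$ rather than to a true probability measure---is the core technical hurdle, where the new spectral-maximum-entropy technique of the paper is essential.
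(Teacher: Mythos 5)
Your setup (degree-$O(k)$ relaxation with a spectral maximum-entropy constraint, feasibility via the uniform distribution on $\{a_i\}$, information-theoretic sanity check) matches the paper's Algorithm~\ref{alg:tensor-general-general-components}. The gap is in the rounding, and it is a real one. You propose to lift $x\mapsto x^{\otimes k}$, cluster into separated representatives $\tilde a_j$ so the $\tilde a_j^{\otimes k}$ are nearly orthonormal, and then invoke the machinery of \pref{thm:separated-unit-vectors} (i.e.\ \pref{thm:general-tensor-decomposition} with $P(u)=u^{\otimes k/4}$). That machinery needs the SoS-provable hypothesis $\cA\vdash\sum_j\langle\tilde a_j,u\rangle^k\ge 1-\epsilon$, and this is false. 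From $\sum_i\langle a_i,u\rangle^{2k}\ge 1/3$ and $\sum_i\langle a_i,u\rangle^2\le\sigma$ one extracts only $\max_i\langle a_i,u\rangle^2\ge(3\sigma)^{-1/(k-1)}\approx 1-\Theta(\epsilon)$ (with $k\approx(\log\sigma)/\epsilon$), and therefore $\max_j\langle\tilde a_j,u\rangle^k\approx(1-\Theta(\epsilon))^{k/2}\approx\sigma^{-\Omega(1)}$, which is exponentially far below $1-\epsilon$: raising the correlation to the $k$-th power is exactly what kills this route. Applying \pref{thm:reproducing_bks} directly on the lifted distribution fares no better: with signal $\Omega(1/n)$ and $\Norm{\pE\dyad{u}}\le\sigma/n$, the hypothesis $\pE[\langle a,u\rangle^{\kappa+2}]\ge\Omega\bigl(\tfrac{1}{\epsilon\sqrt\kappa}\bigr)\Norm{\pE\dyad u}$ forces $\kappa\ge\Omega(\sigma^2/\epsilon^2)$, giving runtime exponent $\poly(\sigma/\epsilon)$ rather than the target $\poly((\log\sigma)/\epsilon)$ — precisely the difficulty the paper flags. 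Your structural lemma, even if SoS-provable, does not rescue this: a per-direction bound $\pE_\mu[\langle b,x\rangle^{2k}]=O(\epsilon)$ only controls quadratic forms of $M$ against pure powers $b^{\otimes k}$, and in any case never produces the $(1-\epsilon)$-level precondition that \pref{thm:general-tensor-decomposition} requires.

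The missing idea is the paper's improved rounding scheme \pref{thm:general_components}: replace the single high-degree contraction $\pE_\mu\langle g,u^{\otimes k}\rangle\dyad u$ with a \emph{product} of $s$ independent degree-$2$ Gaussian contractions, $M_g=\pE_\mu\langle g_1,u^{\otimes 2}\rangle\cdots\langle g_s,u^{\otimes 2}\rangle\dyad u$, for $s=O((1/\epsilon)\log(\sigma/\epsilon))$. The multi-mode matrix concentration bound (\pref{thm:matrix_concentration_multi_mode}) and a matrix SoS inequality show the ``noise'' block of $M_g$ is exponentially small in $s$, so its top eigenvector is $O(\sqrt\epsilon)$-close to $a$ under the weak hypothesis $\pE[\langle a,u\rangle^{2s+2}]\ge\Omega(1/\sigma)\Norm{\pE\dyad u}$ — an $\Omega(1/\sigma)$ threshold, which the entropy constraint supplies by averaging, rather than an $\Omega(1/(\epsilon\sqrt s))$ one. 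The pseudo-distribution stays in $\R^d$; there is no lifting to $u^{\otimes k}$ and no greedy clustering. The price is a per-draw success probability of only $1/d^{O(s^3)}$, so the algorithm repeats the random contraction $d^{O(s^3)}$ times.
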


The previous best algorithm for this problem required tensors of order $(\log \sigma)/\epsilon$ and had running time $d^{O((\log \sigma)/\epsilon^{O(1)}+\log n)}$ \cite[Theorem 4.3]{DBLP:conf/stoc/BarakKS15}.
We require the same order of the tensor and the runtime is improved to be polynomial in the size of the inputs (that is, $d^{\poly((\log \sigma)/\epsilon)})$. 

We also remark that a bit surprisingly we can handle $1/3$ error in spectral norm, and this is possible partly due to the choice of working with high order tensors. As a sanity check, we note that information-theoretically the components are identifiable: under the assumptions, the only vectors $u$ that satisfy $\inner{T, u^{\otimes 2k}} \ge 1/3$ are those vectors close to one of the $a_i$'s. 
We also note that the rounding algorithm of the sum-of-squares relaxation of this simple inefficient test requires a bit new idea beyond what we used previously. Here the difficulty is to make the runtime $d^{\poly((\log \sigma)/\epsilon)}$ instead of $d^{\poly(\sigma/\epsilon)}$. See Section~\ref{sec:rounding_general_components} for details. 
\paragraph{Spectral algorithms without sum-of-squares}
Finally, using a similar rounding technique directly on a orthogonal tensor (without using sum-of-squares and the pseudo-moment), we also obtain a fast and robust algorithm for orthogonal tensor decomposition. See Section~\ref{sec:nosos-orthogonal} for details. 

\subsection{Applications of tensor decomposition}
\label{sec:applications}

Tensor decomposition has a wide range of applications.
We focus here on learning sparse dictionaries, which is an example of the more general phenomenon of using tensor decomposition to learn latent variable models.
Here, we obtain the first polynomial-time algorithms that work in the overcomplete regime up to constant sparsity.

Dictionary learning is an important problem in multiple areas, ranging from computational neuroscience~\cite{OlshausenF97,OlshausenF96,OlshausenF96b}, machine learning~\cite{ArgyriouEP06,RanzatoBL2007}, to computer vision and image processing~\cite{EladA2006,MairalLBHP2008,YangWHY2008}.
The general goal is to find a good basis for given data.
More formally, in the dictionary learning problem, also known as sparse coding,
we are given samples of a random vector $y \in \R^n$, of the form $y= Ax$
where $A$ is some unknown matrix in $\R^{n \times m}$, called \emph{dictionary}, and
$x$ is sampled from an unknown distribution over sparse vectors.
The goal is to approximately recover the dictionary $A$.

We consider the same class of distributions over sparse vectors $\{x\}$ as \cite{DBLP:conf/stoc/BarakKS15}, which as discussed in \cite{DBLP:conf/stoc/BarakKS15} admits a wide-range of non-product distributions over sparse vectors.
(The case of product distributions reduces to the significantly easier problem of independent component analysis.)
We say that $\{x\}$ is \emph{$(k,\tau)$-nice} if $\E x_i^k = 1$ for every $i\in [m]$, $\E x_i ^{k/2}x_j^{k/2}\le \tau$ for all $i\neq j\in [m]$, and $\E x^\alpha=0$ for every non-square degree-$k$ monomial $x^\alpha$.
Here, $\tau$ is a measure of the relative sparsity of the vectors $\{x\}$.

We give an algorithm that for nice distributions solves the dictionary learning problem in polynomial time when the desired accuracy is constant, the overcompleteness of the dictionary is constant (measured by the spectral norm $\norm{A}$), and the sparsity parameter $\tau$ is a sufficiently small constant (depending only on the desired accuracy and $\norm{A}$).
The previous best algorithm \cite{DBLP:conf/stoc/BarakKS15} requires quasi-polynomial time in this setup (but works in polynomial-time for polynomial sparsity $\tau\le n^{-\Omega(1)}$).

\begin{theorem}
  There exists an algorithm $\cR$ parameterized by $\sigma \ge 1, \eta \in(0,1)$, such that for every dictionary $A\in \R^{n\times m}$ with $\Norm{A}\le \sigma$ and every $(k,\tau)$-nice distribution $\{x\}$ over $\R^m$ with $k \ge k(\eta, \sigma) = O((\log \sigma) /\eta)$ and $\tau \le \tau(k) = k^{-O(k)}$, the algorithm given $n^{O(k)}$ samples from $\{y=Ax\}$ outputs in time $n^{O(k)}$ vectors $a_1',\ldots,a'_m$ that are $O(\eta)^{1/2}$-close to the columns of $A$.
\end{theorem}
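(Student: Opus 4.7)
The plan is to reduce the problem to the tensor decomposition task solved by \pref{thm:general-tensor-decomposition-general-components}. Writing $K$ for the niceness parameter (the statement's $k$), the algorithm $\cR$ proceeds in three conceptual steps: (i) estimate the $K$-th moment tensor of $y$ from samples, (ii) verify that this empirical tensor is close to the target $\sum_{i=1}^m a_i^{\otimes K}$ in the appropriate balanced spectral norm, and (iii) invoke \pref{thm:general-tensor-decomposition-general-components} with $2k := K$ and $\epsilon := \eta$ to recover vectors close to the columns of $A$. The hypothesis $k \ge (1/\epsilon)^{O(1)}\log \sigma$ of \pref{thm:general-tensor-decomposition-general-components} then becomes exactly $K \ge O((\log \sigma)/\eta)$, matching the assumption.

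In more detail, define $\hat T := \tfrac1N \sum_{j=1}^N y_j^{\otimes K}$ from $N = n^{O(K)}$ samples. The structural identity that drives the analysis comes from factoring the dictionary out of $(Ax)^{\otimes K/2} = A^{\otimes K/2}\,x^{\otimes K/2}$, giving
\begin{equation*}
  \E\, y^{\otimes K/2}(y^{\otimes K/2})^{\trans} \;=\; A^{\otimes K/2}\cdot M \cdot (A^{\otimes K/2})^{\trans}, \qquad M := \E\, x^{\otimes K/2}(x^{\otimes K/2})^{\trans},
\end{equation*}
while $\sum_i a_i^{\otimes K}$ under the same flattening equals $A^{\otimes K/2}\, \Pi\, (A^{\otimes K/2})^{\trans}$, where $\Pi := \sum_{i\in [m]} e_{(i,\ldots,i)}\, e_{(i,\ldots,i)}^{\trans}$ is the projection onto pure multi-indices. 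Vanishing of non-square moments forces most entries of $M-\Pi$ to be zero, and the listed niceness bounds combined with log-convexity/Hölder allow one to bound each surviving entry by essentially $\tau$; a combinatorial accounting of the at most $K^{O(K)}$ surviving entries yields $\Norm{M-\Pi} \le \tau \cdot K^{O(K)}$. Combining with $\Norm{A^{\otimes K/2}}^2 \le \sigma^K$, and using $K \ge \Theta((\log \sigma)/\eta)$ to absorb the $\sigma^K$ factor into the $K^{O(K)}$ budget, we obtain
\begin{equation*}
  \Bignorm{\E\, y^{\otimes K} - \sum_{i=1}^m a_i^{\otimes K}}_{\{1,\ldots,K/2\},\{K/2+1,\ldots,K\}} \;\le\; \tfrac16,
\end{equation*}
provided $\tau \le K^{-O(K)}$ as assumed.

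For the concentration step, I would apply matrix Bernstein (or entry-wise Chebyshev with a union bound) to the i.i.d.\ rank-one matrices $y_j^{\otimes K/2}(y_j^{\otimes K/2})^{\trans}$, using that $\Norm{y_j}^K$ has polynomially bounded moments inherited from $\Norm{y}\le \sigma\Norm{x}$ and the bounded $K$-th moments of $x$. With $N = n^{O(K)}$ samples this yields $\Norm{\hat T - \E y^{\otimes K}}_{\{1,\ldots,K/2\},\{K/2+1,\ldots,K\}} \le 1/6$ with high probability, so by the triangle inequality $\hat T$ satisfies the $1/3$ error hypothesis of \pref{thm:general-tensor-decomposition-general-components}. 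That theorem then returns vectors $a_i'$ with $\dist_H(\{(a_i')^{\otimes 2}\}, \{a_i^{\otimes 2}\})^2 \le O(\eta)$; undoing the squaring (the sign ambiguity in recovering $a_i$ from $a_i^{\otimes 2}$ can be resolved by checking which sign maximizes a low-order correlation with the samples) yields vectors $O(\sqrt{\eta})$-close to the columns of $A$.

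The main technical obstacle is Step~(ii): the $(K,\tau)$-niceness definition explicitly bounds only the balanced cross moments $\E\, x_i^{K/2} x_j^{K/2}$, whereas the nonzero off-diagonal entries of $M$ can involve \emph{unbalanced} square monomials $x_{i_1}^{2\beta_1}\cdots x_{i_r}^{2\beta_r}$ of total degree $K$. Controlling all such monomials via only the listed conditions (log-convexity of moments together with the normalization $\E\, x_i^K = 1$), and then tracking the combinatorial explosion in the number of surviving entries of $M-\Pi$, is exactly what forces the hypothesis $\tau \le k^{-O(k)}$. This bias analysis is essentially the one developed in prior work on sum-of-squares dictionary learning (notably \cite{DBLP:conf/stoc/BarakKS15}); the new ingredient specific to this theorem is simply to combine it with the polynomial-time rounding of \pref{thm:general-tensor-decomposition-general-components}, thereby replacing the quasi-polynomial-time rounding used in that prior work.
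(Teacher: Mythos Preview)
Your proposal is correct and matches the paper's approach exactly: the paper's entire proof is the single sentence that the black-box reduction from dictionary learning to tensor decomposition in \cite{DBLP:conf/stoc/BarakKS15} combined with \pref{thm:general-tensor-decomposition-general-components} yields the result, and you identify this structure precisely, including that the only new ingredient is swapping the quasi-polynomial rounding of \cite{DBLP:conf/stoc/BarakKS15} for the polynomial-time rounding here.

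One small caution on your sketch of the \cite{DBLP:conf/stoc/BarakKS15} bias bound: the step ``using $K \ge \Theta((\log\sigma)/\eta)$ to absorb the $\sigma^K$ factor into the $K^{O(K)}$ budget'' does not work as stated, since $\sigma^K \le K^{O(K)}$ would require $K \ge \sigma^{\Omega(1)}$ rather than $K \gtrsim \log\sigma$. The actual argument in \cite{DBLP:conf/stoc/BarakKS15} is more careful about how $\sigma$ enters (and this is why the paper treats the reduction as a black box rather than re-deriving it). Since you also ultimately defer to that reference, this does not affect the correctness of your overall argument.
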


Since previous work~\cite{DBLP:conf/stoc/BarakKS15} provides a black box reduction from dictionary learning to tensor decomposition, the theorem above follows from \pref{thm:general-tensor-decomposition-general-components}.
Our \pref{thm:separated-unit-vectors} implies a dictionary learning algorithm with better parameters for the case that the columns of $A$ are separated.

\subsection{Polynomial optimization with few global optima}

Underlying our algorithms for the tensor decomposition is an algorithm for solving general systems of polynomial constraints with the property that the total number of different solutions is small and that there exists a short certificate for that fact in form of a sum-of-squares proof.

Let $\cA$ be a system of polynomial constraints over real variables $x=(x_1,\ldots,x_d)$ and let $P\from \R^d \to \R^{d^\ell}$ be a polynomial map of degree at most $\ell$---for example, $P(x)=x^{\otimes \ell}$.
We say that solutions $a_1,\ldots,a_n\in \R^d$ to $\cA$ are \emph{unique} under the map $P$ if the vectors  $P(a_1),\ldots,P(a_n)$ are orthonormal up to error $0.01$ (in spectral norm) and every solution $a$ to $\cA$ satisfies $P(a) \approx P(a_i)$ for some $i\in [n]$.
We encode this property algebraically by requiring that the constraints in $\cA$ imply the constraint $\sum_{i=1}^n \langle  P(a_i) , P(x) \rangle^4 \ge 0.99\cdot \norm{P(x)}^4$.
We say that the solutions $a_1,\ldots,a_n$ are \emph{$\ell$-certifiably unique} if in addition this implication has a degree-$\ell$ sum-of-squares proof.

The following theorem shows that if polynomial constraints have certifiably unique solutions (under a given map $P$), then we can find them efficiently (under the map $P$).

\begin{theorem}[Informal statement of \pref{thm:general-tensor-decomposition}]
  Given a system of polynomial constraints $\cA$ and a polynomial map $P$ such that there exists $\ell$-certifiably unique solutions $a_1,\ldots,a_n$ for $\cA$, we can find in time $d^{O(\ell)}$ vectors $0.1$-close to $P(a_1),\ldots,P(a_n)$ in Hausdorff distance.
\end{theorem}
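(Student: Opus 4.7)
The plan is to produce a pseudo-distribution $\tilde\mu$ of degree $O(\ell)$ over $x\in\R^d$ satisfying the constraints $\cA$, and then extract the vectors $P(a_1),\ldots,P(a_n)$ from a low-degree pseudo-moment tensor via the orthogonal decomposition algorithm of \pref{thm:orthogonal-vectors}. First I would set up the SoS relaxation: among all degree-$O(\ell)$ pseudo-distributions satisfying $\cA$, choose one that also satisfies a spectral max-entropy constraint, namely $\tilde\E[\dyad{P(x)}] \sle \tfrac{1+o(1)}{n}\cdot \Id_{d^\ell}$ on the support of $\{P(a_i)\}$, or equivalently an upper bound on the operator norm of the pseudo-moment matrix of $P(x)$. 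Feasibility of this relaxation is witnessed by the uniform distribution over the true solutions $a_1,\ldots,a_n$, for which $\tfrac 1 n\sum_i \dyad{P(a_i)}$ has operator norm at most $\tfrac{1+0.01}{n}$ by the certifiable-uniqueness hypothesis that $P(a_1),\ldots,P(a_n)$ are $0.01$-close to orthonormal.

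Next, I would form the pseudo-moment tensor $T\seteq\tilde\E\bigl[P(x)^{\otimes 3}\bigr]$ and argue that $T$ is close in the $\{1\},\{2,3\}$ spectral norm to $\tfrac 1 n\sum_{i=1}^n P(a_i)^{\otimes 3}$. The certifiable-uniqueness assumption $\cA \vdash_\ell \sum_{i=1}^n \iprod{P(a_i),P(x)}^4 \ge 0.99\snorm{P(x)}^2$ translates, upon applying $\pE$, into
\begin{equation*}
\sum_{i=1}^n \pE \iprod{P(a_i),P(x)}^4 \;\ge\; 0.99\cdot \pE\snorm{P(x)}^2\mper
\end{equation*}
Combined with the spectral entropy constraint on $\tilde\E[\dyad{P(x)}]$, this sandwich forces the pseudo-distribution's low-degree moments of $P(x)$ to match those of the uniform distribution on $\{a_1,\ldots,a_n\}$ up to small error; in particular, the degree-$3$ pseudo-moment tensor $T$ agrees with $\tfrac 1 n\sum_i P(a_i)^{\otimes 3}$ up to $o(1)$ error in the relevant unfolded spectral norm.

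With $T$ in hand, I would invoke \pref{thm:orthogonal-vectors} (applied inside the subspace spanned by the $P(a_i)$, after rescaling by $n$): since the $P(a_i)$ are $0.01$-close to orthonormal and the error in $T$ is $o(1)$ in the $\{1\},\{2,3\}$-spectral norm, the orthogonal decomposition algorithm outputs a set of vectors at Hausdorff distance at most $0.1$ from $\{P(a_1),\ldots,P(a_n)\}$. The overall running time is dominated by solving the degree-$O(\ell)$ SoS relaxation on $d$ variables, which takes $d^{O(\ell)}$ time, plus the polynomial-time rounding.

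The hard part will be the middle step: designing the spectral entropy surrogate carefully enough that (i) the uniform distribution on the true solutions remains feasible, and (ii) the surrogate is strong enough, together with the degree-$\ell$ SoS proof of $\sum_i\iprod{P(a_i),P(x)}^4\ge 0.99\snorm{P(x)}^2$, to certify a small spectral gap in $\tilde\E[\dyad{P(x)}]$ and thereby pin down the third pseudo-moment tensor. This is precisely the ``spectral analog of maximum-entropy constraints'' highlighted in the introduction, and bridging it to the orthogonal-decomposition rounding through an SoS proof rather than a direct analytic argument is the conceptual core of the theorem.
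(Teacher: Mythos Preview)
Your high-level setup (use a degree-$O(\ell)$ pseudo-distribution satisfying $\cA$ together with a spectral ``entropy'' cap $\bigl\|\pE\,\dyad{P(x)}\bigr\|\le (1+\e)/n$, feasible because the uniform distribution on $a_1,\ldots,a_n$ witnesses it) matches the paper exactly. The gap is in the rounding step.

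The key unsupported claim is that the third pseudo-moment $T=\pE\bigl[P(x)^{\otimes 3}\bigr]$ is close to $\tfrac1n\sum_i P(a_i)^{\otimes 3}$ in the $\{1\},\{2,3\}$ norm. This is false already for honest distributions. Write $b_i=P(a_i)$ and take $D$ to be a distribution putting mass $p_i$ on $b_i$; the entropy constraint only forces $p_i\le (1+\e)/n$, and together with $\sum_i p_i=1$ this yields $p_i\ge \tfrac1n-\tfrac{(n-1)\e}{n}$, so $|p_i-\tfrac1n|$ can be as large as $\Theta(\e)$ rather than $\Theta(\e/n)$. After your rescaling by $n$, the error $\bigl\|nT-\sum_i b_i^{\otimes 3}\bigr\|_{\{1\},\{2,3\}}=\max_i|np_i-1|$ is $\Theta(n\e)$, which blows up; with $\e=0.01$ this is useless once $n\ge 100$. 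The fourth-moment lower bound $\sum_i\pE\langle b_i,P(x)\rangle^4\ge 0.99$ does not rescue this: the same computation gives $\pE\langle b_i,P(x)\rangle^4\in[\tfrac1n-0.01-\e,\tfrac{1+\e}{n}]$, again a window of width $\Theta(\e)$, not $\Theta(\e/n)$. So the entropy and uniqueness constraints do \emph{not} ``pin down'' the third moment to the uniform one.

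The paper avoids this entirely: it never compares $\pE[P(x)^{\otimes 3}]$ to anything. Instead it runs the Jennrich-style contraction \emph{directly on the pseudo-moments} to extract a \emph{single} component (\pref{thm:rounding-bks}): by averaging there is some $i^\star$ with $\pE\langle b_{i^\star},P(u)\rangle^{k+2}\ge \Omega(1)\cdot\bigl\|\pE\,\dyad{P(u)}\bigr\|$, and then the matrix $M_g=\pE\bigl[\langle g,P(u)^{\otimes k}\rangle\,\dyad{P(u)}\bigr]$ has top eigenvector close to $b_{i^\star}$ with probability $d^{-O(k)}$. The noise analysis uses the pseudo-distribution inequality $\bigl\|\pE[u^{\otimes p}(u^{\otimes q})^{\top}]\bigr\|\le\bigl\|\pE[\dyad u]\bigr\|$ (\pref{thm:higher-tensor-norm}), which is precisely what lets one bound $\E_g\|M_{g'}\|$ without ever controlling the full third moment. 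After a component is found, the algorithm adds the exclusion constraint $\{\langle P(u),b_i'\rangle^2\le 0.01\}$ to $\cA$ and repeats; the denominator in the entropy cap becomes $n-i+1$, and the averaging argument again produces a fresh $i^\star$. (Incidentally, invoking \pref{thm:orthogonal-vectors} as a black box here is circular: its proof in \pref{subsec:orthogonal} is itself an application of the general algorithm you are trying to establish.)
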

\section{Techniques}
\label{sec:techniques}
\newcommand{\tp}[1]{^{\otimes #1}}

Here is the basic idea behind using sum-of-squares for tensor decomposition:
Let $a_1,\ldots,a_n\in \R^d$ be unit vectors and suppose we have access to their first three moments $\cM_1,\cM_2,\cM_3$ as in \pref{eq:moments}.
Since the task of recovering $a_1,\ldots,a_n$ is easier the more moments we know, we would make a lot of progress if we could compute higher moments of $a_1,\ldots,a_n$, say the fourth moment $\cM_4$.
A natural approach toward that goal is to compute a probability distribution $D$ over the sphere of $\R^d$ such that $D$ matches the moments of $a_1,\ldots,a_k$ that we know, i.e., $\E_{D(u)} u=\cM_1$, $\E_{D(u)} u^{\otimes 2}=\cM_2$, $\E_{D(u)} u^{\otimes 3}=\cM_3$, and then use the fourth moment $\E_{D} u^{\otimes 4}$ as an estimate for $\cM_4$.

There are two issues with this approach:
(1) computing such a distribution $D$ is intractable and
(2) even if we could compute such a distribution it is not clear if its fourth moment will be close to the fourth moments $\cM_4$ we are interested in.

We address issue (1) by relaxing $D$ to be a pseudo-distribution (solution to sum-of-squares relaxations). Then, we can match the given moments efficiently.

Issue (2) is related to the uniqueness of the tensor decomposition, which relies on properties of the vectors $a_1,\ldots,a_n$.
Here, the general strategy is to first prove that this uniqueness holds for actual distributions and then transfer the uniqueness proof to the sum-of-squares proof system, which would imply that uniqueness also holds for pseudo-distributions.

In subsection~\ref{sec:techniques/rounding} below, we demonstrate our key rounding idea on the (nearly) orthogonal tensor decomposition problem. Then in subsection~\ref{sec:techniques/fourth} we discuss the high level insight for the robust 4th-order tensor decomposition algorithm and in subsection~\ref{sec:techniques/third} the techniques for random 3rd-order tensor decomposition.

\subsection{Rounding pseudo-distributions by matrix diagonalization}
\label{sec:techniques/rounding}

Our main departure from previous tensor decomposition algorithms based on sum-of-squares \cite{DBLP:conf/stoc/BarakKS15,DBLP:conf/approx/GeM15} lies in \emph{rounding}: the procedure to extract an actual solution from a pseudo-distribution over solutions.
The previous algorithms rounded a pseudo-distribution $D$ by directly using the first moments (or the mean) $\E_{D(u)} u$, which requires $D$ to concentrate strongly around the desired solution.
Our approach here instead uses Jennrich's (simultaneous) matrix diagonalization \cite{harshman1970foundations,zbMATH00446923}, to extract the desired solution as a \emph{singular vector} of a matrix of the form $\E_{D(u)} \inner{g,u}\dyad{u}$, for a random vector $g$.\footnote{
In previous treatments of simultaneous diagonalization, multiple matrices would be used for noise tolerance---increasing the confidence in the solution when more than one matrix agrees on a particular singular vector.
This is unnecessary in our setting, since as we'll see, the SoS framework itself suffices to certify the correctness of a solution.
}
This permits us to impose much weaker conditions on $D$.

For the rest of this subsection, we assume that we have an actual distribution $D$ that is supported on vectors close to some orthonormal basis $a_1,\ldots,a_d$ of $\R^d$, and we will design a rounding algorithm that extracts the vectors $a_i$ from the low-degree moments of $D$. This is a much simpler task than rounding from a pseudo-distribution, though it captures most of the essential difficulties. Since pseudo-distributions behave similarly to actual distributions on the low-degree moments, the techniques involved in rounding from actual distributions will turn out to be easily generalizable to the case of pseudo-distributions. 

Let $D$ be a distribution over the unit sphere in $\R^d$.
Suppose that this distribution is supported on vectors close to some orthonormal basis $a_1,\ldots,a_d$ of $\R^d$, in the sense that the distribution satisfies the constraint
\begin{equation}
  \Set{ \sum_{i=1}^d \langle  a_i, u \rangle^3 \ge 1-\e }_{D(u)}\mper
 \label{eq:main-constraint}
\end{equation}
(This constraint implies $\{\max_{i\in [d]} \langle  a_i,u \rangle\ge 1-\e\}_{D(u)}$ because $\sum_{i=1}^d \langle  a_i,u \rangle^3\le \max_{i\in [d]} \langle  a_i,u \rangle$ by orthonormality.)
The analysis of \cite{DBLP:conf/stoc/BarakKS15} shows that reweighing the distribution $D$ by a function of the form $u\mapsto \langle  g,u \rangle^{2k}$ for $g\sim \cN(0,\Id_d)$ and some $k\le O(\log d)$ creates, with significant probability, a distribution $D'$ such that for one of the basis vectors $a_i$, almost all of the probability mass of $D'$ is on vectors close to $a_i$, in the sense that
\begin{displaymath}
  \max_{i\in [d]}\E_{D'(u)} \langle  a_i,u \rangle \ge 1-O(\e)\mcom \text{where } D'(u) \propto \langle  g,u \rangle^{2k} D(u) \mper
\end{displaymath}
In this case, we can extract a vector close to one of the vectors $a_i$ by computing the mean $\E_{D'(u)} u$ of the reweighted distribution.
This rounding procedure takes quasi-polynomial time because it requires access to logarithmic-degree moments of the original pseudo-distribution~$D$.

To avoid this quasi-polynomial running time, our strategy is to instead modify the original distribution $D$ in order to create a small bias in one of the directions $a_i$ such that a modified moment matrix of $D$ has a one-dimensional eigenspace close to $a_i$.
(This kind of modification is much less drastic than the kind of modification in previous works.
Indeed, reweighing a distribution such that it concentrates around a particular vector seems to require logarithmic degree.)

Concretely, we will study the spectrum of matrices of the following form, for $g\sim \cN(0,\Id_d)$:
\begin{displaymath}
  M_g = \E_{D(u)} \langle  g,u \rangle \cdot\dyad u.
\end{displaymath}
Our goal is to show that with good probability, $M_g$ has a one-dimensional eigenspace close to one of the vectors $a_i$.

However, this is not actually true for a na\"{i}ve distribution: although we have encoded the basis vectors $a_i$
into the distribution $D$ by means of constraint \pref{eq:main-constraint},
we cannot yet conclude that the eigenspaces of $M_g$ have anything to do with them.
We can understand this as the error allowed in \pref{eq:main-constraint} being highly under-constrained.
For example, the distribution could be a uniform mixture of vectors of the form
$a_i + \e w$ for some fixed vector $w$, which causes $w$ to become by far the
most significant contribution to the spectrum of $M_g$.
More generally, an arbitrary spectrally small error could still completely displace all of the eigenspaces of $M_g$.

An interpretation of this situation is that we have permitted $D$
itself to contain a large amount of information that we do not actually possess.
Constraint \pref{eq:main-constraint} is consistent with a wide range of possible solutions,
yet in the pathological example above, the distribution does not at all
reflect this uncertainty, instead settling arbitrarily on some particular biased solution: it is this bias that disrupts the usefulness of the rounding procedure.

A similar situation has previously arisen in strategies for rounding convex
relaxations---specifically, when the variables of the relaxations were interpreted as the marginals of some probability distribution over solutions, then actual solutions were constructed by sampling from that distribution.
In that context, a workaround was to sample those solutions from the maximum-entropy distributions consistent with those marginals~\cite{gharan2014new},
to ensure that the distribution faithfully reflected the ignorance inherent in the relaxation solution rather than incorporating arbitrary information.
Our situation differs in that it is the solution to the convex relaxation itself
which is misbehaving, rather than some aspect of the rounding process,
but the same approach carries over here as well.

Therefore, suppose that $D$ satisfies the maximum-entropy constraint $\lVert \E_{D(u)} \dyad u\rVert\le 1/n$. This essentially enforces $D$ to be a uniform distribution over vectors close to $a_1,\dots, a_n$.  For the sake of demonstration, we assume that $D$ is a uniform distribution over $a_1,\dots, a_n$. Moreover, 
since our algorithm is invariant under linear transformations, we may assume that the components $a_1,\ldots,a_n$ are the standard basis vectors $e_1,\ldots,e_n\in \R^d$.
We first decompose $M_g$ along the coordinate $g_1$,
\begin{displaymath}
  M_g = g_1 \cdot M_{e_1} + M_{g'}, \quad\text{where } g'=g - g_1 \cdot e_1\mper
\end{displaymath}
Note that under our simplified assumption for $D$, by simple algebraic manipulation we have $M_{e_1} =  \E_{D(u)} u_1\dyad{u} = \dyad{e_1}$. Moreover, by definition, $g_1$ and $g'$ are independent.
It turns out that the entropy constraint implies $\E_{g'} \lVert  M_{g'} \rVert\lesssim \sqrt {\log d} \cdot 1/n$ (using concentration bounds for Gaussian matrix series \cite{zbMATH05946839}).
Therefore, if we condition on the event $g_1 >\eta^{-1} \sqrt {\log d}$, we have that $M_g = g_1\dyad{e_1} + M_{g'}$ consists of two parts: a rank-1 single part $g_1\dyad{e_1}$ with with eigenvalue larger than $\eta^{-1}\sqrt{\log d}$, and a noise part which has spectral norm at most $\lesssim \sqrt{\log d}$. Hence, by the eigenvector perturbation theorem we have that the top eigenvector is $O(\eta^{1/2})$-close to $e_1$ as desired.  

Taking $\eta =0.1$, we see with $1/\poly(d)$ probability the event $g_1 > \eta^{-1}\sqrt{\log d}$ will happen, and therefore by repeating this procedure $\poly(d)$ times, we obtain a vector that is $O(\eta^{1/2})$-close to $e_1$. We can find other vectors similarly by repeating the process (in a slightly more delicate way), and the accuracy can also be boosted (see Sections~\ref{sec:rounding} and~\ref{sec:decomp-sum-of-squares} for details).

\subsection{Overcomplete fourth-order tensor}
\label{sec:techniques/fourth}

In this section, we give a high-level description of a robust sum-of-squares version of the tensor decomposition algorithm FOOBI \cite{DBLP:journals/tsp/LathauwerCC07}. For simplicity of the demonstration, we first work with the noiseless case where we are given a tensor $T\in (\R^d)^{\otimes 4}$ of the form
\begin{equation}
T = \sum_{i=1}^n a_i^{\otimes 4}
\mper
\label{eq:techniques/decomposition}
\end{equation}

We will first review the key step of FOOBI algorithm and then show how to convert it into a sum-of-squares algorithm that will naturally be robust to noise.

To begin with, we observe that by viewing $T$ as a $d^2\times d^2$ matrix of rank $n$, we can easily find the span of the $a_i\tp{2}$'s by low-rank matrix factorization.
However, since the low rank matrix factorization is only unique up to unitary transformation, 
we are not able to recover the $a_i\tp{2}$'s from the subspace that they live in.
The key observation of~\cite{DBLP:journals/tsp/LathauwerCC07} is that the $a_i\tp{2}$'s are actually the only ``rank-1'' vectors in the span, under a mild algebraic independence condition.
Here, a $d^2$-dimensional vector is called ``rank-1'' if it is a tensor product of two vectors of dimension $d$.

\begin{lemma}[\cite{DBLP:journals/tsp/LathauwerCC07}]
\label{lem:uniqueness}
	Suppose the following set of vectors is linearly independent,
	\begin{equation}
	\label{eq:techniques/independent}
	\left\{ a_i^{\otimes 2}\otimes a_j^{\otimes 2} - (a_i\otimes a_j)^{\otimes 2} \Mid i\neq j\right\}\mper
	\end{equation}
	Then every vector $\cramped{x^{\otimes 2}}$ in the linear span of $a_1^{\otimes 2}, \ldots, a_n^{\otimes 2}$ is a multiple of one of the vectors $a_i^{\otimes 2}$.
\end{lemma}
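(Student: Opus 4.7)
The plan is to exploit the extra permutation symmetry that $x\tp{4}$ enjoys compared to a generic element of $(\R^d)\tp{4}$, and to translate this symmetry into a linear relation among the vectors $a_i\tp{2}\otimes a_j\tp{2} - (a_i\otimes a_j)\tp{2}$ to which the hypothesis applies directly.

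Concretely, suppose $x\tp{2} = \sum_{i=1}^n c_i\, a_i\tp{2}$. First I would tensor this identity with itself, obtaining (under the canonical identification $\R^{d^2}\otimes\R^{d^2} \cong (\R^d)\tp{4}$)
\[
  x\tp{4} \;=\; \sum_{i,j} c_i c_j \cdot a_i\tp{2}\otimes a_j\tp{2}
  \;=\; \sum_{i,j} c_i c_j \cdot a_i\otimes a_i\otimes a_j\otimes a_j \mper
\]
Next I would introduce the permutation $P$ of $(\R^d)\tp{4}$ that swaps the second and third tensor factors. Plainly $P(x\tp{4}) = x\tp{4}$, while $P(a_i\otimes a_i\otimes a_j\otimes a_j) = a_i\otimes a_j\otimes a_i\otimes a_j = (a_i\otimes a_j)\tp{2}$. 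Applying $P$ to the identity above and subtracting, the diagonal $i=j$ terms cancel (both equal $a_i\tp{4}$), leaving
\[
  0 \;=\; \sum_{i\neq j} c_i c_j \bigbrac{\, a_i\tp{2}\otimes a_j\tp{2} \,-\, (a_i\otimes a_j)\tp{2}\,} \mper
\]

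The final step is to invoke the linear independence hypothesis on the vectors inside the bracket, which forces $c_i c_j = 0$ for every $i\neq j$. Hence at most one of the $c_i$ is nonzero, so $x\tp{2}$ is a multiple of a single $a_i\tp{2}$, as claimed.

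I do not expect a real obstacle: once the permutation-symmetry trick is set up, the conclusion is immediate from linear independence. The only thing to be careful about is the bookkeeping of tensor-factor identifications when passing between the $\R^{d^2}$ view of $a_i\tp{2}$ (used to state the hypothesis) and the four-index $(\R^d)\tp{4}$ view (needed to apply the permutation $P$).
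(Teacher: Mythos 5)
Your proof is correct and follows essentially the same approach as the paper: the paper also obtains the two representations of $x^{\otimes 4}$ (as $\sum_{i,j}\alpha_i\alpha_j\,a_i^{\otimes 2}\otimes a_j^{\otimes 2}$ and $\sum_{i,j}\alpha_i\alpha_j\,(a_i\otimes a_j)^{\otimes 2}$) and subtracts, with your explicit permutation operator $P$ simply making transparent why the second representation holds. The only cosmetic difference is that the paper phrases the final step as $\sum_{i\neq j}\alpha_i^2\alpha_j^2=0$ (to set up the sum-of-squares version used later), whereas you conclude $c_ic_j=0$ directly from linear independence, which is equally valid for the lemma as stated.
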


This observation leads
to the algorithm FOOBI, which essentially looks for rank-1 vectors in the span of $a_i\tp{2}$'s.
The main drawback is that it uses simultaneous diagonalization as a sub-procedure, which is unlikely to tolerate noise better than inverse polynomial in $d$, and in fact no noise tolerance guarantee has been explicitly shown for it before.

Our approach starts with rephrasing the original proof of Lemma~\ref{lem:uniqueness} into the following SoS proof (which only uses polynomial inequalities that can be proved by SoS).

\begin{proof}[Proof of Lemma~\ref{lem:uniqueness}]
Let $\alpha_1,\ldots,\alpha_n$ be multipliers such that $x^{\otimes 2} = \sum_{i=1}^n \alpha_i \cdot a_i^{\otimes 2}$.\footnote{technically, $\alpha_1,\dots, \alpha_n$ are polynomials in $x$ so that $x^{\otimes 2} = \sum_{i=1}^n \alpha_i \cdot a_i^{\otimes 2}$ holds} 
Then, these multipliers satisfy the following quadratic equations:
\begin{gather*}
  x^{\otimes 4} = \sum\nolimits_{i,j} \alpha_i\alpha_j \cdot a_i^{\otimes 2}\otimes a_j^{\otimes 2} \mcom\\
  x^{\otimes 4} = \sum\nolimits_{i,j} \alpha_i\alpha_j \cdot (a_i\otimes a_j)^{\otimes 2} \mper
\end{gather*}
Together, the two equations imply that
\begin{displaymath}
  0 = \sum\nolimits_{i\neq j} \alpha_i\alpha_j \cdot \Paren{a_i^{\otimes 2}\otimes a_j^{\otimes 2} -  (a_i\otimes a_j)^{\otimes 2}} \mper
\end{displaymath}
By assumption, the vectors $a_i^{\otimes 2}\otimes a_j^{\otimes 2} -  (a_i\otimes a_j)^{\otimes 2}$ are linearly independent for $i\neq j$.
Therefore, from the equation above, we conclude $\sum_{i\neq j} \alpha_i^2\alpha_j^2 = 0$, meaning that at most one of $\alpha_i$ can be non-zero.
Furthermore this argument is a SoS proof, since for any matrix $A \in \R^{D \times D}$ with linearly independent columns and any vector polynomial $v \in \R[x]^D$, the inequality $\|v\|^2 \le \frac{1}{\sigma_{\min}(A)^2}\|Av\|^2$ can be proved by SoS (here $\sigma_{\min}(A)$ denotes the least singular value of matrix $A$).
So choosing $A$ to be the matrix with columns $a_i^{\otimes 2}\otimes a_j^{\otimes 2} -  (a_i\otimes a_j)^{\otimes 2}$ for $i \ne j$ and $v$ to be the vector with entries $\alpha_i\alpha_j$, we find by SoS proof that $\|\alpha\|_4^4 - \|\alpha\|_2^4 = 0$.
\end{proof}

When there is noise present, we cannot find the true subspace of the $a_i\tp{2}$'s and instead we only have an approximation, denoted by $V$, of that subspace. We will modify the proof above by starting with a polynomial inequality \begin{equation}
\|\Id_Vx\tp{2}\|^2 \ge (1-\delta)\|x\tp{2}\|^2\mcom\label{eqn:foobi_tech_constr}
\end{equation}
which constrains $x\tp{2}$ to be close to the estimated subspace $V$ (where $\delta$ is a small number that depends on error and condition number).
Then an extension of the proof of Lemma~\ref{lem:uniqueness} will show that equation~\eqref{eqn:foobi_tech_constr} implies (via a SoS proof) that for some small enough $\delta$, 
\begin{equation}
\sum_{i\neq j} \alpha_i^2\alpha_j^2\le o(1)\mper
\end{equation}

Note that $\alpha = Kx\tp{2}$ is a linear transformation of $x\tp{2}$, and furthermore $K$ is the pseudo-inverse of the matrix with columns $a_i\tp{2}$.
Moreover, if we assume for a moment that $\alpha$ has 2-norm 1 (which is not true in general), then the equation above further implies that
\begin{equation}
\sum_{i=1}^n \inner{K_i, x\tp{2}}^4  =\|\alpha\|_4^4 \ge 1-o(1)\mcom \label{eqn:eqn62}
\end{equation}
where $K_i\in \R^{d^2}$ is the $i$-th row of $K$.
This effectively gives us access to the 4-tensor $\sum_i K_i\tp{4}$ (which has ambient dimension $d^2$ when flattened into a matrix), since equation~\eqref{eqn:eqn62} is anyway the constraint that would have been used by the SoS algorithm if given the tensor $\sum_i K_i\tp{4}$ as input.
Note that because the $K_i$ are not necessarily (close to) orthogonal, we cannot apply the SoS orthogonal tensor decomposition algorithm directly.
However, since we are working with a 4-tensor whose matrix flattening has higher dimension $d^2$, we can whiten $K_i$ effectively in the SoS framework and then use the orthogonal SoS tensor decomposition algorithm to find the $K_i$'s, which will in turn yield the $a_i$'s.

Many details were omitted in the heuristic argument above (for example, we assumed $\alpha$ to have norm 1). The full argument follows in Section~\ref{sec:foobi}.

\subsection{Random overcomplete third-order tensor}
\label{sec:techniques/third}
In the random overcomplete setting, the input tensor is of the form
\[ T = \sum_{i=1}^n  a_i^{\ot 3} + E \mcom \]
where each $a_i$ is drawn uniformly at random from the Euclidean unit sphere,
we have $d < n \le d^{1.5}/(\log d)^{O(1)}$, and $E$ is some noise tensor such that
$\|E\|_{\{1\},\{2,3\}} < \e$ or alternatively such that a constant-degree
sum-of-squares relaxation of the injective norm of $E$ is at most $\e$.

Our original rounding approach depends on the target vectors $a_i$ being
orthonormal or nearly so.
But when $n \gg d$ in this overcomplete setting, orthonormality fails badly:
the vectors $a_i$ are not even linearly independent.

We circumvent this problem by embedding the vectors $a_i$ in a larger
ambient space---specifically by taking the tensor powers
$a_1' = a_1^{\otimes 2},\dots, a_n' =a_n^{\otimes 2}$.
Now the vectors $a_1',\dots,a_n'$ are linearly independent (with probability 1)
and actually close to orthonormal with high probability.
Therefore, if we had access to the order-6 tensor $\sum (a_i')^{\otimes 3} = \sum a_i^{\otimes 6}$, then we could (almost) apply our rounding method to
recover the vectors $a_i'$.

The key here will be to use the sum-of-squares method to generate a pseudo-distribution over the unit sphere having $T$ as its third-order moments tensor,
and then to extract from it the set of order-6 pseudo-moments estimating the moment tensor $\sum_i a_i^{\ot 6}$.
This pseudo-distribution would obey the constraint
$\{\transpose{\left(u \ot u \ot u\right)}T \ge 1 - \e\}$,
which implies the constraint $\{\sum_i \iprod{a_i, u}^3 \ge 1 - \e\}$,
saying, informally, that our pseudo-distribution is close to the actual uniform distribution over $\{a_i\}$.
Substituting $v = u^{\ot 2}$, we obtain an implied pseudo-distribution in $v$
which therefore ought to be close to the uniform distribution over $\{a_i'\}$,
and we should therefore be able to round the order-3 pseudo-moments of $v$ to recover $\{a_i'\}$.

Only two preconditions need to be checked:
first that $\sum_i \dyad{(a_i')}$ is not too large in spectral norm,
and second that our pseudo-distribution in $v$ satisfies the constraint
$\{\sum_i \iprod{a_i', v}^3 \ge 1 - O(\e)\}$.
The first precondition is true (except for a spurious eigenspace which can harmlessly be projected away) and is essentially equivalent to a line of matrix concentration arguments previously made in \cite{HopkinsSSS16}.
The second precondition follows from a line of constant-degree sum-of-squares proofs,
notably extending arguments made in \cite{DBLP:conf/approx/GeM15}
stating that the constraints $\{\sum_i \iprod{a_i, u}^3 \ge 1 - \e, \|u\|^2 = 1\}$
imply with constant-degree sum-of-squares proofs
that $\{\sum_i \iprod{a_i, u}^k \ge 1 - O(\e) - \tO(n/d^{3/2})\}$ for some higher powers $k$.
The rigorous verification of these conditions is detailed in Section \ref{sec:random_tensor}.
\section{Preliminaries}
\label{sec:preliminaries}

Unless explicitly stated otherwise, $O(\cdot)$-notation hides absolute multiplicative constants.
Concretely, every occurrence of $O(x)$ is a placeholder for some function $f(x)$ that satisfies $\forall x\in \R.\, \abs{f(x)}\le C\abs{x}$ for some absolute constant $C>0$.
Similarly, $\Omega(x)$ is a placeholder for a function $g(x)$ that satisfies $\forall x\in \R.\, \abs{g(x)} \ge \abs{x}/C$ for some absolute constant $C>0$.

For a matrix $A$, let $A^+$ denote the Moore-Penrose pseudo-inverse of $A$.
For a symmetric positive semidefinite matrix $B$, let $B^{1/2}$ denote the square root of $B$, i.e. the unique symmetric positive-semidefinite matrix $L$ such that $L^2 = B$.

The Kronecker product of two matrices $A$ and $B$ is denoted by $A\otimes B$.
A useful identity is that $(A\otimes B) (C\otimes D) = (AC)\otimes (BD)$ whenever the matrix multiplications are defined.
The norm $\|\cdot\|$ denotes the Euclidean norm for vectors and the spectral norm for matrices.

Let $T\in (\R^d)^{\otimes k}$ be a \emph{$k$-tensor} over $\R^d$ such that $T=\sum_{i_1,\ldots,i_k}T_{i_1\cdots i_k} e_{i_1}\otimes \cdots \otimes e_{i_k}$, where $e_1,\ldots,e_d$ is the standard basis of $\R^d$.
We say $T$ is \emph{symmetric} if the entries $(T_{i_1,\ldots,i_k})$ are invariant under permuting the indices.
The $k$ index positions of $T$ are called \emph{modes}.
The injective norm $\norm{T}_{\mathrm{inj}}$ is the maximum value of $\langle  T , x_1 \otimes \cdots \otimes x_k \rangle$ over all vectors $x_1,\ldots,x_k\in \R^d$ with $\norm{x_1}=\cdots=\norm{x_k}=1$.
A useful class of multilinear operations on tensors has the form $T\mapsto (A_1\otimes \cdots \otimes A_k) T$, where $A_1,\ldots,A_k$ are matrices with $d$ columns.
(This notation is the same as the Kronecker product notation for matrices, that is, $(A_1\otimes \cdots \otimes A_k) T=\sum_{i_1,\ldots,i_k} T_{i_1\cdots i_k} (A_1 e_{i_1})\otimes \cdots \otimes(A_k e_{i_k})$.)
If some of the matrices $A_i$ are row vectors, and the others are the identity matrix, then the corresponding operation is called tensor contraction.
For example, for a third-order tensor $T\in(\R^d)^{\otimes 3}$ and a vector $g\in\R^d$, we call $(\Id\otimes \Id \otimes \transpose{g}) T$ the contraction of the third mode of $T$ with $g$.
(Some authors use the notation $T(\Id,\Id,g)$ to denote this operation.)

For a bipartition $A,B$ of the index set $[k]$ of $T$, we let $\lVert  T \rVert_{A,B}$ denote the spectral norm of the matrix unfolding $T_{A,B}$ of $T$ with rows indexed by the indices in $A$ and columns indexed by indices in $B$.
Concretely,
\begin{displaymath}
  \lVert  T \rVert_{A,B} =
  \max_{\substack{
      x \in (\R^d)^{\otimes \abs{A}},\, y\in (\R^d)^{\otimes \abs{B}}\\
      \lVert x \rVert\le 1 ,\, \lVert  y \rVert \le 1}}
  \sum_{i_1,\ldots,i_k} T_{i_1\cdots i_k} \cdot  x_{i_A}  y_{i_B}\mcom
\end{displaymath}
Here, $i_A=i_{a_1}\cdots i_{a_{\abs A}}$ and $i_B=i_{b_1}\cdots i_{b_{\abs B}}$ are multi-indices, where $A=\{a_1,\ldots,a_{\abs A}\}$ and $B=\{b_1,\ldots,b_{\abs B}\}$.
For $k=2$, $\lVert  T \rVert_{\{1\},\{2\}}$ is the spectral norm of $T$ viewed as a $d$-by-$d$ matrix.
For $k=3$, $\lVert  T \rVert_{\{1,2\},\{3\}}$ is the spectral norm of $T$ viewed as a $d^2$-by-$d$ matrix with rows indexed by the first two modes of $T$ and columns indexed by the last index of $T$.
For symmetric $3$-tensors, all norms $\norm{T}_{\{1,2\},\{3\}}$, $\norm{T}_{\{1,3\},\{2\}}$, and $\norm{T}_{\{2,3\},\{1\}}$ are the same.

\subsection{Pseudo-distributions}
\label{sec:pseudo-distribution}

Pseudo-distributions generalize probability distributions in a way that allows us to optimize efficiently over moments of pseudo-distributions.
We represent a discrete probability distribution $D$ over $\R^n$ by its probability mass function $D\from \R^n \to \R$ such that $D(x)$ is the probability of $x$ under the distribution for every $x\in \R^n$.
This function is nonnegative point-wise and satisfies $\sum_{x\in \supp(D)} D(x)=1$.
For pseudo-distributions we relax the nonnegative requirement and only require that the function passes a set of simple nonnegativity tests.

A \emph{degree-$d$ pseudo-distribution over $\R^n$} is a finitely\footnote{We restrict these functions to be finitely supported in order to avoid integrals and measurability issues. It turns out to be without loss of generality in our context.} supported function $D\from \R^n \to \R$ such that $\sum_{x\in \supp(D)} D(x)=1$ and $\sum_{x\in \supp(D)} D(x)f(x)^2\ge 0$ for every function $f\from \R^n \to \R$ of degree at most $d/2$.
We define the \emph{pseudo-expectation} of a (possibly vector-valued or matrix-valued) function $f$ with respect to $D$ as
\begin{displaymath}
  \tE_{D} f \defeq \sum_{x\in \supp(D)} D(x) f(x) \mper
\end{displaymath}
In order to emphasize which variable is bound by the pseudo-expectation, we write $\tE_{D(x)} f(x)$.
(This notation is useful if $f(x)$ is a more complicated expression involving several variables.)

Note that a degree-$\infty$ pseudo-distribution $D$ satisfies $D(x)\ge 0$ for all $x\in \R^n$.
Therefore, $D$ is an actual probability distribution (with finite support).
The pseudo-expectation $\tE_{D}f=\E_D f$ of a function $f$ is its expected value under the distribution $D$.

Our algorithms will not work with pseudo-distributions (as finitely-supported functions on $\R^n$) directly.
Instead the algorithms will work with moment tensors $\pE_{D(x)} (1,x_1,\ldots,x_n)^{\otimes d}$ of pseudo-distributions and the associated linear functional $p\mapsto \pE_{D(x)}p(x)$ on polynomials $p$ of degree at most $d$.

Unlike actual probability distribution, pseudo-distributions admit general, efficient optimization algorithms.
In particular, the set of low-degree moments of pseudo-distributions has an efficient separation oracle.
\begin{theorem}[\cite{
    MR931698,
    parrilo2000structured,
    MR1814045}]
  \label{thm:sos}
  For $n,d\in \N$, the following set admits an $n^{O(d)}$-time weak separation oracle (in the sense of \cite{DBLP:journals/combinatorica/GrotschelLS81}),
  \begin{displaymath}
    \Set{ \pE_{D(x)} (1,x_1,\ldots,x_n)^{\otimes d}
      \Mid \text{degree-$d$ pseudo-distribution $D$ over $\R^n$ }}\mper
  \end{displaymath}
\end{theorem}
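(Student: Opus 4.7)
The plan is to exhibit an explicit separation oracle based on eigendecomposition of the moment matrix associated with a candidate tensor. First I would observe that a symmetric tensor of the form $\pE_{D(x)}(1,x_1,\ldots,x_n)^{\otimes d}$ is equivalently described by its collection of alleged monomial pseudo-moments $\mu = (\mu_\alpha)_{|\alpha|\le d}$, a vector in $\R^N$ where $N = \binom{n+d}{d} = n^{O(d)}$. Any consistency constraint stating that an entry $M_{i_1\cdots i_d}$ depends only on the monomial $\prod_j x_{i_j}$ is a linear equation whose violation immediately yields a separating hyperplane, so the real content reduces to characterizing which vectors $\mu \in \R^N$ arise as monomial moments of a degree-$d$ pseudo-distribution.

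The characterization I would use is that $\mu$ is valid iff (i) $\mu_0 = 1$ and (ii) the \emph{moment matrix} $M(\mu)$, indexed by monomials of degree at most $\lfloor d/2\rfloor$ with $(\beta,\gamma)$-entry equal to $\mu_{\beta+\gamma}$, is positive semidefinite. The key identity is that for any polynomial $f$ of degree at most $d/2$ with coefficient vector $c$ in the monomial basis, the alleged value of $\pE f^2$ equals the quadratic form $c^\top M(\mu) c$. Hence the nonnegativity axiom of a pseudo-distribution ($\pE f^2 \ge 0$ for all such $f$) translates exactly into $M(\mu) \succeq 0$. Necessity is immediate; sufficiency requires representing the linear functional $p \mapsto \langle \mu, p\rangle$ on degree-$d$ polynomials by a finitely-supported function $D$ on $\R^n$, which follows from a standard duality argument using the finite-dimensionality of the polynomial space.

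With this characterization in hand, the separation oracle on input $\mu$ would first check condition (i), outputting the hyperplane $\{\nu : \nu_0 = 1\}$ if it fails. Otherwise it computes an eigendecomposition of $M(\mu)$. If the least eigenvalue is at least $-\epsilon$ (for the tolerance $\epsilon$ required by the weak-oracle specification), it declares $\mu$ valid. Otherwise, letting $v$ be an eigenvector of $M(\mu)$ with eigenvalue $\lambda < 0$, the map $\nu \mapsto v^\top M(\nu) v$ is a linear functional in $\nu$ (because $M(\cdot)$ is linear) that evaluates to $\lambda < 0$ at $\mu$ while being nonnegative at every valid $\nu$ (since $M(\nu) \succeq 0$ forces $v^\top M(\nu) v \ge 0$); rewriting this as $\langle w, \nu\rangle$ for a $w \in \R^N$ whose entries are determined by $vv^\top$ gives the required separating hyperplane.

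The dominant cost is eigendecomposition of an $N \times N$ matrix, which takes $n^{O(d)}$ arithmetic operations, matching the advertised running time. The only subtlety is that eigendecomposition is inherently approximate in finite-precision arithmetic, but this is exactly what the weak-separation-oracle formulation of Gr\"otschel--Lov\'asz--Schrijver is designed to tolerate, so the argument goes through without further complication. In that sense the only ``hard part'' is bookkeeping: carefully setting up the bijection between symmetric tensors of order $d$ in dimension $n+1$ and the monomial-moment vectors in $\R^N$, and verifying that the PSD cone condition on the moment matrix captures precisely the nonnegativity axiom defining a pseudo-distribution.
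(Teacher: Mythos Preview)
The paper does not actually prove this theorem: it is stated with citations to prior work (Shor, Parrilo, Lasserre) and used as a black box. There is no ``paper's own proof'' to compare against. The only hint the paper gives is a footnote in the introduction remarking that ``the set of constant-degree moments of $n$-variate pseudo-distributions admits an $n^{O(1)}$-time separation oracle based on computing eigenvectors,'' which is exactly the approach you take.

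Your sketch is the standard one and is correct. The identification of the moment tensor with the vector of monomial pseudo-moments, the characterization via positive semidefiniteness of the moment matrix, and the extraction of a separating hyperplane from a negative eigenvector are all right. The one place you wave your hands---sufficiency, i.e.\ that every linear functional on degree-$d$ polynomials with $L(1)=1$ and $L(f^2)\ge 0$ is realized by some finitely supported $D$---is indeed routine: point evaluations span the dual of the finite-dimensional space of degree-$d$ polynomials, so any such $L$ can be written as $L(p)=\sum_i c_i\, p(x_i)$, and setting $D(x_i)=c_i$ gives the required pseudo-distribution. The paper's footnote restricting to finitely supported $D$ (``to avoid integrals and measurability issues\ldots without loss of generality'') is referring to exactly this fact.
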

This theorem, together with the equivalence of separation and optimization \cite{DBLP:journals/combinatorica/GrotschelLS81} allows us to solve a wide range of optimization and feasibility problems over pseudo-distributions efficiently.

The following definition captures what kind of linear constraints are induced on a pseudo-distribution over $\R^n$ by a system of polynomial constraints over $\R^n$.

\begin{definition}
  Let $D$ be a degree-$d$ pseudo-distribution over $\R^n$.
  For a system of polynomial constraints $\cA = \{f_1 \ge 0, \ldots, f_m \ge 0\}$ with $\deg(f_i)\le \ell$ for every $i$, 
  we say that \emph{$D$ satisfies the polynomial constraints $\cA$ at degree $\ell$}, denoted $D \models_\ell \cA$,
  if $\pE_D \left(\prod_{i \in S}f_i\right)h \ge 0$ for every $S \subseteq [m]$ and every sum-of-squares polynomial $h$ on $\R^n$ with $|S|\ell + \deg h \le d$.
\end{definition}

This is a relaxation (to pseudo-distributions) of the statement that the probability mass of a true distribution contains only solutions to $\cA$. Indeed, if an actual distribution $D$ is supported on the solutions to $\cA$, then 
$D$ satisfies $D \models_{\ell} \cA$ regardless of the value of $\ell$. 

We say that $D$ satisfies $\cA$ (without further specifying the degree) if $D \models_{\ell} \cA$ for $\ell=\max_{\{f \ge 0\} \subseteq \cA }\deg f$.
We say that a system $\cA$ of polynomial constraints in variables $x$ is \emph{explicitly bounded} if it contains a constraint of the form $\{\norm{x}^2\le M\}$.
The following theorem follows from \pref{thm:sos} and \cite{DBLP:journals/combinatorica/GrotschelLS81}. We give a proof in Appendix~\ref{sec:prelim_appendix} for completeness.

\begin{theorem}\label{thm:inequality_sos}
  There exists a $(n+\card{\cA})^{O(d)}$-time algorithm that, given any explicitly bounded and satisfiable system $\cA$ of polynomial constraints in $n$ variables, outputs (up to arbitrary accuracy) a degree-$d$ pseudo-distribution that satisfies $\cA$.
\end{theorem}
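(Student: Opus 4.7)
The plan is to express the set of degree-$d$ pseudo-distributions satisfying $\cA$ as a bounded convex set with an efficient weak separation oracle, then invoke the equivalence of weak separation and weak optimization of \cite{DBLP:journals/combinatorica/GrotschelLS81}.

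First, I would identify the set
\begin{displaymath}
  K \defeq \Set{\pE_D (1,x_1,\ldots,x_n)^{\otimes d} \Mid D \text{ is a degree-}d\text{ pseudo-distribution on }\R^n \text{ with } D\models_\ell \cA}\mper
\end{displaymath}
This is a convex subset of $\R^{n^{O(d)}}$, because every defining condition is a linear condition on the moment tensor: membership in the base pseudo-distribution set is given by \pref{thm:sos}, and for each $S\subseteq [m]$ with $\abs S \cdot \ell \le d$, the constraint $\pE_D[(\prod_{i\in S}f_i)h]\ge 0$ for all SOS polynomials $h$ of appropriate degree is equivalent to positive semidefiniteness of the localizing matrix whose $(\alpha,\beta)$ entry is $\pE_D[(\prod_{i\in S}f_i)\, x^\alpha x^\beta]$ for multi-indices $\alpha,\beta$ with $\abs\alpha+\abs\beta+\abs S\cdot \ell \le d$.

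Second, I would build a weak separation oracle for $K$ running in $(n+\card\cA)^{O(d)}$ time. Given a candidate tensor, first invoke the separation oracle of \pref{thm:sos} for the base pseudo-distribution set; if it accepts, then for each of the at most $\binom{m}{\le d/\ell}\le (n+\card\cA)^{O(d)}$ subsets $S$, assemble the corresponding localizing matrix (of dimension $n^{O(d)}$) and test positive semidefiniteness, returning as separating hyperplane the outer product of a negative eigenvector if any matrix fails.

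Third, I would use the explicit boundedness assumption $\norm{x}^2\le M \in \cA$ to bound the feasible region: the constraint $D\models_\ell \cA$ implies $\pE_D \norm{x}^{2k} \le M^k$ for all $k\le d/2$ (applying the localization constraints with $h\equiv 1$ inductively), so all entries of every tensor in $K$ are bounded by $M^{d/2}$, i.e.\ $K$ lives in a Euclidean ball whose radius is explicit from the input. Satisfiability of $\cA$ implies $K\ne \emptyset$: any true solution $x^\star$ gives a Dirac pseudo-distribution in $K$.

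Finally, I would apply the ellipsoid method with the above separation oracle to find a point of $K$ up to arbitrary prescribed accuracy; by \cite{DBLP:journals/combinatorica/GrotschelLS81}, this runs in time polynomial in the oracle complexity, the dimension $n^{O(d)}$, and $\log(1/\varepsilon)$, yielding the claimed $(n+\card\cA)^{O(d)}$ bound. The main routine obstacle is bookkeeping: verifying that the oracle precision, the radius bound derived from the explicit boundedness constraint, and the number of localizing matrices all combine to give the stated running time rather than something larger. The only subtle point is handling the exponentially many products $\prod_{i\in S}f_i$, but the degree constraint $\abs S \cdot \ell \le d$ caps the number of relevant products at $(n+\card\cA)^{O(d)}$, which is absorbed into the claimed bound.
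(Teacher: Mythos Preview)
Your proposal is correct and follows essentially the same approach as the paper: both cast the feasibility problem as finding a point in the convex set of moment tensors cut out by the base moment-matrix PSD constraint plus one localizing-matrix PSD constraint for each relevant product $\prod_{i\in S}f_i$, build a separation oracle by eigenvalue computation on these matrices, and invoke the separation/optimization equivalence of \cite{DBLP:journals/combinatorica/GrotschelLS81}. Your write-up is in fact more explicit than the paper's about how the explicit-boundedness and satisfiability hypotheses enter (to guarantee a bounding ball and nonemptiness for the ellipsoid method), which the paper's appendix proof glosses over.
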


\Dnote{}
\Jnote{}

\subsection{Sum of squares proofs}
\label{sec:preliminaries/proofs}

Let $x=(x_1,\ldots,x_n)$ be a tuple of indeterminates.
Let $\R[x]$ be the set of polynomials in these indeterminates with real coefficients.
A polynomial $p$ is a \emph{sum-of-squares} if there are polynomials $q_1,\ldots,q_r$ such that $p=q_1^2+\dots+q_r^2$.
Let $f_1,\ldots,f_r$ and $g$ be multivariate polynomials in $\R[x]$.
A \emph{sum-of-squares proof} that the constraints $\{f_1\ge 0,\ldots,f_r\ge 0\}$ imply the constraint $\{g\ge 0\}$ consists of sum-of-squares polynomials $(p_S)_{S\subseteq [n]}$ in $ \R[x]$ such that
\begin{displaymath}
  g = \sum_{S\subseteq [n]} p_S \cdot \prod_{i\in S} f_i \mper
\end{displaymath}
We say that this proof has \emph{degree $\ell$} if every set $S\subseteq [n]$ satisfies $\deg\paren{p_S\cdot \prod_{i\in S} f_i}\le \ell$
(in particular, this would imply $p_S = 0$ for every set $S$ such that $\deg \prod_{i\in S} f_i>\ell$).
If there exists a degree-$\ell$ sum-of-squares proof that $\{f_1\ge 0,\ldots,f_r\ge 0\}$ implies $\{ g\ge 0\}$, we write
\begin{displaymath}
  \{ f_1\ge 0,\ldots,f_r\ge 0\} \vdash_\ell \{ g \ge 0\}\mper
\end{displaymath}
In order to emphasize the indeterminates for the proofs, we sometimes write $  \{ f_1(x)\ge 0,\ldots,f_r(x)\ge 0\} \vdash_{x,\ell} \{ g(x) \ge 0\}\mper$

Sum-of-squares proofs obey the following inference rules, for all polynomials $f,g\from \R^n \to \R$ and $F\from \R^n \to \R^m, G\from \R^n \to \R^k, H\from \R^p\to \R^n$,
\begin{align}
  &\frac{\cA \vdash _\ell \{ f\ge 0 , g \ge 0\}}
    {\cA \vdash_{\ell} \{ f+g\ge 0\}},
    \quad
    \frac{\cA \vdash _{\ell} \{ f\ge 0\},~ \cA \vdash_{\ell'} \{g \ge 0\}}
    {\cA \vdash_{\ell+\ell'} \{ f\cdot g \ge 0\}}\mcom
    \tag{addition and multiplication}
  \\[0.5em]
  &\frac{\cA \vdash _\ell \cB,~ \cB\vdash _{\ell'} \cC}
    {\cA \vdash _{\ell\cdot \ell'} \cC}\mcom
    \tag{transitivity}
  \\[0.5em]
  &\frac{\{F\ge 0\} \vdash_{\ell} \{ G \ge 0\}}
    {\{F(H))\ge 0\} \vdash_{\ell\cdot \deg(H)} \{ G(H)\ge 0\}}
    \mper
    \tag{substitution}
\end{align}

Sum-of-squares proofs are sound and complete for polynomial constraints over pseudo-distributions, in the sense that sum-of-squares proofs allow us to reason about what kind of polynomial constraints are satisfied by a pseudo-distribution.
We defer the proofs of the following lemmas to \pref{app:prelim_appendix}.

\begin{lemma}[Soundness]\label{lem:soundness_sos}
  If $D \models_{\ell} \cA$ for a pseudo-distribution $D$ and there exists a sum-of-squares proof $\cA \vdash_{\ell'} \cB$, then $D \models_{\ell\cdot\ell'} \cB$.
\end{lemma}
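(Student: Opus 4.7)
The plan is to unfold the definition of $D \models_{\ell\ell'} \cB$ and reduce it, term by term, to instances of the defining inequality of $D \models_\ell \cA$. To verify $D \models_{\ell\ell'} \cB$ one must show $\pE_D \prod_{g \in T} g \cdot h \ge 0$ for every subset $T \subseteq \cB$ and every sum-of-squares polynomial $h$ with $|T|\ell\ell' + \deg h \le d$, where $d$ is the degree of the pseudo-distribution $D$. My approach is, for each $g \in T$, to substitute the certificate $g = \sum_S p_{g,S} \prod_{i \in S} f_i$ (with $\deg(p_{g,S}\prod_{i\in S} f_i) \le \ell'$) supplied by $\cA \vdash_{\ell'} \cB$, and then to distribute the product over $T$ into a sum of terms indexed by tuples $(S^g)_{g\in T}$ of subsets of $\cA$'s constraints. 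It then suffices to show each resulting summand has nonnegative pseudo-expectation under $D$.

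The key algebraic manipulation will be in handling repeated occurrences of the $f_i$ inside a fixed term. Each $f_i$ appears with multiplicity $k_i = |\{g \in T : i \in S^g\}|$, and writing $k_i = 2m_i + r_i$ with $r_i \in \{0,1\}$ yields the factorization $\prod_g \prod_{i \in S^g} f_i = \prod_{i \in S^*} f_i \cdot \paren{\prod_i f_i^{m_i}}^2$, where $S^* = \{i : r_i = 1\}$. Folding the square together with the SOS multipliers $\prod_g p_{g,S^g}$ and the given $h$ into a single sum-of-squares polynomial $Q$, the summand takes the canonical shape $\pE_D\, Q \cdot \prod_{i \in S^*} f_i$---exactly the kind of expression that the hypothesis $D \models_\ell \cA$ certifies to be nonnegative, provided the degree bound $|S^*|\ell + \deg Q \le d$ is met.

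The main obstacle will be verifying this degree inequality. Summing the proof-degree bounds $\deg p_{g,S^g} + \sum_{i \in S^g} \deg f_i \le \ell'$ over $g \in T$ gives $\sum_g \deg p_{g,S^g} + \sum_i k_i \deg f_i \le |T|\ell'$. The crux is then that the odd-multiplicity indices contribute $\sum_{i \in S^*}\deg f_i \ge |S^*|$ to this sum (we may WLOG assume $\deg f_i \ge 1$ for every $i$, since any nonnegative constant constraint can be removed from $\cA$), and separately that $|S^*| \le \sum_g |S^g| \le |T|\ell'$. Plugging these into the explicit formula $\deg Q = \deg h + \sum_g \deg p_{g,S^g} + 2\sum_i m_i \deg f_i$ yields $|S^*|\ell + \deg Q \le |S^*|(\ell-1) + |T|\ell' + \deg h \le |T|\ell\ell' + \deg h \le d$, as required. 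Once this combinatorial accounting is in hand, the soundness claim follows immediately by summing the nonnegative summands.
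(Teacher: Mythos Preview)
Your proposal is correct and follows essentially the same argument as the paper: substitute the certificates $g = \sum_S p_{g,S}\prod_{i\in S} f_i$, expand the product over $T$, split the accumulated $f_i$ factors by parity of their multiplicity, absorb the even part into a sum-of-squares multiplier, and verify the degree bound $|S^*|\ell + \deg Q \le d$ needed to invoke $D\models_\ell \cA$. Your degree accounting is in fact a bit cleaner than the paper's (which has a few $\ell/\ell'$ transpositions in its intermediate inequalities), but the structure of the two arguments is identical.
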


\begin{lemma}[Completeness]\label{lem:completeness_sos}
	Suppose $d\ge \ell' \ge \ell$, and $\cA$ is a collection of polynomial constraints with degree at most $\ell$, and $\cA\vdash \{x_1^2+\dots + x_n^2 \le B\}$ for some finite $B$. Let $\{g\ge 0\}$ be a polynomial constraint with degree $\ell'$. 
	If every degree-$d$ pseudo-distribution $D$ that satisfies $D \models_{\ell} \cA$ also satisfies $D \models_{\ell'} \{g\ge 0\}$, then for every $\epsilon > 0$, there is a sum-of-squares proof $\cA \vdash_{d} \{g\ge -\epsilon\}$.
	\footnote{The completeness claim stated here does not match the strength of the corresponding soundness claim.
		This reflects an impreciseness in how we count the degrees of intermediate sum-of-squares proofs (in particular our degree accounting is not tight under proof composition), and does not reflect than the power of the proofs themselves.}
\end{lemma}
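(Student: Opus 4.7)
The plan is to prove completeness through finite-dimensional convex duality between the cone of degree-$d$ SoS proofs from $\cA$ and the convex set of degree-$d$ pseudo-expectations satisfying $\cA$. Let $V = \R[x]_{\le d}$ denote the space of polynomials of degree at most $d$, and let $K \subseteq V$ denote the convex cone of polynomials of the form $\sum_{S} h_S \cdot \prod_{i \in S} f_i$ where $\{f_i\}$ are the constraints of $\cA$, each $h_S$ is a sum of squares, and each summand has total degree at most $d$. By definition of the SoS proof system, the desired conclusion $\cA \vdash_d \{g \ge -\epsilon\}$ is exactly the statement $g + \epsilon \in K$.

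First I would suppose for contradiction that $g + \epsilon \notin K$ for some $\epsilon > 0$. Assuming that $K$ is closed in $V$, the Hahn--Banach separation theorem in finite dimensions produces a linear functional $L : V \to \R$ with $L(g + \epsilon) < 0$ and $L \ge 0$ on $K$. The nonnegativity condition says that $L\bigl(h \cdot \prod_{i \in S} f_i\bigr) \ge 0$ for every SoS polynomial $h$ and every $S$ whose product has degree at most $d$, i.e., $L$ represents a degree-$d$ pseudo-expectation satisfying $\cA$ at degree $\ell$. Since $1 \in K$, we have $L(1) \ge 0$; after normalizing $L(1) = 1$ (handling the edge case $L(1) = 0$ by a small perturbation against any admissible pseudo-expectation, which exists unless $\cA$ is already infeasible at degree $d$ and thus admits a direct SoS refutation), $L$ corresponds via a Carath\'eodory-type representation to a finitely supported pseudo-distribution $D$ satisfying $D \models_\ell \cA$. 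The hypothesis then yields $D \models_{\ell'} \{g \ge 0\}$; specializing to $S = \emptyset$ and $h = 1$ gives $\pE_D g \ge 0$. Combined with $L(1) = 1$, the inequality $L(g + \epsilon) < 0$ forces $\pE_D g < -\epsilon$, a contradiction.

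The main obstacle is justifying that $K$ is closed in $V$, which is precisely where the boundedness hypothesis $\cA \vdash \{\|x\|^2 \le B\}$ enters the proof. This bound implies that every valid degree-$d$ pseudo-expectation has all monomial moments $\pE_D x^\alpha$ bounded in absolute value by a function of $B$ and $d$, so the set of such pseudo-expectations forms a compact spectrahedron in $V^*$. A standard bipolar/Archimedean argument from convex algebraic geometry then transfers this compactness on the dual side to closedness of $K$ on the primal side. With closedness in hand, the separation argument runs through and completes the contradiction, giving the claimed SoS proof for every $\epsilon > 0$.
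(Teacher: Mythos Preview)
Your overall strategy matches the paper's: establish convex duality between the degree-$d$ SoS cone $K$ and degree-$d$ pseudo-expectations, separate by Hahn--Banach, normalize the separating functional to a pseudo-distribution, and derive a contradiction. The divergence---and the gap---is in how you deploy the boundedness hypothesis $\cA \vdash \{\|x\|^2 \le B\}$.

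You invoke it to conclude that $K$ is closed, but the justification you give does not work. The bipolar theorem only yields $K^{**} = \overline{K}$, and compactness of the dual slice $K^* \cap \{L(1)=1\}$ says nothing about whether $K = \overline{K}$. Closedness of a \emph{truncated} quadratic module at a fixed degree $d$ is genuinely delicate and is not an automatic consequence of the Archimedean condition; it can fail, and the ``standard bipolar/Archimedean argument'' you cite does not establish it.

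The paper sidesteps this entirely. It never claims $K$ is closed; instead it separates $g$ (not $g+\epsilon$) from $\overline{K}$, after showing directly that $g \notin \overline{K}$. The argument: suppose $g_k \in K$ with $g_k \to g$ in coefficients. For large $k$ the polynomial $g - g_k$ has tiny coefficients, so the ball constraint $\|x\|^2 \le B$ furnishes an SoS certificate that $g - g_k + \epsilon \ge 0$. Since $\cA$ proves the ball constraint and $\cA \vdash_d g_k \ge 0$, adding gives $\cA \vdash g + \epsilon \ge 0$, contradicting the contrapositive assumption. This is more elementary than any closedness claim and is precisely where both the $\epsilon$ slack and the boundedness hypothesis are actually consumed. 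If you replace your third paragraph with this argument, your proof goes through along the same lines as the paper's.
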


\subsection{Matrix constraints and sum-of-squares proofs}

In sections~\ref{sec:rounding} and~\ref{sec:rounding_general_components}, we still state positive-semidefiniteness constraints on matrices, which will be implied by sum-of-squares proofs.
We define notation to express what it means for a matrix constraint to be implied by sum-of-squares.
While the duality between proof systems and convex relaxations also holds in the matrix case~\cite{CIMPRIC201289}, and it is possible to give a full treatment of matrix constraints in sum-of-squares, here we give an abridged and simplified treatment sufficient for our purposes.

\begin{definition}Let $\cA$ be a set of polynomial constraints in indeterminant $x$, and $M$ is a symmetric $p\times p$ matrix with entries in $\R[x]$. Then we write 
$\cA \vdash_{\ell} \{M \succeq 0\}$
if there exists a set of polynomials $q_1(x),\dots q_m(x)$ and a set of vectors $v_1(x),\dots, v_m(x)$ of $p$-dimension with entries in $\R[x]$ such that $\cA\vdash_{\ell_i} \{q_i\ge 0\}$ where $\ell_i + 2\deg(v_i)\le \ell$ for every $i$, and 
\begin{equation}
M = \sum_{i=1}^{m} q_i(x)\dyad{v_i(x)}\mper \label{eqn:def:matrix_sos}
\end{equation}
\end{definition}

The proof that sum-of-squares is sound for these matrix constraints is very similar to the analogous proof of Lemma~\ref{lem:soundness_sos} (see Appendix~\ref{sec:prelim_appendix}).

\begin{lemma}\label{lem:matrix_sos_soundness}
	Let $D$ be a pseudo-distribution of degree $d$ and $d\ge \ell\ell'$. Suppose $D\models_{\ell} \cA$, and $\cA \vdash_{\ell'} M\succeq 0$. 
	Then $\pE\left[M\right]\succeq 0$. 
\end{lemma}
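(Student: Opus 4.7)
The plan is to unwind the definition of $\cA \vdash_{\ell'} \{M \succeq 0\}$ and reduce the matrix claim to a collection of scalar claims handled by Lemma~\ref{lem:soundness_sos}. By hypothesis, there exist scalar polynomials $q_1,\ldots,q_m \in \R[x]$ and vector polynomials $v_1,\ldots,v_m \in \R[x]^p$ together with degree parameters $\ell_1,\ldots,\ell_m$ satisfying $\cA \vdash_{\ell_i} \{q_i \ge 0\}$ and $\ell_i + 2\deg(v_i) \le \ell'$, such that
\begin{equation*}
  M(x) \;=\; \sum_{i=1}^m q_i(x)\, \dyad{v_i(x)}\mper
\end{equation*}
To show $\pE_D[M] \succeq 0$, I will verify that $w^\trans \pE_D[M]\, w \ge 0$ for every fixed $w \in \R^p$.

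First I would use linearity of pseudo-expectation to write
\begin{equation*}
  w^\trans \pE_D[M]\, w \;=\; \sum_{i=1}^m \pE_{D(x)}\Bigl[\, q_i(x) \cdot \bigl(w^\trans v_i(x)\bigr)^2 \,\Bigr]\mper
\end{equation*}
Since $w$ is a constant vector, $h_i(x) := (w^\trans v_i(x))^2$ is a sum-of-squares polynomial in $x$ of degree $2\deg(v_i)$. The scalar soundness lemma (Lemma~\ref{lem:soundness_sos}) applied to the proof $\cA \vdash_{\ell_i} \{q_i \ge 0\}$ gives $D \models_{\ell \cdot \ell_i} \{q_i \ge 0\}$, which means $\pE_D[q_i \cdot h_i] \ge 0$ provided that $\deg(h_i) + \ell\cdot \ell_i \le d$.

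The main (and really only) step requiring care is the degree bookkeeping. From $\ell_i + 2\deg(v_i) \le \ell'$ we get $2\deg(v_i) \le \ell' - \ell_i$, and combining this with $\ell_i \le \ell'$ yields
\begin{equation*}
  \deg(h_i) + \ell\cdot \ell_i \;\le\; (\ell' - \ell_i) + \ell\cdot \ell_i \;=\; \ell' + (\ell - 1)\ell_i \;\le\; \ell' + (\ell - 1)\ell' \;=\; \ell\cdot \ell'\mper
\end{equation*}
By the hypothesis $d \ge \ell\cdot \ell'$, this certifies that every term on the right-hand side above is nonnegative, so their sum is nonnegative. Since $w$ was arbitrary, this completes the proof. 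No additional obstacle is expected; the result is essentially an application of scalar soundness term-by-term, with the degree budget $d \ge \ell\ell'$ chosen precisely to make the bookkeeping go through.
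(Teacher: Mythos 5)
Your proof is correct and matches the paper's argument in all essentials: testing $w^\trans \pE_D[M] w$ on fixed $w$, expanding via $M = \sum_i q_i \dyad{v_i}$, and bounding degrees by $\ell\ell' \le d$. The only cosmetic difference is that you invoke \pref{lem:soundness_sos} as a black box to get $D \models_{\ell\ell_i}\{q_i\ge 0\}$, whereas the paper inlines that step by expanding $q_i = \sum_S p_{i,S}\prod_{j\in S} f_j$ directly and appealing to $D\models_\ell\cA$; the degree arithmetic is equivalent.
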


We now give some basic properties of these matrix sum-of-squares proofs.

\begin{lemma}\label{lem:tensor_power_psd}
	Suppose $A,B'$ are symmetric matrix polynomials such that $\,\vdash \{A \succeq 0$, $B \succeq 0\}$.
	Then $\,\vdash \{A\otimes B \succeq 0\}$.
\end{lemma}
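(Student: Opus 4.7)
The plan is to unpack the definition of matrix sum-of-squares proofs for both $A$ and $B$, take the tensor product of their decompositions term by term, and use the standard identity $(vv^\trans) \otimes (ww^\trans) = (v \otimes w)(v \otimes w)^\trans$ to obtain a decomposition of $A \otimes B$ in exactly the form required by the definition.

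More concretely, by definition of $\vdash \{A \succeq 0\}$ there exist nonnegative scalar polynomials $q_1,\dots,q_m$ (each with $\vdash_{\ell_i} \{q_i \ge 0\}$) and vector polynomials $v_1,\dots,v_m$ such that $A = \sum_i q_i(x) \, v_i(x)v_i(x)^{\trans}$. Similarly, $B = \sum_j r_j(x)\, w_j(x) w_j(x)^\trans$ for some $r_j$ with $\vdash \{r_j \ge 0\}$. Expanding the tensor product bilinearly gives
\begin{equation*}
A \otimes B \;=\; \sum_{i,j} q_i(x)\, r_j(x)\, \bigl(v_i(x) v_i(x)^\trans\bigr) \otimes \bigl(w_j(x) w_j(x)^\trans\bigr) \;=\; \sum_{i,j} \bigl(q_i(x) r_j(x)\bigr)\, u_{ij}(x) u_{ij}(x)^\trans,
\end{equation*}
where $u_{ij}(x) = v_i(x) \otimes w_j(x)$ and we used the Kronecker identity $(AC) \otimes (BD) = (A \otimes B)(C \otimes D)$ specialized to rank-one matrices.

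It remains to verify that this is a valid matrix SoS decomposition. For the scalar coefficients, the multiplication inference rule for SoS proofs gives $\vdash_{\ell_i + \ell_j'} \{q_i r_j \ge 0\}$ from the individual nonnegativity proofs for $q_i$ and $r_j$. The vectors $u_{ij}$ are polynomial vectors of degree $\deg v_i + \deg w_j$, so the total degree accounting fits the definition of a degree-$\ell$ matrix SoS proof (with $\ell$ bounded by the sum of the degrees coming from the decompositions of $A$ and $B$). There is no real obstacle here, just bookkeeping; the whole argument is essentially the observation that the tensor product of two rank-one PSD matrices is rank-one PSD, combined with the fact that products of SoS-nonnegative polynomials are SoS-nonnegative.
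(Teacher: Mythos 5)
Your proof is correct and follows essentially the same route as the paper's: decompose $A$ and $B$ via their matrix SoS certificates, take the tensor product term by term using the Kronecker identity for rank-one factors, and invoke the multiplication rule for the scalar coefficients. The paper's version is terser (it omits the explicit degree bookkeeping and the appeal to the multiplication rule), but the argument is identical.
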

\begin{proof}
	Express $A = \sum_{i=1}^{n} q_i(x)\dyad{u_i(x)}$ and $B = \sum_{i=1}^{m} r_i(x)\dyad{v_i(x)}$.
	Then \[A \ot B = \sum_{i=1}^{n}\sum_{j=1}^{m}q_i(x)r_j(x)\dyad{\big[u_i(x) \ot v_j(x)\big]} \mper \qedhere \]
\end{proof}
\begin{lemma}\label{lem:tensor_product_psd}
	Suppose $A,B,A',B'$ are symmetric matrix polynomials such that $\,\vdash \{A \succeq 0$, $B' \succeq 0$, $ A\succeq A'$, $B \succeq B'\}$.
	Then $\,\vdash \{A\otimes B \succeq A'\otimes B'$, $B\otimes A \succeq B'\otimes A' \}$.
\end{lemma}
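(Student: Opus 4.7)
My plan is to reduce the tensor product monotonicity statement to the already-proved Lemma~\ref{lem:tensor_power_psd} (tensor product of PSD matrices is PSD) via a telescoping decomposition. The key identity I would use is
\begin{equation*}
A \otimes B - A' \otimes B' \;=\; A \otimes (B - B') \;+\; (A - A') \otimes B' \mper
\end{equation*}
This identity holds in $\R[x]$ as a polynomial identity, so it transfers verbatim to the sum-of-squares setting; no further ``proof'' of the identity itself is needed.

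Next I would verify that each of the two summands on the right-hand side is a tensor product of two matrices each of which is SoS-certified PSD. For the first summand, $A\succeq 0$ is given directly, and $B-B'\succeq 0$ is one of the hypotheses, so Lemma~\ref{lem:tensor_power_psd} yields $\vdash \{A \otimes (B-B') \succeq 0\}$. For the second summand, $B'\succeq 0$ is given directly and $A-A'\succeq 0$ is one of the hypotheses, so again Lemma~\ref{lem:tensor_power_psd} gives $\vdash \{(A-A') \otimes B' \succeq 0\}$. Adding two matrix-SoS certificates of PSDness produces a matrix-SoS certificate of PSDness for the sum (this is immediate from the definition~\eqref{eqn:def:matrix_sos}, since one can just concatenate the two representations $M = \sum q_i \dyad{v_i}$), which gives $\vdash \{A\otimes B \succeq A' \otimes B'\}$.

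For the second claim $B \otimes A \succeq B' \otimes A'$, I would use the symmetric decomposition
\begin{equation*}
B \otimes A - B' \otimes A' \;=\; (B - B') \otimes A \;+\; B' \otimes (A - A') \mper
\end{equation*}
Again $(B-B')\succeq 0$ and $A\succeq 0$ are given, and $B'\succeq 0$ and $(A-A')\succeq 0$ are given, so Lemma~\ref{lem:tensor_power_psd} applied to each summand, followed by the same additivity observation, completes the proof.

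I do not expect any real obstacle here: the only substantive content is choosing the ``mixed'' telescoping that avoids needing $A' \succeq 0$ (which is not among the hypotheses), and once the decomposition is chosen, every other step is either a polynomial identity or a direct appeal to Lemma~\ref{lem:tensor_power_psd}. The degree bookkeeping is also straightforward: the degree of each product $q_i(x) \dyad{v_i(x)}$ in the final certificate is at most the degree of the corresponding certificate produced by Lemma~\ref{lem:tensor_power_psd}, so no degree blowup occurs beyond what that lemma already incurs.
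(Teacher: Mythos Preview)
Your proposal is correct and matches the paper's proof essentially verbatim: the paper also uses the telescoping identity $A\otimes B - A'\otimes B' = A\otimes(B-B') + (A-A')\otimes B'$, applies Lemma~\ref{lem:tensor_power_psd} to each summand, adds, and then remarks that the same works with the tensor factors in the other order.
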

\begin{proof}
	By \pref{lem:tensor_power_psd}, we have $\,\vdash A\otimes (B-B') \succeq 0$ and $\,\vdash (A - A')\otimes B' \succeq 0$.
	Adding the two equations we complete the proof.
	We may also take the tensor powers in the other order.
\end{proof}

\begin{lemma}\label{lem:sos_psd_1}\label{lem:matrix-sos}
	Let $u=[u_1,\dots, u_d]$ be an indeterminate. Then $\,\vdash \{\dyad{u} \preceq \|u\|^2 \cdot \Id_d \}$.
\end{lemma}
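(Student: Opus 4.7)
The plan is to exhibit an explicit degree-2 matrix sum-of-squares decomposition of the matrix polynomial $\|u\|^2 \Id_d - uu^\trans$, using the vectors $v_i$ from the definition to be the elementary antisymmetric linear forms $u_i e_j - u_j e_i$ for $i<j$, with all coefficients $q_i \equiv 1$ (which are trivially nonnegative in the SoS sense).

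Concretely, I would claim the identity
\[
  \|u\|^2 \cdot \Id_d - uu^\trans \;=\; \sum_{1 \le i < j \le d}\bigl(u_i e_j - u_j e_i\bigr)\bigl(u_i e_j - u_j e_i\bigr)^\trans,
\]
where $e_1,\ldots,e_d$ is the standard basis. The right-hand side is manifestly of the form $\sum q_i(u) v_i(u) v_i(u)^\trans$ with $q_i = 1$ and $\deg v_i = 1$, so it directly yields a degree-$2$ matrix SoS certificate in the sense of equation~\eqref{eqn:def:matrix_sos}.

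The main and only step is then to verify the identity entrywise. Expanding a single summand gives
\[
  \bigl(u_i e_j - u_j e_i\bigr)\bigl(u_i e_j - u_j e_i\bigr)^\trans \;=\; u_i^2\, e_j e_j^\trans + u_j^2\, e_i e_i^\trans - u_i u_j\bigl(e_i e_j^\trans + e_j e_i^\trans\bigr).
\]
Summing over $i<j$, the $(k,k)$ diagonal entry becomes $\sum_{i<k} u_i^2 + \sum_{j>k} u_j^2 = \|u\|^2 - u_k^2$, which matches the $(k,k)$ entry of $\|u\|^2 \Id_d - uu^\trans$. For an off-diagonal entry $(i,j)$ with $i\ne j$, only the term indexed by the unordered pair $\{i,j\}$ contributes, giving $-u_iu_j$, which again matches.

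There is no real obstacle here: the identity is a classical algebraic fact (essentially the Cauchy--Binet / Lagrange-type identity underlying $\|u\|^2\|v\|^2 - \iprod{u,v}^2 = \sum_{i<j}(u_iv_j-u_jv_i)^2$ in the special case $v = $ a basis vector), and each summand is a rank-one matrix with a polynomial vector entry, so it fits the matrix SoS template exactly.
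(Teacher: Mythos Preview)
Your proof is correct and is essentially identical to the paper's own proof, which also exhibits the explicit decomposition $\|u\|^2 \Id_d - uu^\trans = \sum_{1\le i<j\le d}\dyad{(u_ie_j-u_je_i)}$. The paper simply states this identity without the entrywise verification you provide.
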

\begin{proof}
	The conclusion follows from the following explicit decomposition
	\[
		\vdash \|u\|^2 \Id- uu^{\trans} = \sum_{1\le i < j \le d} \dyad{(u_ie_j - u_je_i)}\succeq 0\qedhere
	\]
\end{proof}

\section{Rounding pseudo-distributions}\label{sec:rounding}

\subsection{Rounding by matrix diagonalization}

The following theorem analyzes a form of Jennrich's algorithm for tensor decomposition through matrix diagonalization, when applied to the moments of a pseudo-distribution.
We show that if the pseudo-distribution $D(u)$ has good correlation with some vector $a^{\otimes k}$, then with good chance a simple random contraction of the $(k+2)$-th moments of the pseudo-distribution will return a matrix with top eigenvector close to  $a$. 

Theorem~\ref{thm:reproducing_bks} below is the key ingredient toward a polynomial-time algorithm.
It states that in order for Jennrich's approach to successfully extract a solution in polynomial time, the correlation of the desired solution with the $(k+2)$-th moments of the pseudo-distribution only needs to be large compared to the spectral norm of the covariance matrix of the pseudo-distribution.
This covariance matrix can be made as small as $O(1/n)$ in spectral norm in many situations, including---as a toy example---when $D$ is a uniform distribution over $n$ orthogonal unit vectors.
Therefore in this sense the condition~\eqref{eqn:eqn59} below is a fairly weak requirement, which is key to the polynomial-time algorithm in Section~\ref{sec:general-tensor-decomposition}.

\begin{theorem}\label{thm:reproducing_bks}
  \label{thm:rounding-bks}
  Let $k\in \N$ be even and $\e \in (0,1)$.
  Let $D$ be a degree-$O(k)$ pseudo-distribution over $\R^d$ that satisfies $\{\norm{u}^2\le 1\}_{D(u)}$, let $a\in \R^d$ be a unit vector.
  Suppose that
  \begin{gather}
    \pE_{D(u)} \inner{a,u}^{k+2} \ge \Omega\Paren{\frac{1}{\e \sqrt k }} \cdot \Norm{\tilde\E _{D(u)} \dyad u} \mper\label{eqn:eqn59}
  \end{gather}
  Then, with probability at least $1/d^{O(k)}$ over the choice of $g\sim \cN(0,\Id_{d}^{\otimes k})$, the top eigenvector~$u^\star$ of the following matrix $M_g$ satisfies $\langle  a,u^\star \rangle^2\ge 1-O(\e)$,
  \begin{equation}
    M_g \seteq \pE_{D(u)}\iprod{g,u^{\otimes k}}\cdot \dyad{u}\mper
  \end{equation}

\end{theorem}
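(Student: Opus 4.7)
The plan is to decompose the Gaussian $g\in\R^{d^k}$ into a component along $a^{\otimes k}$ and a component orthogonal to it: write $g = t\cdot a^{\otimes k} + g'$, where $t\sim \cN(0,1)$ and $g'$ is an independent Gaussian supported on the hyperplane $(a^{\otimes k})^\perp$. Correspondingly $M_g = t\cdot M_{a^{\otimes k}} + M_{g'}$. I want to show that conditional on $t$ being sufficiently large---an event of probability at least $1/d^{O(k)}$ by standard Gaussian tail bounds---the signal $t M_{a^{\otimes k}}$ has a dominant eigenvalue with eigenvector close to $a$, and this dominance survives after adding the noise $M_{g'}$.

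The signal analysis is pure sum-of-squares reasoning about $M_{a^{\otimes k}} = \pE_{D(u)}\iprod{a,u}^k\dyad{u}$. By construction $a^\top M_{a^{\otimes k}} a = \lambda := \pE_{D(u)}\iprod{a,u}^{k+2}$. For any unit $v\perp a$, the SoS consequence of $\|u\|^2\le 1$ gives $\iprod{a,u}^2 + \iprod{v,u}^2 \le 1$; multiplying by $\iprod{a,u}^k$ (nonnegative by SoS since $k$ is even) and taking pseudo-expectations yields $v^\top M_{a^{\otimes k}} v \le p_k - \lambda$, where $p_j := \pE \iprod{a,u}^j$. The SoS Cauchy--Schwarz inequality $p_j^2 \le p_{j-2}p_{j+2}$ makes $\log p_j$ convex in $j$, so interpolating between $p_2 \le \|\pE \dyad{u}\|$ and $p_{k+2}=\lambda$ gives $p_k \le \|\pE \dyad{u}\|^{2/k}\lambda^{(k-2)/k}$. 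Substituting the hypothesis $\|\pE \dyad{u}\|/\lambda = O(\e\sqrt{k})$ then bounds $p_k - \lambda$ by $O(\e)\cdot \lambda$, so the restriction of $M_{a^{\otimes k}}$ to $a^\perp$ has norm $O(\e\lambda)$ while the Rayleigh quotient at $a$ equals $\lambda$.

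The noise analysis applies Gaussian matrix series concentration to $M_{g'} = \sum_{I\in[d]^k} g'_I N_I$ with $N_I := \pE u_I^{\otimes k}\dyad{u}$. A direct calculation gives $\sum_I N_I^2 = \pE_u\pE_{u'}[\iprod{u,u'}^{k+1}uu'^\top]$, and bounding its spectral norm via SoS Cauchy--Schwarz together with $\|u\|^2\le 1$ controls $\|M_{g'}\|$ up to a factor $\sqrt{\log d}$ in terms of $\|\pE \dyad{u}\|$. Conditioning on the event $t \ge T := \Theta(\sqrt{k\log d}/\e)$, which has probability at least $1/d^{O(k)}$, then produces: a Rayleigh quotient at $a$ of order $\Omega(T\lambda)$, a Rayleigh quotient at any unit $v\perp a$ of order $O(\e T\lambda)$, and a comparably small off-diagonal block $(\Id - \dyad{a})M_g a$ (controlled using the SoS bound $p_{2k+2}\le \lambda$ that follows from $\iprod{a,u}^k\le 1$). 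The Davis--Kahan sine-theta theorem applied to $M_g$ then yields $\iprod{a,u^\star}^2 \ge 1 - O(\e)$.

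The hardest step will be calibrating the SoS bound on the matrix variance $\|\sum_I N_I^2\|$ sharply enough in terms of $\|\pE \dyad{u}\|$ so that $\|M_{g'}\|$ is dominated by $T\lambda$ even when $\|\pE \dyad{u}\|$ and $\lambda$ are both very small (as happens for the near-orthonormal pseudo-distributions arising in applications). The factor $1/(\e\sqrt{k})$ in the hypothesis is precisely what makes the signal gain from conditioning on $t\ge T$, the noise bound from matrix concentration, and the error $p_k - \lambda$ from the log-convexity interpolation all balance out to produce the claimed $O(\e)$ accuracy.
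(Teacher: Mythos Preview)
Your overall architecture matches the paper's proof: decompose $g = \langle g,a^{\otimes k}\rangle\, a^{\otimes k} + g'$, show $M_{a^{\otimes k}}$ is close to $\lambda\,\dyad{a}$ in spectral norm, bound $\|M_{g'}\|$ via Gaussian matrix concentration together with the SoS inequality $\|\pE u^{\otimes p}(u^{\otimes q})^\top\|\le\|\pE\dyad{u}\|$, and condition on a large Gaussian coordinate. Your log-convexity interpolation $p_k \le p_2^{2/k}p_{k+2}^{(k-2)/k}$ for the orthogonal block is a valid alternative to the paper's one-line univariate inequality $x^{k-2}(1-x^2)\le \tfrac{2}{k-2}$, which yields $\|\Id_{-1}M_{a^{\otimes k}}\Id_{-1}\|\le \tfrac{2}{k-2}\alpha$ directly.

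There are two quantitative gaps. The serious one is your off-diagonal bound: the Cauchy--Schwarz that uses $p_{2k+2}\le\lambda$ gives only
\[
\bigl|\pE_{D(u)}\langle a,u\rangle^{k+1}\langle v,u\rangle\bigr|
\;\le\; p_{2k+2}^{1/2}\,\bigl(\pE_{D(u)}\langle v,u\rangle^2\bigr)^{1/2}
\;\le\; \sqrt{\lambda\alpha}\,,
\]
and since the hypothesis forces $\alpha\ge p_2\ge p_{k+2}=\lambda$ and hence $\e\sqrt{k}\ge\Omega(1)$, the tight regime is precisely $\alpha\asymp\lambda$, where $\sqrt{\lambda\alpha}=\Theta(\lambda)$ is the same order as the leading eigenvalue and Davis--Kahan yields nothing. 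The paper instead uses that $M_{a^{\otimes k}}=\pE\dyad{(\langle a,u\rangle^{k/2}u)}\succeq 0$ (this is where $k$ even matters), giving $\|\Id_{-1}M_{a^{\otimes k}}a\|\le\sqrt{\lambda\cdot\|\Id_{-1}M_{a^{\otimes k}}\Id_{-1}\|}=O\bigl(\sqrt{\lambda\alpha/k}\bigr)$, which \emph{is} $O(\e)\lambda$. The smaller gap is your threshold: with $T=\Theta(\sqrt{k\log d}/\e)$ the tail probability is $e^{-\Theta(k\log d/\e^2)}$, which after $\e\ge\Omega(1/\sqrt{k})$ is only $d^{-O(k^2)}$, not the claimed $d^{-O(k)}$. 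The paper takes $\tau=\Theta(\sqrt{k\log d})$; since $\E_{g'}\|M_{g'}\|\le O(\sqrt{\log d})\,\alpha\le O(\e\sqrt{k\log d})\,\lambda$, dividing by this smaller $\tau$ already makes the noise $O(\e)\lambda$, so the extra $1/\e$ is unnecessary and costly.
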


As before, we decompose $M_g$ into two parts, with $M_{a^{\otimes k}}$ and $M_{g'}$ defined in analogy with $M_g$.
\begin{equation}
\label{eq:20}
  M_g = \langle  g, a^{\otimes k} \rangle \cdot M_{a^{\otimes k}} + M_{g'}
  \quad \text{where } g' = g - \langle  g,a^{\otimes k} \rangle \cdot a^{\otimes k}\,.
\end{equation}

Our proof of \pref{thm:rounding-bks} consists of two propositions: one about the good part $M_{a^{\otimes k}}$ and one about the noise part $M_{g'}$.
The first proposition shows that $M_{a^{\otimes k}}$ is close to a multiple of $\dyad {a}$ in spectral norm
(which means that the top eigenvector of $M_{a^{\otimes k}}$ is close to $a$).
\begin{proposition}
\label{prop:rounding-bks-1} %
  In the setting of \pref{thm:rounding-bks}, for $t=\pE_{D(u)} \langle a,u \rangle^{k+2}$,
  \begin{equation}
    \Norm{M_{a^{\otimes k}} - t \cdot \dyad{a}} \le O\Paren{\e} \cdot t\mper
  \end{equation}
\end{proposition}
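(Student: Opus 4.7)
The plan is to decompose each vector $u$ in the support of $D$ along $a$: write $u = \langle a,u\rangle a + w$ with $w := (\Id - aa^\top)u \in a^\perp$. Substituting $uu^\top = \langle a,u\rangle^2 aa^\top + \langle a,u\rangle(aw^\top + wa^\top) + ww^\top$ into the definition of $M_{a^{\otimes k}}$ and using $t = \pE_{D(u)}\langle a,u\rangle^{k+2}$, I obtain the block decomposition
\[
  M_{a^{\otimes k}} - t\,\dyad a \;=\; a r^\top + r a^\top + R,
\]
where $r := \pE_{D(u)} \langle a,u\rangle^{k+1} w \in a^\perp$ and $R := \pE_{D(u)} \langle a,u\rangle^k ww^\top \succeq 0$ is supported on $a^\perp$. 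A standard block-matrix estimate then gives $\|M_{a^{\otimes k}} - t\,\dyad a\| \le \|R\| + 2\|r\|$, so the task reduces to bounding each of $\|R\|$ and $\|r\|$.

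To control $\|R\|$, I note that for every unit $v \perp a$ we have $v^\top R v = \pE_{D(u)} \langle a,u\rangle^k \langle v,u\rangle^2$, and I combine two complementary SoS estimates. The first is the spectral bound $\pE_{D(u)} \langle a,u\rangle^k \langle v,u\rangle^2 \le \pE_{D(u)} \langle v,u\rangle^2 \le \|\pE_{D(u)} uu^\top\|$ (using $\langle a,u\rangle^k \le 1$, which has an SoS proof from $\|u\|^2 \le 1$). The second is a pointwise trace-type bound: since $v \perp a$ and $\|u\|^2 \le 1$, the SoS inequality $(\langle a,u\rangle^2 + \langle v,u\rangle^2)^{k/2+1} \le \|u\|^{k+2} \le 1$ holds, and extracting the $\binom{k/2+1}{1}\langle a,u\rangle^k \langle v,u\rangle^2$ term from the binomial expansion (all terms are nonneg by SoS) yields $\langle a,u\rangle^k \langle v,u\rangle^2 \le \tfrac{2}{k+2}$. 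Combining these two bounds with the hypothesis \eqref{eqn:eqn59}, rewritten as $\|\pE_{D(u)} uu^\top\| \le O(\epsilon\sqrt{k})\cdot t$, I would derive $\|R\| \le O(\epsilon)\cdot t$.

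To control $\|r\|$, for each unit $v \perp a$ I apply SoS Cauchy--Schwarz to write
\[
  (v^\top r)^2 = \bigl(\pE_{D(u)} \langle a,u\rangle^{k+1}\langle v,u\rangle\bigr)^2
  \;\le\; \pE_{D(u)} \langle a,u\rangle^{k+2} \cdot \pE_{D(u)} \langle a,u\rangle^k\langle v,u\rangle^2
  \;=\; t \cdot v^\top R v,
\]
so $\|r\|^2 \le t\cdot \|R\|$. Once the bound $\|R\| \le O(\epsilon) t$ is in hand, this gives $\|r\| \le O(\sqrt{\epsilon})\cdot t$, and in fact $\|R\|$ is the dominant contribution, yielding $\|M_{a^{\otimes k}} - t\,\dyad a\| \le O(\epsilon)\cdot t$ as claimed.

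The main obstacle is the bound on $\|R\|$: taken individually, neither the spectral bound (which only gives $\|R\| \le O(\epsilon\sqrt k)\cdot t$, off by a factor of $\sqrt k$) nor the pointwise bound $2/(k+2)$ (which has no dependence on the pseudo-distribution at all and is wasteful when $t$ is small) is strong enough. The key is to trade them off against each other---essentially by taking a geometric mean---so that the factor $\sqrt k$ saved by the $(k+2)^{-1}$-factor in the pointwise bound exactly cancels the factor $\sqrt k$ lost in the hypothesis $\|\pE uu^\top\| \le O(\epsilon\sqrt k)\,t$. I also need to verify that every manipulation---the binomial expansion, the nonnegativity of individual terms, and the Cauchy--Schwarz step for $\|r\|$---is captured by a degree-$O(k)$ SoS proof from $\{\|u\|^2 \le 1\}$, so that the argument transfers from genuine distributions to the pseudo-distribution $D$.
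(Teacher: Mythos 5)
Your block decomposition (into the $aa^\top$ block, the cross term $r$, and the $a^\perp$-block $R$) is exactly the same as the paper's decomposition into $\Id_1 M\Id_1$, $\Id_1 M\Id_{-1}$, $\Id_{-1}M\Id_{-1}$, and your Cauchy--Schwarz step for $\|r\|$ is equivalent to the paper's use of $\|\Id_1 M\Id_{-1}\| \le \|\Id_1 M\Id_1\|^{1/2}\|\Id_{-1}M\Id_{-1}\|^{1/2}$ for a PSD matrix. However, both remaining steps of your argument have genuine gaps.

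First, the ``geometric mean'' combination of your two bounds on $v^\top Rv$ does not yield what you want. Writing $\alpha = \|\pE \dyad u\|$, Bound A gives $v^\top R v \le \alpha$ and Bound B gives $v^\top Rv \le \tfrac{2}{k+2}$; the best you can extract from these two scalar bounds is $\min(\alpha, \tfrac{2}{k+2})$, and in the regime $\alpha \le \tfrac{2}{k+2}$ the minimum is $\alpha \le O(\epsilon\sqrt k)\,t$, which is off by a $\sqrt k$ factor. The geometric mean is even weaker. The fix, which is what the paper does, is not to apply the pointwise SoS bound directly to $\langle a,u\rangle^k\langle v,u\rangle^2$; instead, first peel off a factor $\langle a,u\rangle^2$ and bound only the remaining $\langle a,u\rangle^{k-2}\langle v,u\rangle^2 \le \tfrac 2k$ (or, as the paper phrases it, $\langle a,u\rangle^{k-2}(1-\langle a,u\rangle^2) \le \tfrac{2}{k-2}$). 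Then $v^\top Rv \le \tfrac 2k\pE\langle a,u\rangle^2 \le \tfrac 2k\alpha$, a product of the two savings, giving $\|R\| \le O(\epsilon/\sqrt k)\,t$, strictly stronger than the $O(\epsilon)t$ you aim for.

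Second, even granting $\|R\| \le O(\epsilon)t$, your claim that ``$\|R\|$ is the dominant contribution'' is backwards: you derive $\|r\| \le O(\sqrt\epsilon)t$, and for small $\epsilon$ one has $\sqrt\epsilon \gg \epsilon$, so the cross term dominates and your argument would only give $\|M_{a^{\otimes k}} - t\dyad a\| \le O(\sqrt\epsilon)t$, not the claimed $O(\epsilon)t$. The repair again goes through the sharper bound on $\|R\|$: once $\|R\| \le \tfrac 2k\alpha$, combine your Cauchy--Schwarz $\|r\|^2 \le t\|R\|$ with the elementary fact $t = \pE\langle a,u\rangle^{k+2} \le \pE\langle a,u\rangle^2 \le \alpha$ to obtain $\|r\| \le \sqrt{\alpha\cdot\tfrac 2k\alpha} = \alpha\sqrt{2/k} \le O(\epsilon\sqrt k)\,t \cdot \tfrac{1}{\sqrt k} = O(\epsilon)t$. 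With these two repairs, your route coincides with the paper's.
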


The second proposition shows that $M_{g'}$ has small spectral norm in expectation.
\begin{proposition}
  \label{lem:bks_noise_part} %
  \label{prop:bks_noise_part} %
  \label{prop:rounding-bks-2} %
  In the setting of \pref{thm:reproducing_bks}, let $g'=g - \langle  g,a^{\otimes k} \rangle \cdot a^{\otimes k}$.
  Then, for $t=\pE_{D(u)} \langle a,u \rangle^{k+2}$,
  \begin{displaymath}
    \E_{g'} \Norm{ M_{g'}} \le  O({\e^2 k \log d})^{1/2} \cdot t\,.
  \end{displaymath}
\end{proposition}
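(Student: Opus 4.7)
The plan is to express $M_{g'}$ as a Gaussian matrix series and apply the non-commutative Khintchine inequality for Gaussian series \cite{zbMATH05946839}. Picking any orthonormal basis $\{e_\beta\}$ of $(a^{\otimes k})^\perp \subset (\R^d)^{\otimes k}$, we can write $g' = \sum_\beta g_\beta e_\beta$ with $g_\beta \sim \cN(0,1)$ iid, which yields
\[
  M_{g'} = \sum_\beta g_\beta \tilde{B}_\beta, \qquad \tilde{B}_\beta := \pE_{D(u)} \iprod{e_\beta, u^{\otimes k}} uu^\trans.
\]
Since each $\tilde B_\beta$ is symmetric, Tropp's bound gives $\E_{g'} \lVert M_{g'} \rVert \le \sqrt{2 \log(2d) \cdot \bigl\lVert \sum_\beta \tilde B_\beta^2 \bigr\rVert}$.

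Next we compute the matrix variance. Using the identity $\sum_\beta \iprod{e_\beta, u^{\otimes k}}\iprod{e_\beta, v^{\otimes k}} = \iprod{Pu^{\otimes k}, Pv^{\otimes k}} = \iprod{u,v}^k - \iprod{a,u}^k \iprod{a,v}^k$ (where $P = \Id - a^{\otimes k}(a^{\otimes k})^\trans$) together with $uu^\trans vv^\trans = \iprod{u,v} uv^\trans$, a direct manipulation gives
\[
  \sum_\beta \tilde B_\beta^2 \;=\; \pE_u\, uu^\trans \phi_k(u) \;-\; \phi_k(a)^2, \qquad \phi_k(u) := \pE_v \iprod{u,v}^k vv^\trans.
\]
It then suffices to bound the spectral norm of this by $O(\lVert \pE uu^\trans\rVert^2)$: combined with the hypothesis $\lVert \pE uu^\trans\rVert \le O(\e\sqrt{k})\,t$ from \pref{thm:reproducing_bks}, this gives $\E_{g'}\lVert M_{g'}\rVert \le O(\sqrt{\log d})\,\lVert\pE uu^\trans\rVert \le O(\e^2 k \log d)^{1/2} \cdot t$ as required.

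The main obstacle is the spectral-norm bound of the previous display. The key sum-of-squares input is $\phi_k(u) \preceq \pE vv^\trans$ pointwise in $u$, which follows from the SoS inequality $\iprod{u,v}^k \le \lVert u\rVert^k \lVert v\rVert^k \le 1$ combined with the matrix-SoS framework of \prettyref{lem:matrix_sos_soundness} (a nonnegative scalar times a PSD matrix is PSD). A naive Cauchy--Schwarz gives only $O(\lVert \pE uu^\trans\rVert)$, losing a factor, so the cancellation from subtracting $\phi_k(a)^2$ must be exploited: intuitively, the ``top-direction'' part of $\pE uu^\trans \phi_k(u)$ equals $\phi_k(a)\cdot \pE uu^\trans$ to leading order and is cancelled. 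Concretely, one expands
\[
  \iprod{u,v}^k - \iprod{a,u}^k \iprod{a,v}^k \;=\; \sum_{j=1}^k \binom{k}{j} \iprod{a,u}^{k-j}\iprod{a,v}^{k-j}\iprod{u_\perp, v_\perp}^j
\]
with $u_\perp := (\Id - aa^\trans) u$, and applies SoS Cauchy--Schwarz term-by-term, using the pointwise inequality $\phi_k(u) \preceq \pE vv^\trans$ and the baseline identity $\pE_{u,v} \iprod{u,v}\iprod{u,w}\iprod{v,w} = w^\trans (\pE uu^\trans)^2 w$ to convert each factor of $\iprod{u,v}$ into a factor of $\lVert \pE uu^\trans\rVert$. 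Care will be required in handling odd powers of $\iprod{u_\perp, v_\perp}$ (whose sign is indefinite), which prevents direct appeal to operator-monotone bounds and necessitates pairing such factors through squared quantities to retain SoS-validity.
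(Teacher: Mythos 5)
Your high-level structure matches the paper: both proofs run through the Gaussian matrix-series bound of \cite{zbMATH05946839} (the paper's \pref{cor:matrix_concentration_general} is exactly this, packaged for random contractions), and your computation of the matrix variance
\[
\sum_\beta \tilde B_\beta^2 = \pE_u\, uu^\trans \phi_k(u) - \phi_k(a)^2
\]
is correct. Where the proposal goes wrong is in the diagnosis of what is left to do. You assert that the cancellation coming from projecting $g$ onto $(a^{\otimes k})^\perp$ (the subtracted $\phi_k(a)^2$) is essential, and propose a binomial expansion in $\iprod{u_\perp, v_\perp}$ with term-by-term SoS Cauchy--Schwarz to exploit it, acknowledging unresolved sign issues for odd powers. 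In fact the cancellation is not needed, and the expansion is an unnecessary detour that you have not shown can be made to work.

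Here is why no cancellation is needed. The unprojected part factorizes as a Gram matrix: with $M := \pE_u\, u\,(u^{\otimes(k+1)})^\trans$,
\[
\pE_u\, uu^\trans \phi_k(u) \;=\; \pE_{u,v}\iprod{u,v}^{k+1}\, uv^\trans \;=\; M M^\trans\mper
\]
Since $\phi_k(a)^2 \succeq 0$ and (as you note) $\sum_\beta \tilde B_\beta^2 \succeq 0$, it follows that $0 \preceq \sum_\beta \tilde B_\beta^2 \preceq MM^\trans$, hence $\bigl\lVert\sum_\beta \tilde B_\beta^2\bigr\rVert \le \lVert M\rVert^2$. The subtraction can only decrease the norm, so it need not be exploited. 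The genuine content of the proposition is therefore the inequality $\lVert M\rVert \le \lVert\pE uu^\trans\rVert$, which is precisely \pref{thm:higher-tensor-norm}: it is proved by \pref{lem:spectral_norm_1} (pseudo-expectation Cauchy--Schwarz to reduce to the balanced case) combined with \pref{lem:spectral_norm_2} (the matrix-SoS step $\pE(uu^\trans)^{\otimes p}\preceq \Id\otimes\pE(uu^\trans)^{\otimes(p-1)}$ from $\{\lVert u\rVert^2\le 1\}$, applied inductively). Your remark that ``a naive Cauchy--Schwarz gives only $O(\lVert\pE uu^\trans\rVert)$'' does not apply to this argument; \pref{lem:spectral_norm_1} and \pref{lem:spectral_norm_2} give the square $O(\lVert\pE uu^\trans\rVert^2)$ with no cancellation at all. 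Meanwhile the binomial-expansion route has a genuine gap --- beyond the sign issues you flag, the $\binom{k}{j}$ coefficients risk blowing up unless the constraint $\lVert u\rVert^2\le 1$ is threaded through carefully --- and the right move is to abandon it. The paper's proof is exactly the Gram-factorization argument above, stated as \pref{cor:matrix_concentration_general} applied to the $3$-tensor $T = \pE\, u\otimes u\otimes u^{\otimes k}$, followed by \pref{thm:higher-tensor-norm}.
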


Before proving the above propositions, we demonstrate how they allow us to prove \pref{thm:rounding-bks}.
\begin{proof}[Proof of \pref{thm:rounding-bks}]
  We are to show that with probability $1/d^{O(k)}$ over the choice of the Gaussian vector~$g$, there exists $s\in\R$ such that $\Norm{s\cdot M_g -  \dyad{a}}\le O(\e)$. By Davis-Kahan Theorem (see Theorem~\ref{thn:davis-kahan-rank-1}), this implies the conclusion of Theorem~\ref{thm:rounding-bks}. 
  \TMnote{Fixed the references. Though it seems that we can get better error using Davis-Kahan -- it was $O(\epsilon)$ in $\ell_2$-square distance but in fact Davis-Kahan gives $O(\epsilon)$ in $\ell_2$ error}%
  Let $t=\pE_{D(u)} \langle a,u \rangle^{k+2}$.
  For a parameter $\tau = \Omega(k \log d)^{1/2}$, we bound the spectral norm conditioned on the event $\iprod{g,a^{\otimes k}}\ge \tau$,
  \begin{align}
    & \E_{g} \Brac{\; \Norm{\tfrac{1} {\langle  g,a^{\otimes k} \rangle\cdot t }M_g - \dyad{a}}\;\;\middle\vert\;\; \iprod{g,a^{\otimes k}}\ge \tau \;}\notag\\
    & \le \Norm{\tfrac 1t M_{a^{\otimes k}} - \dyad{a}} + \E_{g} \Brac{\,\tfrac 1 {\langle  g,a^{\otimes k} \rangle\cdot t}\Norm{M_{g'}} \;\;\middle\vert\;\; \iprod{g,a^{\otimes k}}\ge \tau\;}
      \tag{by \pref{eq:20}}\\
    & \le \Norm{\tfrac 1t M_{a^{\otimes k}} - \dyad{a}} + \tfrac 1 {\tau\cdot t}\cdot \E_{g'} \Norm{M_{g'}}
      \tag{by independence of $\langle  g,a^{\ot k} \rangle$ and $g'$}
      \\
    & \le O(\e)  + \tfrac 1 \tau \cdot O(\e^2 k\log d)^{1/2}
      \tag{by \pref{prop:rounding-bks-1} and \ref{prop:rounding-bks-2}}\\
      & \le O(\e) \mper
  \end{align}
  By Markov's inequality, it follows that conditioned on $\langle  g,a^{\ot k} \rangle\ge \tau$, the event $\Norm{\tfrac{1} {\langle  g,a^{\ot k} \rangle\cdot t }M_g - \dyad{a}}\le O(\e)$ has probability at least $\Omega(1)$.
  The theorem follows because the event $\langle  g,a^{\ot k} \rangle\ge \tau$ has probability at least $d^{-O(k)}$.
\end{proof}

\begin{proof}[Proof of \pref{prop:rounding-bks-1}]
  We are to bound the spectral norm of $M_{a^{\otimes k}} - t \cdot \dyad{a}$ for $t=\pE_{D(u)} \langle a,u \rangle^{k+2}$.
  Let $\alpha=\norm{\pE_{D(u)}\dyad u}$.
  Let $\Id_{1}=\dyad {a}$ be the projector onto the subspace spanned by $a$ and let $\Id_{-1}=\Id - \Id_{1}$ be the projector on the orthogonal complement.
  By our choice of $t$, we have $\Id_1 M_{a^{\otimes k}} \Id_1 = t \cdot \dyad{a}$.

  Since $\Id_{-1}\Id_1=0$, we can upper bound the spectral norm of ${M_{a^{\otimes k}}}-t\cdot \dyad {a^{\otimes k}}$,
  \TMnote{low priority thing: This part of the proof can be replaced by Lemma~\ref{lem:improved_wedin}}
  \begin{align}
    \Norm{M_{a^{\otimes k}}-t\cdot \Id_1}
    &\le \Norm{\Id_{1} (M_{a^{\otimes k}}-t\cdot \Id_1)\Id_{1}}
    +  \Norm{\Id_{-1} M_{a^{\otimes k}}\Id_{-1}}
    + 2 \Norm{\Id_{1} M_{a^{\otimes k}}\Id_{-1}}\notag
    \\
    & \le \Norm{\Id_{-1} M_{a^{\otimes k}}\Id_{-1}}
    + 2 \Norm{\Id_{1} M_{a^{\otimes k}}\Id_{-1}}
      \tag{because $\Id_{1} M_{a^{\otimes k}}\Id_{1}=t\cdot \Id_1$}\\
    & \le \Norm{\Id_{-1} M_{a^{\otimes k}}\Id_{-1}}
    + 2 \Norm{\Id_{1} M_{a^{\otimes k}}\Id_1}^{1/2}
      \cdot \Norm{\Id_{-1} M_{a^{\otimes k}}\Id_{-1}}^{1/2}
      \tag{because $M_{a^{\otimes k}}\succeq 0$}\\
    & \le \Norm{\Id_{-1} M_{a^{\otimes k}}\Id_{-1}}
    + 2 \sqrt{\alpha}
      \cdot \Norm{\Id_{-1} M_{a^{\otimes k}}\Id_{-1}}^{1/2}\,.
      \label{eq:19}
  \end{align}
  It remains to bound the spectral norm of $\Id_{-1} M_{a^{\otimes k}} \Id_{-1}$,
  \begin{align}
    \Norm{\Id_{-1} M_{a^{\otimes k}} \Id_{-1}}
    & = \Norm{\pE_{D(u)} \langle a,u\rangle^k \cdot \Id_{-1}\dyad{u}\Id_{-1}} \notag\\
    & \le \pE_{D(u)} \langle a,u\rangle^k \cdot (1-\langle  a,u \rangle^2)
      \tag{because $\vdash \Id_{-1}\dyad u \Id_{-1}\preceq (\lVert  u \rVert^2-\langle  a_1,u \rangle ^2)\Id$}\\
    & \le \tfrac 2 {k-2} \pE_{D(u)} \langle a,u\rangle^2
      \tag{using $\vdash_{k+2} x^{k-2} \cdot (1-x^2) \le \tfrac 2 {k-2}$; see below}\\
    & \le \frac {2}{k-2} \cdot\alpha \label{eq:8}
  \end{align}
  Basic calculus shows that the inequality $x^{k-2} \cdot (1-x^2) \le \frac 2 {k-2}$ holds for all $x\in \R$.
  Since it is a true univariate polynomial inequality in $x$, it has a sum-of-squares proof with degree no larger than the degree of the involved polynomials, which is $k+2$ in our case. %
  Combining \pref{eq:19} and \pref{eq:8}, yields as desired that
  \Dnote{}
  \begin{displaymath}
    \Norm{M_{a^{\otimes k}} - t\cdot \dyad a}
    \le O\Paren{\tfrac{1}{\sqrt k}} \cdot \alpha
    \le O\Paren{\e} \cdot t
    \mcom
  \end{displaymath}
  where the second step uses the condition of \pref{thm:rounding-bks} on $t=\pE_{D(u)}\langle a,u \rangle$ and $\alpha=\norm{\pE_{D(u)}\dyad{u}}$.
\end{proof}

\begin{proof}[Proof of \pref{prop:rounding-bks-2}]
  The matrix $M_{g'}=\pE_{D(u)} \langle  g', u^{\otimes k}\rangle \cdot \dyad u$, whose spectral norm we are to bound, is a random contraction of the third-order tensor $T = \pE_{D(u)} u \otimes u \otimes (u^{\otimes k})$.
  \pref{cor:matrix_concentration_general}
  gives the following bound on the expected norm of a random contraction in terms of spectral norms of two matrix unfoldings of $T$---which turn out to be the same in our case due to the symmetry of $T$.
  \begin{equation}
    \label{eq:21}
    \E_{g'} \lVert  M_{g'} \rVert \le O({\log d})^{1/2} \cdot \max\{\lVert  T \rVert_{\{1\},\{23\}}, \lVert  T \rVert_{\{2\},\{13\}}\}
    = O({\log d})^{1/2}\cdot \biggl\lVert  \pE_{D(u)} u^{\otimes k} \transpose u \biggr\rVert \,.
  \end{equation}
  \pref{thm:higher-tensor-norm} shows that for any pseudo-distribution $D$ that satisfies $\{\norm{u}^2\le 1\}$,
  \begin{equation}
    \label{eq:22}
    \biggl\lVert  \pE_{D(u)} u^{\otimes k} \transpose u \biggr\rVert
    \le \biggl\lVert  \pE_{D(u)} \dyad u \biggr\rVert\mper
  \end{equation}
  The statement of the lemma follows by combining the previous bounds \pref{eq:21} and \pref{eq:22},
  \begin{displaymath}
    \E_{g'} \lVert  M_{g'} \rVert \le O({\log d})^{1/2}\cdot \biggl\lVert  \pE_{D(u)} u^{\otimes k} \transpose u \biggr\rVert
    \le O({\log d})^{1/2}\cdot \biggl\lVert  \pE_{D(u)} \dyad u \biggr\rVert \le O({\e^2 k \log d})^{1/2}\cdot t \mcom
  \end{displaymath}
  using condition \pref{eqn:eqn59} of \pref{thm:rounding-bks} which yields $t=\pE_{D(u)}\langle a,u \rangle^{k+2} \ge \Omega\left((\e \sqrt{k})^{-1}\right) \norm{\pE_{D}\dyad{u}}$.
\end{proof}

\subsection{Improving accuracy of a found solution}
\label{sec:boost-accuracy}
We need one more technical ingredient before analyzing our main algorithm.
Previously, the run-time of the sum-of-squares algorithm in~\cite{DBLP:conf/stoc/BarakKS15} (on which our algorithm is based) depended exponentially on the accuracy parameter $1/\epsilon$, and we give here a simple boosting technique that allows us to remove this dependency and achieve polynomially small error.

Here we have a set of nearly isotropic vectors $a_1,\dots ,a_n$.
We give a sum-of-squares proof that if $\sum_{i=1}^n \inner{a_i,u}^4$ is only $\epsilon$ off from its maximum possible value, and if $u$ has constant correlation with some $a_i$, then $u$ must in fact be $(1-O(\epsilon))$-correlated with $a_i$.
Intuitively, the former constraint forces $D$ to roughly be a mixture distribution over vectors that are close to $a_1,\dots, a_n$, and the latter one forces it to actually only be close to $a_i$.
We then briefly show how this proof implies an algorithm to boost the accuracy when we already know a vector $b$ that is $0.01$-close to a solution, by solving for a pseudo-distribution with the added constraint $\{\iprod{b,u}^2 \ge 0.9\}$.

\begin{theorem}
	\label{thm:boosting-accuracy}
	Let $\e>0$ be smaller than some constant.
	Let $a_1,\ldots,a_n\in \R^d$ be unit vectors such that $\norm{\sum_{i=1}^n \dyad {a_i} }\le 1+\e$.
	Define the following systems of constraints, for each $j \in [n]$ or unit vector $b \in \R^d$:
	\begin{align*}
		\cA_j &:= \left\{\norm{u}^2\le 1, \sum\nolimits_{i=1}^n \iprod{a_i,u}^4 \ge 1-\e,\; \iprod{a_j,u}^2\ge \tfrac{1}{2}\right\} _{D(u)} \\
		\cB_b &:= \left\{\norm{u}^2\le 1, \sum\nolimits_{i=1}^n \iprod{a_i,u}^4 \ge 1-\e,\; \iprod{b,u}^2\ge 0.9 \right\} _{D(u)}\mper
	\end{align*}
	Then $\cA_j \vdash_4 \{\iprod{a_j,u}^2 \ge 1 - 10\e\}$ for all $j \in [n]$, and also $\cB_b \vdash \cA_j$ and $\iprod{a_i,b}^2 \ge 0.8$ for some $j \in [n]$.
\end{theorem}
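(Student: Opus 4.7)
My plan is to establish the two SoS claims in sequence. The first claim, $\cA_j \vdash_4 \{\iprod{a_j,u}^2 \ge 1-10\e\}$ for every fixed $j$, is the heart of the theorem; it exploits the near-tightness of the fourth-moment constraint $\sum_i\iprod{a_i,u}^4\ge 1-\e$ against the spectral upper bound $\sum_i\iprod{a_i,u}^2\le 1+\e$ (following from $\norm{\sum_i \dyad{a_i}}\le 1+\e$ via matrix-SoS) to force the factorization $\iprod{a_i,u}^2(1-\iprod{a_i,u}^2)\le 2\e$ for every $i$. The second claim then uses this same $i$-wise concentration, combined with $\iprod{b,u}^2\ge 0.9$, to pin the pseudo-distribution to the $a_j$ most correlated with $b$.

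For the first claim the SoS chain goes as follows. Cauchy--Schwarz (SoS via $\norm{a_i}^2\norm{u}^2-\iprod{a_i,u}^2=\sum_{k<\ell}(a_{ik}u_\ell-a_{i\ell}u_k)^2$) together with $\norm{u}^2\le 1$ gives $\iprod{a_i,u}^2\le 1$; multiplying the square $\iprod{a_i,u}^2\ge 0$ by $1-\iprod{a_i,u}^2\ge 0$ then yields $\iprod{a_i,u}^4\le\iprod{a_i,u}^2$ at degree $4$. The matrix-SoS inequality $(1+\e)\Id-\sum_i\dyad{a_i}\succeq 0$ from \pref{lem:matrix_sos_soundness} gives $\sum_i\iprod{a_i,u}^2\le (1+\e)\norm{u}^2\le 1+\e$. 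Chaining,
\[
1-\e\le\sum_i\iprod{a_i,u}^4=\iprod{a_j,u}^4+\sum_{i\ne j}\iprod{a_i,u}^4\le \iprod{a_j,u}^4+(1+\e)-\iprod{a_j,u}^2,
\]
which rearranges to $\iprod{a_j,u}^2(1-\iprod{a_j,u}^2)\le 2\e$. Multiplying the nonnegatives $1-\iprod{a_j,u}^2\ge 0$ and $\iprod{a_j,u}^2-\tfrac{1}{2}\ge 0$ (the latter from $\cA_j$) then gives $\tfrac{1}{2}(1-\iprod{a_j,u}^2)\le\iprod{a_j,u}^2(1-\iprod{a_j,u}^2)\le 2\e$, so $\iprod{a_j,u}^2\ge 1-4\e\ge 1-10\e$. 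The whole derivation has degree $4$.

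For the second claim I take $j$ to be an index with $\iprod{a_j,b}^2\ge 0.8$, which exists in the intended setting where $b$ is the output of a prior rounding step close to some $a_{j^\star}$. To derive $\cB_b\vdash\iprod{a_j,u}^2\ge\tfrac{1}{2}$ I first collect the aggregate version of the above estimate: $\sum_{i\ne k}\iprod{a_i,u}^2\iprod{a_k,u}^2=(\sum_i\iprod{a_i,u}^2)^2-\sum_i\iprod{a_i,u}^4\le (1+\e)^2-(1-\e)=O(\e)$. Expanding $\iprod{b,u}^2=\sum_i\iprod{a_i,b}^2\iprod{a_i,u}^2+X$ with $X$ the off-diagonal cross term, SoS Cauchy--Schwarz bounds $X^2\le (\sum_{i\ne k}\iprod{a_i,b}^2\iprod{a_k,b}^2)(\sum_{i\ne k}\iprod{a_i,u}^2\iprod{a_k,u}^2)=O(\e)$, and the identity $2c(c-X)=(c^2-X^2)+(c-X)^2$ with $c=\Theta(\sqrt{\e})$ upgrades this to $|X|\le O(\sqrt{\e})$ inside SoS. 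Separating the $j$-th term of $\sum_i\iprod{a_i,b}^2\iprod{a_i,u}^2$ and using $\iprod{a_i,b}^2\le 1+\e-0.8$ for $i\ne j$ (spectral bound applied to $b$) gives $\sum_i\iprod{a_i,b}^2\iprod{a_i,u}^2\le 0.8\,\iprod{a_j,u}^2+0.2+O(\e)$, so $0.9\le\iprod{b,u}^2\le 0.8\,\iprod{a_j,u}^2+0.2+O(\sqrt{\e})$ and hence $\iprod{a_j,u}^2\ge\tfrac{1}{2}$ once $\e$ is below an absolute constant. The most delicate bookkeeping will be that the expansion $\iprod{b,u}=\sum_i\iprod{a_i,b}\iprod{a_i,u}$ is exact only when $\{a_i\}$ is an orthonormal basis; for the general near-isotropic case I will decompose $b=Pb+(\Id-P)b$ with $P$ the projector onto $\operatorname{span}\{a_i\}$ and absorb the residual via $\iprod{v,u}^2\le\norm{v}^2\norm{u}^2$, introducing only additional constants that fit into the $O(\sqrt{\e})$ slack.
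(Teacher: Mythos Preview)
Your argument for the first claim is correct and essentially parallel to the paper's, with a minor variation: you bound $\sum_{i\ne j}\iprod{a_i,u}^4\le\sum_{i\ne j}\iprod{a_i,u}^2$ termwise, whereas the paper bounds $\sum_{i\ne j}\iprod{a_i,u}^4\le\bigl(\sum_{i\ne j}\iprod{a_i,u}^2\bigr)^2$ and then uses $\iprod{a_j,u}^2\ge\tfrac12$ to control the square. Your route is arguably cleaner and even yields the slightly sharper $\iprod{a_j,u}^2\ge 1-4\e$.

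The second claim, however, has a real gap. Your plan hinges on the decomposition $\iprod{b,u}^2=\sum_i\iprod{a_i,b}^2\iprod{a_i,u}^2+X$ with $X$ the ``off-diagonal cross term,'' and then bounding $X$ by Cauchy--Schwarz against $\sum_{i\ne k}\iprod{a_i,u}^2\iprod{a_k,u}^2$. But that identity for $X$ presupposes $\iprod{b,u}=\sum_i\iprod{a_i,b}\iprod{a_i,u}$, i.e.\ that $\{a_i\}$ is an orthonormal basis. The hypothesis $\bigl\|\sum_i\dyad{a_i}\bigr\|\le 1+\e$ gives no lower bound on the Gram matrix $A^\top A$, so the $a_i$ need not be close to orthonormal in the sense required, and your proposed fix---projecting $b$ onto $\mathrm{span}\{a_i\}$---does not help, since $Pb\ne\sum_i\iprod{a_i,b}a_i$ either. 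In the regime where $b$ has a large component outside $\mathrm{span}\{a_i\}$, your $X$ is simply not the cross term you are bounding.

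The paper's route is both simpler and avoids this issue entirely. First it shows the existence of $j$ with $\iprod{a_j,b}^2\ge 0.8$ (which you assumed rather than proved) by contradiction: if every $\iprod{a_i,b}^2<0.8$, then \pref{lem:helper_lemma1} combined with $\iprod{b,u}^2\ge 0.9$ forces $\iprod{a_i,u}^2\le 0.99$ for all $i$, whence $\sum_i\iprod{a_i,u}^4\le 0.99(1+\e)<1-\e$, contradicting the constraint in $\cB_b$. Then, given $\iprod{a_j,b}^2\ge 0.8$ and $\iprod{b,u}^2\ge 0.9$, a direct SoS triangle inequality (again \pref{lem:helper_lemma1}, applied in the other direction) yields $\iprod{a_j,u}^2\ge\tfrac12$. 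No expansion of $\iprod{b,u}$ in the $a_i$ is needed.
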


\begin{proof}
	We have the following sum-of-squares proof:
	\begin{align}
		\cA \vdash_{u,4} \;\; 1-\e
		& \le \sum\nolimits_{i=1}^n \iprod{a_i,u}^4 \notag\\
		& \le \iprod{a_j,u}^4 + \left( \sum\nolimits_{i\ne j} \iprod{a_i,u}^2\right)^2
		\tag{by adding only square terms}\\
		& \le \iprod{a_j,u}^4 + \left(1+\e- \iprod{a_j,u}^2\right)^2
		\tag{using $\vdash_u \sum_{i=1}^n\iprod{a_i, u}^2 \le (1+\e)\norm{u}^2$}\\
		& \le \iprod{a_j,u}^2 + \left(\tfrac12 +\e\right) \left(1+\e - \iprod{a_j,u}^2\right)\tag{since $\vdash 1/2 \le \iprod{a_j,u}^2 \le 1$}\\
		& \le \left(\tfrac12 -\e \right )\iprod{a_j,u}^2 + \tfrac12 +2\e \mcom
	\end{align}
	which means that $\cA \vdash_{u,4} \iprod{a_j,u}^2 \ge  1-10\e$ for $\e>0$ small enough.
	
	To show that $\cB_b \vdash \cA_i$ for some $i$, it is enough to show that if $\cB_b$ is consistent (i.e. there exists a pseudo-distribution satisfying $\cB_b$), then there exists $i\in[n]$ such that $\iprod{a_i,b}^2 \ge 0.8$, because it implies $\{\iprod{a_i,u}^2 \ge 1/2\}$ by triangle inequality.
	
	For the sake of contradiction, assume that $\iprod{a_i,b}^2 < 0.8$ for all $i\in [n]$.
	Then, by triangle inequality (see \pref{lem:helper_lemma1}), $\cB_b \vdash \{\forall i\in [n].\,\iprod{a_i,u}^2 \le 0.99\}$ which when combined with $\left\|\dyad{a_i}\right\|^2 \le 1 + \e$ using substitution, contradicts the assumption that $\cB_b \vdash \{\sum_{i=1}^n \langle  a_i,u \rangle^4 \ge 1-\e\}$ for small enough $\e>0$.
\end{proof}

\begin{corollary}
	\label{cor:boosting-accuracy}
	Let $D$ be a degree-$\ell$ pseudo-distribution over $\R^d$ such that $D \models_{\ell/4} \cB_b$, with $\cB_b$ as defined in \pref{thm:boosting-accuracy}.
	Then, there exists $i\in[n]$ such that $\Norm{\pE_{D(u)} u^{\otimes 2} - a_i^{\otimes 2}}^2\le O(\e)$ and $\iprod{a_i, b}^2 \ge 0.8$.
\end{corollary}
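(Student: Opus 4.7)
The plan is to combine the sum-of-squares proof provided by \pref{thm:boosting-accuracy} with the soundness of sum-of-squares proofs for pseudo-distributions, and then convert the resulting pseudo-expectation bound $\pE_{D(u)} \iprod{a_i,u}^2 \ge 1-10\e$ into the required Euclidean-distance estimate $\Norm{\pE_{D(u)} u^{\otimes 2} - a_i^{\otimes 2}}^2 \le O(\e)$ in $\R^{d^2}$.

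First, I would apply the second conclusion of \pref{thm:boosting-accuracy} to the hypothesis that $D \models_{\ell/4} \cB_b$ to produce an index $i\in[n]$ with $\iprod{a_i,b}^2 \ge 0.8$ and $\cB_b \vdash \cA_i$. Composing this with the first conclusion of \pref{thm:boosting-accuracy}, namely $\cA_i \vdash_{4} \{\iprod{a_i,u}^2 \ge 1-10\e\}$, via the transitivity inference rule gives $\cB_b \vdash \{\iprod{a_i,u}^2 \ge 1-10\e\}$. Now invoking the soundness lemma \pref{lem:soundness_sos} on $D \models_{\ell/4} \cB_b$, and assuming $\ell$ is large enough to accommodate the (constant) degree blow-up from composition, I conclude $\pE_{D(u)} \iprod{a_i,u}^2 \ge 1-10\e$.

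Next, I would expand the squared distance in $\R^{d^2}$ using $\Norm{a_i^{\otimes 2}}^2 = \Norm{a_i}^4 = 1$ and $\iprod{u^{\otimes 2}, a_i^{\otimes 2}} = \iprod{u,a_i}^2$:
\begin{equation*}
  \Bignorm{\pE_{D(u)} u^{\otimes 2} - a_i^{\otimes 2}}^2
  \;=\; \Bignorm{\pE_{D(u)} u^{\otimes 2}}^2 \;-\; 2\,\pE_{D(u)} \iprod{a_i,u}^2 \;+\; 1 \mper
\end{equation*}
The cross term is already controlled by the previous step. For the first term, the pseudo-distribution analog of Cauchy--Schwarz---which follows by expanding the obvious inequality $\pE_{D(u)} \Norm{u^{\otimes 2} - \pE_{D(v)} v^{\otimes 2}}^2 \ge 0$, itself valid because the integrand is a sum of squares of polynomials in $u$ of degree $2$---yields $\Norm{\pE u^{\otimes 2}}^2 \le \pE \Norm{u^{\otimes 2}}^2 = \pE \Norm{u}^4$. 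The SoS inequality $\{\Norm{u}^2 \le 1\} \vdash_{4} \{\Norm{u}^4 \le \Norm{u}^2\}$ combined with $\pE \Norm{u}^2 \le 1$ then gives $\pE \Norm{u}^4 \le 1$. Substituting, we obtain $\Norm{\pE u^{\otimes 2} - a_i^{\otimes 2}}^2 \le 1 - 2(1-10\e) + 1 = 20\e$, as required.

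The only delicate point is bookkeeping of pseudo-distribution degrees: one needs $\ell$ chosen so that after multiplying the soundness degree bound $\ell/4$ by the constant-degree SoS proof $\cB_b \vdash_{O(1)} \{\iprod{a_i,u}^2 \ge 1-10\e\}$, as well as handling the degree-$4$ auxiliary bound $\Norm{u}^4 \le \Norm{u}^2$, we stay within the degree of $D$. Since both the proof from \pref{thm:boosting-accuracy} and the auxiliary SoS inequalities are of bounded degree, any $\ell$ larger than some absolute constant suffices, and this matches the usage of the corollary in later sections.
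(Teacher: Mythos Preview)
Your proof is correct and follows essentially the same approach as the paper: invoke \pref{thm:boosting-accuracy} (using that $D$ witnesses the consistency of $\cB_b$) to get an index $i$ with $\iprod{a_i,b}^2\ge 0.8$ and $\cB_b\vdash\{\iprod{a_i,u}^2\ge 1-10\e\}$, apply soundness, and then expand $\Norm{\pE u^{\otimes 2}-a_i^{\otimes 2}}^2$ using $\Norm{\pE u^{\otimes 2}}^2\le 1$. The paper's write-up is terser (it jumps directly to the bound $2-2\pE\iprod{a_i,u}^2$), whereas you spell out the pseudo-Cauchy--Schwarz step and the degree bookkeeping, but the substance is the same.
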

\begin{proof}
	By \pref{thm:boosting-accuracy}, $\cB_b \vdash_4 \{\iprod{a_i, u}^2 \ge 1 - 10\e\}$ for some $i$.
	It follows by \pref{lem:soundness_sos} that
	\begin{displaymath}
	\Norm{\pE_{D(u)} u^{\otimes 2} - a_i^{\otimes 2}}^2 \le 2 - 2 \left\langle \pE_{D(u)}u^{\otimes 2}, a_i^{\otimes 2} \right\rangle
	= 2- 2 \pE_{D(u)}\langle  a_i,u \rangle^2 \le 20\e \mper
	\qedhere
	\end{displaymath}
\end{proof}

\section{Decomposition with sum-of-squares}
\label{sec:decomp-sum-of-squares}
In this section, we give a generic sum-of-squares algorithm (Algorithm~\ref{alg:tensor-general} and Theorem~\ref{thm:general-tensor-decomposition}) that will be used for various different settings in the following subsections (Section~\ref{subsec:orthogonal} for orthogonal tensors, Section~\ref{subsec:separated} for tensors with separated components), and in the section~\ref{sec:random_tensor} for random 3-tensor and Section~\ref{sec:foobi} for robust FOOBI.

\subsection{General algorithm for tensor decomposition}
\label{sec:general-tensor-decomposition}

In this section, we provide a general sum-of-squares tensor decomposition that serve as the main building block for sections later. 
We will need the following lemma, which appears in \cite[Proof of Lemma 6.1]{DBLP:conf/stoc/BarakKS15}.

\begin{lemma}
\label{lem:higher-power}
  Let $\e\in(0,1)$ and $\{a_1,\ldots,a_n\}$ be a set of unit vectors in $\R^d$ with $\norm{\sum_{i=1}^n \dyad{a_i}} \le 1+\e$.  %
  Then, for all even integers $k\in \N$, there exists a sum-of-squares proof that
  \begin{gather}
    \Set{ \norm{u}^2\le 1,  \sum_{i=1}^n \langle a_i, u \rangle^{4} \ge 1-\e  }
    \vdash_{u,\, k+2}
    \Set{\sum_{i=1}^n \langle a_i, u \rangle^{k+2} \ge 1-O(k\e)}\mper
    \label{eqn:large-k-power}
  \end{gather}
\end{lemma}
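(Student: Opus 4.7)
The plan is to bootstrap from the given degree-$4$ mass constraint to the desired degree-$(k+2)$ bound by establishing, for each $i$, a pointwise SoS inequality of the form $\alpha_i^2 - \alpha_i^{k+2} \le (k/2)\cdot \alpha_i^2(1-\alpha_i^2)$, where I abbreviate $\alpha_i := \langle a_i, u \rangle$. Summed over $i$, the right-hand side becomes $(k/2)(\sum_i \alpha_i^2 - \sum_i \alpha_i^4)$, which the two available pieces of information (the spectral bound on $\sum_i \dyad{a_i}$ and the assumption $\sum_i \alpha_i^4 \ge 1 - \e$) together force to be at most $k\e$.

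First I would record the degree-$2$ SoS consequences of the spectral hypothesis. Since $(1+\e)\Id - \sum_i \dyad{a_i} \succeq 0$, this matrix admits a factorization $\transpose{B}B$, so
\[
(1+\e) - \sum\nolimits_i \alpha_i^2 = (1+\e)\bigl(1 - \norm{u}^2\bigr) + \norm{Bu}^2
\]
is a degree-$2$ SoS combination of $\norm{u}^2 \le 1$, giving $\sum_i \alpha_i^2 \le 1+\e$. By the same token $1 - \alpha_i^2 = (1 - \norm{u}^2) + \transpose{u}(\Id - \dyad{a_i})u$ is an SoS certificate of $1 - \alpha_i^2 \ge 0$. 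Multiplying by the SoS term $\alpha_i^2$ and summing, $\sum_i \alpha_i^2 \ge \sum_i \alpha_i^4 \ge 1 - \e$, and subtracting bounds yields $\sum_i \alpha_i^2(1 - \alpha_i^2) \le 2\e$ via a degree-$4$ SoS proof.

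The heart of the argument is the pointwise polynomial identity (in an indeterminate $\alpha$, for $k$ even)
\[
\tfrac{k}{2}\,\alpha^2(1-\alpha^2) - \alpha^2(1-\alpha^k) = \alpha^2(1-\alpha^2)^2 \sum_{j=1}^{k/2 - 1}\bigl(1 + \alpha^2 + \cdots + \alpha^{2j-2}\bigr),
\]
which one verifies by writing $1 - \alpha^k = (1-\alpha^2)(1 + \alpha^2 + \cdots + \alpha^{k-2})$ and telescoping. Substituting $\alpha = \langle a_i, u\rangle$, every factor on the right becomes a sum of squares in $u$, so the identity furnishes a degree-$(k+2)$ SoS certificate of the pointwise bound --- and, importantly, it uses none of the hypothesis constraints at all.

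Finally, summing the pointwise identity over $i$ and invoking the previous two observations gives, in SoS of degree $k+2$,
\[
\sum_i \alpha_i^2 - \sum_i \alpha_i^{k+2} \le \tfrac{k}{2}\sum_i \alpha_i^2(1 - \alpha_i^2) \le k\e,
\]
and combining with $\sum_i \alpha_i^2 \ge 1 - \e$ yields $\sum_i \alpha_i^{k+2} \ge 1 - (k+1)\e = 1 - O(k\e)$, as required. The only non-mechanical step is guessing the pointwise identity above; once written down its SoS verification is immediate, so I do not anticipate any real obstacle beyond bookkeeping of degrees.
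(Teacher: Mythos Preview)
Your argument is correct: the key observations that $\sum_i \langle a_i,u\rangle^2(1-\langle a_i,u\rangle^2)\le 2\e$ in SoS degree~$4$, together with the telescoping identity giving $\alpha^2-\alpha^{k+2}\le \tfrac{k}{2}\,\alpha^2(1-\alpha^2)$ as an unconditional sum of squares of degree $k+2$, combine exactly as you describe, and the degree bookkeeping checks out. The present paper does not actually prove this lemma but defers to \cite[Proof of Lemma~6.1]{DBLP:conf/stoc/BarakKS15}; the argument there proceeds by the same mechanism (controlling $\sum_i \alpha_i^2(1-\alpha_i^2)$ and stepping up the power two at a time), so your direct telescoped version is essentially a compressed form of that inductive proof.
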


Our main algorithm below finds the solutions to a system of polynomial constraints $\cA$, when given a ``hint'' in the form of a polynomial transformation of formal variables $P(\cdot)$.
Roughly $P$ should be an ``orthogonalizing'' map so that if $a_1, \dots, a_n$ are the desired solutions to the constraints $\cA$, then $P(a_1), \dots, P(a_n)$ are nearly an orthonormal basis, or more precisely $\|\sum_{i=0}^n \dyad{P(a_i)}\| \le 1 + \e$ while $\|P(a_i)\|^2 \ge 1 - \e$ for all $i$.
We then only require that a sum-of-squares proof exists certifying that the solutions to $\cA$ after being mapped by $P$ are actually close to $P(a_1), \dots P(a_n)$; more precisely, that $\cA \vdash_{\ell} \{\sum_{i=1}^n \inner{P(a_i),P(u)}^4\ge 1-\epsilon\}_{u}$ for some $\ell$.
The existence of this sum-of-squares certificate then allows us to recover the solutions $P(a_i)$ up to $O(\e)$ accuracy by solving for pseudo-distributions and then rounding them.

We later show how \pref{alg:tensor-general} can be applied to a variety of tensor rank decomposition problems by the design of an appropriate orthogonalizing transform $P$.
For example, in \pref{sec:random_tensor} $P(\cdot)$ orthogonalizes an overcomplete tensor by lifting the variables to a higher-dimensional space, and $P(\cdot)$ serves as a whitening transformation on a far-from-orthogonal tensor in \pref{sec:foobi}.

The main technical difficulty in this analysis was in making the run-time polynomial (as opposed to quasi-polynomial in~\cite{DBLP:conf/stoc/BarakKS15}) for the nearly-orthogonal case where $P$ is the identity transform. 
\begin{theorem}\label{thm:general-tensor-decomposition}
  For every $\ell\in \N$,
  there exists an $n^{O(\ell)}$-time algorithm (see \pref{alg:tensor-general}) with the following property:
  Let $\e>0$ be smaller than some constant.
  Let $d,d'\in\N$ be numbers.
  Let $P\from \R^d\to \R^{d'}$ be a polynomial with $\deg P\le \ell$.
  Let $\{a_1,\ldots,a_n\} \subseteq \R^{d}$ be a set of vectors such that $b_1=P(a_1),\ldots,b_n=P(a_n)\in\R^{d'}$ all have norm at least $1-\e$ and $\norm{\sum_{i=1}^n \dyad{b_i}}\le 1+\e$.
  Let $\cA$ be a system of polynomial inequalities in variables $u=(u_1,\ldots,u_d)$ such that  the vectors $a_1,\ldots,a_n$ satisfy $\cA$ and
  \begin{equation}
    \label{eq:14}
    \cA \vdash_{u,\ell}
    \Set{\vbig
      \sum_{i=1}^n \langle  b_i, P(u) \rangle^4 \ge (1-\e)\Norm{P(u)}^4
    }\mper
  \end{equation}
  Then, the algorithm on input $\cA$ and $P$ outputs a set of unit vectors $\{b'_1,\ldots,b'_n\}\subseteq \R^{d'}$ such that
  \begin{displaymath}
    \dist_H\Paren{\vbig \left\{b_1^{\otimes 2},\ldots,b_n^{\otimes 2}\right\}, \left\{(b_1')^{\otimes 2},\ldots,(b'_n)^{\otimes 2}\right\}} \le O(\e)^{1/2}\mper
  \end{displaymath}
\end{theorem}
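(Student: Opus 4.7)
The plan is to solve for a pseudo-distribution $D$ over $u\in\R^d$ of degree $O(\ell)$ satisfying an augmented version of $\cA$, apply the Jennrich-style rounding of \pref{thm:reproducing_bks} to the induced pseudo-moments of $v:=P(u)$ in order to extract candidate vectors constantly close to the $b_i$'s, and then invoke \pref{cor:boosting-accuracy} to upgrade the accuracy to $O(\sqrt \e)$.

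To set up the relaxation, I would augment $\cA$ with the polynomial constraints $\{\lVert P(u)\rVert^2\le 1+\e,\ \lVert P(u)\rVert^2 \ge 1-2\e\}$ (both satisfied by each $a_i$ since $\lVert b_i \rVert^2\in[(1-\e)^2,1+\e]$) and then search for a degree-$O(\ell)$ pseudo-distribution satisfying this augmented system together with the semidefinite pseudo-moment bound $\pE_D P(u)P(u)^{\trans}\preceq \frac{2(1+\e)}{n}\Id_{d'}$. The latter plays the role of the ``maximum-entropy'' regularizer discussed in \Sref{sec:techniques/rounding} and is realized by the uniform distribution over $\{a_i\}$, so the whole program is feasible and can be solved in time $n^{O(\ell)}$ by \pref{thm:inequality_sos} together with standard convex-optimization duality.

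Next, I would view the pseudo-moments of $P(u)$ as defining a valid pseudo-distribution $D'$ on $v\in\R^{d'}$, which is legitimate because every square polynomial in $v$ pulls back to a square polynomial in $u$. Combining the sum-of-squares proof \pref{eq:14} with the pseudo-Cauchy--Schwarz bound $\pE\lVert P(u)\rVert^4 \ge (\pE\lVert P(u)\rVert^2)^2\ge 1-O(\e)$ (using the added lower bound on $\lVert P(u)\rVert^2$) yields $\pE_{D'}\sum_i\iprod{b_i,v}^4\ge 1-O(\e)$; then \pref{lem:higher-power} upgrades this to $\pE_{D'}\sum_i\iprod{b_i,v}^{k_0+2}\ge 1-O(k_0\e)$ for any constant $k_0$. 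Averaging produces some index $i$ with $\pE_{D'}\iprod{b_i,v}^{k_0+2}\ge \Omega(1/n)$, which together with $\lVert\pE_{D'}\dyad{v}\rVert\le O(1/n)$ satisfies the hypothesis of \pref{thm:reproducing_bks} for target accuracy a small constant. Thus each random Gaussian contraction returns, with probability $1/d^{O(1)}$, a top eigenvector constantly close to some $b_i$; repeating $\poly(n,d)$ times and clustering yields at least one candidate $b$ per component with $\iprod{b,b_i}^2\ge 0.99$.

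For the final step, for each candidate $b$ I would solve an additional degree-$O(\ell)$ pseudo-distribution $D_b$ on $u$ satisfying the augmented $\cA$ together with the extra polynomial constraint $\{\iprod{b,P(u)}^2\ge 0.9\}$ (feasible because the atom at the corresponding $a_j$ witnesses all constraints). Applying \pref{cor:boosting-accuracy} with indeterminate $v=P(u)$ shows that $\pE_{D_b}P(u)^{\otimes 2}$ is $O(\sqrt\e)$-close in Frobenius norm to $b_j^{\otimes 2}$, from which $b':=$ top singular vector of this matrix satisfies $(b')^{\otimes 2}\approx b_j^{\otimes 2}$ to within $O(\sqrt\e)$, giving the claimed Hausdorff bound. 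The main obstacle is the tight balancing of scales in the rounding step: the signal $\pE\iprod{b_i,v}^{k+2}$ and the noise $\lVert\pE\dyad v\rVert$ are both of order $1/n$, so the signal-to-noise ratio is only constant; a single application of \pref{thm:reproducing_bks} can therefore only yield constant accuracy, which is precisely why the boosting step is needed, and why the semidefinite entropy constraint must be imposed to prevent the pseudo-distribution from collapsing onto a single component (which would hide the other $b_i$'s from detection).
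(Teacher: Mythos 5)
Your setup (searching for a pseudo-distribution on $u$, passing to the pseudo-moments of $v=P(u)$, rounding by \pref{thm:reproducing_bks}, and boosting via \pref{cor:boosting-accuracy}) matches the skeleton of \pref{alg:tensor-general}, and the boosting step is argued correctly. But your coverage argument has a genuine gap: a \emph{single} pseudo-distribution with a \emph{fixed} entropy bound $\pE_D \dyad{P(u)} \preceq \tfrac{2(1+\e)}{n}\Id$ does not force the pseudo-distribution to place mass on all $n$ components. For example, the uniform distribution over any subset $S\subseteq[n]$ of size $|S|\ge n/2$ satisfies $\cA$, the derived constraint $\sum_i\iprod{b_i,v}^4\ge 1-O(\e)$, the norm bounds on $\|P(u)\|^2$, and your entropy constraint (with room to spare, since $\|\tfrac1{|S|}\sum_{i\in S}\dyad{b_i}\|\le\tfrac{1+\e}{|S|}\le\tfrac{2(1+\e)}{n}$), yet it assigns zero mass to every component outside $S$. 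A convex solver is free to return any feasible point, and your averaging step only certifies that \emph{some} $b_i$ has $\pE\iprod{b_i,v}^{k+2}\ge\Omega(1/n)$, not that all of them do. Consequently, ``repeating $\poly(n,d)$ times and clustering'' cannot be shown to produce a candidate near every $b_j$ — the components outside $S$ are simply invisible to the rounding. Recognizing that the entropy constraint ``prevents the pseudo-distribution from collapsing onto a single component'' is correct but insufficient; it does not prevent collapse onto a constant fraction of the components.

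The paper avoids this by iterating: after the $i$-th component $b'_i$ is found, the constraint $\{\iprod{P(u),b'_i}^2\le 0.01\}$ is added to $\cA$ and the entropy bound is \emph{relaxed} to $\tfrac{1+\e}{n-i+1}$ (feasibility is witnessed by the uniform distribution over the remaining $a_j$). The analysis then shows $\cA\vdash \sum_{j=1}^{i-1}\iprod{b_j,P(u)}^k\le 0.001$ from the added exclusion constraints, hence $\pE\sum_{j\ge i}\iprod{b_j,P(u)}^k\ge 0.99$, so a still-uncovered component must carry weight at least $\tfrac{0.99}{n-i+1}\ge 0.9\,\|\pE\dyad{P(u)}\|$, which is exactly the hypothesis of \pref{thm:reproducing_bks}. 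Without this iterative exclusion-and-relaxation mechanism, your single-pass argument cannot rule out adversarial pseudo-distributions that omit a constant fraction of the components.
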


\begin{algorithm}
  \caption{General tensor decomposition algorithm}
  \label{alg:tensor-general}

  \textbf{Parameters:}
  numbers $\e>0,\, n,\ell\in \N$.

  \textbf{Given:}
  system $\cA$ of polynomial inequalities over $\R^d$ and polynomial $P\from \R^d\to\R^{d'}$.

  \textbf{Find:}
  vectors $b_1',\ldots,b'_n\in \R^{d'}$.

\textbf{Algorithm:}
  \begin{itemize}
  \item For $i$ from $1$ to $n$, do the following:
    \begin{enumerate}
    \item Compute a degree-$(k+2)\ell$ pseudo-distribution $D(u)$ over $\R^d$, with $k = O(1)$, that
      satisfies the constraints
      \begin{align}
        & \cA \cup \{ 1+\e \ge \Norm{P(u)}^2 \ge 1-\e\} \nonumber
        \\
        & \Norm{\pE\nolimits_{D(u)} \dyad{P(u)}} \le \frac{1+\e}{n-i+1}\mper
          \label{eq:16}
      \end{align}
    \item Choose standard Gaussian vectors $\super g 1,\ldots,\super g T\sim \cN(0,\Id_{(d')^k})$ and $T = d^{O(1)}$ and compute the top eigenvectors of the following matrices for all $t\in [T]$:
      \begin{equation}
        \label{eq:18}
        \pE_{D(u)} \langle \super g t,P(u)^{\otimes k} \rangle\cdot \dyad{P(u)} \in \R^{d'\times d'}\mper
      \end{equation}
    \item Check if for one of the normalized top eigenvectors $b^\star$ computed in the previous step, there exists a degree-$4\ell$ pseudo-distribution $D'(u)$ that satisfies the constraints
      \begin{equation}
        \label{eq:15}
        \cA \cup \bigl\{1+\e \ge  \Norm{P(u)}^2 \ge 1-\e, \langle  b^\star,P(u) \rangle^2 \ge 0.99 \bigr\}\mper
      \end{equation}
    \item Set $b'_i$ to be the top eigenvector of the matrix $\pE_{D'(u)} \dyad{P(u)}$ and add to $\cA$ the constraint $\{\langle  P(u), b'_i \rangle^2 \le 0.01\}$.
    \end{enumerate}

  \end{itemize}
\end{algorithm}

\begin{proof}[Proof of \pref{thm:general-tensor-decomposition}]
  We analyze \pref{alg:tensor-general}.
  By \pref{cor:boosting-accuracy}, if there exists a pseudo-distribution $D'(u)$ that satisfies constraints \pref{eq:15}, then the top eigenvector of  $\pE_{D'(u)} \dyad{P(u)}$ is $O(\e)^{1/2}$-close to one of the vectors $b_1,\ldots ,b_n$.
  The fact that we add in step 4, the constraint $\{\iprod{P(u),b_i'}\le 0.1\}$ also implies by \pref{cor:boosting-accuracy} that in some iteration $i$, we can never find a vector $b'_i$ that is close to one vector $b'_j$ from a previous iteration $j<i$.
  Therefore, it remains to show that in each of the $n$ iterations with high probability we can find a pseudo-distribution $D'(u)$ that satisfies \pref{eq:15}.

  Consider a particular iteration $i_0\in [n]$ of \pref{alg:tensor-general}.
  We may assume that the vectors $b_1',\ldots,b_{i_0-1}'$ are close to $b_1,\ldots,b_{i_0-1}$.
  First we claim that there exists a pseudo-distribution satisfying conditions \pref{eq:16} in step 1, including the additional constraints added to $\cA$ in previous iterations.
  Indeed, the uniform distribution over vectors $a_{i},\ldots,a_{n}$ satisfies all of those conditions.
  By assumption, we have a sum-of-squares proof $\cA \vdash_{u,\ell} \{\sum_{i=1}^n \langle  b_i, P(u) \rangle^4 \ge 1-\e \}$.
  \pref{lem:higher-power} then implies $\cA \vdash_{u,(k+2)\ell} \{\sum_{i=1}^n \langle  b_i,P(u) \rangle^k\ge 1-O(k \e)\}$ for an absolute constant parameter $k$ to be determined later.
  Since $\cA$ includes the added constraints $\{\langle  b_1,P(u) \rangle^2 \le 0.1,\ldots,\langle b_{i_0-1},P(u) \rangle^2 \le 0.1\}$, it follows by $\Norm{\sum_{i=1}^n \dyad{b_i}}{\vphantom{\sum_i^n}}^2 \le 1 + O(\e)$ and substitution that $\cA \vdash \{\sum_{i=1}^{i_0-1} \langle  b_i,P(u) \rangle^k\le (0.1)^{k-2}\cdot (1+O(\e))\}$, here choosing $k$ so that $(0.1)^{k-2}\cdot (1+O(\e))\le 0.001$.
  Therefore, $\cA \vdash_{(k+2)\ell} \{\sum_{i=i_0}^{n} \iprod{b_i,P(u)}^k\ge 0.99\}$ and so $\pE_{D(u)} \sum_{i=i_0}^{n} \iprod{b_i,P(u)}^k\ge 0.99$ for any degree-$(k+2)\ell$ pseudo-distribution $D$ that satisfies constraints \pref{eq:16}.
  In particular, by averaging, there exists an index $i^\star\in \{i_0,\ldots,n\}$ such that
  \begin{displaymath}
    \pE_{D(u)}\langle  b_{i^\star}, P(u) \rangle^k \ge \frac {0.99}{n-i_0+1} \ge 0.9 \cdot \Norm{\pE_{D(u)} \dyad{P(u)}}\mper
  \end{displaymath}
  By \pref{thm:rounding-bks}, for each of the matrices \pref{eq:18} in step 2, its top eigenvector is $0.001$-close to $b_{i^\star}$ with probability at least $d^{-O(1)}$.
  Therefore, we find at least one of those vectors with probability no smaller than $1-d^{\Omega(1)}$.
  In this case, a pseudo-distribution $D'(u)$ as required in step 3 exists, as an atomic distribution supported only on $b_{i^\star}$ is an example that satisfies the conditions.
\end{proof}

\subsection{Tensors with orthogonal components}
\label{subsec:orthogonal}
We apply Theorem~\ref{thm:general-tensor-decomposition} to orthogonal tensors with noise. 

\LetLtxMacro\Oldfootnote\footnote
\renewcommand{\footnote}[2][]{\relax}
\restatetheorem{thm:orthogonal-vectors}
\LetLtxMacro\footnote\Oldfootnote%

\begin{proof}
	We feed Algorithm~\ref{alg:tensor-general} with the inputs $P(u) = u$ and $\cA = \Set{\inner{T,u\tp{3}}\ge 1-\epsilon}$ where $\e = \|E\|_{\{1\},\{2,3\}}$ and $E=T-\sum_i a_i^{\otimes 3}$.
    We have \begin{align} \cA \vdash_4 \sum_{i=1}^n \inner{a_i,u}^3 & = \inner{T,u^{\otimes 3}} - \inner{E,u^{\otimes 3}} \nonumber\\
        & =\inner{T,u^{\otimes 3}} - \epsilon   \nonumber \\
          & \ge  1-2\epsilon \mper \nonumber
         \end{align}
         Here at the second line we used that 
         \begin{align}
	         \vdash \inner{E,u^{\otimes 3}} \le \Norm{E}_{\{1\},\{2,3\}}\le \epsilon \mper \label{eqn:47}
         \end{align}
        We verify that $\cA$ satisfies the requirement~\eqref{eq:14},
	\begin{align}
	\cA \vdash \sum_{i=1}^n \inner{a_i,u}^4 & \ge \left(\sum_{i=1}^n \inner{a_i,u}^4\right)\left(\sum_{i=1}^n \inner{a_i,u}^2\right) \tag{using orthonormality}\\
	& \ge \left(\sum_{i=1}^n \inner{a_i,u}^3\right)^2\tag{Cauchy-Schwarz: \pref{lem:cauchy-schwarz}}\\
	& \ge 1-4\epsilon\mper\nonumber
	\end{align}
	Therefore calling Algorithm~\ref{alg:tensor-general}, we can recover $\hat{a}_i$ which is, up to sign flip,  close to $a_i$ with error $O(\epsilon^{1/2})$. We determine the sign by finding the $\tau\in \{-1,+1\}$ such that  $\inner{T,\tau \hat{a}_i\tp{3}}\ge 1-\epsilon$ and set the output $a_i'$ to $\tau \hat{a}_i$.
\end{proof}

\begin{remark}\label{remark:sosnorm}
	Note that in the proof of~\pref{thm:orthogonal-vectors}, the conclusion of equation~\eqref{eqn:47} is the only thing we used about the error term $E$. Therefore, define the following SoS relaxation of the injective norm:
	\begin{align}
		\Norm{E}_{\textup{SoS}} = \inf_{c\in \R}\Brac{\vphantom{\sum}\left\{\norm{u}^2\le 1\right\}\vdash \left\{\inner{E,u^{\otimes 3} }\le c\right\}}\mper\nonumber
	\end{align}
	Then we can replace the right hand side of equation~\eqref{eqn:48} by $O(1)\cdot \Norm{T - \sum\nolimits_{i=1}^n a_i^{\otimes 3}}_{\textup{SoS}}$. 
\end{remark}

\subsection{Tensors with separated components}

\label{subsec:separated}
The following lemma shows that for separated vectors the sum of higher-order outer products has spectral norm that decrease exponentially with the tensor order.

\begin{lemma}
  \label{lem:separated-spectral-norm}
  Let $a_1,\ldots,a_n$ be unit vectors in $\R^d$.
  Then, for every $k\in \N$,
  \begin{displaymath}
    \Norm{\sum_{i=1}^n \Paren{\dyad{a_i}}^{\otimes (k+1)}} \le 1 + \Paren{\max_{i\neq j} |\iprod{a_i,a_j}|}^{k} \cdot \Norm{\sum_{i=1}^n \dyad{a_i}}.
  \end{displaymath}
\end{lemma}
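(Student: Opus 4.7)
The plan is to reinterpret the left-hand side as the spectral norm of a Gram matrix and then to exploit Schur's Hadamard product inequality. First I would set $b_i = a_i^{\otimes(k+1)}$ and observe that $\sum_i (a_i a_i^T)^{\otimes(k+1)} = BB^T$ where $B$ is the matrix with columns $b_i$, so the LHS equals $\|B^T B\|$. Since $(B^T B)_{ij} = \langle a_i, a_j\rangle^{k+1} = H_{ij}^{k+1}$, where $H_{ij} = \langle a_i, a_j\rangle$ is the $n \times n$ Gram matrix, the LHS equals the spectral norm of the Hadamard power $H^{\odot(k+1)}$. Since $\|\sum_i a_i a_i^T\| = \|H\|$ as well, the task reduces to proving the inequality $\|H^{\odot(k+1)}\| \le 1 + \rho^k \|H\|$, where $\rho = \max_{i\neq j}|H_{ij}|$.

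Next, I would split off the diagonal: because $H_{ii} = 1$, the matrix $H^{\odot(k+1)}$ decomposes as $I + (H^{\odot(k+1)} - I)$, and the triangle inequality gives $\|H^{\odot(k+1)}\| \le 1 + \|H^{\odot(k+1)} - I\|$. So it suffices to show $\|H^{\odot(k+1)} - I\| \le \rho^k \|H\|$. The key algebraic identity is $H^{\odot(k+1)} - I = H \odot (H^{\odot k} - I)$, which holds entrywise (both sides vanish on the diagonal, and off-diagonal both give $H_{ij}^{k+1}$). Since $H$ is PSD with unit diagonal and $H^{\odot k} - I$ has off-diagonal entries bounded by $\rho^k$ in magnitude and zero diagonal, one applies Schur's inequality to this factorization to pull out the $\rho^k$ factor.

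To carry out the off-diagonal bound I would use the structural estimate $\sum_{j} H_{ij}^2 = a_i^T (\sum_\ell a_\ell a_\ell^T)\, a_i \le \|H\|$, combined with the entrywise inequality $|H_{ij}|^{k+1} \le \rho^{k-1} H_{ij}^2$ (or a closely related splitting), to control the quadratic form $x^T(H^{\odot(k+1)}-I)x$ by $\rho^k \|H\| \|x\|^2$. The combination of a Perron-type dominance (comparing the non-negative matrix of absolute values of $H^{\odot(k+1)} - I$ to a rescaled Hadamard square of $H$) with Schur's inequality on the PSD factor $H$ should yield the bound without picking up dimension-dependent constants.

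The main obstacle is the final step: the standard Schur bound $\|H \odot B\| \le (\max_i H_{ii})\|B\|$ is too lossy by itself, giving only $\|H^{\odot k}-I\|$ without a $\rho^k$ factor, and iterating it produces bounds like $\|H\|$ rather than $\rho^k\|H\|$. The refinement requires carefully combining Schur with the entrywise decay of $H^{\odot k}$'s off-diagonal entries and the PSD identity $\sum_j H_{ij}^2 \le \|H\|$ to extract the full factor $\rho^k$ while avoiding any $\sqrt{n}$ or $n$ factor that would naively appear from Gershgorin- or Frobenius-style bounds.
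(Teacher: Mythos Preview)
Your reduction to the Hadamard-power inequality $\|H^{\odot(k+1)}\|\le 1+\rho^k\|H\|$ for the Gram matrix $H_{ij}=\langle a_i,a_j\rangle$ is correct and is equivalent to what the paper proves. The paper's presentation looks different because it whitens: setting $W=(A^{1/2})^+$ with $A=\sum_i(a_ia_i^\top)^{\otimes(k+1)}$, the vectors $Wa_i^{\otimes(k+1)}$ become orthonormal, so a unit $x$ in their span has coefficients $c$ with $\sum_ic_i^2=1$, and one computes $x^\top Ax=c^\top H^{\odot(k+1)}c$---your quadratic form exactly. After splitting off the diagonal, the paper asserts $\sum_{i\ne j}c_ic_j\langle a_i,a_j\rangle^{k+1}\le\rho^k\sum_{i\ne j}c_ic_j\langle a_i,a_j\rangle$ and then bounds this last sum by $\|\sum_i a_ia_i^\top\|$ via a \emph{second} whitening, now against $A'=\sum_ia_ia_i^\top$. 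The double-whitening device is what replaces your attempted Schur/row-sum manipulations.

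The obstacle you flagged is real, and the paper's displayed inequality is exactly that step; it does not follow from $|\langle a_i,a_j\rangle|^k\le\rho^k$ when the terms $c_ic_j\langle a_i,a_j\rangle$ carry mixed signs (take $n=3$, $H_{12}=H_{13}=\rho$, $H_{23}=-\rho/2$, $c=(1,1,1)/\sqrt{3}$, $k=1$: the left side is $\tfrac32\rho^2$ and the right side is $\rho^2$). Your fallback splitting $|H_{ij}|^{k+1}\le\rho^{k-1}H_{ij}^2$ together with the row-sum identity $\sum_jH_{ij}^2=a_i^\top(\sum_\ell a_\ell a_\ell^\top)a_i\le\|H\|$ (and the entrywise-nonnegativity of $H^{\odot 2}$) does give a clean, rigorous bound, but only $1+\rho^{k-1}\|H\|$; you lose one factor of $\rho$, which is harmless for every downstream use of the lemma in the paper (where $k$ is chosen with logarithmic slack) but does not match the stated exponent. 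So your diagnosis is accurate: neither Schur iteration nor the paper's written argument extracts the full $\rho^k$, and the honest completion of your sketch yields the slightly weaker $\rho^{k-1}$ bound.
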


\begin{proof}
  Let $A = \sum_{i} \Paren{\dyad{a_i}}^{\otimes k+1}$.
  For a unit vector $x \in (\R^d)^{\ot k+1}$ we'll bound the quadratic form $\transpose{x}Ax$.
  
  First, without loss of generality we can assume that $x$ is in the subspace $V$ spanned by $\{a_i^{\ot k+1}\}_i$.
  This is because if $x$ had a component $y$ orthogonal to $V$, then $\transpose{y}a_i^{\ot k+1} = 0$ for all $i \in [n]$ by definition, so that $A y = 0$ and $y$ can make no nonzero contribution to the quadratic form above.

  Also let $W = (A^{1/2})^{+}$ so that $W$ is a whitening transform and $WAW$ is a projector onto $V$.
  Then suppose $x = \sum_{i} c_i Wa_i^{\ot k+1}$, so that $\sum_i c_i^2 = \|x\|^2 = 1$.
  Then
  \begin{align*}
  \transpose{x}Ax &= \sum_{ij} c_ic_j\transpose{(a_i^{\ot k+1})} WAWa_j^{\ot k+1} \\
  &= \sum_{i}c_i^2 + \sum_{i\ne j} c_ic_j\iprod{a_i,a_j}^{k+1} \\
  &\le \sum_{i}c_i^2 + \Paren{\max_{i\neq j} |\iprod{a_i,a_j}|}^k \sum_{i\ne j} c_ic_j\iprod{a_i,a_j} \\
  &\le 1 + \Paren{\max_{i\neq j} |\iprod{a_i,a_j}|}^k \Norm{\sum_{i=1}^n \dyad{a_i}}
  \mcom
  \end{align*}
  where in the last step we let  $A' = \sum_{i}\dyad{a_i}$ and $W' = (A'^{1/2})^{+}$, and apply the inequality $\sum_{i,j} c_ic_j\iprod{a_i,a_j} = \sum_{ij} c_ic_j\transpose{a_i} W'A'W'a_j = \transpose{x'}A'x' \le \|A'\|$,
  where $x' = \sum_{i} c_i W'a_i$ is a unit vector.
\end{proof}

\begin{lemma}
\label{lem:recovery}
  Let $a\in \R^d$ and $b\in (\R^d)^{\otimes k}$ be unit vectors such that $\langle  a^{\otimes k},b \rangle^2\ge 1-\e$.
  Let $B$ be the reshaping of the vector $b$ into a $d$-by-$d^{k-1}$ matrix.
  Then the top left singular vector $a'\in \R^d$ of $B$ satisfies $\langle  a',a \rangle^2 \ge 1-O(\e)$.
\end{lemma}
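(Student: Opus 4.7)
The plan is as follows. Since $b$ and $a^{\otimes k}$ are both unit vectors with $\langle a^{\otimes k}, b\rangle^2 \ge 1-\e$, we may replace $b$ by $-b$ if necessary to assume $\langle a^{\otimes k}, b\rangle \ge \sqrt{1-\e}$. Then
\[
  \|b - a^{\otimes k}\|^2 = 2 - 2\langle a^{\otimes k}, b\rangle \le 2 - 2\sqrt{1-\e} \le O(\e).
\]

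Now reshape both vectors into $d \times d^{k-1}$ matrices: $b \mapsto B$, and $a^{\otimes k} \mapsto M := a \cdot (a^{\otimes (k-1)})^{\trans}$. Reshaping is an isometry of Euclidean norms, so $\|B - M\|_F^2 = \|b - a^{\otimes k}\|^2 \le O(\e)$, and hence $\|B - M\|_{\{1\},\{2\}} \le \|B - M\|_F \le O(\sqrt{\e})$. The matrix $M$ is rank-1 with top singular value $1$ and left singular vector $a$, and a spectral gap of $1$ between its top singular value and its next singular value (which is zero).

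Finally, apply Wedin's perturbation theorem (equivalently Davis–Kahan for singular vectors) to the matrices $B$ and $M$: the top left singular vector $a'$ of $B$ satisfies
\[
  \sin^2 \theta(a', a) \le \frac{\|B - M\|^2_{\{1\},\{2\}}}{(\text{singular-value gap})^2} \le O(\e),
\]
which rearranges to $\langle a', a\rangle^2 = \cos^2 \theta(a',a) \ge 1 - O(\e)$, as desired. The entire argument is routine; the only mild subtlety is remembering that the sign ambiguity in the singular vector is harmless because the conclusion concerns $\langle a',a\rangle^2$.
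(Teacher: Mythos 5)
Your proof is correct, but it takes a genuinely different route from the paper's. The paper's argument is variational and avoids perturbation theory entirely: if $c$ is the top right singular vector of $B$, then by optimality of singular vectors $\langle a'\otimes c, b\rangle \ge \langle a^{\otimes k}, b\rangle \ge 1-\e$ (up to sign), so $\|a'\otimes c - b\| \le O(\sqrt\e)$; the triangle inequality then gives $\|a'\otimes c - a\otimes a^{\otimes(k-1)}\| \le O(\sqrt\e)$, and since $\langle a'\otimes c, a\otimes a^{\otimes(k-1)}\rangle = \langle a',a\rangle\langle c, a^{\otimes(k-1)}\rangle \le |\langle a',a\rangle|$, the conclusion follows. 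Your Wedin/Davis–Kahan argument reaches the same bound but is more machinery-heavy; it buys you a more mechanical, one-line invocation of a standard theorem, at the cost of one small imprecision you should fix: the denominator in Wedin's bound is not literally $1$ (the gap of the unperturbed $M$) but rather something like $\sigma_1(M) - \sigma_2(B) \ge 1 - O(\sqrt\e)$ by Weyl's inequality, which is $\ge 1/2$ only once $\e$ is below some absolute constant. That restriction is harmless here because the lemma's conclusion $1-O(\e)$ is vacuous for large $\e$, but it is worth stating explicitly rather than asserting a gap of exactly $1$. The paper's variational argument sidesteps this entirely and is essentially self-contained, which is why it is the more natural proof in this context.
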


\begin{proof}
  Let $c$ be the top right singular vector of $B$.
  Then, $\langle  a'\otimes c, b \rangle\ge \langle  a^{\otimes k}, b \rangle\ge 1-\e$.
  Therefore, $\norm{a'\otimes c - b}\le O(\e)^{1/2}$.
  By triangle inequality, $\norm{a'\otimes c - a\otimes a^{\otimes k-1}}\le O(\e)^{1/2}$, which means that as desired $\abs{\langle  a,a' \rangle}\ge \langle  a'\otimes c,a\otimes a^{\otimes (k-1)} \rangle\ge 1-O(\e)$.
\end{proof}

\restatetheorem{thm:separated-unit-vectors}

\begin{proof}
  We use \pref{alg:tensor-general} from \pref{thm:general-tensor-decomposition}.
  Let $E=T-\sum_i a_i^{\otimes k}$.
  We may assume that $\norm{E}_{\{1,\ldots,\lfloor k/2\rfloor \},\{\lfloor k/2\rfloor +1,\ldots,k\}} \le \eta$, since otherwise the theorem follows from the case when $\eta =\norm{E}_{\{1,\ldots,\lfloor k/2\rfloor \},\{\lfloor k/2\rfloor +1,\ldots,k\}}$. 
  \Jnote{}
  \TMnote{modified}
  Let $P$ be the polynomial map $P(x)=x^{\otimes \lceil k/4\rceil}$ and let $\cA$ be the system of polynomial inequalities
  \begin{equation}
    \cA=\{\langle  T, u^{\otimes k} \rangle\ge 1-\eta, \norm{u}^2 =1\}\mper
  \end{equation}
  in variables $u=(u_1,\ldots,u_d)$.
  Since $\norm{E}_{\{1,\ldots,\lfloor k/2\rfloor \},\{\lfloor k/2\rfloor +1,\ldots,k\}}\le \eta$, all of the vectors $a_1,\ldots,a_n$ satisfy $\cA$.
  Let $b_1,\ldots,b_n$ be the unit vectors $b_i=P(a_i)$.
  By \pref{lem:separated-spectral-norm} and the condition on $k$, these vectors satisfy $\norm{\sum_i \dyad{b_i}}\le 1+\rho^{\lceil k/4\rceil} \sigma \le 1+\eta$.
  Then, we have the following sum-of-squares proof
  \begin{align}
    \cA \vdash_{u,k} & \sum_{i=1}^n \langle  b_i, P(u) \rangle^4 = \sum_{i=1}^n \langle  a_i,u \rangle^{4\lceil k/4\rceil} \ge \sum_{i=1}^n \langle  a_i,u \rangle^{k} = \langle  T,u^{\otimes k} \rangle - \langle  E ,u^{\otimes k}\rangle
    \\
    & \ge 1-\eta - \Norm{T}_{\{1,\ldots,\lfloor k/2\rfloor\},\{\lfloor k/2\rfloor+1,\ldots,k\}} \ge 1-2\eta\mper
  \end{align}
  It follows that $\cA$ and $P$ satisfy the conditions of \pref{thm:general-tensor-decomposition}.
  Thus, \pref{alg:tensor-general} on input $\cA$ and $P$ recovers vectors $b'_1,\ldots,b'_n$ with Hausdorff distance at most $O(\sqrt{\eta})$ from $b_1,\ldots,b_n$.
  By \pref{lem:recovery}, the top left singular vectors of the $d$-by-$d^{\lceil k/4\rceil-1}$ matrix reshapenings of $b_1',\ldots,b_n'$ are $O(\sqrt{\eta })$-close to the vectors $a_1,\ldots,a_n$ up to sign.
  (If $k$ is odd, then we may determine the signs of the $a_i$ by checking if $\inner{T,a_i'{}\tp{k}}\ge 1-O(\eta)$ or $\inner{T,a_i'{}\tp{k}}\le -1+O(\eta)$ for each output vector $a_i'$.)
\end{proof}

\section{Spectral norms and tensor operations}

In this section, we provide several bounds regarding the spectral norms of moments of the lifted vectors, and the spectral norm of random contraction of a tensor, which are crucial in our analysis in previous sections. We suggest readers who are more interested in applications of the algorithms jump to Section~\ref{sec:random_tensor} and~\ref{sec:foobi}. 

\subsection{Spectral norms and pseudo-distributions}
\begin{theorem}
\label{thm:higher-tensor-norm}
  Let $D$ be a degree-$4(p+q)$ pseudo-distribution over $\R^d$ that satisfies $\{\norm{u}^2\le 1\}_{D(u)}$.
  Then, for all $p,q\in \N$,
  \begin{equation}
    \Norm{\pE_{D(u)} u^{\otimes p} \transpose{\Paren{u^{\otimes q}}}}
    \le \Norm{\pE_{D(u)} \dyad u}\,.
  \end{equation}
\end{theorem}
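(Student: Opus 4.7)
The plan is to bound the spectral norm of $M_{p,q} := \pE_{D(u)} u^{\otimes p}(u^{\otimes q})^\trans$ by reducing to the quadratic form $\pE_{D(u)} \iprod{x, u^{\otimes r}}^2$ for unit $x$, and then running an induction on $r$ that peels off one factor of $u$ at a time via an SoS Cauchy--Schwarz step.

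First I would express $\Norm{M_{p,q}}$ by duality as the supremum of the bilinear form $\pE_{D(u)} \iprod{x, u^{\otimes p}} \iprod{y, u^{\otimes q}}$ over unit tensors $x \in (\R^d)^{\otimes p}$ and $y \in (\R^d)^{\otimes q}$, and apply pseudo-expectation Cauchy--Schwarz to split it. This reduces the theorem to the scalar claim $B_r$: for every $r \ge 1$ and every unit $x \in (\R^d)^{\otimes r}$,
\begin{displaymath}
  \pE_{D(u)} \iprod{x, u^{\otimes r}}^2 \le \Norm{\Sigma}\mcom
\end{displaymath}
where $\Sigma := \pE_{D(u)} \dyad u$; applying $B_p$ and $B_q$ to the two Cauchy--Schwarz factors then gives the theorem.

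I would establish $B_r$ by induction on $r$. The base case $r=1$ is immediate from $\pE \iprod{x, u}^2 = \transpose{x}\Sigma x \le \Norm{\Sigma}$. For the inductive step I would reshape the unit tensor $x$ as a $d \times d^{r-1}$ matrix $X$ so that $\iprod{x, u^{\otimes r}} = \transpose{u} X u^{\otimes (r-1)}$, and invoke the standard SoS form of the vector Cauchy--Schwarz inequality (the identity $\snorm a \snorm b - \iprod{a, b}^2 = \sum_{i<j}(a_i b_j - a_j b_i)^2$ applied to $a = u$ and $b = X u^{\otimes (r-1)}$) to certify
\begin{displaymath}
  \vdash_{2r}\; \iprod{x, u^{\otimes r}}^2 \le \Norm{u}^2 \cdot \transpose{(u^{\otimes(r-1)})} \transpose{X} X u^{\otimes(r-1)}\mper
\end{displaymath}
Absorbing the constraint $\Norm u^2 \le 1$ under the pseudo-expectation and recognizing the right-hand side as $\tr(\transpose{X} X \cdot M_{r-1, r-1})$, a matrix-H\"older step bounds this by $\Norm{X}_F^2 \cdot \Norm{M_{r-1, r-1}} = \Norm{M_{r-1, r-1}}$, which equals $\sup_{\Norm{x'}=1} \pE_{D(u)} \iprod{x', u^{\otimes(r-1)}}^2$ and hence is at most $\Norm{\Sigma}$ by the inductive hypothesis $B_{r-1}$.

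The main thing to keep track of is that each SoS step stays within the available degree budget; the binding constraint is the Cauchy--Schwarz step above, which has degree $2r \le 2(p+q)$, comfortably inside the degree-$4(p+q)$ assumption on $D$. I do not anticipate a serious obstacle. One small point worth flagging is that the argument genuinely needs $p, q \ge 1$ (as is the case in the intended application inside the proof of \pref{prop:rounding-bks-2}), since for instance $\Norm{\pE_{D(u)} u^{\otimes 2}}$ viewed as a $d^2$-dimensional vector equals $\Norm{\Sigma}_F$, which can exceed $\Norm{\Sigma}$ when $\Sigma$ has rank greater than one.
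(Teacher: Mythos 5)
Your proof is correct. It follows the same two-step outline as the paper's: a pseudo-expectation Cauchy--Schwarz reduction to the diagonal case $p=q$ (which is exactly \pref{lem:spectral_norm_1}), followed by an induction that peels off one factor of $u$ per step; what differs is the level at which the inductive step is carried out. The paper's \pref{lem:spectral1} is a matrix SoS certificate that $\pE_{D}\, g\cdot\dyad{(u\otimes v)}\preceq \pE_{D}\,g\cdot \Id\otimes\dyad v$, obtained from the decomposition $g\,(\Id-\dyad u)\otimes\dyad v = g\,(1-\|u\|^2)\,\Id\otimes\dyad v + g\,(\|u\|^2\Id-\dyad u)\otimes\dyad v$ (invoking \pref{lem:sos_psd_1}); then \pref{lem:spectral_norm_2} iterates this with $v=u^{\otimes(r-1)}$ and $g\equiv 1$. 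You instead fix a unit test tensor $x$, reshape it as a $d\times d^{r-1}$ matrix $X$, prove the scalar claim $B_r$ by the scalar SoS Cauchy--Schwarz identity with $a=u$ and $b = Xu^{\otimes(r-1)}$, and then lift back to the operator-norm statement with a trace-H\"older step $\tr\bigl(\transpose{X}X\cdot \pE_{D(u)}\dyad{(u^{\otimes(r-1)})}\bigr)\le \tr(\transpose{X}X)\cdot\Norm{\pE_{D(u)}\dyad{(u^{\otimes(r-1)})}}$ carried out outside the SoS system. The core algebraic identity $\|u\|^2\|b\|^2 - \iprod{u,b}^2 = \sum_{i<j}(u_ib_j-u_jb_i)^2$ is precisely the contraction against $b$ of the identity underlying \pref{lem:sos_psd_1}, so the two routes share the same engine. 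Yours is a bit more elementary in that it avoids the matrix-SoS formalism (\pref{lem:matrix_sos_soundness}) entirely, while the paper's \pref{lem:spectral1} carries the extra nonnegative multiplier $g(u,v)$, a generality the paper reuses later in the proof of \pref{prop:noise_part}. Your observation that $p,q\ge 1$ is genuinely required is also correct, and applies just as much to the paper's proof.
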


The theorem follows by combining \pref{lem:spectral_norm_1} and \pref{lem:spectral_norm_2} proved below.
Lemma~\ref{lem:spectral_norm_1} reduces Theorem~\ref{thm:higher-tensor-norm} to the case when $p = q$. 
\begin{lemma}\label{lem:spectral_norm_1}
  Let $D$ be a degree-$4(p+q)$ pseudo-distribution over $\R^d$ that satisfies $\{\norm{u}^2\le 1\}_{D(u)}$.
  Then, for all $p,q\in \N$,
  \begin{equation}
    \Norm{\pE_{D(u)} u^{\otimes p} \transpose{\Paren{u^{\otimes q}}}}^2
    \le \Norm{\pE_{D(u)} \dyad{\Paren{u^{\otimes p}}}}
    \cdot \Norm{\pE_{D(u)} \dyad{\Paren{u^{\otimes q}}}}
    \,.
  \end{equation}
\end{lemma}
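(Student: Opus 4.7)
The plan is to reduce this to a Cauchy–Schwarz inequality for pseudo-expectations. The spectral norm on the left-hand side equals
\begin{displaymath}
  \Norm{\pE_{D(u)} u^{\otimes p} \transpose{\Paren{u^{\otimes q}}}}
  = \max_{\norm{x}=\norm{y}=1} \pE_{D(u)} \iprod{x,u^{\otimes p}}\iprod{y,u^{\otimes q}}\mcom
\end{displaymath}
where $x \in (\R^d)^{\otimes p}$ and $y \in (\R^d)^{\otimes q}$. So I would fix such optimizing unit vectors $x, y$ and analyze this scalar pseudo-expectation.

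Next I would apply a pseudo-expectation Cauchy–Schwarz. Since $D$ has degree $4(p+q) \ge 2 \cdot 2\max(p,q)$, the polynomials $f(u) = \iprod{x,u^{\otimes p}}$ and $g(u) = \iprod{y,u^{\otimes q}}$ have degrees low enough that $\pE_{D} f^2$ and $\pE_{D} g^2$ are well defined, and the standard pseudo-Cauchy–Schwarz inequality $(\pE_{D} fg)^2 \le (\pE_{D} f^2)(\pE_{D} g^2)$ applies (this follows from the fact that $\pE_{D}(\lambda f - g)^2 \ge 0$ is a valid SoS test for every $\lambda \in \R$, since $(\lambda f - g)^2$ has degree at most $2(p+q) \le d/2$ in the pseudo-distribution sense).

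Then I would rewrite each factor as a quadratic form:
\begin{displaymath}
  \pE_{D(u)} \iprod{x,u^{\otimes p}}^2 = \transpose{x}\Paren{\pE_{D(u)} \dyad{\Paren{u^{\otimes p}}}}x \le \Norm{\pE_{D(u)} \dyad{\Paren{u^{\otimes p}}}}\mcom
\end{displaymath}
using $\norm{x}=1$, and analogously for $y$. Taking the product and then the maximum over $x,y$ gives the claimed bound.

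I do not expect any real obstacle here: the only small thing to check is that the degree of $D$ is large enough to support the Cauchy–Schwarz step, which is precisely why the hypothesis requires degree $4(p+q)$ rather than $2(p+q)$ — we need $\pE_{D}(\lambda f - g)^2 \ge 0$ to be among the defining nonnegativity tests, and $(\lambda f - g)^2$ has degree $2\max(p,q) \le 2(p+q)$, so a degree-$4(p+q)$ pseudo-distribution is more than sufficient.
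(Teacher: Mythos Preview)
Your proposal is correct and is essentially the same argument as the paper's: the paper also fixes unit vectors $x,y$, writes the bilinear form as $\pE_{D(u)}\iprod{x,u^{\otimes p}}\iprod{y,u^{\otimes q}}$, applies Cauchy--Schwarz for pseudo-expectations, and then bounds each square by the corresponding spectral norm. One small notational slip: in your degree check you use $d$ for the degree of the pseudo-distribution, whereas in this paper $d$ is the ambient dimension; the intended quantity is the degree $4(p+q)$, and indeed $2\max(p,q)\le 2(p+q)$ is all that is needed.
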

\begin{proof}
  For all unit vectors $x \in (\R^d)^{\otimes p}$ and $y \in (\R^d)^{\otimes q}$ 
  \begin{align}
    \langle  x, \Paren{\pE_{D(u)} u^{\otimes p} \transpose{\Paren{u^{\otimes q}}}} y\rangle
    & =  \pE_{D(u)} \langle  x, u^{\otimes p} \rangle \langle {{u^{\otimes q}}},y  \rangle\notag
    \\
    & \le \Paren{ \pE_{D(u)} \langle
  x, u^{\otimes p} \rangle^2 }^{1/2}\cdot \Paren{\pE_{D(u)}  \langle {{u^{\otimes q}}},y  \rangle^2}^{1/2}
      \tag{Cauchy--Schwarz for pseudo-expectations}
    \\
    & \le \Norm{\pE_{D(u)} \dyad{\Paren{u^{\otimes p}}}}^{1/2}
    \cdot \Norm{\pE_{D(u)} \dyad{\Paren{u^{\otimes q}}}}^{1/2}\,.
  \end{align}
  The lemma follows from this bound by choosing $x$ and $y$ as the top left and right singular vectors of the matrix $\pE_{D(u)} u^{\otimes p} \transpose{\Paren{u^{\otimes q}}}$.
\end{proof}

Towards proving Theorem~\ref{thm:higher-tensor-norm} for the case of $p=q$, we first establish the following lemma which says that tensoring with vector with norm less 1 won't increase the spectral norm. 
\begin{lemma}\label{lem:spectral1}
	Let $g(u,v)$ be a polynomial in indeterminates $u,v$. Let $D$ be a degree-$4$ pseudo-distribution over $\R^d$ that satisfies $\{\norm{u}^2\le 1, g(u,v)\ge 0\}_{D(u,v)}$.
	Then, for all $p\in \N$,
	\begin{equation}
	\Norm{\pE_{D(u,v)} g(u,v)\dyad{\Paren{u\otimes v}}}
	\le \Norm{\pE_{D(v)} g(u,v)\dyad{v}}
	\,.
	\end{equation}
\end{lemma}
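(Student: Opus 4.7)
The plan is to reduce the spectral-norm bound to a scalar pseudo-expectation inequality by testing against an arbitrary unit vector $y$ in the ambient space of $u\otimes v$. View $y$ as a matrix $Y$ of appropriate shape so that $\|Y\|_F = \|y\| = 1$ and $\langle u\otimes v,\, y\rangle = u^{\trans}Yv$. Then
\[
\langle y,\, \pE_{D(u,v)} g(u,v)\dyad{(u\otimes v)}\, y\rangle \;=\; \pE_{D(u,v)} g(u,v)\,(u^{\trans}Yv)^2 \;=\; \pE_{D(u,v)} g(u,v)\,(Yv)^{\trans}uu^{\trans}(Yv).
\]
It will suffice to bound this by $\|\pE_{D(u,v)}g(u,v)\,\dyad v\|$, uniformly in $y$.

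First I would invoke the matrix SoS inequality $\vdash \{\dyad u \preceq \|u\|^2\Id\}$ from \pref{lem:matrix-sos} and substitute the vector polynomial $w(v) = Yv$ to obtain an SoS proof
\[
\vdash_{v}\; \{\,\|u\|^2\, v^{\trans}Y^{\trans}Yv - (u^{\trans}Yv)^2 \ge 0\,\}.
\]
Multiplying this inequality by the nonnegative quantity $g(u,v)$ (which holds by the hypothesis $D\models\{g(u,v)\ge 0\}$) and applying soundness (\pref{lem:soundness_sos} together with the matrix analogue \pref{lem:matrix_sos_soundness} as appropriate), I get
\[
\pE_{D(u,v)} g(u,v)(u^{\trans}Yv)^2 \;\le\; \pE_{D(u,v)} g(u,v)\,\|u\|^2\, v^{\trans}Y^{\trans}Yv.
\]
Then, using $D\models\{\|u\|^2\le 1\}$ multiplied against the nonnegative polynomial $g(u,v)\,v^{\trans}Y^{\trans}Yv$ (which is itself SoS in $v$ times $g$, hence nonnegative under $D$), I further bound this by $\pE_{D(u,v)} g(u,v)\, v^{\trans}Y^{\trans}Yv$.

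Finally, rewriting the last expression as the trace inner product
\[
\pE_{D(u,v)} g(u,v)\, v^{\trans}Y^{\trans}Yv \;=\; \bigl\langle Y^{\trans}Y,\; \pE_{D(u,v)} g(u,v)\dyad v\bigr\rangle,
\]
and using the standard inequality $\langle Y^{\trans}Y, M\rangle \le \|M\|\cdot \|Y\|_F^2$ valid for any PSD matrix $M$, together with $\|Y\|_F^2 = \|y\|^2 = 1$, yields the desired bound. Taking the supremum over unit $y$ gives the lemma.

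The only subtle point will be degree bookkeeping: each multiplication of SoS certificates by the constraints $g(u,v)\ge 0$ and $1-\|u\|^2\ge 0$, and each use of \pref{lem:soundness_sos}, must fit within the pseudo-distribution's degree budget. Reading the statement as allowing the pseudo-distribution's degree to accommodate the involved polynomials (in particular degree at least $2\deg g + 4$ suffices for the base case and scales with $\deg g$ in general), the argument above goes through without further obstacles; there is no analytic difficulty beyond the matrix Cauchy–Schwarz style step $\dyad u \preceq \|u\|^2 \Id$.
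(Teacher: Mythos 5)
Your proof is correct and rests on the same two ingredients as the paper's: the matrix SoS fact $\vdash \dyad u \preceq \|u\|^2\Id$ (\pref{lem:matrix-sos}) together with the constraint $\|u\|^2\le 1$. The difference is purely presentational---the paper establishes the matrix SoS certificate $g(u,v)\bigl(\Id\otimes \dyad v - \dyad{(u\otimes v)}\bigr)=g(u,v)(\Id-\dyad u)\otimes\dyad v\succeq 0$ and appeals to matrix-SoS soundness, whereas you scalarize by contracting against an arbitrary unit $y$ reshaped into a matrix $Y$ (so that $\langle u\otimes v,y\rangle=\transpose{u}Yv$) and close with $\langle \transpose{Y}Y,M\rangle\le\|M\|\,\|Y\|_F^2$; both routes go through.
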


\begin{proof}
	We have the sum-of-squares proof that 
	\begin{align}
	\vdash {} & g(u,v)\left(\Id \otimes vv^{\top}  - \dyad{\Paren{u\otimes v}}\right)\nonumber\\
	& = g(u,v)\left(\Id - uu^{\top}\right) \otimes \dyad v
	\nonumber\\
	& = g(u,v)\left(1- \|u\|^2\right) \Id  \otimes vv^{\top} + g(u,v)(\|u\|^2 \Id -uu^{\top})\otimes vv^{\top}	\nonumber\\
	& \succeq 0		\tag{by $1-\|u\|^2 \ge 0$ and $\vdash \|u\|^2 \Id -uu^{\top}\succeq 0$ (Lemma~\ref{lem:sos_psd_1})}
	\end{align}
	Therefore, we obtain that 
	\begin{align}
	\pE_{D(u,v)}[g(u,v)\Id \otimes vv^{\top}] - \pE_{D(u,v)}\left[g(u,v)\dyad{\Paren{u\otimes v}}\right] \succeq 0\nonumber
	\end{align}
	The desired inequality follows,
	\begin{displaymath}
	\Norm{\pE_{D(u,v)} g(u,v)\dyad{\Paren{u\otimes v}}}
	\le \Norm{\pE_{D(u,v)}\Id \otimes g(u,v) vv^{\top}} =  \Norm{\pE_{D(v)} g(u,v)\dyad{v}}
	\end{displaymath}
\end{proof}

The following statement follows straightforward from the Lemma~\ref{lem:spectral1} by induction on $p$. 

\begin{lemma}\label{lem:spectral_norm_2}
  Let $D$ be a degree-$4p$ pseudo-distribution over $\R^d$ that satisfies $\{\norm{u}^2\le 1\}_{D(u)}$.
  Then, for all $p\in \N$,
  \begin{equation}
    \Norm{\pE_{D(u)} \Big(\dyad{u}\Big)^{\otimes p}}
    \le \Norm{\pE_{D(u)} \dyad{u}}
    \mper
  \end{equation}
\end{lemma}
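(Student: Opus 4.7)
My plan is to prove this by induction on $p$, using Lemma~\ref{lem:spectral1} as the inductive step. The base case $p=1$ is immediate (the inequality holds as equality). For the inductive step, I would use the Kronecker identity $(uu^\top)^{\otimes(p+1)} = uu^\top \otimes (uu^\top)^{\otimes p}$, together with the fact that $(uu^\top)^{\otimes p} = u^{\otimes p}(u^{\otimes p})^\top$, so we can treat $v := u^{\otimes p}$ and reduce to exactly the shape of Lemma~\ref{lem:spectral1}.

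Concretely, the first key step is to establish the matrix SoS inequality
\[
  \{\norm u^2 \le 1\} \vdash\; \Id \otimes (uu^\top)^{\otimes p} - (uu^\top)^{\otimes (p+1)} \succeq 0\mper
\]
This factors as $(\Id - uu^\top) \otimes (uu^\top)^{\otimes p}$; to see this is PSD in SoS I would note that $\Id - uu^\top = (1-\norm u^2)\Id + (\norm u^2 \Id - uu^\top)$ is PSD in SoS under $\{\norm u^2\le 1\}$ using \pref{lem:sos_psd_1}, while $(uu^\top)^{\otimes p} = u^{\otimes p}(u^{\otimes p})^\top$ is a manifest rank-one square and hence PSD in SoS. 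Then \pref{lem:tensor_power_psd} combines the two tensor factors. (Equivalently, one can cite Lemma~\ref{lem:spectral1} with $g \equiv 1$ and $v$ substituted by $u^{\otimes p}$; the SoS proof above is what that substitution yields.)

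The second step is to apply soundness (\pref{lem:matrix_sos_soundness}) to pass this inequality through to pseudo-expectations, giving
\[
  \pE_{D(u)} (uu^\top)^{\otimes (p+1)} \;\preceq\; \Id \otimes \pE_{D(u)} (uu^\top)^{\otimes p}\mper
\]
Taking spectral norms and using $\norm{\Id\otimes A} = \norm A$, we get $\bignorm{\pE_{D(u)}(uu^\top)^{\otimes(p+1)}} \le \bignorm{\pE_{D(u)}(uu^\top)^{\otimes p}}$, which is at most $\bignorm{\pE_{D(u)} uu^\top}$ by the inductive hypothesis. Finally, one checks that the induction is compatible with the degree budget: the SoS proof at level $p+1$ uses degree $4(p+1)$, matching the hypothesis of the lemma.

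There is no real obstacle here; the only thing worth getting right is the bookkeeping that the factor $(uu^\top)^{\otimes p}$ is witnessed as PSD via its rank-one decomposition so that \pref{lem:tensor_power_psd} applies, and that the base case is simply the trivial identity.
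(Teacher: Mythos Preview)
Your proposal is correct and takes essentially the same approach as the paper: the paper simply states that the lemma follows from Lemma~\ref{lem:spectral1} by induction on $p$, and what you have written is exactly that induction spelled out (with $g\equiv 1$ and $v=u^{\otimes p}$), including the same factorization $(\Id-uu^\top)\otimes (uu^\top)^{\otimes p}$ that appears in the proof of Lemma~\ref{lem:spectral1}.
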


\subsection{Spectral norm of random contraction}
\label{sec:matrix_concentration}

The following theorem shows that a random contraction of a 3-tensor has spectral norm at most $O(\sqrt{\log d})$ factor larger than the spectral norm of its matrix unfoldings.

\begin{theorem}\label{thm:matrix_concentration}
  Let $T \in \R^{p}\otimes \R^{q}\otimes \R^{r}$ be an order-3 tensor. Let $g\in \cN(0,\Id_r)$.
  Then for any $t \ge 0$,
  \begin{equation}
    \Pr_g \Set{\vbig \Bignorm{\left(\Id\otimes \Id \otimes \transpose{g}\right)T}_{\{1\},\{2\}}
      \ge t \cdot \max\bigl\{\lVert  T\rVert_{\{1\},\{2,3\}},\|T\|_{\{2\},\{1,3\}}\bigr\} } \le 2(p+q)\cdot e^{-t^2/2}\mcom
    \label{eqn:thm:matrix_concentration}
  \end{equation}
  and consequently,\footnote{For large enough $p$ and $q$, the constant hidden in the big-Oh notation below is at most 2}
  \begin{equation}
  \E_g \left[\vbig \Bignorm{\left(\Id\otimes \Id \otimes \transpose{g}\right)T}_{\{1\},\{2\}}\right]
\le O(\log(p+q))^{1/2} \cdot \max\bigl\{\lVert  T\rVert_{\{1\},\{2,3\}},\|T\|_{\{2\},\{1,3\}}\bigr\} \mper
  \end{equation}
\end{theorem}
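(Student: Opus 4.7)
My plan is to write the random contraction as a Gaussian matrix series and invoke Tropp's inequality, then identify the matrix variance in the inequality with the two flattening norms appearing in the theorem.

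First I would fix an orthonormal basis $e_1,\ldots,e_r$ of $\R^r$ and for each $k \in [r]$ define the slice $T_k \in \R^{p \times q}$ by $(T_k)_{ij} = T_{ijk}$, so that
\[
M \seteq \bigl(\Id \otimes \Id \otimes \transpose{g}\bigr)T = \sum_{k=1}^r g_k \, T_k.
\]
Since the entries $g_1,\ldots,g_r$ of $g$ are i.i.d.\ standard Gaussians and the $T_k$ are fixed matrices, $M$ is a Gaussian matrix series. Tropp's bound (see \cite{DBLP:journals/focm/Tropp12}) then gives, for every $s \ge 0$,
\[
\Pr\bigl\{\|M\|_{\{1\},\{2\}} \ge s\bigr\}
\le (p+q) \exp\!\left(-\tfrac{s^2}{2\sigma^2}\right),
\quad \sigma^2 \seteq \max\!\left\{\Bignorm{\sum_{k=1}^r T_k \transpose{T_k}},\ \Bignorm{\sum_{k=1}^r \transpose{T_k} T_k}\right\}.
\]

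Next I would identify $\sigma^2$ with the two flattening norms. The entries of $\sum_k T_k \transpose{T_k}$ are $\sum_{j,k} T_{ijk} T_{i'jk}$, which is precisely the Gram matrix of the rows of the matrix unfolding $T_{\{1\},\{2,3\}}$; hence $\|\sum_k T_k \transpose{T_k}\| = \|T\|_{\{1\},\{2,3\}}^2$. Symmetrically, $\|\sum_k \transpose{T_k} T_k\| = \|T\|_{\{2\},\{1,3\}}^2$. Plugging this into Tropp's bound with $s = t \cdot \max\{\|T\|_{\{1\},\{2,3\}},\|T\|_{\{2\},\{1,3\}}\}$ yields the tail bound \eqref{eqn:thm:matrix_concentration} up to adjustment of the leading constant (the factor of $2$ in front of $p+q$ easily absorbs any slack).

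For the expectation bound, I would use the standard integration identity $\E[\|M\|] = \int_0^\infty \Pr\{\|M\| > s\}\,ds$, split at the threshold $s_0 = \sigma \sqrt{2\log(p+q)}$, bound the integrand by $1$ below $s_0$ and by the Gaussian tail above, and obtain $\E[\|M\|] \le O(\sqrt{\log(p+q)}) \cdot \sigma$. None of these steps is an obstacle: the only real content is the identification of the matrix variance with the correct flattening norms, which as shown above is an immediate computation. The constant in the logarithm can be made close to $2$ for large $p+q$ by choosing the split point carefully.
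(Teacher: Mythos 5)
Your proposal matches the paper's proof essentially step for step: both decompose the contraction into the Gaussian matrix series $\sum_k g_k T_k$ of third-mode slices, invoke the Gaussian matrix series tail bound of Oliveira/Tropp, and identify the matrix variance parameters $\|\sum_k T_k \transpose{T_k}\|$ and $\|\sum_k \transpose{T_k}T_k\|$ with the squared flattening norms $\|T\|_{\{1\},\{2,3\}}^2$ and $\|T\|_{\{2\},\{1,3\}}^2$. The only cosmetic difference is that the paper leaves the tail-to-expectation integration implicit (``and consequently'') whereas you spell it out.
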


\begin{proof}
  Let $T_i$ denote the $i$th third-mode slice of $T$ so that $T_i = \left(\Id \otimes \Id \otimes \transpose{e_i}\right)T$ reshaped as a $p$-by-$q$ matrix.
  Note that when regarded as a $p$-by-$q$ matrix, the contraction $\left(\Id \otimes \Id \otimes \transpose g\right)T$ is a Gaussian matrix series with coefficients $T_1,\ldots,T_r$, so that
  \begin{displaymath}
    \Norm{\vbig \left(\Id \otimes \Id \otimes \transpose g\right)T}_{\{1\},\{2\}} = \Norm{\sum\nolimits_{i=1}^r g_iT_i}\mcom
  \end{displaymath}
 where $g_1,\ldots,g_r$ are independent standard Gaussians with $g_i=\iprod{g,e_i}$.
  Therefore, by concentration of Gaussian matrix series \cite[Theorem 1]{zbMATH05946839} (also see \cite[Corollary 4.2]{DBLP:journals/focm/Tropp12}), we have
  \begin{displaymath}
    \Pr\left\{ \lVert \left(\Id \otimes \Id \otimes \transpose g\right)T \rVert  \ge t \sigma\right\} \le 2 (p+q) e^{-t^2/2}\mcom
  \end{displaymath}
  where $\sigma = \max\left\{\left\lVert \sum_{i} \dyad{T_i}\right\rVert, \left\lVert\sum_{i} \transpose{T_i}T_i\right\rVert\right\}^{1/2}$.

  For $U$ and $V$ sets of indices, let $T_{U,V}$ denote the matrix unfolding of $T$
  with rows indexed by $U$ and columns indexed by $V$, so that $\|T\|_{U,V} = \|T_{U,V}\|$.
  We claim that $\sum_{i} \dyad{T_i} = \transpose{(T_{\{1\},\{2,3\}})}(T_{\{1\},\{2,3\}})$ and $\sum_{i} \transpose{T_i}T_i = \transpose{(T_{\{2\},\{1,3\}})}(T_{\{2\},\{1,3\}})$, which completes the proof.
  These identities are forced by the observations that both of these objects are matrix quantities that are quadratic in $T$, with the first object being a sum over the 2nd and 3rd indices of the two copies of $T$, and the second object being a sum over the 1st and 3rd indices.
\end{proof}

The following corollary of \pref{thm:matrix_concentration} handles a larger class of random contractions.

\begin{corollary}
  \label{cor:matrix_concentration_general}
  Let $T \in \R^{p}\otimes \R^{q}\otimes \R^{r}$ be an order-3 tensor.
  Let $g\sim \cN(0,\Sigma)$ with covariance matrix $\Sigma$ satisfying $0\preceq \Sigma\preceq \Id_r$. Then for any $t\ge 0$,
 \begin{equation}
 \Pr_g \Set{\vbig \Bignorm{\left(\Id\otimes \Id \otimes \transpose g\right)T}_{\{1\},\{2\}}
 	\ge t \cdot \max\bigl\{\lVert  T\rVert_{\{1\},\{2,3\}},\|T\|_{\{2\},\{1,3\}}\bigr\} } \le 4(p+q)\cdot e^{-t^2/2}\mper
 \label{eqn:cor:matrix_concentration}
 \end{equation}
\end{corollary}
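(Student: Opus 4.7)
The plan is to reduce Corollary~\ref{cor:matrix_concentration_general} directly to Theorem~\ref{thm:matrix_concentration} by absorbing the covariance $\Sigma$ into the tensor $T$. Specifically, I would write $g = \Sigma^{1/2} h$ with $h \sim \cN(0, \Id_r)$, using the symmetric positive-semidefinite square root $\Sigma^{1/2}$. Then a short calculation using $(A_1 \otimes A_2 \otimes A_3)(B_1 \otimes B_2 \otimes B_3) = (A_1B_1)\otimes(A_2B_2)\otimes(A_3B_3)$ on tensor contractions gives
\begin{equation*}
  (\Id \otimes \Id \otimes \transpose{g}) T \;=\; (\Id \otimes \Id \otimes \transpose{h}) T', \qquad\text{where } T' \defeq (\Id \otimes \Id \otimes \Sigma^{1/2}) T.
\end{equation*}
So the random contraction of $T$ against $g$ has the same distribution as the standard-Gaussian random contraction of $T'$.

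Next I would bound the two relevant matrix unfoldings of $T'$ by those of $T$. For the $\{1\},\{2,3\}$ unfolding, the identity above implies the matrix factorization
\begin{equation*}
  T'_{\{1\},\{2,3\}} \;=\; T_{\{1\},\{2,3\}} \, (\Id_q \otimes \Sigma^{1/2})\mcom
\end{equation*}
and since $\|\Id_q \otimes \Sigma^{1/2}\| = \|\Sigma^{1/2}\| \le 1$ (because $\Sigma \preceq \Id_r$), we get $\|T'\|_{\{1\},\{2,3\}} \le \|T\|_{\{1\},\{2,3\}}$. An identical argument (applying $\Sigma^{1/2}$ to the third mode while indexing rows by mode $2$ and columns by modes $\{1,3\}$) gives $\|T'\|_{\{2\},\{1,3\}} \le \|T\|_{\{2\},\{1,3\}}$.

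Now I apply Theorem~\ref{thm:matrix_concentration} to the tensor $T'$ with the standard Gaussian vector $h$: for every $t \ge 0$,
\begin{equation*}
  \Pr_h\!\Set{\Bignorm{(\Id\otimes\Id\otimes\transpose{h})T'}_{\{1\},\{2\}} \ge t \cdot \max\{\|T'\|_{\{1\},\{2,3\}}, \|T'\|_{\{2\},\{1,3\}}\}} \le 2(p+q)\, e^{-t^2/2}.
\end{equation*}
Substituting the distributional identity and the unfolding bounds yields the claimed tail bound with the slightly loose constant $4(p+q)$ in place of $2(p+q)$.

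There is no real obstacle here: the one thing to be careful about is the bookkeeping for which factor lands on which mode when unfolding $T'$ as a matrix, and the fact that $\Sigma^{1/2}$ is symmetric so that no extra transpose appears. If one wanted to preserve the sharper constant $2(p+q)$ from Theorem~\ref{thm:matrix_concentration}, this approach already gives it; the factor $4$ in the corollary statement is simply a convenient slack.
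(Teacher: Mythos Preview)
Your proof is correct and takes a genuinely different route from the paper's. The paper reduces to the identity-covariance case by a symmetrization trick: it introduces an independent $h\sim\cN(0,\Id_r-\Sigma)$, sets $g'=g+h$ and $g''=g-h$ (both $\cN(0,\Id_r)$), writes $g=\tfrac12(g'+g'')$, and then combines the triangle inequality with a union bound over the two events coming from Theorem~\ref{thm:matrix_concentration} applied to $g'$ and $g''$; this is where the factor $4(p+q)$ arises. Your approach instead absorbs $\Sigma^{1/2}$ into the tensor, observing that the relevant matrix unfoldings of $T'=(\Id\otimes\Id\otimes\Sigma^{1/2})T$ are contractions of those of $T$ and hence have no larger spectral norm, so a single application of Theorem~\ref{thm:matrix_concentration} suffices. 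Your argument is more direct and, as you note, actually yields the sharper constant $2(p+q)$; the paper's symmetrization is a more generic device (it would work even if one only had a black-box tail bound for identity covariance without any handle on how the underlying matrix series depends on the tensor), but here it is unnecessarily lossy.
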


\begin{proof}
  We reduce to the case $\Sigma = \Id_p$ and apply \pref{thm:matrix_concentration}.
  Concretely, let $g' =  g + h$ and $g'' = g-h$ where $h$ is a random variable with distribution $\cN(0,\Id_p-\Sigma)$ that is independent of $g$.
  By this construction, $g'$ and $g''$ both have marginal distribution $\cN(0,\Id_p)$, and $g = \frac{1}{2}(g'+g'')$.
  Therefore we can invoke \pref{thm:matrix_concentration} for random variables $g'$ and $g''$.
  Letting $\sigma = \max\{\lVert  T\rVert_{\{1\},\{2,3\}},\|T\|_{\{2\},\{1,3\}}\} $, using the union bound and the triangle inequality, we have that
  \begin{align}
    & \Pr_{g} \Set{\vbig \left\lVert  \left(\Id \ot \Id \ot \transpose g \right)T \right\rVert_{\{1\},\{2\}} \ge t  \sigma }\notag\\
    & = \Pr_{g,h}\Set{\vbig \left\lVert  \left(\Id \ot \Id \ot \transpose{(g'+g'')}\right)T\right\rVert_{\{1\},\{2\}} \ge 2t \sigma }\nonumber\\
    & \le \Pr_{g,h}\Set{\vbig \left\lVert  \left(\Id \ot \Id \ot \transpose{(g')}\right)T\right\rVert_{\{1\},\{2\}} + \left\lVert  \left(\Id \ot \Id \ot \transpose{(g'')}\right)T\right\rVert_{\{1\},\{2\}}  \ge 2t \sigma }\nonumber\\
    & \le \Pr_{g,h}\Set{\vbig \left\lVert  \left(\Id \ot \Id \ot \transpose{(g')}\right)T \right\rVert_{\{1\},\{2\}} \ge t  \sigma } + \Pr_{g,h}\Set{\vbig \left\lVert \left(\Id \ot \Id \ot \transpose{(g'')}\right)T \right\rVert_{\{1\},\{2\}} \ge t \sigma }  \nonumber\\
    & \le 4(p+q)\cdot e^{-t^2/2}\mcom\nonumber
  \end{align}
  where the second line uses the triangle inequality, the third line uses the union bound, and the fourth line uses \pref{thm:matrix_concentration} applied to $g'$ and $g''$.
\end{proof}

\pref{cor:matrix_concentration_general} and \pref{thm:higher-tensor-norm} together imply the following theorem..

\begin{theorem}
  Let $k\in \N$  and $D$ be a degree-$(4k+10)$ pseudo-distribution over $\R^d$ that satisfies $\{\norm{u}^2\le 1\}_{D(u)}$.
  Let $g\sim \cN(0,\Sigma)$ be a Gaussian vector with covariance $\Sigma \preceq  \Id_d^{\otimes k}$.
  Then,
  \begin{equation}
    \E_{g\sim \cN(0,\Id_d)}\Norm{\pE_{D(u)} \langle  g,u^{\otimes k} \rangle\cdot \dyad u}
    \precsim \sqrt{k\log d\,} \Norm{\pE_{D(u)} \dyad u}\,.
  \end{equation}
\end{theorem}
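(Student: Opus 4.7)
The plan is to view the random matrix
\[
  M_g \;=\; \pE_{D(u)} \langle g, u^{\otimes k}\rangle\cdot \dyad{u} \;\in\;\R^{d\times d}
\]
as a random contraction of a fixed third-order tensor and then reduce to the two auxiliary results already established: \pref{cor:matrix_concentration_general} (Gaussian contraction) and \pref{thm:higher-tensor-norm} (pseudo-moment spectral bound).

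First, I would introduce the order-$3$ tensor $T\in\R^{d}\otimes\R^{d}\otimes\R^{d^k}$ defined by
\[
  T \;=\; \pE_{D(u)}\; u\otimes u\otimes u^{\otimes k}\mper
\]
Grouping the last $k$ modes into a single ``third mode'' of dimension $d^k$, we have $M_g = (\Id\otimes\Id\otimes\transpose{g})T$. Here $g\sim\cN(0,\Sigma)$ with $\Sigma\preceq \Id_d^{\otimes k}=\Id_{d^k}$, so \pref{cor:matrix_concentration_general} applies with $p=q=d$, $r=d^k$. Integrating its tail bound \eqref{eqn:cor:matrix_concentration} yields
\[
  \E_g \Norm{M_g} \;\le\; O\bigl(\sqrt{\log d}\bigr)\cdot\max\Bigl\{\lVert T\rVert_{\{1\},\{2,3\}},\;\lVert T\rVert_{\{2\},\{1,3\}}\Bigr\}\mper
\]

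Next, since $T$ is symmetric in its first two modes, both matrix unfoldings on the right-hand side are equal and coincide with the $d$-by-$d^{k+1}$ matrix
\[
  \Norm{\pE_{D(u)}\; u\,\transpose{\bigl(u^{\otimes (k+1)}\bigr)}}\mper
\]
I would then invoke \pref{thm:higher-tensor-norm} with $p=1$ and $q=k+1$: this requires that $D$ have degree at least $4(p+q)=4k+8$, which is satisfied since $D$ has degree $4k+10$. The theorem then gives
\[
  \Norm{\pE_{D(u)}\; u\,\transpose{\bigl(u^{\otimes(k+1)}\bigr)}}\;\le\;\Norm{\pE_{D(u)}\dyad{u}}\mper
\]
Chaining the two inequalities and loosening $\sqrt{\log d}$ to $\sqrt{k\log d}$ delivers the claim.

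There is no real obstacle here; the work is essentially bookkeeping. The only points to verify carefully are (i) that the covariance hypothesis $\Sigma\preceq\Id_d^{\otimes k}$ matches the hypothesis $\Sigma\preceq\Id_r$ required by \pref{cor:matrix_concentration_general} after the reshaping $r=d^k$, and (ii) that the degree budget of $D$ is enough to legitimately apply \pref{thm:higher-tensor-norm}. Both are immediate from the statement.
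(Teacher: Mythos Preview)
Your proposal is correct and follows exactly the route the paper intends: the paper states this theorem immediately after \pref{cor:matrix_concentration_general} with the remark that it and \pref{thm:higher-tensor-norm} ``together imply'' the result, and the same chain (contract, bound by unfoldings, reduce via \pref{thm:higher-tensor-norm}) is spelled out verbatim in the proof of \pref{prop:rounding-bks-2}. Your observation that integrating the tail bound of \pref{cor:matrix_concentration_general} already gives $O(\sqrt{\log d})$ rather than $O(\sqrt{k\log d})$ is also right; the extra factor of $\sqrt{k}$ in the stated bound is slack.
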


We can apply Corollary~\ref{cor:matrix_concentration_general} repeatedly to obtain a bound for random contraction over a larger number of modes. 

\begin{theorem}\label{thm:matrix_concentration_multi_mode}
	Let $T \in \R^{p\times q\times r_1\times \dots \times r_s}$ be an order-$(s+2)$ tensor, and $g_1\sim \cN(0,\Sigma_1) ,\dots,g_s\sim \cN(0,\Sigma_s)$ be independent Gaussian random variables with covariance $\Sigma_i \preceq \Id_{r_i}$ for each $i\in [r]$.
	Let $\bar{r} = \max_{i\in [s]}\{r_i+2\}$.
	Then for any $t\ge 0$, 
	\begin{equation}
	\Pr_g \Set{\vbig \Bignorm{\left(\Id\otimes \Id \otimes \transpose{g_1} \otimes \ldots \otimes \transpose{g_s}\right)T}_{\{1\},\{2\}}
		\ge t^s \cdot \max_{S \subset[s]: 1\in S, 2\not\in S}\bigl\{\lVert  T\rVert_{S,S^c}\bigr\} } \le 4(p+q)\,\bar{r}^{s-1}\, e^{-t^2/2}\mper
	\label{eqn:cor:matrix_concentration_multi_modes}
	\end{equation}
\end{theorem}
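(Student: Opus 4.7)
The plan is to induct on $s$, using Corollary~\ref{cor:matrix_concentration_general} to peel off a single Gaussian contraction at each step. The base case $s=1$ is exactly Corollary~\ref{cor:matrix_concentration_general} itself (with $\bar{r}^{s-1}=1$).

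For the inductive step I would set $U = (\Id \otimes \Id \otimes \Id_{r_1} \otimes \transpose{g_2} \otimes \cdots \otimes \transpose{g_s})T$, so that $U$ is an order-$3$ tensor of dimensions $p \times q \times r_1$ depending only on $g_2,\ldots,g_s$, and the matrix whose norm I want to bound is $(\Id \otimes \Id \otimes \transpose{g_1})U$. Conditioning on $g_2,\ldots,g_s$ and applying Corollary~\ref{cor:matrix_concentration_general} to the contraction of $U$ by $g_1$, I get that with probability at least $1 - 4(p+q)e^{-t^2/2}$ over $g_1$,
\begin{equation*}
\Norm{(\Id \otimes \Id \otimes \transpose{g_1})U}_{\{1\},\{2\}}
\le t \cdot \max\Set{\Norm{U}_{\{1\},\{2,3\}},\Norm{U}_{\{1,3\},\{2\}}}.
\end{equation*}
Each of the two quantities on the right is itself the $\Norm{\cdot}_{\{1\},\{2\}}$ norm of $(\Id \otimes \Id \otimes \transpose{g_2} \otimes \cdots \otimes \transpose{g_s})$ applied to a reshaping of $T$: $\Norm{U}_{\{1\},\{2,3\}}$ corresponds to merging modes $2$ and $3$ of $T$ (yielding a tensor of first-two-mode dimensions $(p,\, qr_1)$), while $\Norm{U}_{\{1,3\},\{2\}}$ corresponds to merging modes $1$ and $3$ (first-two-mode dimensions $(pr_1,\, q)$). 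I then apply the inductive hypothesis for $s-1$ contractions to each of these two reshapings.

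The key book-keeping point is that bipartitions of the modes of the first reshaping (with the combined mode on the $S^c$-side of mode $2$) correspond exactly to bipartitions of $T$'s modes in which mode $3$ lies on the same side as mode $2$, while bipartitions of the second reshaping correspond to bipartitions in which mode $3$ lies on the same side as mode $1$. Together these two families cover every bipartition $(S,S^c)$ of $[s+2]$ with $1\in S$ and $2\in S^c$, so the two inductive bounds combine to yield $\max\{\Norm{U}_{\{1\},\{2,3\}},\Norm{U}_{\{1,3\},\{2\}}\} \le t^{s-1}\cdot\max_{S}\Norm{T}_{S,S^c}$. The union bound over the three failure events (one from the outer corollary and one from each inductive application) gives a total failure probability of at most $4(p+q)\bigl(1+(1+r_1)\bar{r}^{s-2}\bigr)e^{-t^2/2}$, and since $\bar{r} \ge r_1+2$ and $\bar{r}^{s-2}\ge 1$ for $s\ge 2$, the parenthesized factor is at most $\bar{r}^{s-1}$, which closes the induction.

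The main obstacle is notational rather than mathematical: one must verify that the two families of bipartitions coming from the two reshapings really do exhaust every $(S,S^c)$ with $1\in S,\, 2\in S^c$, and that the failure-probability constants telescope to precisely the claimed $\bar{r}^{s-1}$ factor (exploiting the $+2$ slack in the definition of $\bar{r}$) rather than a looser constant multiple. Beyond this, the argument is a mechanical iteration of Corollary~\ref{cor:matrix_concentration_general}.
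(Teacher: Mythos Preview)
Your proposal is correct and follows essentially the same approach as the paper: induct on $s$, peel off $g_1$ with Corollary~\ref{cor:matrix_concentration_general}, apply the inductive hypothesis to the two reshapings of $T$ (merging mode $3$ with mode $2$ or with mode $1$), and union-bound. Your failure-probability bookkeeping ($1+(1+r_1)\bar r^{\,s-2}\le \bar r^{\,s-1}$ from $\bar r\ge r_1+2$) is in fact more explicit than what the paper writes, and your observation that the two families of bipartitions together exhaust all $(S,S^c)$ with $1\in S$, $2\in S^c$ is exactly the point the paper uses implicitly.
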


\begin{proof}
	We prove by induction on $s$.
	The base case is exactly Corollary~\ref{cor:matrix_concentration_general}.
	For $s\ge 2$, suppose we have proved the $(s-1)$-case. 
	
	Let $T' = \left(\Id\otimes \Id \otimes \Id \otimes \transpose{g_2}\dots \otimes \transpose{g_s}\right)T$ be an order-3 tensor. Then we have that \begin{equation}
	\left(\Id\otimes \Id \otimes \transpose{g_1} \otimes \ldots \otimes \transpose{g_s}\right)T = \left(\Id \otimes \Id \otimes \transpose{g_1}\right) T' \mper\nonumber
	\end{equation} Then using Corollary~\ref{cor:matrix_concentration_general} on $T'$ and $g_1$, and then taking the expectation over $g_2,\dots, g_s$, we have 
	\begin{align}
		\Pr_{g_1,\dots, g_s} \Set{\vbig \Bignorm{\left(\Id\otimes \Id \otimes \transpose{g_1} \otimes \ldots \otimes \transpose{g_s}\right)T}_{\{1\},\{2\}}\ge  t \cdot \max\bigl\{\lVert  T'\rVert_{\{1\},\{2,3\}},\|T'\|_{\{2\},\{1,3\}}\bigr\}} \le 4(p+q)\, e^{-t^2/2}\mper \label{eqn:eqn83}
	\end{align}
	
	We view $T'$ as an order-$(s+1)$ tensor by merging the 2nd and 3rd modes, that is, $\left(\Id\otimes (\Id \otimes \Id) \otimes \transpose{g_2}\dots \otimes \transpose{g_s}\right)T$, and then apply the  inductive hypothesis. We obtain 
		\begin{align}
			\Pr_{g_2,\dots, g_s} \Set{\vbig \Norm{T'}_{\{1\},\{2,3\}}\ge  t^{s-1} \cdot \max_{S \subset[s]: 1\in S, \{2,3\}\cap S =\emptyset}\bigl\{\lVert  T\rVert_{S,S^c}\bigr\} } \le 4(p+qr_1)\, \bar{r}^{s-2}\cdot e^{-t^2/2}\mper\label{eqn:eqn84}
		\end{align}
	Similarly, we have that
 		\begin{align}
	\Pr_{g_2,\dots, g_s} \Set{\vbig \Norm{T'}_{\{1,3\},\{2\}}\ge  t^{s-1} \cdot \max_{S \subset[s]: \{1,3\}\in S, 2\not\in  S =\emptyset}\bigl\{\lVert  T\rVert_{S,S^c}\bigr\} } \le 4(pr_1+q)\, \bar{r}^{s-2}\cdot e^{-t^2/2}\mper\label{eqn:eqn85}
\end{align}
Using equations \pref{eqn:eqn83}, \pref{eqn:eqn84}, \pref{eqn:eqn85}, and applying union bound we obtain 
\begin{equation}
\Pr_g \Set{\vbig \Bignorm{\left(\Id\otimes \Id \otimes \transpose{g_1} \otimes \dots \otimes \transpose{g_s}\right)T}_{\{1\},\{2\}}
	\ge t^s \cdot \max_{S \subset[s]: 1\in S, 2\not\in S}\bigl\{\lVert  T\rVert_{S,S^c}\bigr\} } \le 4(p+q)\,\bar{r}^{s-1}\, e^{-t^2/2}\mcom \nonumber
\end{equation}
and complete the inductive proof. 
	
\end{proof}

\section{Decomposition of random overcomplete 3-tensors}\label{sec:random_tensor}
\newcommand{\pisqrt}[1]{(#1^{+})^{1/2}}
\newcommand{\covar}[1]{#1#1^{\trans}}
\newcommand{\covarp}[1]{#1(#1)^{\trans}}

\newcommand*{\Idsym}{\Id_{\mathrm{sym}}}
\newcommand*{\Idsymphi}{\Id_{\mathrm{sym}'}}
\newcommand*{\Qplushalf}{(Q^{+})^{1/2}}

In this section, we assume that we are given a random 3rd order overcomplete symmetric tensor $T$ of the following form
\begin{equation}
T = \sum\nolimits_{i=1}^{n}a_i\tp{3} + E
\mcom
\label{eq:def-tensor}
\end{equation}
where $n\le d^{1.5}/(\log d)^{O(1)}$, the vectors $a_i$ are drawn independently at random from the Euclidean unit sphere, and the error tensor $E$ satisfies $\norm{E}_{\{1\},\{2,3\}}\le \e$.

Let $\Idsym$ be the projection to the symmetric subspace of  $(\R^d)^{\otimes 2}$ (the span of all $x\tp{2}$ for $x\in \R^d$), and let $\Phi = \tfrac 1{\sqrt d} \sum_{i=1}^d e_i^{\otimes 2}\in(\R^{d})^{\otimes 2}$. Let $\Idsymphi$ be the projection to the subspace orthogonal to $\Phi$:
\begin{equation}
\Idsymphi = \Idsym- \Phi\Phi^{\trans} \mper\label{eqn:idsymphi}
\end{equation}

\begin{algorithm}[h]
  \caption{Polynomial-time algorithm for random overcomplete 3-tensor decomposition}  \label{alg:random-overcomplete}
  \textbf{Input:} Number $\epsilon > 0$ and $n\in \N$ and symmetric tensor $T\in (\R^d)^{\otimes 3}$ of the form~\eqref{eq:def-tensor}.
  
  \textbf{Find:} $\hat{a}_1,\dots, \hat{a}_n\in \R^d$.

  \textbf{Algorithm:}
  \vspace{-0.5em}
  \begin{enumerate}
  	\item Call \pref{alg:tensor-general} with \begin{align}\cA &= \Set{\langle T,u\tp{3}\rangle \ge 1-\epsilon, \|u\|^2 =1 }\mcom\label{eqn:eqn41} \\
  	P(u) &= \Idsymphi u\tp{2}\mcom\label{eqn:eqn39}\end{align} where $\Idsymphi$ is defined in~\eqref{eqn:idsymphi}. Suppose the outputs of \pref{alg:tensor-general} are $\hat{b}_1,\dots, \hat{b}_n$.
  \item Let $\hat{a}_i$ be $\tau_i$ the top eigenvector of the matrix reshaping of $\hat{b}_i$, where $\tau_i \in \{1, -1\}$ is chosen so that $T\hat{a}_1 > 0$.
  \end{enumerate}

\end{algorithm}

\begin{theorem*}[Restatement of \pref{thm:random-vectors}]
	With probability $1-d^{-\omega(1)}$ over the choice of random unit vectors $a_1,\ldots,a_n\in\R^d$, when given a symmetric $3$-tensor $T\in (\R^d)^{\otimes 3}$ as input, the output $\hat{a}_1,\ldots,\hat{a}_n\in\R^d$ of \pref{alg:random-overcomplete} satisfies
	\begin{equation}
	\dist_H\Paren{\vbig \Set{\hat{a}_1,\ldots,\hat{a}_{n}},\Set{a_1,\ldots,a_{n}}}^2
	\le O\Paren{\Paren{\frac n{d^{1.5}}}^{\Omega(1)} + \Norm{T - \sum\nolimits_{i=1}^n a_i^{\otimes 3}}_{\{1\},\{2,3\}}}
	\mper
	\end{equation}
\end{theorem*}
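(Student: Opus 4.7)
The plan is to instantiate Theorem~\ref{thm:general-tensor-decomposition} with the specific choice $P(u) = \Idsymphi\, u^{\otimes 2}$ and constraint system $\cA$ defined in \pref{eqn:eqn41} (with slack $\e$ of order $\|E\|_{\{1\},\{2,3\}} + (n/d^{1.5})^{\Omega(1)}$), invoke it to obtain vectors $\hat b_1,\dots,\hat b_n$ whose tensor squares are close in Hausdorff distance to $\{P(a_i)^{\otimes 2}\}_{i\in[n]}$, and then extract each $\hat a_i$ as the signed top left singular vector of the $d\times d$ matrix reshaping of $\hat b_i$. The role of the extra projector $\Idsymphi$ (rather than simply $P(u)=u^{\otimes 2}$) is to remove the single spurious large eigendirection of $\sum_i (a_i^{\otimes 2})(a_i^{\otimes 2})^{\trans}$ aligned with $\Phi$, which otherwise exceeds $1+o(1)$ in magnitude and would violate the hypothesis of Theorem~\ref{thm:general-tensor-decomposition}.

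The first task, which is routine, is to verify that $b_i := P(a_i) = a_i^{\otimes 2} - d^{-1}\sum_j e_j^{\otimes 2}$ satisfies the preconditions $\|b_i\| \ge 1 - o(1)$ and $\|\sum_i \dyad{b_i}\| \le 1 + o(1)$ (after absorbing the $(1-1/d)^{-1/2}$ normalization into $b_i$). A direct calculation gives $\|b_i\|^2 = 1 - 1/d$, while the spectral bound reduces to a matrix-Bernstein estimate on the traceless rank-one matrices $a_i a_i^{\trans} - d^{-1}I$ restricted to the trace-zero symmetric subspace, which yields $\tilde O(n/d^{1.5})$ with probability $1-d^{-\omega(1)}$. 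This is essentially the matrix concentration argument carried out in \cite{HopkinsSSS16}, and it is precisely where the regime $n\le d^{1.5}/\polylog d$ is used.

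The heart of the proof, and the main obstacle, is producing a constant-degree SoS proof of the hypothesis
\begin{equation}
  \cA \;\vdash_{u,\,O(1)}\; \Set{\sum_{i=1}^n \langle b_i, P(u)\rangle^4 \,\ge\, \Bigl(1 - O(\e) - \tilde O\bigl((n/d^{1.5})^{\Omega(1)}\bigr)\Bigr)\,\|P(u)\|^4}.
\end{equation}
The starting point is an easy step: the spectral bound on $E$ yields the SoS consequence $\vdash_{u,4} \langle E, u^{\otimes 3}\rangle \le \|E\|_{\{1\},\{2,3\}}\|u\|^3$, which combined with $\|u\|^2=1$ gives $\cA \vdash_{u,4} \sum_i \langle a_i,u\rangle^3 \ge 1-2\e$. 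The nontrivial step is to boost this cubic inequality in $u$ to an octic one, extending the constant-degree SoS argument of \cite{DBLP:conf/approx/GeM15}: combining the cubic bound with SoS-provable spectral norm bounds on the moment matrices $\sum_i (a_i^{\otimes k})(a_i^{\otimes k})^{\trans}$ restricted to appropriate partially-symmetric subspaces (each being $1 + \tilde O(n/d^{k/2})$ with high probability over the $a_i$), one derives
\begin{equation}
  \cA \;\vdash_{u,\,O(1)}\; \Set{\sum_i \langle a_i,u\rangle^8 \,\ge\, 1 - O(\e) - \tilde O(n/d^{1.5})}.
\end{equation}
Finally, the identity $\langle b_i, P(u)\rangle = \langle a_i,u\rangle^2 - 1/d$ (valid on $\|u\|^2=1$, since $\Phi\Phi^{\trans}u^{\otimes 2} = d^{-1}\sum_j e_j^{\otimes 2}$) converts this octic-in-$u$ bound into the required degree-$4$ inequality in $P(u)$ by an elementary SoS expansion of the fourth power, absorbing the lower-order $1/d$ cross terms into the slack.

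With the preconditions established, Theorem~\ref{thm:general-tensor-decomposition} returns $\hat b_1,\dots,\hat b_n$ satisfying $\dist_H\bigl(\{\hat b_i^{\otimes 2}\},\{b_i^{\otimes 2}\}\bigr) = O\bigl(\e + (n/d^{1.5})^{\Omega(1)}\bigr)^{1/2}$, so each $\hat b_i$ is close to $\pm b_{\pi(i)}$ for some permutation $\pi$. Reshaped as a $d\times d$ matrix, $b_j = a_j a_j^{\trans} - d^{-1}I$, whose top eigenvector is $a_j$ with eigenvalue gap $1-O(1/d)$; Lemma~\ref{lem:recovery} (or directly Davis--Kahan) then shows the top left singular vector of the reshaping of $\hat b_i$ is $O(\e + (n/d^{1.5})^{\Omega(1)})^{1/2}$-close to $\pm a_{\pi(i)}$. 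The sign is fixed by the sign of $\langle T, \hat a_i^{\otimes 3}\rangle$, which equals $1 - O(\e)$ on the true $a_j$ and is thus bounded away from zero. Assembling these estimates yields the squared Hausdorff distance bound claimed in the theorem.
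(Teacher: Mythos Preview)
Your proposal is correct and follows essentially the same route as the paper: reduce to Theorem~\ref{thm:general-tensor-decomposition} with $P(u)=\Idsymphi u^{\otimes 2}$, verify the near-orthonormality of the $b_i=P(a_i)$ via the matrix-concentration bound of \cite{HopkinsSSS16} (the paper's Lemma~\ref{lem:subspace_concentration}), establish the SoS octic bound $\sum_i\langle a_i,u\rangle^8\ge 1-O(\e)-\tilde O(n/d^{1.5})$ by extending the Ge--Ma argument (the paper's Lemma~\ref{lem:8th_power}), and then pass to $\sum_i\langle b_i,P(u)\rangle^4$ via the identity $\langle b_i,P(u)\rangle=\langle a_i,u\rangle^2-1/d$ before recovering each $a_i$ as a signed top eigenvector. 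The only cosmetic difference is that the paper packages the verification of the hypotheses of Theorem~\ref{thm:general-tensor-decomposition} into a single Proposition~\ref{prop:random_3_tensor} and records the concentration error as $\tilde O(\sqrt n/d+n/d^{1.5})$ rather than just $\tilde O(n/d^{1.5})$, which matters only in the undercomplete regime $n<d$.
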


\pref{thm:random-vectors} follows immediately from \pref{thm:general-tensor-decomposition} and the following proposition:
\begin{proposition}\label{prop:random_3_tensor}
	With probability $1-d^{-\omega(1)}$ over the choice of random unit vectors $a_1,\dots, a_n$, the parameters $P(\cdot)$ and $\cA$ defined in \pref{alg:random-overcomplete} satisfy the requirements of Theorem~\ref{thm:general-tensor-decomposition}.
	In particular, let $c_i = P(a_i) = \Idsymphi a_i\tp{2}$.
	Then
	\begin{equation}
	\left\|\sum_{i=1}^{n}\dyad{c_i} \right\|\le 1+\delta\mcom\label{eqn:eqn42}
	\end{equation}
	where $\delta  = \tO(\sqrt{n}/d + n/d^{1.5})$, and
	\begin{equation}
	\cA\vdash\sum_{i=1}^{n}\inner{c_i, P(u)}^4 \ge \Big(1- O(\epsilon+1/d)\Big)\|P(u)\|^4\mper \label{eqn:eqn44}
	\end{equation}
\end{proposition}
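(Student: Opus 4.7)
The plan is to establish the two claims of the proposition independently. The spectral bound \pref{eqn:eqn42} is a purely probabilistic statement about the random vectors $\{a_i\}$, whereas the sum-of-squares inequality \pref{eqn:eqn44} is a deduction whose premises combine the constraint system $\cA$ with high-probability spectral properties of $\{a_i\}$ that will be derived essentially as in the argument for the first claim.

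For \pref{eqn:eqn42}, the first step is to compute $\E \dyad{c_i}$ using the standard fourth-moment identity $\E a_{i_1}a_{i_2}a_{i_3}a_{i_4}= \tfrac{1}{d(d+2)}(\delta_{i_1i_2}\delta_{i_3i_4}+\delta_{i_1i_3}\delta_{i_2i_4}+\delta_{i_1i_4}\delta_{i_2i_3})$ for a uniform unit vector. Since $\Idsymphi \Phi=0$ and $\Idsymphi\Idsym=\Idsymphi$, this gives $\E\dyad{c_i}=\tfrac{2}{d(d+2)}\Idsymphi$, so that $\bigl\|\E\sum_i \dyad{c_i}\bigr\|=O(n/d^2)$. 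The role of the projector $\Idsymphi$ here is precisely to remove the spurious eigenvector $\Phi$ which would otherwise carry an eigenvalue of order $n/d$ in $\E \sum_i \dyad{a_i^{\otimes 2}}$. I would then bound the deviation from expectation by matrix Bernstein applied to the independent centered rank-one summands $\dyad{c_i}-\E\dyad{c_i}$: each summand has norm at most $\|c_i\|^2=1-1/d$, and the matrix variance satisfies $\bigl\|\sum_i \E[(\dyad{c_i}-\E\dyad{c_i})^2]\bigr\|=O(n/d^2)$. This produces a deviation of order $\tilde O(\sqrt n/d)$, so the total operator norm is bounded by $\tilde O(n/d^2 + \sqrt{n}/d)\le 1+\delta$ with the claimed $\delta=\tilde O(\sqrt{n}/d + n/d^{3/2})$, with probability $1-d^{-\omega(1)}$. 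This is essentially the matrix concentration used in \cite{HopkinsSSS16}.

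For \pref{eqn:eqn44}, I first reduce both sides to moment sums of $x_i:=\iprod{a_i,u}$. Using $\iprod{\Phi,a_i^{\otimes 2}}=1/\sqrt{d}$ and the idempotency of $\Idsymphi$, we have $\iprod{c_i,P(u)}=x_i^2-\|u\|^2/d$ and $\|P(u)\|^2=(1-1/d)\|u\|^4$. Under the constraint $\|u\|^2=1$ from $\cA$, the target inequality expands into a linear combination of the even moment sums $\sum_i x_i^{2k}$ for $k\in\{1,2,3,4\}$, which must match $(1-1/d)^2$ up to an error of $O(\epsilon+1/d)$. The initial lower bound $\cA\vdash_4\{\sum_i x_i^3\ge 1-2\epsilon\}$ follows immediately from $\iprod{T,u^{\otimes 3}}\ge 1-\epsilon$ together with $\vdash \iprod{E,u^{\otimes 3}}\le\|E\|_{\{1\},\{2,3\}}$ exactly as in the proof of \pref{thm:orthogonal-vectors}. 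The sub-leading terms in the expansion are controlled by SoS-certifiable spectral upper bounds of the form $\sum_i x_i^{2k}\le 1+\tilde O(n/d^{3/2})$, coming from the matrices $\sum_i(a_i^{\otimes k})(a_i^{\otimes k})^{\trans}$ projected away from the symmetric $\Phi$-type eigenspaces, and proved by iterating the concentration argument of Part 1 at higher $k$.

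The main obstacle is the sum-of-squares bootstrap from the degree-three premise $\sum_i x_i^3 \ge 1 - 2\epsilon$ to the higher even-power lower bounds $\cA\vdash\{\sum_i x_i^{2k}\ge 1-O(\epsilon)-\tilde O(n/d^{3/2})\}$ for $k\in\{2,3,4\}$ with constant degree. The naive Cauchy--Schwarz inequality $(\sum_i x_i^3)^2\le(\sum_i x_i^2)(\sum_i x_i^4)$ is far too weak: under $\|u\|^2=1$ the quantity $\sum_i x_i^2$ is typically of order $n/d$, which for $n\gg d$ is much larger than $1$ and wipes out any lower bound. Extending the arguments of \cite{DBLP:conf/approx/GeM15}, I would instead combine the degree-three premise with the projected moment-matrix spectral bounds from Part 1 in short chains of SoS manipulations. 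Each step rewrites $\sum_i x_i^{k+1}$ as a bilinear form $\iprod{\sum_i x_i^k a_i^{\otimes \ell},u^{\otimes \ell}}$ for an appropriate $\ell$, splits the coefficient tensor along the spurious $\Phi$-type subspace (whose contribution can be tracked explicitly using $\|u\|^2=1$) and its orthogonal complement (whose spectral contribution is only $\tilde O(n/d^{3/2})$), and so transfers a lower bound at one degree into one at the next while losing only $\tilde O(n/d^{3/2})$ slack. A constant number of such iterations carries the degree-three bound all the way to the degree-eight bound needed to conclude \pref{eqn:eqn44} from the expansion of $(x_i^2-1/d)^4$.
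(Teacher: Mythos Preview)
Your proposal is correct and follows essentially the same approach as the paper: \pref{eqn:eqn42} via the matrix-concentration bound from \cite{HopkinsSSS16} (the paper simply cites this as Lemma~\ref{lem:subspace_concentration}), and \pref{eqn:eqn44} by expanding $\iprod{c_i,P(u)}=\iprod{a_i,u}^2-1/d$ and invoking the bootstrap of \cite{DBLP:conf/approx/GeM15} to lift $\sum_i\iprod{a_i,u}^3\ge 1-O(\e)$ to $\sum_i\iprod{a_i,u}^8\ge 1-O(\e)-\tilde O(n/d^{3/2})$, after which the lower-order terms in the expansion of $(\iprod{a_i,u}^2-1/d)^4$ are absorbed using the upper bounds $\sum_i\iprod{a_i,u}^{2k}\le 1+\tilde O(n/d^{3/2})$. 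The only minor difference is that the paper (following \cite{DBLP:conf/approx/GeM15}) carries out the bootstrap step via the Cauchy--Schwarz identity $\bigl(\sum_i x_i^{k}\bigr)^2=\iprod{\sum_i x_i^{k-1}a_i,u}^2\le\|\sum_i x_i^{k-1}a_i\|^2$ and controls the resulting cross terms directly with the incoherence bound $\max_{i\neq j}|\iprod{a_i,a_j}|=\tilde O(1/\sqrt d)$, rather than by projecting away a $\Phi$-type subspace as you describe; both variants work.
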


We first show that by a simple extension of ~\cite[Theorem 4.2 and Lemma 8]{DBLP:conf/approx/GeM15}, $\cA$ implies that the sum of the terms $\inner{a_i,u}^8$ is large.
Note that $c_i\approx a_i\tp{2}$ and $P(u)\approx u\tp{2}$, and therefore this is already fairly close to our target inequality~\eqref{eqn:eqn44}.
\begin{lemma}[{Simple extension of~\cite[Theorem 4.2 and Lemma 8]{DBLP:conf/approx/GeM15}}]
	\label{lem:8th_power}
	With probability $1-d^{-\omega(1)}$ over the choice of random unit vectors $a_i$,
\begin{align}
\cA\vdash \Set{\sum_{i=1}^n \langle  a_i, u \rangle^3 \ge 1-\e, \|u\|^2 = 1}\vdash \Set{\sum_{i=1}^n\inner{a_i,u}^{8} \ge 1-O(\epsilon)-\delta}\mper\label{eqn:eqn10}
\end{align}
	where $\delta = \widetilde{O}(n/d^{3/2})$.
\end{lemma}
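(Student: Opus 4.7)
The plan is to establish Lemma 6.2 by extending the boosting argument of Ge and Ma by two additional steps. The paper \cite{DBLP:conf/approx/GeM15} proves two complementary results: Theorem 4.2 shows that the constraints $\{\|u\|^2=1,\ \sum_i\langle a_i,u\rangle^3\ge 1-\e\}$ imply a $4$th-power lower bound via constant-degree SoS, and Lemma 8 provides an iteration that boosts a $2k$-th power lower bound into a $(2k+2)$-th one.  Applying these in sequence yields our $8$th-power bound after two extra iterations beyond the $4$th-power base case.

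First, I would verify the matrix concentration facts needed throughout. With probability $1-d^{-\omega(1)}$ over random unit vectors $a_1,\ldots,a_n$, matrix Bernstein applied to the centered random matrices $c_ic_i^{\trans}-\E[c_ic_i^{\trans}]$ (where $c_i=\Idsymphi a_i^{\otimes 2}$) yields $\|\sum_i c_ic_i^{\trans}\|\le 1+\widetilde{O}(n/d^{3/2})$ when $n\le d^{1.5}/(\log d)^{O(1)}$, together with the analogous bounds on higher lifts $\Idsymphi[k]\,a_i^{\otimes k}$ (after projecting away the dominant $\Phi_k$ directions). These rely on $\|c_i\|^2\le 1$ and the fact that the expected sum has spectral norm $O(n/d^2)$ on $\Idsymphi$, which is much less than $1$.

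Second, I would invoke Ge--Ma's Theorem 4.2 (or re-derive it with the matrix concentration from the first step) to obtain the $4$th-power lower bound $\cA\vdash\{\sum_i\langle a_i,u\rangle^4\ge 1-O(\e)-\widetilde{O}(n/d^{3/2})\}$.  The key SoS ingredient is the decomposition $\langle a_i,u\rangle^2=\tfrac{1}{d}+\langle c_i,\Idsymphi u^{\otimes 2}\rangle$, which isolates the $\Phi$-direction contribution (giving the spurious $n/d^2$ term) from the orthogonal part that is controlled by the matrix bound on $\sum_i c_ic_i^{\trans}$.  Combined with the basic SoS identity $X_i^2(X_i-1)^2\ge 0$ and a careful use of SoS Cauchy--Schwarz, one gets the stated $4$th-power inequality.

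Third, I would apply Ge--Ma's Lemma 8 boosting step twice, from the $4$th power to the $6$th, and from the $6$th to the $8$th.  Each iteration combines SoS Cauchy--Schwarz $(\sum_i X_i^{k+1})^2\le (\sum_i X_i^k)(\sum_i X_i^{k+2})$ with an upper bound on $\sum_i X_i^{2k}$ coming from the matrix bound on $\sum_i a_i^{\otimes k}(a_i^{\otimes k})^{\trans}$ restricted to the subspace orthogonal to the $\Phi_k$-like directions.  The crucial point is that each iteration incurs only an additive $O(\e)+\widetilde{O}(n/d^{3/2})$ loss.  The main obstacle is verifying that the matrix concentration bound on the $4$th tensor-power matrix (needed for the final iteration) has the required $\widetilde{O}(n/d^{3/2})$ rate; this is achieved by applying matrix Bernstein to the centered $4$th-lift matrices and using moment bounds on random unit vectors to control the variance, so that after two iterations the accumulated error is still $\widetilde{O}(n/d^{3/2})$, matching the claimed $\delta$.
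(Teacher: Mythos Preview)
Your overall plan---cite Ge--Ma for the low-power lower bounds and then boost to the $8$th power via a constant-degree SoS argument---is the same as the paper's.  The paper likewise takes the $4$th- and $6$th-power lower bounds directly from Ge--Ma (citing their Lemma~3, Claim~1, and Lemma~5), so only one extra boosting step is needed.  The difference is in how that step is carried out.  The paper does a single jump from $6$ to $10$ (then drops to $8$ using $\langle a_i,u\rangle^2\le 1$) via Cauchy--Schwarz in the \emph{ambient} space:
\[
\Bigl(\sum_i\langle a_i,u\rangle^6\Bigr)^2=\Bigl\langle\sum_i\langle a_i,u\rangle^5 a_i,\,u\Bigr\rangle^2\le\Bigl\|\sum_i\langle a_i,u\rangle^5 a_i\Bigr\|^2=\sum_i\langle a_i,u\rangle^{10}+\sum_{i\neq j}\langle a_i,u\rangle^5\langle a_j,u\rangle^5\langle a_i,a_j\rangle,
\]
and then bounds the cross term using only the scalar incoherence $\max_{i\neq j}|\langle a_i,a_j\rangle|\le\widetilde O(d^{-1/2})$ together with $\sum_i\langle a_i,u\rangle^4\le 1+\delta$.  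This is lighter than your route: no matrix concentration on higher tensor lifts is needed, only pairwise inner products.  Your index-space Cauchy--Schwarz $(\sum_i X_i^{k+1})^2\le(\sum_i X_i^k)(\sum_i X_i^{k+2})$ does give the $6\to 8$ step (the needed upper bound is exactly $\sum_i\langle a_i,u\rangle^4\le 1+\delta$), but be careful: it would \emph{not} give the $4\to 6$ step in the overcomplete regime, since the required upper bound would be on $\sum_i\langle a_i,u\rangle^2\approx n/d$, which is large.  That step has to come from Ge--Ma's own argument rather than from your stated mechanism, so your ``apply the boosting lemma twice'' description is slightly misleading even though the conclusion is fine.
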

\begin{proof}
	Using the proof of~\cite[Theorem 4.2]{DBLP:conf/approx/GeM15} (specifically Lemma 3 and Claim 1), and the proof of Lemma 5 (specifically equation (11) and equation (15))  we have\footnote{Technically~\cite{DBLP:conf/approx/GeM15} only proved the case when the vectors $a_i$ are uniform over $\{\pm 1/\sqrt{d}\}^d$, though the proofs work for the uniform distribution over the unit sphere as well.}
	\begin{equation}
	\cA\vdash \Set{\sum_{i=1}^{n}\inner{u,a_i}^4 \ge 1- O(\epsilon) - \delta, \textup{ and } \sum_{i=1}^n \inner{u,a_i}^6 \ge 1-O(\epsilon) - \delta} \mper\label{eqn:eqn14}
	\end{equation}
	where $\delta = O(n\log^{O(1)}d/d^{3/2})$.
	Then we extend the proof using the same idea to higher powers:
	\begin{align}
	\vdash \left(\sum_{i=1}^{n} \inner{u,a_i}^6\right)^2
	& = \left\langle\left[\sum_{i=1}^{n} \inner{u,a_i}^5a_i\right],\, u\right\rangle^2 \le  \left\|\sum_{i=1}^{n}\inner{u,a_i}^5a_i\right\|^2 \nonumber\\
	& = \sum_{i=1}^{n}\inner{u,a_i}^{10} + \sum_{i\neq j}\inner{u,a_i}^5\inner{u,a_j}^5\inner{a_i,a_j} \nonumber\\
	& \le \sum_{i=1}^{n}\inner{u,a_i}^{10} + \left(\sum_{i=1}^n  \inner{u,a_i}^4\right)  \left(\sum_{i=1}^n \inner{u,a_i}^4\right) \max_{i\neq j}|\inner{a_i,a_j}|\mcom
	\label{eqn:eqn13}
	\end{align}
	where the first line uses the Cauchy-Schwarz inequality (\pref{lem:cauchy-schwarz}) and the last line uses the fact that $D(u)$ satisfies the constraint $-1\le \inner{u,a_i}\le 1$.
	By~\cite[Lemma 2]{DBLP:conf/approx/GeM15}, we have that
	\begin{equation}
	\cA\vdash\sum_{i=1}^n \inner{u,a_i}^4\le 1+\delta \mper
	\end{equation}
	Combining the equation above, equation~\eqref{eqn:eqn13}, equation~\eqref{eqn:eqn14}, and the fact that with high probability $\inner{a_i,a_j} \le \widetilde{O}(1/\sqrt{d})$, we obtain
	\begin{equation}
	\cA \vdash \sum_{i=1}^{n}\inner{u,a_i}^{10} \ge 1-O(\epsilon) - \delta\mper
	\end{equation}
	Therefore, using the fact that $\inner{u,a_i}^2 \le 1$, we complete the proof.
\end{proof}

\begin{lemma}[{Rephrasing of~\cite[Lemma 5.9]{HopkinsSSS16}}]
	\label{lem:subspace_concentration}
	Let $a_1, \dots, a_n \in \R^d$ be independent random vectors drawn uniformly from the Euclidean unit sphere with $1\le n\le d^{1.5}/\log^{O(1)}d$, and
	let $C$ be the matrix with columns $c_i =\Idsymphi a_i\tp{2}$.
	Then
	\begin{equation}
	\|\,\transpose{C}C-\Id_n\|\le \delta\mcom
	\end{equation}
	where $\delta  = \tO(\sqrt{n}/d + n/d^{1.5})$.
\end{lemma}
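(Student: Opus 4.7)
The strategy is to compute the entries of $\transpose{C}C$ explicitly and then invoke the matrix concentration bound of \cite[Lemma 5.9]{HopkinsSSS16}, to which the present statement essentially reduces. First I would expand $c_i = a_i^{\otimes 2} - \frac{1}{\sqrt{d}}\Phi$; this follows from $\Idsymphi = \Idsym - \Phi\transpose{\Phi}$ together with the identities $\Idsym a_i^{\otimes 2} = a_i^{\otimes 2}$ and $\transpose{\Phi} a_i^{\otimes 2} = \frac{1}{\sqrt{d}}\|a_i\|^2 = \frac{1}{\sqrt{d}}$. A direct expansion of $\iprod{c_i, c_j}$ then yields $(\transpose{C}C)_{ij} = \iprod{a_i, a_j}^2 - \frac{1}{d}$ for every $i, j \in [n]$. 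In particular, the diagonal entries of $\transpose{C}C$ are the deterministic value $1-1/d$, and the off-diagonal entries have mean zero since $\E[\iprod{a_i, a_j}^2] = 1/d$ for independent uniform unit vectors. Taking expectations gives $\E[\transpose{C}C] = (1-1/d)\Id_n$, so
\[
  \transpose{C}C - \Id_n = -\tfrac{1}{d}\Id_n + Y, \qquad Y := \transpose{C}C - \E[\transpose{C}C],
\]
and the deterministic scalar term contributes only $1/d \le O(n/d^{1.5})$ to the spectral norm. The main task is therefore to show $\|Y\| \le \tO(\sqrt{n}/d + n/d^{1.5})$.

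To put $Y$ into a form amenable to matrix concentration, I would rewrite it as follows. Letting $A$ be the $d^2$-by-$n$ matrix with columns $a_i^{\otimes 2}$, one checks entrywise that $Y = \transpose{A}A - (1-1/d)\Id_n - \frac{1}{d}J$, where $J$ is the all-ones matrix. The subtraction of $\frac{1}{d}J$ is doing essential work here: it cancels the leading eigenvalue of $\transpose{A}A$ in the direction $\mathbf{1}/\sqrt{n}$, which is of order $n/d$ and arises because $\sum_i a_i^{\otimes 2}$ has a component of norm $\Theta(n/\sqrt{d})$ along $\Phi$ (precisely the component that $\Idsymphi$ is designed to kill). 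After this cancellation, $Y$ is a zero-diagonal symmetric random matrix whose off-diagonal entries have mean zero and variance $\Var(\iprod{a_i, a_j}^2) = 3/(d(d+2)) - 1/d^2 = \Theta(1/d^3)$.

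The bound $\|Y\| \le \tO(\sqrt{n}/d + n/d^{1.5})$ is exactly the content of \cite[Lemma 5.9]{HopkinsSSS16}, proved there via a truncation-and-moment calculation of $\E[\Tr(Y^{2q})]$ for $q = \Theta(\log d)$. The main technical obstacle there is the combinatorial bookkeeping required to handle the dependence structure: entries $Y_{ij}$ and $Y_{ij'}$ share the vector $a_i$, so $Y$ is not of Wigner type with independent entries. Nevertheless, a careful trace-moment computation that uses the fact that $\iprod{a_i, a_j}$ is subgaussian with parameter $O(1/\sqrt{d})$ after a mild truncation still yields the stated bound. Since the matrix $Y$ in our setup has the same joint distribution as the object analyzed in \cite{HopkinsSSS16}, their argument transfers here verbatim.
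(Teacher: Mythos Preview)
Your proposal is correct and matches the paper's approach: the paper itself omits the proof entirely, simply observing that the lemma is a rephrasing of \cite[Lemma 5.9]{HopkinsSSS16} and that the minor differences in hypotheses (Gaussian versus spherical vectors, and the range of $n$) do not affect that argument. Your explicit computation of $(\transpose{C}C)_{ij} = \iprod{a_i,a_j}^2 - 1/d$ and the decomposition into a deterministic piece plus a centered random matrix is a useful elaboration of exactly this reduction; two small slips to fix are that $1/d$ is absorbed by the $\tO(\sqrt{n}/d)$ term (not $O(n/d^{1.5})$, which fails for $n < \sqrt{d}$), and that $3/(d(d+2)) - 1/d^2 = \Theta(1/d^2)$ rather than $\Theta(1/d^3)$.
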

Though ~\cite[Lemma 5.9]{HopkinsSSS16} assumes $n\ge d$,
its proof can also handle $n\le d$ if the error bound is relaxed to $\tO(\sqrt{n}/d)$.
See specifically the end of the first paragraph of its proof.
Also while \cite[Lemma 5.9]{HopkinsSSS16} assumes Gaussian
random vectors, its proof reduces to the case on the unit sphere. Therefore we omit the proof of Lemma~\ref{lem:subspace_concentration}.

Finally we prove Proposition~\ref{prop:random_3_tensor}.
\begin{proof}[Proof of Proposition~\ref{prop:random_3_tensor}]
	Equation~\eqref{eqn:eqn42} follows from Lemma~\ref{lem:subspace_concentration}. To prove equation~\eqref{lem:subspace_concentration},
	 we essentially just replace $a_i^{\otimes 2}$ in equation~\eqref{eqn:eqn10} by $c_i$ and bound the approximation error. We have
	 \begin{align}
	 \cA \vdash \sum_{i=1}^{n}\inner{\Idsymphi u\tp{2},c_i}^4
	 & = \sum_{i=1}^{n} \inner{u^{\otimes 2},\Idsymphi a_i^{\otimes 2}}^4  =  \sum_{i=1}^{n} \left(\inner{u^{\otimes 2},a_i^{\otimes 2}} -\inner{\Phi, a_i^{\otimes 2}}\right)^4  \nonumber\\
	 & = \sum_{i=1}^{n} \left(\inner{u^{\otimes 2},a_i^{\otimes 2}} -1/d\right)^4  \nonumber\\
	 & \ge (1-1/d) \sum_{i=1}^n \langle u,a_i  \rangle^8 - O(1/d)\nonumber \\
	 & \ge 1 - O(\e) - O(1/d) - \tO(n/d^{3/2})\mper  \label{eqn:eqn12}
	 \end{align}
	 where the second last step uses $\vdash (x-y^2)^4 \ge (1-y^2)x^4 - O(y^2)$, %
	 and the last step uses equation~\eqref{eqn:eqn10}.
\end{proof}

\section{Robust decomposition of overcomplete 4-tensors}
\label{sec:foobi}

\makeatletter
\newcommand*\loweredwidetilde[1]{\mathpalette\loweredwidetildehelper{#1}}
\newcommand*\loweredwidetildehelper[2]{%
	\hbox{\dimen@\accentfontxheight#1%
		\accentfontxheight#11.25\dimen@
		$\m@th#1\widetilde{#2}$%
		\accentfontxheight#1\dimen@
	}%
}
\newcommand*\accentfontxheight[1]{%
	\fontdimen5\ifx#1\displaystyle
	\textfont
	\else\ifx#1\textstyle
	\textfont
	\else\ifx#1\scriptstyle
	\scriptfont
	\else
	\scriptscriptfont
	\fi\fi\fi3
}
\makeatother

\newcommand{\Qish}{\loweredwidetilde{Q}}
\newcommand{\Qhalf}{Q^{1/2}}
\newcommand{\Sish}{\loweredwidetilde{S}}
\newcommand{\Whalf}{\Qish^{1/2}}
\newcommand{\smin}{\sigma_{\min}}
\newcommand{\smax}{\sigma_{\max}}
\newcommand{\sig}{\sigma}
\newcommand{\Wplushalf}{(\Qish^{+})^{1/2}}

In this section we provide a sum-of-squares version of the FOOBI algorithm~\cite{DBLP:journals/tsp/LathauwerCC07}.
FOOBI yields the rank decomposition of a 4th order tensor $T = \sum_{i=1}^{n}a_i\tp{4}$ under the mild condition that the set $\{a_i^{\ot 2} \ot a_j^{\ot 2} - (a_i \ot a_j)^{\ot 2}\}_{i \ne j}$ is linearly independent.
FOOBI has not been formally shown to be robust to noise, though it's believed to tolerate spectral noise with magnitude up to some inverse polynomial of dimension.
In contrast, the noise tolerance of our sum-of-squares version depends only on the condition number of certain matrices, and not directly on the dimension.

In \pref{sec:foobi-condition-number} we additionally show, under a smoothed analysis model where each component $a_i$ of the input tensor is randomly perturbed, that the relevant condition numbers are never smaller than some inverse polynomial of the dimension, with high probability over the random perturbations.

Throughout this section we will work with an input tensor $T$ of the form
\begin{equation}
T = \sum_{i=1}^{n}a_i\tp{4} + E\nonumber
\end{equation}
where $n \le d^2$ and $E$ is a symmetric noise tensor with bounded spectral norm $\|E\|_{\{1,2\},\{3,4\}}$.

For a matrix $M$, we use $\sigma_{\max}(M),\sigma_{\min}(M)$ to denote its largest and smallest singular values respectively, and $\sigma_k(M)$ to denote its $k$th largest singular value.

Let $A\in \R^{d^2\times n}$ be the matrix with columns $a_i\tp{2}$ for $i=1,\dots, n$.
The guarantees of our algorithm will depend on the following 4th order condition number of $A$:
\begin{definition}\label{def:cond_number}
	For a full rank matrix $A\in \R^{d^2\times n}$ with columns $a_i\tp{2}$ for $i=1,\dots, n$, let $\kappa(A)$ defined as
	\begin{equation}
	\kappa(A) =\smax^{1.5}(Q)/\sigma_n^{1.5}(Q)+\smax^{2.5}(Q)/(\smin^2(B)\sigma^{0.5}_n(Q))\mper
	\end{equation}
	where $Q = AA^{\trans}$ and $B\in \R^{d^4\times n(n-1)}$ is the matrix with columns $b_{i,j} = a_i\tp{2}\otimes a_j\tp{2}-(a_i\otimes a_j)\tp{2}$ for every $i \neq j$.
\end{definition}

\begin{theorem}[Restatement of \pref{thm:smoothed-vectors}]\label{thm:sos_foobi}
	Let $\delta > 0$.
	Let $T\in (\R^d)^{\otimes 4}$ be a symmetric $4$-tensor and $\{a_1,\ldots,a_n\}\subseteq \R^d$ be a set of vectors.
	Define $E = T - \sum\nolimits_{i=1}^n a_i^{\otimes 4}$ and define $A$ as the matrix with columns $a_i\tp{2}$.
	If $\Norm{E}_{\{1,2\},\{3,4\}} \le \delta\,\sig_n(\dyad{A})$ and \pref{alg:foobi} outputs $\{\hat{a}_1,\ldots,\hat{a}_n\}$ on input $T$, then there exists a permutation $\pi:[n] \to [n]$ so that for every $i \in [n]$,
	\begin{equation}
	\Norm{a_i-\hat{a}_{\pi(i)}} \le O\big(\delta\,\kappa(A)\big) \Norm{a_i} \mper
	\end{equation}
\end{theorem}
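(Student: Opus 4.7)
The plan is to instantiate the general tensor decomposition machinery of \pref{thm:general-tensor-decomposition} with a carefully chosen polynomial map $P$ and constraint system $\cA$ built from $T$. As a preliminary step, view $T$ as a $d^2 \times d^2$ matrix and let $V$ be the span of the top $n$ singular vectors. Since $T = \dyad{A} + \mathrm{(reshaped)}\, E$ and the noise satisfies $\|E\|_{\{1,2\},\{3,4\}} \le \delta\,\sigma_n(Q)$ with $Q = \dyad{A}$, a Wedin-style perturbation argument gives that $\Id_V$ is $O(\delta)$-close in operator norm to the true projector onto $\mathrm{span}\{a_i\tp{2}\}$. From the SVD of $T$ we also read off a whitening map $W \approx (Q^{+})^{1/2}$ acting on $V$, and we set
\begin{equation*}
P(u) \;=\; W\, u\tp{2}, \qquad \cA \;=\; \Set{\|u\|^2 = 1,\; \|\Id_V u\tp{2}\|^2 \ge (1 - O(\delta))\|u\tp{2}\|^2}\mper
\end{equation*}
The normalized components $a_i/\|a_i\|$ satisfy $\cA$, and by construction the vectors $b_i := P(a_i/\|a_i\|) \approx (Q^{+})^{1/2} a_i\tp{2}$ satisfy $\|\sum_i \dyad{b_i}\| \le 1 + O(\delta\,\kappa(A))$ with $\|b_i\|^2 \ge 1 - O(\delta\,\kappa(A))$, using only that $W$ whitens $Q$ up to $O(\delta)$ error.

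The technical heart of the proof is to produce a constant-degree sum-of-squares certificate
\begin{equation*}
\cA \;\vdash_{u,O(1)}\; \Set{\vbig \sum_{i=1}^{n} \langle b_i, P(u)\rangle^4 \;\ge\; (1 - O(\delta\,\kappa(A)))\, \|P(u)\|^4}\mcom
\end{equation*}
which is exactly the hypothesis \pref{eq:14} of \pref{thm:general-tensor-decomposition}. Following the outline of \pref{sec:techniques/fourth}, I define polynomial multipliers $\alpha_i(u) = \iprod{K_i, u\tp{2}}$ where $K$ is the left pseudo-inverse of the matrix with columns $a_i\tp{2}$, so that $\cA$ SoS-implies $u\tp{2} \approx \sum_i \alpha_i(u)\, a_i\tp{2}$ up to a residual of squared norm $O(\delta)$. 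Tensoring this approximate identity with itself in the two different ways $(\sum \alpha_i a_i\tp{2})\tp{2}$ and $(\sum \alpha_i (a_i \otimes a_i))\cdot (\sum \alpha_j (a_j \otimes a_j))$ and subtracting yields the FOOBI relation
\begin{equation*}
\Bignorm{\sum_{i\ne j} \alpha_i \alpha_j\, b_{i,j}}^2 \;\le\; O(\delta)\cdot \mathrm{poly}(u),
\end{equation*}
where $b_{i,j} = a_i\tp{2}\otimes a_j\tp{2} - (a_i\otimes a_j)\tp{2}$. Applying the SoS-provable inequality $\|v\|^2 \le \sigma_{\min}(B)^{-2}\|Bv\|^2$ to the vector of coefficients then gives $\sum_{i\ne j} \alpha_i^2 \alpha_j^2 \le O(\delta\,\kappa(A)^2)\cdot \|u\tp{2}\|^4$, i.e.\ $\|\alpha\|_4^4 \ge \|\alpha\|_2^4 - O(\delta\,\kappa(A)^2)\,\|u\tp{2}\|^4$. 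Conjugating by the whitening $W$ rewrites $\|\alpha\|_4^4$ in the form $\sum_i \langle b_i, P(u)\rangle^4$ and $\|\alpha\|_2^4$ in the form $\|P(u)\|^4$, completing the certificate.

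Given the certificate, \pref{thm:general-tensor-decomposition} returns unit vectors $b_1',\ldots,b_n'$ that are $O(\sqrt{\delta\,\kappa(A)})$-close in Hausdorff distance to $\{b_1\tp{2},\ldots,b_n\tp{2}\}$. To finish, I invert the whitening to produce approximations to $a_i\tp{2}$ (with a multiplicative error factor absorbed into $\kappa(A)$ via $\sigma_{\max}(Q)$), and then apply \pref{lem:recovery} to the $d\times d$ matrix reshaping of each recovered vector to extract $\hat{a}_i$. The sign of $\hat{a}_i$ is immaterial since we are reconstructing $a_i\tp{2}$; squaring at the end of the argument recovers the bound stated in the theorem with dependence $O(\delta\,\kappa(A))$ rather than its square root. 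The main obstacle will be the bookkeeping of SoS degrees and the polynomial dependence on the condition numbers $\sigma_{\min}(B)$, $\sigma_n(Q)$, and $\sigma_{\max}(Q)$ as errors propagate through the chain $u\tp{2} \leadsto \alpha \leadsto \alpha\otimes\alpha \leadsto B\alpha\otimes\alpha$; the precise grouping of these factors is exactly what produces the definition of $\kappa(A)$ in \pref{def:cond_number}.
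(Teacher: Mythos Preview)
Your proposal is essentially the same approach as the paper's. The paper organizes the argument by first isolating a noiseless version of the certificate (\pref{prop:foobi_reduction_noiseless}) and then transferring it to the noisy setting via perturbation lemmas for $\Id_S$ versus $\Id_{\Sish}$ and for $Q^+$ versus $\Qish^+$ (\pref{lem:foobi_constraints_reduction}, \pref{lem:Qplus_approx}, \pref{lem:4power_reduction}); you instead sketch the whole FOOBI SoS chain in one pass. Either way the key identity is the same: subtracting the two expansions of $x^{\otimes 4}$ forces $\sum_{i\neq j}\alpha_i\alpha_j\,b_{i,j}$ to be small, whence $\sum_{i\neq j}\alpha_i^2\alpha_j^2$ is small by the SoS inequality $\|v\|^2 \le \smin(B)^{-2}\|Bv\|^2$, and whitening rewrites $\|\alpha\|_4^4$ and $\|\alpha\|_2^4$ as $\sum_i\langle b_i,P(u)\rangle^4$ and $\|P(u)\|^4$.

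One point to tighten: your ``squaring at the end'' does not by itself convert the $O(\tau)^{1/2}$ Hausdorff bound from \pref{thm:general-tensor-decomposition} into the linear $O(\delta\,\kappa(A))$ dependence claimed in the theorem. The paper's proof simply asserts that the $\hat c_i$ are $O(\tau)$-close to $P(a_i)$ and then tracks the factor $\smax^{1/2}(Q)/\sigma_n^{1/2}(Q)$ picked up when undoing the whitening, which is precisely what produces the exponents in \pref{def:cond_number}. Your intermediate bound $\sum_{i\neq j}\alpha_i^2\alpha_j^2 \le O(\delta\,\kappa(A)^2)\|u^{\otimes 2}\|^4$ is coarser than the paper's $\tau \le O(\delta\,\smax^2(Q)/\smin^2(B)) + O(\delta\,\smax(Q)/\sigma_n(Q))$; if you keep the latter and multiply by $\smax^{1/2}(Q)/\sigma_n^{1/2}(Q)$ at the unwhitening step you recover exactly $\kappa(A)$.
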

\begin{algorithm}
	\caption{Sum-of-Squares FOOBI for robust overcomplete 4-tensor decomposition}
	\label{alg:foobi}
	\textbf{Input:} Number $\delta > 0$ and symmetric tensor $T\in (\R^d)^{\otimes 4}$.
	
	\textbf{Find:} $\hat{a}_1,\dots, \hat{a}_n\in \R^d$.
	
	\textbf{Algorithm:}
	\vspace{-0.5em}
	\begin{enumerate}
		\item
		Compute the best \Jnote{} \TMnote{Add footnote. Though I think this is rather standard terminology} rank-$n$ approximation\footnote{Throughout the paper, best rank-$r$ approximation refers to the closest rank-$r$ matrix in either Frobenius norm or spectral norm distance. } $\Qish$ of the $d^2\times d^2$ matrix reshaping of $T$.
		Let $\Sish$ be the column span of $\Qish$.
		\item
		Run \pref{alg:tensor-general} with inputs $P(\cdot)$ and $\cA$ set to
		\begin{align}
		P(x) & = (\Qish^+)^{1/2}x\tp{2}\mcom\label{eqn:P}\\
		\cA & = \Set{\|\Id_{\Sish} x^{\otimes 2}\|^2 \ge (1 - 3\delta)\,\|x\|^4}_{x} \mper \label{eqn:foobi_main_constraint}
		\end{align}
        Suppose the algorithm outputs $\hat{c}_1,\dots,\hat{c}_n$.
		\item
		Output $\hat{a}_1,\dots, \hat{a}_n$ such that for each $i \in [n]$, the matrix $\dyad{\hat{a}_i}$ is the best \Jnote{} rank-1 approximation of the matrix reshaping of $\Qish^{1/2}\hat{c}_i$.
	\end{enumerate}
\end{algorithm}

Let $\Qish$ be the best rank-$n$ approximation of the $d^2\times d^2$ matrix reshaping of $T$, and let $\Sish$ be the column space of $\Qish$.
These two objects serve as our initial best-guess approximations of $Q = \dyad{A}$ and the subspace $S$ spanned by $\{a_1\tp{2},\dots, a_n\tp{2}\}$ (also the column space of $Q$), which we do not have access to.
We define $B\in \R^{d^4\times n(n-1)}$ as the matrix with columns $b_{i,j} = a_i\tp{2}\otimes a_j\tp{2}-(a_i\otimes a_j)\tp{2}$ for $i \neq j$.

One of the core techniques in the analysis will be to use (the following rephrased version of) Davis and Kahan's ``$\sin \theta$'' Theorem, which bounds the principle angle between the column spaces of two matrices that are spectrally close to each other.
\begin{theorem}[Direct consequence of Davis-Kahan Theorem~\cite{doi:10.1137/0707001}]
	\label{thm:Davis-Kahan}
	Suppose symmetric PSD matrices $Q\in \R^{D\times D}$ and $\Qish \in \R^{D\times D}$ of rank $n\le D$ satisfy $\|Q-\Qish\|\le \delta\,\sigma_n(Q)$.
	Let $S$ and $\Sish$ be the column spaces of $Q$ and $\Qish$ respectively, and assume $\delta \le \frac{1}{2}$.
	Then we have
	\begin{equation}
	\sin(S,\Sish)\defeq\|\Id_S-\Id_{\Sish}\Id_S\| = \|\Id_{\Sish}-\Id_S\Id_{\Sish}\|\le \delta/(1-\delta)\mper\label{eqn:davis_kahan}
	\end{equation}
	Consequently,
	\begin{equation}
	\|\Id_S-\Id_{\Sish}\|\le O(\delta)\mper\label{eqn:eqn22}
	\end{equation}
\end{theorem}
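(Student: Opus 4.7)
The plan is to apply the classical Davis--Kahan $\sin\Theta$ theorem to the pair $(Q,\Qish)$, using Weyl's inequality to locate the spectrum of the perturbation. Since $Q$ is PSD of rank $n$, its eigenvalues split into $n$ values in $[\sigma_n(Q),\infty)$ and $D-n$ zeros. Weyl's inequality together with $\|Q-\Qish\|\le \delta\sigma_n(Q)$ then guarantees that the top $n$ eigenvalues of $\Qish$ lie in $[(1-\delta)\sigma_n(Q),\infty)$; since $\Qish$ is also assumed to have rank $n$, its remaining eigenvalues are exactly zero. In particular, the eigenvalues of $\Qish$ restricted to $\Sish$ are separated from the eigenvalues of $Q$ restricted to $S^\perp$ (which are all $0$) by a gap of at least $(1-\delta)\sigma_n(Q)$.

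The next step is to invoke the $\sin\Theta$ theorem in the standard form bounding $\|\Id_{\Sish^\perp}\Id_S\|$ by the perturbation size divided by this spectral gap, yielding
\[
\|\Id_S - \Id_{\Sish}\Id_S\| \;=\; \|\Id_{\Sish^\perp}\Id_S\| \;\le\; \frac{\|Q-\Qish\|}{(1-\delta)\sigma_n(Q)} \;\le\; \frac{\delta}{1-\delta}.
\]
The equality $\|\Id_S - \Id_{\Sish}\Id_S\|=\|\Id_{\Sish}-\Id_S\Id_{\Sish}\|$ in the theorem statement is a standard symmetric fact about principal angles: for any two orthogonal projectors of equal rank (here both $S$ and $\Sish$ have dimension $n$), the nonzero singular values of $\Id_{S^\perp}\Id_{\Sish}$ and $\Id_{\Sish^\perp}\Id_S$ coincide, which I would verify either by an SVD computation or by noting that both equal $\sin\Theta(S,\Sish)$ under the usual definition of principal angles.

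The consequence $\|\Id_S-\Id_{\Sish}\|\le O(\delta)$ then follows from the triangle inequality
\[
\|\Id_S - \Id_{\Sish}\| \;\le\; \|\Id_S - \Id_{\Sish}\Id_S\| + \|\Id_{\Sish}\Id_S - \Id_{\Sish}\| \;=\; \|\Id_{\Sish^\perp}\Id_S\| + \|\Id_{\Sish}\Id_{S^\perp}\|,
\]
where each of the two terms on the right is bounded by $\delta/(1-\delta)$ by the symmetric forms of the $\sin\Theta$ bound established above; using $\delta\le 1/2$ gives $\|\Id_S-\Id_{\Sish}\|\le 4\delta$. There is no genuine obstacle in this proof; the only point requiring some care is choosing the right spectral gap to feed into Davis--Kahan, namely the gap between the $n$ retained eigenvalues of the \emph{perturbed} matrix $\Qish$ and the null eigenvalues of $Q$, which is what produces the exact denominator $(1-\delta)$ rather than $(1-2\delta)$.
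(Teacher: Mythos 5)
The paper provides no written proof of this theorem — it is stated as a direct consequence of the cited Davis--Kahan $\sin\Theta$ theorem with no accompanying proof block — so there is no argument in the source to compare against. Your derivation is a correct filling-in of what the authors are implicitly invoking: Weyl's inequality places the $n$ nonzero eigenvalues of $\Qish$ in $[(1-\delta)\sigma_n(Q),\infty)$, the rank-$n$ hypothesis forces the remaining eigenvalues to be exactly zero, and the $\sin\Theta$ theorem applied to the resulting spectral gap yields the bound. Two small remarks. First, your displayed inequality attaches the gap $(1-\delta)\sigma_n(Q)$ to the quantity $\|\Id_{\Sish^\perp}\Id_S\|$, but that particular pairing of projectors in Davis--Kahan naturally bounds $\|\Id_{\Sish}\Id_{S^\perp}\| = \|\Id_{\Sish}-\Id_S\Id_{\Sish}\|$; the two are equal by the principal-angle symmetry you invoke in the following sentence, so nothing is wrong, but the cleaner presentation establishes that equality before applying Davis--Kahan, so that either side suffices. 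Second, for the consequence $\|\Id_S - \Id_{\Sish}\| \le O(\delta)$, the triangle inequality giving $4\delta$ is fine, but you could also note that two orthogonal projectors of equal rank satisfy $\|\Id_S - \Id_{\Sish}\| = \sin\Theta(S,\Sish)$ exactly, which gives $\|\Id_S - \Id_{\Sish}\| \le \delta/(1-\delta) \le 2\delta$ with no extra factor. Neither point affects correctness.
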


\pref{thm:sos_foobi} follows from the analysis of \pref{alg:tensor-general} as given in \pref{thm:general-tensor-decomposition}, as long as we can verify its two conditions, which we restate in the following two propositions.
While \pref{prop:foobi_reduction_easy} follows quickly from \pref{thm:Davis-Kahan}, we prove \pref{prop:foobi_reduction} over the next three subsections.

\begin{proposition}\label{prop:foobi_reduction_easy}
	Let $P(x)$ and $\cA$ be as defined in \pref{alg:foobi}.
	Then each vector $a_1,\dots, a_n$ satisfies $\cA$.
\end{proposition}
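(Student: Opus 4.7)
My plan is a direct application of Davis--Kahan. First, I would reshape $T$ into the $d^2 \times d^2$ matrix $M$ whose rows are indexed by the first two modes of $T$ and columns by the last two. Writing $M = Q + E'$, where $Q = \dyad{A} = \sum_i \dyad{(a_i^{\otimes 2})}$ is rank-$n$ PSD and $E'$ is the matrix reshaping of $E$, the hypothesis $\Norm{E}_{\{1,2\},\{3,4\}} \le \delta\,\sigma_n(Q)$ gives $\|M - Q\| \le \delta\,\sigma_n(Q)$. Symmetry of $T$ makes $M$ symmetric, and for small $\delta$ Weyl's inequality ensures that $M$ has $n$ positive eigenvalues separated in magnitude from the rest; in particular, the best rank-$n$ approximation $\Qish$ is symmetric PSD, so $(\Qish^+)^{1/2}$ as used in $P(\cdot)$ is well-defined.

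Next, since $Q$ itself is rank $n$ and $\Qish$ is the best rank-$n$ approximation of $M$, optimality of $\Qish$ and the triangle inequality combine to give
\[
\|Q - \Qish\| \;\le\; \|Q - M\| + \|M - \Qish\| \;\le\; 2\|M - Q\| \;\le\; 2\delta\,\sigma_n(Q).
\]
Invoking Theorem~\ref{thm:Davis-Kahan} with error parameter $2\delta$ in place of $\delta$ then yields $\sin(S,\Sish) = \|\Id_S - \Id_{\Sish}\Id_S\| \le 2\delta/(1-2\delta)$, where $S$ and $\Sish$ are the column spans of $Q$ and $\Qish$ respectively. Finally, fix $i\in [n]$ and set $v = a_i^{\otimes 2}$, which lies in $S$ because $v$ is a column of $A$. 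Then $v = \Id_S v$, so $v - \Id_{\Sish}v = (\Id_S - \Id_{\Sish}\Id_S)v$, and applying Pythagoras for the orthogonal projection $\Id_{\Sish}$ gives
\[
\|\Id_{\Sish} v\|^2 \;=\; \|v\|^2 - \|v - \Id_{\Sish}v\|^2 \;\ge\; \bigl(1 - \sin^2(S,\Sish)\bigr)\,\|v\|^2 \;\ge\; (1 - 3\delta)\,\|a_i\|^4,
\]
using $\|v\|^2 = \|a_i\|^4$ and the elementary inequality $(2\delta/(1-2\delta))^2 \le 3\delta$, which holds for all sufficiently small $\delta$ (which we may assume, since otherwise the conclusion of \pref{thm:sos_foobi} is vacuous). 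This is exactly the constraint defining $\cA$ evaluated at the constant $x = a_i$.

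There is no real obstacle: because $\cA$ needs to be satisfied only as a numerical inequality at each $a_i$ (not as a polynomial identity in an indeterminate), no sum-of-squares reasoning is required; the whole argument reduces to spectral bookkeeping plus Davis--Kahan. The only mildly delicate points are verifying that $M$ is symmetric so the ``best rank-$n$ approximation'' is well-defined and PSD, and checking the constant inequality relating $\sin^2(S,\Sish)$ to $3\delta$.
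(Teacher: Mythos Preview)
Your proof is correct and follows essentially the same approach as the paper: both arguments reduce to the Davis--Kahan theorem applied to $Q$ versus $\Qish$, and then use that $a_i^{\otimes 2}\in S$ to conclude it is mostly in $\Sish$. The paper's one-line proof simply invokes \pref{thm:Davis-Kahan} and the triangle inequality $\|\Id_{\Sish} a_i^{\otimes 2}\| \ge \|\Id_S a_i^{\otimes 2}\| - \|(\Id_S-\Id_{\Sish})a_i^{\otimes 2}\|$; your version is more explicit in first establishing the hypothesis $\|Q-\Qish\|\le 2\delta\,\sigma_n(Q)$ via optimality of the rank-$n$ approximation, and then uses Pythagoras rather than the triangle inequality to track the constant, but these are cosmetic differences.
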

\begin{proof}
	By \pref{thm:Davis-Kahan}, $\|\Id_{\Sish} a_i\tp{2} \| \ge \|\Id_Sa_i\tp{2}\|-\|(\Id_S-\Id_{\Sish})\,a_i\tp{2}\| \ge (1-2\delta)\,\|a_i\|^4$.
\end{proof}

\begin{proposition}\label{prop:foobi_reduction}
	Let $P(x)$ and $\cA$ be as defined in \pref{alg:foobi}. Then
	\begin{equation}
		\cA \;\vdash_8\; \sum_{i=1}^n \inner{P(a_i), P(x)}^4\ge (1-\tau)\,\|P(x)\|^4 \mcom\nonumber
	\end{equation}
	where $\tau \le O(\delta\,\smax^2(Q)/\smin^2(B))+O(\delta\,\smax(Q)/\sigma_n(Q))$.
\end{proposition}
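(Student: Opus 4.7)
To prove \pref{prop:foobi_reduction} I would carry out the FOOBI argument of \pref{lem:uniqueness} inside the SoS proof system, tracking the errors introduced by the noise $E$ and the subspace approximation $\Sish\approx S$. Introduce $\alpha(x):=A^{+}x^{\otimes 2}$ as a degree-$2$ vector polynomial, so $A\alpha=\Id_S\,x^{\otimes 2}$ (where $S$ is the column space of $A$), and let $e(x):=x^{\otimes 2}-A\alpha$ measure the deviation of $x^{\otimes 2}$ from $S$. Because $\Qish$ is the best rank-$n$ approximation of the $d^2\times d^2$ reshaping of $T$, we have $\|\Qish-Q\|\le 2\|E\|_{\{1,2\},\{3,4\}}\le 2\delta\sigma_n(Q)$, whence \pref{thm:Davis-Kahan} gives $\|\Id_{\Sish}-\Id_S\|\le O(\delta)$; combined with $\cA$, this yields $\cA\vdash\{\|e(x)\|^2\le O(\delta)\|x\|^4\}$.

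The heart of the proof is the SoS analogue of the FOOBI identity. Viewing elements of $\R^{d^2}$ also as $d\times d$ matrices, one checks directly that
\begin{displaymath}
\bigl(\sum_{ij}\alpha_i\alpha_j\,a_i^{\otimes 2}\otimes a_j^{\otimes 2}\bigr)_{pqrs}=(A\alpha)_{pq}(A\alpha)_{rs},\qquad \bigl(\sum_{ij}\alpha_i\alpha_j\,(a_i\otimes a_j)^{\otimes 2}\bigr)_{pqrs}=(A\alpha)_{pr}(A\alpha)_{qs}\mper
\end{displaymath}
Substituting $A\alpha=x^{\otimes 2}-e$ and using total symmetry of $x^{\otimes 4}$ (i.e.\ $x_px_qx_rx_s=x_px_rx_qx_s$), the leading $x^{\otimes 4}$ terms in the difference cancel exactly; only products in which at least one $e$-factor appears remain. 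A direct expansion bounds the Frobenius norm of this difference by $O(\|x\|^4\|e\|^2+\|e\|^4)$, and multiplying the step-one bound $\|e\|^2\le O(\delta)\|x\|^4$ by the SoS-square $\|x\|^4$ then yields
\begin{displaymath}
\cA\;\vdash\;\Bignorm{\sum_{i\ne j}\alpha_i\alpha_j\,b_{i,j}}^{2}\le O(\delta)\,\|x\|^{8}\mper
\end{displaymath}
Applying the SoS-provable inequality $\|\gamma\|^2\le\sigma_{\min}(B)^{-2}\|B\gamma\|^2$ to $\gamma_{ij}=\alpha_i\alpha_j$ (with $i\ne j$) and using $\sum_{i\ne j}\alpha_i^2\alpha_j^2=\|\alpha\|_2^4-\|\alpha\|_4^4$ gives
\begin{displaymath}
\cA\;\vdash_8\;\|\alpha\|_4^4\;\ge\;\|\alpha\|_2^4-O\!\Paren{\delta/\sigma_{\min}(B)^2}\|x\|^8\mper
\end{displaymath}

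Finally I translate back to the whitened variables. Using the identity $(AA^\top)^{+}A=(A^{+})^{\top}$, a direct calculation yields $\inner{P(a_i),P(x)}=\alpha_i+(a_i^{\otimes 2})^\top(\Qish^{+}-Q^{+})x^{\otimes 2}$ and $\|P(x)\|^2=\|\alpha\|_2^2+(x^{\otimes 2})^\top(\Qish^{+}-Q^{+})x^{\otimes 2}$; a standard pseudo-inverse perturbation bound gives $\|\Qish^{+}-Q^{+}\|\le O(\delta/\sigma_n(Q))$. Moreover, $\cA$ implies $\|P(x)\|^2\ge(1-O(\delta))\|x\|^4/\sigma_{\max}(Q)$, so $\|x\|^4\le O(\sigma_{\max}(Q))\|P(x)\|^2$. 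This bookkeeping converts the additive error $O(\delta/\sigma_{\min}(B)^2)\|x\|^8$ of the FOOBI step into the first summand $O(\delta\sigma_{\max}(Q)^2/\sigma_{\min}(B)^2)\|P(x)\|^4$ of $\tau$, and adds to it the relative perturbation error $O(\delta\sigma_{\max}(Q)/\sigma_n(Q))\|P(x)\|^4$ arising from $\Qish^{+}\ne Q^{+}$ to give the claimed bound. The main obstacle is step two: the cancellation of the dominant $(A\alpha)^{\otimes 2}$ contributions under the two mode-reshapings must be realized as a formal polynomial identity in the indeterminate $x$ (not merely at true solutions); once that identity is written out explicitly, the remaining steps reduce to routine SoS manipulations and matrix-perturbation bookkeeping.
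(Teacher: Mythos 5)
Your proposal follows the same route as the paper's proof: define $\alpha = A^{+}x^{\otimes 2}$, run the FOOBI cancellation as a formal polynomial identity in $x$, invert with $\sigma_{\min}(B)$, and then transfer from the $Q^+$-coordinates to the $\Qish^+$-coordinates via a pseudo-inverse perturbation bound. The one place where you understate the work is the final transfer: moving from $\|\alpha\|_4^4 \ge (1-\tau')\|\alpha\|_2^4$ to the analogous bound for $u = \transpose{A}\Qish^{+}x^{\otimes 2}$ is not mere ``matrix-perturbation bookkeeping'' but requires a dedicated SoS lemma (the paper's \pref{lem:4power_reduction}) showing that a high $\ell_4/\ell_2$ ratio is inherited by a nearby vector with only a $\sqrt{\gamma}$ loss, where $\gamma$ controls the relative $\ell_2^2$ displacement $\|u-\alpha\|_2^2/\|\alpha\|_2^2$. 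The $\sqrt{\gamma}$ (rather than $\gamma$) loss arises because proving $v_i^2 - u_i^2 \le O(\sqrt\gamma\|v\|_2^2)$ in SoS goes through an AM-GM on $(v_i-u_i)(v_i+u_i)$; fortunately $\sqrt{\gamma}$ is dominated by the $\beta$-type term $|\|u\|_2^2-\|\alpha\|_2^2|/\|\alpha\|_2^2 = O(\delta\,\sigma_{\max}(Q)/\sigma_n(Q))$, so the final $\tau$ comes out as claimed. Apart from this (and the paper's organizational choice of isolating the noiseless case as a separate proposition), your argument is the paper's argument.
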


\begin{proof}[Proof of Theorem~\ref{thm:sos_foobi}]
	By \pref{thm:general-tensor-decomposition} along with \pref{prop:foobi_reduction_easy} and \pref{prop:foobi_reduction}, step 2 in \pref{alg:foobi} must yield vectors $\hat{c}_1,\dots,\hat{c}_n$ that are respectively $O(\tau)$-close to $P(a_1),\dots, P(a_n)$, where $\tau \le  O(\delta\,\smax^2(Q)/\smin^2(B))+O(\delta\,\smax(Q)/\sigma_n(Q))$.
	Then
	\begin{align*} \|a_i\tp{2}- \Qish^{1/2}\hat{c}_i\| & \le  \|a_i\tp{2}- \Qish^{1/2}P(a_i)\|+ \|\Qish^{1/2}(P(a_i)-\hat{c}_i)\| \\
		& \le \|a_i\tp{2}- \Id_{\Sish}a_i^{\otimes 2}\| + \smax^{1/2}(Q) \cdot O(\tau)\\
		&\le \|(\Id_S-\Id_{\Sish})a_i^{\otimes 2}\| + \smax^{1/2}(Q) \cdot O(\tau)\\%
		& \le O(\delta\,\smax^{1/2}(Q)) + \smax^{1/2}(Q)\cdot O(\tau) \\
		& \le \smax^{1/2}(Q)\cdot O(\tau) \\
		& \le \smax^{1/2}(Q)/\sigma_n^{1/2}(Q)\cdot O(\tau)\cdot \|a_i\tp{2}\| \\
		& = O(\delta\,\kappa(A))\, \|a_i\tp{2}\| \mper
	\end{align*}
	Therefore \Jnote{}\TMnote{Wedin's Theorem?} taking the best \Jnote{} rank-1 approximation of the matrix reshaping of $\Qish^{1/2}\hat{c}_i$ gives an $O(\delta\,\kappa(A))$-approximation of $a_i$.
\end{proof}

\subsection{Noiseless case}
We first prove \pref{prop:foobi_reduction} in the noiseless case, when $T = \sum a_i\tp{4}$ and $\Qish = Q$ and $\Sish = S$.
In this scenario, we find that the left-hand side of the conclusion of \pref{prop:foobi_reduction} becomes
\begin{align*}
\sum_{i=1}^n \inner{P(a_i), P(x)}^4
&= \sum_{i=1}^{n}\left\langle(Q^+)^{1/2}a_i\tp{2},\, (Q^+)^{1/2}x\tp{2}\right\rangle^4
\\&= \sum_{i=1}^n \left[\transpose{(a_i\tp{2})} Q^{+} x\tp{2}\right]^4
\\&= \left\|\transpose A Q^{+} x\tp{2}\right\|_4^4 \mper
\end{align*}
The term $\|P(x)\|_2^4$ on the right becomes $\|(Q^+)^{1/2}x\tp{2}\|_2^4$.
Thus \pref{prop:foobi_reduction} becomes
\begin{proposition}[Noiseless \pref{prop:foobi_reduction}]\label{prop:foobi_reduction_noiseless}
	Let
	\begin{equation}
	\cA'  = \Set{\|\Id_{S} x^{\otimes 2}\|_2^2 \ge (1 - c\,\delta)\,\|x\|_2^4}_{x} \mcom \label{eqn:foobi_main_constraint_noiseless}
	\end{equation}
	for some constant $c \ge 0$. Then
	\begin{equation}
	\cA' \;\vdash_8\; \left\|\transpose A Q^{+} x\tp{2}\right\|_4^4 \ge (1-\tau)\left\|\Qplushalf x\tp{2}\right\|_2^4 \mcom\nonumber
	\end{equation}
	where $\tau \le O(\delta\,\smax^2(Q)/\smin^2(B))$, where $c$ is treated as a constant in the big-$O$ notation.
\end{proposition}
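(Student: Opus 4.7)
The plan is to formalize the FOOBI algebraic identity from the techniques section as a bounded-degree sum-of-squares proof, using a change of variables that turns the claim into a purely $x$-polynomial inequality. Set $\alpha := \transpose{A}Q^{+}x^{\otimes 2}$, a degree-$2$ vector polynomial in $x$. Since $Q^{+}QQ^{+} = Q^{+}$, the identity $\|\alpha\|_2^2 = \transpose{(x^{\otimes 2})}Q^{+}x^{\otimes 2} = \|(Q^{+})^{1/2}x^{\otimes 2}\|_2^2$ holds, so the target inequality reduces to $\|\alpha\|_4^4 \ge (1-\tau)\|\alpha\|_2^4$, or equivalently $\sum_{i\ne j}\alpha_i^2\alpha_j^2 \le \tau\|\alpha\|_2^4$. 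Since $\alpha$ has degree $2$, this is a degree-$8$ inequality in $x$ and sits well within the $\vdash_8$ budget.

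Next I record the FOOBI identity. Let $y := A\alpha = \Id_S x^{\otimes 2}$, and let $\pi$ be the permutation of $(\R^d)^{\otimes 4}$ swapping the second and third tensor modes. Since $\pi(a_i^{\otimes 2}\otimes a_j^{\otimes 2}) = (a_i\otimes a_j)^{\otimes 2}$, a direct expansion gives
\begin{equation}
y^{\otimes 2} - \pi(y^{\otimes 2}) = \sum_{i\ne j}\alpha_i\alpha_j\, b_{ij} = B\beta,
\qquad \beta_{ij} := \alpha_i\alpha_j \text{ for } i\ne j. \nonumber
\end{equation}
Crucially $\|\beta\|^2 = \sum_{i\ne j}\alpha_i^2\alpha_j^2 = \|\alpha\|_2^4 - \|\alpha\|_4^4$, so to control the right-hand side it suffices to bound $\|B\beta\|^2$ by a sum-of-squares proof.

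The heart of the argument is the SoS estimate $\|B\beta\|^2 \le O(\delta)\|x\|^8$. Because $x^{\otimes 4}$ is fixed by $\pi$, we write $y^{\otimes 2} - \pi(y^{\otimes 2}) = (y^{\otimes 2} - x^{\otimes 4}) - \pi(y^{\otimes 2} - x^{\otimes 4})$ and apply the SoS identity $\|u+v\|^2 \le 2\|u\|^2 + 2\|v\|^2$ (and the fact that $\pi$ preserves norms) to obtain $\|B\beta\|^2 \le 4\|y^{\otimes 2} - x^{\otimes 4}\|^2$. Then I expand $y^{\otimes 2} - x^{\otimes 4} = y\otimes(y - x^{\otimes 2}) + (y - x^{\otimes 2})\otimes x^{\otimes 2}$, and combine $\|y\|^2 \le \|x^{\otimes 2}\|^2 = \|x\|^4$ (from the PSD certificate $\Id - \Id_S \succeq 0$) with $\|y - x^{\otimes 2}\|^2 = \|x\|^4 - \|y\|^2 \le c\delta\,\|x\|^4$ (from $\cA'$). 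Finally, the constant-coefficient PSD certificate $\transpose{B}B \succeq \smin^2(B)\,\Id$ (which has an explicit SoS factorization since $B$ has linearly independent columns) upgrades the bound to $\|\beta\|^2 \le O(\delta\,\|x\|^8/\smin^2(B))$.

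The last step is to convert the $\|x\|^8$ factor into $\|\alpha\|_2^4$. From $\transpose{A}A \preceq \smax(Q)\Id$ I get the SoS inequality $\|y\|^2 \le \smax(Q)\|\alpha\|_2^2$, and $\cA'$ gives $(1-c\delta)\|x\|^4 \le \|y\|^2$; chaining yields $(1-c\delta)\|x\|^4 \le \smax(Q)\|\alpha\|_2^2$. Since both sides are manifestly SoS-nonneg polynomials in $x$, the ``squared'' inequality $(1-c\delta)^2\|x\|^8 \le \smax^2(Q)\|\alpha\|_2^4$ follows from $q^2 - p^2 = (q-p)(q+p)$ applied to this pair. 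Combining the two bounds, $\|\alpha\|_2^4 - \|\alpha\|_4^4 = \|\beta\|^2 \le O(\delta\,\smax^2(Q)/\smin^2(B))\|\alpha\|_2^4$, which is exactly the required bound on $\tau$. The main obstacle is degree bookkeeping: every SoS manipulation (the squaring step, the PSD certificates, the $\|u+v\|^2 \le 2\|u\|^2+2\|v\|^2$ steps) must be arranged to fit in degree $8$. This budget is tight, but works because the fundamental polynomials $\alpha$, $y$, and $x^{\otimes 2}$ all have degree $2$ in $x$, and each inequality chained together is of degree at most $8$.
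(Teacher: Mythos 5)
Your proposal is correct and follows essentially the same route as the paper: both set $\alpha = \transpose{A}Q^+x^{\otimes 2}$, reduce to bounding $\sum_{i\neq j}\alpha_i^2\alpha_j^2$, exploit the $\pi$-invariance of $x^{\otimes 4}$ to express $B\beta$ as a difference of two error terms each controlled by $\cA'$, and then trade $\|x\|^8$ for $\|\alpha\|_2^4$ via $\smin(B)$ and $\smax(Q)$. Your phrasing via the explicit mode-swap $\pi$ and the bilinear expansion of $y^{\otimes 2} - x^{\otimes 4}$ is just a slightly more spelled-out version of the paper's $\zeta,\zeta'$ bookkeeping.
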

\begin{proof}
We write $x\tp{2}$ as a linear combination of the vectors $a_i\tp{2}$ plus some term orthogonal to $S$.
\begin{align}
  \vdash\; x\tp{2}  & = \Id_S x\tp{2} + \Id_{S^{\perp}} x\tp{2}\nonumber \\
           & = \left[\sum_{i=1}^n \alpha_i\, a_i\tp{2}\right] +\Id_{S^{\perp}} x\tp{2}\mcom\nonumber
\end{align}
where $\alpha = A^{+}x\tp{2}$ is a $n$-dimensional vector with polynomial entries.
Since $Q = \dyad{A}$, it follows that $\transpose A Q^{+} x\tp{2} = \alpha$ and $\|\Qplushalf x\tp{2}\|_2^4 = \|\alpha\|_2^4$, so that it will suffice to show $\|\alpha\|_4^4 \ge (1 - \tau)\,\|\alpha\|_2^4$.

We consider $x\tp{4}$:
\begin{align}
  \vdash\; x\tp{4} = x\tp{2}\otimes x\tp{2} &= \left[\sum_{i=1}^{n}\alpha_i\,a_i\tp{2}\right] \otimes \left[\sum_{i=1}^{n}\alpha_i\,a_i\tp{2}\right] + \zeta \nonumber\\
          & = \left[\sum_{i, j\in [n]} \alpha_i\alpha_j\,a_i\tp{2}\otimes a_j\tp{2}\right] + \zeta\mcom\label{eqn:xtensor4}
\end{align}
with the error term $\zeta = (\Id_S x\tp{2} + \Id_{S^{\perp}} x\tp{2})\tp{2} - (\Id_S x\tp{2})\tp{2}$, so that $\cA' \vdash \|\zeta\|_2^2 \le O(\delta)\,\|x\|_2^8$, since $\cA' \vdash \|\Id_{S^{\perp}} x\tp{2}\|_2^2 \le O(\delta)\,\|x\|_2^4$ from the definition of $\cA'$.

Since $x^{\otimes 4}$ is invariant with respect to permutation of its tensor modes, we can also write it as
\begin{equation}
\vdash\; x\tp{4} =  \left[\sum_{i=1}^{n}\alpha_i\alpha_j\,(a_i\otimes a_j)\tp{2}\right]+ \zeta'\mcom\label{eqn:xtensor4_alternate}
\end{equation}
where similarly $\cA' \vdash \|\zeta'\|_2^2 \le O(\delta)\,\|x\|_2^8$.

Therefore, taking the difference of constraints~\eqref{eqn:xtensor4} and~\eqref{eqn:xtensor4_alternate} and recalling the definition of $B$ being the matrix with columns $b_{i,j} = a_i\tp{2}\otimes a_j\tp{2}-(a_i\otimes a_j)\tp{2}$, we obtain
\begin{align}
  \cA'\;\vdash\;\left\|\sum_{i\neq j} \alpha_i\alpha_j\, b_{i,j} \right\|_2^2 = \|\zeta' - \zeta\|_2^2 \le O(\delta)\,\|x\|_2^8 \mcom\nonumber
\end{align}
so that therefore since $\|Bv\|_2^2 \ge \smin^2(B)\,\|v\|_2^2$ for all vectors $v$,
\begin{align}
  \cA'\;\vdash\; \sum_{i\neq j}&\alpha_i^2\alpha_j^2 \cdot \smin^2(B)
  \le \left\|\sum_{i\neq j} \alpha_i\alpha_j\, b_{ij} \right\|_2^2
  \le O(\delta)\,\|x\|_2^8 \nonumber
  \\
  &\le O(\delta)\left[\vphantom{\bigg|}\smax(Q)\,x\tp{2}Q^+x\tp{2}\right]^2
  \le O(\delta)\,\smax^2(Q)\,\|\alpha\|_2^4\mper\nonumber
\end{align}
Hence, substituting in the above inequality,
\[
  \cA'\;\vdash\;  \|\alpha\|_4^4 = \|\alpha\|_2^4 - \sum_{i\neq j}\alpha_i^2\alpha_j^2 \ge (1-O(\delta\,\smax^2(Q)/\smin^2(B)))\,\|\alpha\|_2^4\mper \qedhere
\]
\end{proof}

\subsection{Noisy case}

At this point we've proved a version of \pref{prop:foobi_reduction} in the special case where there is no noise.
In order to handle noise we need to show two things: first that the noisy set of polynomial constraints $\cA$ used in \pref{alg:foobi} implies the noiseless version $\cA'$ from \pref{prop:foobi_reduction_noiseless}, and second that the desired conclusion of $\cA$ in \pref{prop:foobi_reduction} follows from its noiseless counterpart in \pref{prop:foobi_reduction_noiseless}.

The first step follows immediately from \pref{thm:Davis-Kahan}:
\begin{lemma}\label{lem:foobi_constraints_reduction}
	Let $\cA$ be defined as in \pref{alg:foobi} and $\cA'$ be defined as in \pref{prop:foobi_reduction_noiseless}. Then $\cA \vdash \cA'$.
\end{lemma}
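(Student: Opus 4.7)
The plan is to reduce the noisy constraint $\cA$ to the noiseless one $\cA'$ by showing that the projectors $\Id_S$ and $\Id_{\Sish}$ are close in spectral norm, and then invoking matrix sum-of-squares machinery to lift this spectral closeness into a degree-4 sum-of-squares inequality between the two quadratic forms.

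First, I would bound $\|Q - \Qish\|$. Since $\Qish$ is the best rank-$n$ approximation to the $d^2 \times d^2$ matrix reshaping of $T$, and $Q = \dyad{A}$ is also rank $n$, we have $\|T - \Qish\| \le \|T - Q\| = \|E\|_{\{1,2\},\{3,4\}} \le \delta\,\sigma_n(Q)$. By the triangle inequality $\|Q - \Qish\| \le 2\delta\,\sigma_n(Q)$. Applying \pref{thm:Davis-Kahan} to $Q$ and $\Qish$ then yields $\|\Id_S - \Id_{\Sish}\| \le O(\delta)$.

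Next, I would upgrade this spectral bound to a matrix sum-of-squares statement. From $\|\Id_S - \Id_{\Sish}\| \le O(\delta)$ we get $O(\delta)\,\Id - (\Id_{\Sish} - \Id_S) \succeq 0$ as a genuine PSD matrix inequality (with constant entries), which by the definition of matrix sum-of-squares gives
\begin{equation}
\vdash_{x,\,0}\; \Set{O(\delta)\,\Id - (\Id_{\Sish} - \Id_S) \succeq 0}\,.\nonumber
\end{equation}
Tensoring this trivially-proved PSD matrix inequality against $x^{\otimes 2}$ (using that $\,\vdash \dyad{x^{\otimes 2}} \succeq 0$ and \pref{lem:tensor_product_psd} as needed), we obtain the scalar sum-of-squares proof
\begin{equation}
\vdash_{x,\,4}\; \|\Id_{\Sish} x^{\otimes 2}\|^2 - \|\Id_S x^{\otimes 2}\|^2 \;\le\; O(\delta)\,\|x\|^4\,,\nonumber
\end{equation}
where we used $\|\Id_T x^{\otimes 2}\|^2 = \transpose{(x^{\otimes 2})} \Id_T x^{\otimes 2}$ for a projector $\Id_T$.

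Finally, combining this with the constraint $\cA = \{\|\Id_{\Sish} x^{\otimes 2}\|^2 \ge (1-3\delta)\,\|x\|^4\}$ gives
\begin{equation}
\cA \;\vdash_4\; \|\Id_S x^{\otimes 2}\|^2 \;\ge\; \|\Id_{\Sish} x^{\otimes 2}\|^2 - O(\delta)\,\|x\|^4 \;\ge\; (1 - c\,\delta)\,\|x\|^4\,,\nonumber
\end{equation}
for a suitable absolute constant $c$, which is exactly $\cA'$. There is no real obstacle here; the only thing to be careful about is that the spectral bound from Davis--Kahan, which is an inequality on real numbers, must be converted correctly into a matrix sum-of-squares statement — but this conversion is immediate because the matrix $O(\delta)\,\Id - (\Id_{\Sish} - \Id_S)$ has constant (polynomial-free) entries, so its PSD-ness is a trivial degree-0 matrix sum-of-squares fact.
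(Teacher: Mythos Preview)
Your proof is correct and takes essentially the same approach as the paper, which compresses the argument to the one line $\|\Id_S x^{\otimes 2}\|_2^2 = \|\Id_{\Sish} x^{\otimes 2}\|_2^2 - (x^{\otimes 2})^{\trans}(\Id_{\Sish} - \Id_S)\,x^{\otimes 2} \ge (1-O(\delta))\,\|x\|_2^4$ via Davis--Kahan. You are simply more explicit about the preliminary bound $\|Q-\Qish\|\le 2\delta\,\sigma_n(Q)$ and about why the resulting scalar inequality is a degree-$4$ sum-of-squares (the matrix $O(\delta)\,\Id-(\Id_{\Sish}-\Id_S)$ has constant entries, so its quadratic form in $x^{\otimes 2}$ is manifestly a sum of squares).
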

\begin{proof}
	By \pref{thm:Davis-Kahan}, $\|\Id_Sx\tp{2}\|_2^2 = \|\Id_{\Sish}x\tp{2}\|_2^2- (x\tp{2})^{\trans}(\Id_{\Sish}-\Id_S)\,x\tp{2} \ge (1-O(\delta))\,\|x\|_2^4$.
\end{proof}
For the second step, we have by \pref{prop:foobi_reduction_noiseless} and \pref{lem:foobi_constraints_reduction} a statement of the form
\[ \cA \;\vdash_8\; \left\|\transpose{A}Q^+ x\tp{2}\right\|_4^4 \ge (1-\tau')\,\left\|\Qplushalf x\tp{2}\right\|_2^4 \mcom \]
but to prove \pref{prop:foobi_reduction} we need a statement of the form (after expanding out the function $P$)
\[ \cA \;\vdash_{\ell}\; \left\|\transpose{A}\Qish^+ x\tp{2}\right\|_4^4 \ge (1-\tau)\,\left\|\Wplushalf x\tp{2}\right\|_2^4 \mper \]
Thus what remains is to show that not too much is lost when we approximate $Q^+$ with $\Qish^+$.
\begin{lemma}
	\label{lem:Qplus_approx}
	Suppose symmetric PSD matrices $Q\in \R^{D\times D}$ and $\Qish \in \R^{D\times D}$ both of rank $n\le D$ satisfy $\|Q-\Qish\|\le \delta\,\sigma_n(Q)$.
	Then $\|Q(Q^+ - \Qish^+)\| \le O(\delta)$
	and similarly $\big\|\Qhalf\big(\Qplushalf - \Wplushalf\big)\big\| \le O(\delta)$.
\end{lemma}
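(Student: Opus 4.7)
Both inequalities will follow from the same underlying strategy: combine the Davis–Kahan bound on $\|\Pi_S - \Pi_{\Sish}\|$ with a direct spectral estimate obtained by factoring the difference of pseudo-inverses through the identity $Q Q^+ = \Pi_S$ (respectively $\Qhalf\Qplushalf = \Pi_S$). The second inequality will be reduced to the first by applying the same argument with $Q,\Qish$ replaced by their square roots. The main technical obstacle will be to show that the square roots inherit a good perturbation bound, namely $\|\Qhalf-\Whalf\|\le O(\delta)\,\sigma_n(\Qhalf)$, because the generic Ando--van~Hemmen inequality $\|\Qhalf-\Whalf\|\le \|Q-\Qish\|^{1/2}$ is too weak by a square root.

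\textbf{First inequality.} Since $QQ^+=\Pi_S$ and $\Qish\Qish^+=\Pi_{\Sish}$, I would write
\[
Q(Q^+-\Qish^+) \;=\; \Pi_S - Q\Qish^+ \;=\; (\Pi_S-\Pi_{\Sish}) \;+\; (\Qish-Q)\Qish^+.
\]
Davis--Kahan (\pref{thm:Davis-Kahan}) bounds the first summand by $O(\delta)$. For the second, Weyl's inequality gives $\sigma_n(\Qish)\ge (1-\delta)\sigma_n(Q)$, so $\|\Qish^+\|\le 1/((1-\delta)\sigma_n(Q))$, and therefore $\|(\Qish-Q)\Qish^+\|\le \delta\sigma_n(Q)\cdot\|\Qish^+\|\le O(\delta)$. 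Summing, the first inequality follows.

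\textbf{Second inequality.} Here I would apply exactly the same manipulation to $\Qhalf,\Whalf$, both of which are PSD with column spaces $S,\Sish$ and with $\sigma_n(\Qhalf)=\sqrt{\sigma_n(Q)}$. This reduces the bound to proving the perturbation estimate
\[
\|\Qhalf-\Whalf\| \;\le\; O(\delta)\,\sqrt{\sigma_n(Q)}.
\]
This is the hard step. I would establish it via the integral representation $A^{1/2}=\tfrac{2}{\pi}\int_0^\infty A(t^2 I+A)^{-1}\,dt$, which gives
\[
\Qhalf-\Whalf \;=\; \frac{2}{\pi}\int_0^\infty t^2\,(t^2 I+Q)^{-1}(Q-\Qish)(t^2 I+\Qish)^{-1}\,dt.
\]
A naive spectral bound on the integrand diverges at $t=0$ because $Q$ and $\Qish$ are singular, so the key is to exploit that both sides act trivially on $\ker Q\cap\ker \Qish$ and act by $O(\delta)$-perturbations between $S$ and $\Sish$. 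Specifically, the action of $\Qhalf-\Whalf$ on the common kernel vanishes; on the orthogonal complement of $S\cap\Sish$, Davis--Kahan bounds the relevant subspace angles by $O(\delta)$ and reduces the problem to a scalar integral $\int_0^\infty \frac{t^2}{(t^2+\sigma_n(Q))(t^2+\sigma_n(\Qish))}\,dt=O(1/\sqrt{\sigma_n(Q)})$, multiplied by $\|Q-\Qish\|\le\delta\sigma_n(Q)$. This yields the desired $O(\delta\sqrt{\sigma_n(Q)})$ bound, after which the first-inequality argument applied to the square roots closes out the proof.

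\textbf{Expected main obstacle.} The square-root perturbation bound is the delicate part; the rest is routine algebra plus Davis--Kahan. An alternative to the integral representation would be to work eigenvalue-by-eigenvalue on the joint invariant decomposition of $Q$ and $\Qish$, but this is less clean in the presence of different kernels. I would choose whichever of the two presentations is shortest in writing.
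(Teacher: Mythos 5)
Your treatment of the first inequality matches the paper's: both you and the authors insert and subtract $Q\Qish^+$, invoke Davis--Kahan for $\|\Pi_S-\Pi_{\Sish}\|=\|QQ^+-\Qish\Qish^+\|\le O(\delta)$, and absorb $\|(Q-\Qish)\Qish^+\|\le\|Q-\Qish\|/\sigma_n(\Qish)\le O(\delta)$ by Weyl.

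For the second inequality you have correctly diagnosed something the paper glosses over. The paper says the bound follows ``by substituting $\Qplushalf$ for $Q^+$ and $\Wplushalf$ for $\Qish^+$ in the above argument,'' which, when you actually run it (with $Q\to\Qhalf$, $\Qish\to\Whalf$), needs the hypothesis $\|\Qhalf-\Whalf\|\le O(\delta)\,\sigma_n(\Qhalf)=O(\delta)\sqrt{\sigma_n(Q)}$. This square-root perturbation bound is \emph{not} a trivial consequence of $\|Q-\Qish\|\le\delta\sigma_n(Q)$ (the general operator-monotonicity estimate $\|\Qhalf-\Whalf\|\le\|Q-\Qish\|^{1/2}$ gives only $\sqrt{\delta\sigma_n(Q)}$), and the paper never states or proves it. You are supplying a real missing step, and your integral-representation route does work once you split $Q-\Qish=\Pi_S(Q-\Qish)\Pi_{\Sish}+\Pi_S(Q-\Qish)\Pi_{\Sish^\perp}+\Pi_{S^\perp}(Q-\Qish)\Pi_{\Sish}$ (the fourth block vanishes): each block, sandwiched between $(t^2+Q)^{-1}$ and $(t^2+\Qish)^{-1}$, is integrable in $t$ to an $O(\delta\sqrt{\sigma_n(Q)})$ contribution because at least one of the resolvents acts on the positive-definite part of its matrix.

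One remark on efficiency: rather than the integral representation you can get the same estimate more directly from the Sylvester identity $\Qhalf X+X\Whalf=Q-\Qish$ with $X=\Qhalf-\Whalf$. Projecting, $Y:=\Pi_S X\Pi_{\Sish}$ solves $\Qhalf Y+Y\Whalf=\Pi_S(Q-\Qish)\Pi_{\Sish}$, a nonsingular Sylvester equation on $\Sish\to S$ with $\|Y\|\le\|Q-\Qish\|/(\sqrt{\sigma_n(Q)}+\sqrt{\sigma_n(\Qish)})\le O(\delta)\sqrt{\sigma_n(Q)}$; the cross-range blocks are bounded by $\|\Pi_{S^\perp}\Whalf\|=\|\Pi_{S^\perp}\Qish\Wplushalf\|\le\|Q-\Qish\|\,\|\Wplushalf\|\le O(\delta)\sqrt{\sigma_n(Q)}$ and symmetrically. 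Either way, your plan is sound and actually more rigorous than the paper's terse ``analogous'' claim.
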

\begin{proof}
	By \pref{thm:Davis-Kahan}, $\|QQ^+ - \Qish\Qish^+\| \le \|\Id_S - \Id_{\Sish}\| \le O(\delta)$,
	where $S$ and $\Sish$ are the column spaces of $Q$ and $\Qish$ respectively.
	Then by adding and subtracting a term of $Q\Qish^+$,
	\[ \left\|Q(Q^+ - \Qish^+) + (Q - \Qish)\,\Qish^+\right\| \le O(\delta)\mper \]
	By triangle inequality,
	\begin{align*}
		\left\|Q(Q^+ - \Qish^+)\right\|
		\;\le\; O(\delta) + \left\|(Q - \Qish)\,\Qish^+\right\|
		\;\le\; O(\delta) + \|Q - \Qish\|\cdot\|\Qish^+\|
		\;\le\; O(\delta) \mper
	\end{align*}
	The analogous result for $\big\|\Qhalf\big(\Qplushalf - \Wplushalf\big)\big\|$ is obtained by substituting $\Qplushalf$ for $Q^+$ and $\Wplushalf$ for $\Qish^+$ in the above argument.
\end{proof}

We also show that not too much is lost when approximating a vector with high $\ell_4/\ell_2$ ratio.
\begin{lemma}\label{lem:4power_reduction}
	For $\gamma, \beta, \tau\le 1/2$, let $\cB$ be the set of polynomial inequalities $$\cB = \Set{-\beta\|v\|_2^2 \le \|u\|_2^2 -\|v\|_2^2\le \beta\|v\|_2^2,\;\; \|u-v\|_2^2 \le \gamma \|v\|_2^2 }\;\cup\; \Set{\|v\|_4^4 \ge (1-\tau)\|v\|_2^4}\mper$$
	Then we have
	\begin{equation}
	\cB\vdash_4 \Set{\|u\|_4^4 \ge (1-\tau - O(\sqrt{\gamma} + \beta))\,\|u\|_2^4}\mper\nonumber
	\end{equation}
\end{lemma}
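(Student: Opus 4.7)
The plan is to expand $\|u\|_4^4 = \sum_i u_i^4$ in terms of $v$ and $w := u - v$, isolating the main term $\|v\|_4^4$ and bounding the resulting cross terms via a careful application of SoS-AM-GM. First I would use the pointwise identity $u_i^2 - v_i^2 = 2 v_i w_i + w_i^2$ to write
\begin{equation*}
    \|u\|_4^4 - \|v\|_4^4 \;=\; \sum_i (u_i^2-v_i^2)(u_i^2+v_i^2) \;=\; 2\sum_i v_i w_i\, u_i^2 \;+\; 2\sum_i v_i^3 w_i \;+\; \sum_i w_i^2(u_i^2 + v_i^2),
\end{equation*}
and drop the final summand, which is manifestly a sum of squares.

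Both cross terms are lower-bounded by the SoS fact $2ab \ge -\epsilon a^2 - \epsilon^{-1} b^2$, which comes from expanding $(\sqrt\epsilon\, a - b/\sqrt\epsilon)^2 \ge 0$. I would apply this with $(a,b) = (v_i u_i,\, w_i u_i)$ in the first cross term and $(a,b) = (v_i^2,\, v_i w_i)$ in the second, sum over $i$, and use the standard SoS bounds $\sum_i v_i^2 u_i^2 \le \|v\|_2^2\|u\|_2^2$, $\sum_i w_i^2 u_i^2 \le \|w\|_2^2\|u\|_2^2$, $\sum_i v_i^2 w_i^2 \le \|v\|_2^2\|w\|_2^2$ (each a sum over $i \ne j$ of squares $(v_i u_j)^2$, etc.), together with the $\cB$-constraint $\|w\|_2^2 \le \gamma\|v\|_2^2$ and the SoS identity $\|v\|_4^4 \le \|v\|_2^4$, to obtain
\begin{equation*}
    2\sum_i v_i w_i u_i^2 \;\ge\; -(\epsilon+\gamma/\epsilon)\,\|v\|_2^2\|u\|_2^2, \qquad 2\sum_i v_i^3 w_i \;\ge\; -(\epsilon+\gamma/\epsilon)\,\|v\|_2^4.
\end{equation*}
Choosing $\epsilon = \sqrt\gamma$ so that $\epsilon + \gamma/\epsilon = 2\sqrt\gamma$, combining with the SoS bound $\|v\|_2^2\|u\|_2^2 \le \tfrac12(\|v\|_2^4 + \|u\|_2^4)$ and the $\cB$-constraint $\|v\|_4^4 \ge (1-\tau)\|v\|_2^4$, yields
\begin{equation*}
    \|u\|_4^4 \;\ge\; (1 - \tau - 3\sqrt\gamma)\,\|v\|_2^4 \;-\; \sqrt\gamma\,\|u\|_2^4.
\end{equation*}

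It remains to turn the $\|v\|_2^4$ on the right into $\|u\|_2^4$ using the closeness constraint. Starting from $\beta\|v\|_2^2 - (\|u\|_2^2 - \|v\|_2^2) \ge 0$ (an inequality in $\cB$), multiplying by the sum of squares $\|u\|_2^2 + \|v\|_2^2$, and then invoking $\|v\|_2^2\|u\|_2^2 \le \tfrac12(\|v\|_2^4 + \|u\|_2^4)$ once more, gives the SoS inequality $(1+3\beta/2)\|v\|_2^4 \ge (1-\beta/2)\|u\|_2^4$. Scaling the previous display by the positive constant $1 + 3\beta/2$ and substituting produces $(1+3\beta/2)\,\|u\|_4^4 \ge \bigl[(1-\beta/2)(1-\tau-3\sqrt\gamma) - (1+3\beta/2)\sqrt\gamma\bigr]\|u\|_2^4 \ge (1 - \tau - O(\sqrt\gamma + \beta))\|u\|_2^4$; dividing by the positive scalar $1 + 3\beta/2$ (which only costs another $O(\beta)$ in the leading coefficient) completes the proof. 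The main obstacle is bookkeeping: the choice $\epsilon = \sqrt\gamma$ is precisely what converts a naive $O(\gamma)$ loss into the sharp $O(\sqrt\gamma)$, and one must take care never to ``divide'' by a non-constant polynomial (only by positive scalars) so that every step remains an honest SoS manipulation of polynomials of total degree at most $4$ in $u,v$.
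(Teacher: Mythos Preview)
Your argument is correct and the degree-$4$ bookkeeping checks out. The approach, however, is organized differently from the paper's. The paper bounds the complementary quantity $\sum_{i\ne j}u_i^2u_j^2=\|u\|_2^4-\|u\|_4^4$: it first proves the per-coordinate estimate $v_i^2-u_i^2\le O(\sqrt\gamma)\,\|v\|_2^2$ (again via the weighted AM--GM with weight $\sqrt\gamma$, applied to $(v_i-u_i)(v_i+u_i)$), deduces $\sum_{j\ne i}(u_j^2-v_j^2)\le O(\sqrt\gamma+\beta)\,\|v\|_2^2$, and then uses two rounds of adding-and-subtracting to pass from $\sum_{i\ne j}u_i^2u_j^2$ to $\sum_{i\ne j}v_i^2v_j^2=\|v\|_2^4-\|v\|_4^4$ plus error. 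Your expansion of $\|u\|_4^4-\|v\|_4^4$ via $u_i^2-v_i^2=2v_iw_i+w_i^2$ is more direct and avoids both the per-coordinate bound and the add-subtract telescoping, at the cost of needing a separate SoS step to convert $\|v\|_2^4$ to $\|u\|_2^4$ at the end (where the paper uses the lower-bound constraint $(1-\beta)\|v\|_2^2\le\|u\|_2^2$ and you use the upper-bound constraint). The key mechanism---choosing the AM--GM weight to be $\sqrt\gamma$ so that the $\epsilon+\gamma/\epsilon$ loss becomes $2\sqrt\gamma$---is the same in both proofs.

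One small caveat: your substitution step implicitly assumes $1-\tau-3\sqrt\gamma\ge 0$ so that multiplying the derived inequality $(1+3\beta/2)\|v\|_2^4\ge(1-\beta/2)\|u\|_2^4$ by this constant preserves its sign. When it is negative the target inequality is trivially SoS (the right-hand side is a negative multiple of $\|u\|_2^4$), so this is harmless, but worth stating.
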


\begin{proof}[Proof of Lemma~\ref{lem:4power_reduction}]
	We have that $ \Set{\|u-v\|_2^2 \le \gamma \|v\|_2^2}\vdash(u_i-v_i)^2 \le \gamma \|v\|^2$. Moreover, since $\cB\vdash \|u\|_2^2 \le (1+\gamma)\|v\|_2^2$ we have that $\cB\vdash (u_i+v_i)^2\le \|u+v\|_2^2 \le O(\|v\|_2^2)$.  Therefore it follows that
	\begin{align}
		\cB\;\vdash\; v_i^2 - u_i^2  \;&=\; (v_i-u_i)(u_i+v_i)\;\le\; \sqrt{\gamma}/2 \cdot (u_i+v_i)^2 + 1/\sqrt{\gamma}\cdot (u_i-v_i)^2 \nonumber\\
		& \le\; O(\sqrt{\gamma}\,\|v\|_2^2)\mcom
		\nonumber
	\end{align}
	where we used the AM-GM inequality.
	It follows that for every $i$,
	\begin{equation}
	\cB \;\vdash\;\sum_{j\neq i} u_j^2 - v_j^2 \;\le\; \|u\|^2 - \|v\|^2 + v_i^2-u_i^2 \;\le\; O((\sqrt{\gamma}+\beta)\,\|v\|_2^2)\nonumber\mper
	\end{equation}
	Therefore by two rounds of adding-and-subtracting,
	\begin{align}
		\cB \;\vdash\;\sum_{i\neq j}u_i^2u_j^2
		& = \sum_{i}u_i^2 \left(\sum_{j\neq i}u_j^2 - \sum_{j\neq i}v_j^2\right) + \sum_{i}v_i^2 \left(\sum_{j\neq i}u_j^2 - \sum_{j\neq i}v_j^2\right) + \sum_{i\neq j}v_i^2 v_j^2 \nonumber\\
		& \le\sum_{i}u_i^2 \cdot O((\sqrt{\gamma}+\beta)\,\|v\|_2^2) + \sum_{i}v_i^2  \cdot O((\sqrt{\gamma}+\beta)\,\|v\|_2^2)+  \sum_{i\neq j}v_i^2 v_j^2  \nonumber\\
		& = O(\sqrt{\gamma}+\beta)\,(\|u\|_2^2\|v\|_2^2+\|v\|_2^4) + \|v\|_4^4 - \|v\|_2^4 \nonumber\\
		& \le (\tau+O(\sqrt{\gamma}+\beta))\,\|v\|_2^4\nonumber\\
		& \le  (\tau+O(\sqrt{\gamma}+\beta))\,\|u\|_2^4\mper\nonumber
	\end{align}
	Here in the second line we used the axiom that $\|u\|_2^2 - \|v\|_2^2 \le \beta\|v\|_2^2$, the second-to-last line uses the axiom $\|v\|_4^4 \ge (1-\tau)\|v\|_2^4$, and the last one uses $\|u\|_2^2 - \|v\|_2^2 \ge -\beta\|v\|_2^2$ so that $\|v\|_2^2 \le (1-\beta)^{-1}\|u\|_2^2$. Rearranging the final inequality above we obtain the desired result.
\end{proof}

\begin{proof}[Proof of \pref{prop:foobi_reduction}]
	We know by \pref{prop:foobi_reduction_noiseless} and \pref{lem:foobi_constraints_reduction}
	\begin{equation}
	\label{eq:halfway_there}
	\cA \;\vdash_8\; \left\|\transpose{A}Q^+ x\tp{2}\right\|_4^4 \ge (1-\tau')\,\left\|\Qplushalf x\tp{2}\right\|_2^4 \mcom
	\end{equation}
	where $\tau' \le O(\delta\,\smax^2(Q)/\smin^2(B))$.
	
	By \pref{lem:Qplus_approx},
	\begin{align*}
		\vdash\; \left\|\transpose{A}Q^+ x\tp{2} - \transpose{A}\Qish^+ x\tp{2}\right\|_2^2
		&= \left\|\transpose{A}(Q^+ - \Qish^+)\,x\tp{2}\right\|_2^2
		\\&= \left\|A^+\,Q(Q^+ - \Qish^+)\,x\tp{2}\right\|_2^2
		\\&\le \|A^+\|_2^2 \cdot O(\delta^2)\cdot\|x\|_2^4
		\\&\le O(\delta^2)\, \sigma_n^{-1}(Q)\,\|x\|_2^4
		\mper
	\end{align*}
	Also, using \pref{lem:Qplus_approx} after adding and subtracting a term of $Q^+Q\Qish^+$,
	\begin{align*}
		\vdash\; \left\|\transpose{A}Q^+ x\tp{2}\|_2^2 - \|\transpose{A}\Qish^+ x\tp{2}\right\|_2^2
		&= \transpose{(x\tp{2})}Q^+\dyad{A}Q^+x\tp{2} - \transpose{(x\tp{2})}\Qish^+\dyad{A}\Qish^+x\tp{2}
		\\&= \transpose{(x\tp{2})}[Q^+QQ^+ - \Qish^+Q\Qish^+]\,x\tp{2}
		\\&= \transpose{(x\tp{2})}[Q^+Q(Q^+ - \Qish^+) + (Q^+ - \Qish^+)Q\Qish^+]\,x\tp{2}
		\\&\le \transpose{(x\tp{2})}[O(\delta)\,Q^+ + O(\delta)\,\Qish^+]\,x\tp{2}
		\\&\le O(\delta)\,\sigma_n^{-1}(Q)\,\|x\|_2^4
		\mcom
	\end{align*}
	and similarly in the other direction. Furthermore,
	\begin{align*}
		\vdash\;  \left\|\transpose{A}Q^+ x\tp{2}\right\|_2^2 = \left\|A^+ x\tp{2}\right\|_2^2
		\ge \smax^{-1}(Q)\,\|x\|_2^4
		\mper
	\end{align*}
	Combining the above three inequalities with \pref{lem:4power_reduction} and \pref{eq:halfway_there},
	\[ \cA \;\vdash_{8}\; \left\|\transpose{A}\Qish^+ x\tp{2}\right\|_4^4 \ge (1 - \tau)\left\|\transpose{A}\Qish^+ x\tp{2}\right\|_2^4\mcom \]
	where $\tau \le O\big(\delta\,\smax^2(Q)/\smin^2(B) + \delta\,\smax(Q)/\sigma_n(Q)\big)$.
	
	Finally, using the fact that $A^+\Qhalf$ is a whitened matrix and therefore has orthonormal rows and then using triangle inequality with \pref{lem:Qplus_approx},
	\begin{align*}
		\vdash\;  \left\|\transpose{A}\Qish^+ x\tp{2}\right\|_2^2
		&= \left\|A^+\Qhalf\cdot\Qhalf\Wplushalf\cdot\Wplushalf x\tp{2}\right\|_2^2
		\\&= \left\|\Qhalf\Wplushalf\cdot\Wplushalf x\tp{2}\right\|_2^2
		\\&\ge \left\|\Qhalf\Qplushalf\cdot\Wplushalf x\tp{2}\right\|_2^2 - \left\|\Qhalf[\Qplushalf - \Wplushalf]\cdot\Wplushalf x\tp{2}\right\|_2^2
		\\&\ge \left\|\Id_S \cdot\, \Wplushalf x\tp{2}\right\|_2^2 - O(\delta) \cdot\left\|\Wplushalf x\tp{2}\right\|_2^2
		\\&\ge (1 - O(\delta))\,\left\|\Wplushalf x\tp{2}\right\|_2^2
		\mper
	\end{align*}
	Combining the above two inequalities we obtain the theorem.
\end{proof}

\subsection{Condition number under smooth analysis}
\label{sec:foobi-condition-number}
\newcommand{\tila}{\tilde{a}}
\let\wtilde\loweredwidetilde

In this section we prove that the condition number $\kappa(A)$ is at least inverse polynomial under the smooth analysis framework~\cite{Spielman:2004:SAA:990308.990310}.
We work with the same $\rho$-perturbation model as introduced by~\cite{ DBLP:conf/stoc/BhaskaraCMV14}: Each $\tilde{a}_i$ is generated by adding a Gaussian random variable with covariance matrix $\frac{\rho}{d}\Id_d$ to $a_i$.
We are given a symmetric 4th order tensor $\sum_{i=1}^n \tila_i\tp{4}$ (with noise).
Let $\wtilde{A}$ be the corresponding matrix with columns $\tila_i\tp{2}$. We will give an upper bound on $\kappa(\wtilde{A})$.
Suppose the vectors $a_i$ have bounded norm; then $\smax(Q)$ is bounded, and therefore an upper bound on $\kappa(\wtilde{A})$ follows from establishing lower bounds on $\smin(\wtilde{B})$ and $\smin(\dyad{\wtilde{A}})$.
The lower bound on the latter follows from~\cite{ DBLP:conf/stoc/BhaskaraCMV14} and therefore we focus on the former.
\begin{theorem}\label{thm:smooth}
	Let $n \le \frac{d^2}{10}$ and $\tilde{a}_1,\dots, \tilde{a}_n$ be independent $\rho$-perturbations of $a_1,\dots, a_n$.
	Let $\wtilde{B}\in \R^{d^4\times n(n-1)}$ be the matrix with columns $\tilde{b}_{ij} =  \tilde{a}_i\tp{2}\otimes \tila_j\tp{2}-(\tila_i\otimes \tila_j)\tp{2}$ for $i\neq j$.
	Then with probability $1-\exp(-d^{\Omega(1)})$, we have $\smin(\wtilde{B})\ge \poly(1/d, \rho)$.
\end{theorem}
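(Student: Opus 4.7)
My plan is to prove \pref{thm:smooth} via the smoothed-analysis framework introduced in \cite{DBLP:conf/stoc/BhaskaraCMV14}. The starting point is the rank-one factorization
\[
\tilde{b}_{ij} \;=\; \tilde{a}_i \otimes \bigl(\tilde{a}_i \otimes \tilde{a}_j - \tilde{a}_j \otimes \tilde{a}_i\bigr) \otimes \tilde{a}_j,
\]
which exhibits each column as a rank-one tensor lying in $\R^d \otimes \Lambda \otimes \R^d$, where $\Lambda \subseteq \R^{d^2}$ is the $\binom{d}{2}$-dimensional antisymmetric subspace. Because $d^2\binom{d}{2} \ge n(n-1)$ whenever $n \le d^2/10$, linear independence with a non-trivial margin is at least geometrically possible, and this decomposition exposes a Khatri--Rao-product-like structure that the BCMV technology was designed to handle.

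By a standard $\epsilon$-net plus union bound, it will suffice to show that for every fixed unit vector $c \in \R^{n(n-1)}$, the polynomial $p(g_1,\dots,g_n) := \|\tilde{B}c\|^2$ of total degree $8$ in the i.i.d.\ standard Gaussian perturbations $g_i := (\sqrt{d}/\rho)(\tilde{a}_i - a_i)$ satisfies $p \ge \poly(1/d,\rho)$ with probability at least $1 - \exp(-d^{\Omega(1)})$; this per-$c$ bound needs to defeat the cardinality $\exp(O(d^4 \log d))$ of the net. I would establish the per-$c$ bound by a leave-one-out argument: conditioning on $\{g_j\}_{j \ne i}$ reduces $p$ to a polynomial in $g_i$ alone of degree at most $8$, at which point Carbery--Wright anti-concentration gives $\Pr_{g_i}(p \le \tau\,\|p\|_{L^2(g_i)}) \le O(\tau^{1/8})$. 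The factorization above ensures that the wedge factor $\tilde{a}_i \otimes \tilde{a}_j - \tilde{a}_j \otimes \tilde{a}_i$ depends genuinely on $g_i$ and cannot be cancelled by any column whose index set avoids $i$, so $\|p\|_{L^2(g_i)}$ is itself lower-bounded by an analogous polynomial in the remaining perturbations, to which the same reduction can be applied recursively.

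The main obstacle is amplifying the polynomial-in-$\tau$ rate from Carbery--Wright to the $\exp(-d^{\Omega(1)})$ failure probability needed to absorb the net. This amplification will rely on exploiting the $n$ independent Gaussian perturbations (and, inside each one, the $d$ independent coordinates), iterating the leave-one-out reduction to compound the conditional anti-concentration rates. The delicate bookkeeping is in tracking how the conditional $L^2$-norm of $p$ degrades as perturbations are revealed one at a time, while arranging the final quantitative bound so that it depends only on $d$ and $\rho$ and not on the adversarial centers $a_i$. A convenient way to keep the $a_i$-dependence out of the final bound is to observe that Carbery--Wright only needs a lower bound on the \emph{top}-degree-in-$g_i$ coefficient of $p$, and after substitution that coefficient is a polynomial purely in the other $g_j$'s --- to which the inductive hypothesis applies.
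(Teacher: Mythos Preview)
Your proposal has a genuine structural gap: the $\epsilon$-net over unit vectors $c\in\R^{n(n-1)}$ cannot be made to work. With $n$ as large as $d^2/10$, the net has cardinality $\exp(\Theta(n^2))=\exp(\Theta(d^4))$, so the per-$c$ failure bound would have to be $\exp(-\Omega(d^4))$. But $p(g_1,\ldots,g_n)=\|\wtilde{B}c\|^2$ is a degree-$8$ polynomial in only $nd\le d^3/10$ independent Gaussian coordinates; no iteration of Carbery--Wright can extract a small-ball probability below roughly $\exp(-(nd)^{c})$ for some $c<1$, and even an (unattainable) bound of $\exp(-\Omega(nd))$ would still be $\exp(-O(d^3))$, far too weak to absorb the net. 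Your last paragraph correctly flags this as ``the main obstacle'' but the proposed fix---compounding anti-concentration by revealing the $g_i$ one at a time---cannot close a gap of an entire power of $d$ between the available randomness and the net size. The net is simply the wrong reduction here.

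The paper sidesteps the net entirely by using the leave-one-out \emph{distance} (\pref{lem:dA}): $\smin(\wtilde{B})\ge n^{-1}\min_{i\neq j}\dist(\tilde b_{ij},\sspan\{\tilde b_{k\ell}:(k,\ell)\neq(i,j)\})$, so only $n(n-1)=\poly(d)$ events need to be union-bounded, and BCMV's $\exp(-d^{\Omega(1)})$ per-event bound suffices. Two further ideas are needed that your sketch does not supply. First, the repeated factors in $\tilde a_i^{\otimes 2}$ are \emph{decoupled} by restricting to the sub-block of rows indexed by $L_1\times L_2\times L_3\times L_4$ for a fixed coordinate partition $[d]=L_1\sqcup\cdots\sqcup L_4$; this makes $\tilde a_{i,L_1},\tilde a_{i,L_2},\tilde a_{j,L_3},\tilde a_{j,L_4}$ jointly independent Gaussians, so BCMV's \pref{thm:bcmv} applies directly. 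Second, because $\tilde a_i$ appears in many columns, the span of the other columns is correlated with $\tilde b_{ij}$; the paper handles this by enclosing that span in an explicit, slightly larger subspace $\hat V$ of dimension $O(nd^2)<d^4/2$ that is \emph{independent} of the four fresh blocks above, and then lower-bounds $\|\Id_{\hat V^{\perp}}\,\tilde a_{i,L_1}\otimes\tilde a_{i,L_2}\otimes\tilde a_{j,L_3}\otimes\tilde a_{j,L_4}\|$. Your factorization $\tilde b_{ij}=\tilde a_i\otimes(\tilde a_i\otimes\tilde a_j-\tilde a_j\otimes\tilde a_i)\otimes\tilde a_j$ is correct and is in fact how the paper motivates the definition of $\kappa$, but on its own it does not address either the decoupling or the correlated-span issue.
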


We will bound the smallest singular value using the leave-one-out distance defined by~\cite{rudelson2009smallest}.
\begin{lemma}\cite{rudelson2009smallest}\label{lem:dA}
	For matrix $A\in \R^{d\times n}$ with columns $A_i, i\in [n]$, let $S_{-i}$ be the span of the columns without $A_i$, and
	$d(A) = \min_{i\in [n]} \|(\Id - \Id_{S_{-i}})\,A_i\|$. Then $\smin(A)\ge \frac{1}{\sqrt{n}}d(A)$.
\end{lemma}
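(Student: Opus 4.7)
The plan is to prove the inequality directly from the variational characterization $\sigma_{\min}(A) = \min_{\|x\|=1} \|Ax\|$, by exploiting the simple observation that any unit vector $x\in\R^n$ must have at least one coordinate of magnitude at least $1/\sqrt{n}$.

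First I would fix an arbitrary unit vector $x\in\R^n$ and pick an index $i^\star\in[n]$ maximizing $|x_{i^\star}|$; by pigeonhole, $|x_{i^\star}|\ge 1/\sqrt{n}$. Next, I would decompose $Ax = x_{i^\star}A_{i^\star} + \sum_{j\ne i^\star} x_j A_j$ and observe that the second summand lies in $S_{-i^\star}$ by construction. Applying the orthogonal projector $\Id-\Id_{S_{-i^\star}}$ kills the second summand, so
\begin{equation*}
\|Ax\|\;\ge\;\bigl\|(\Id-\Id_{S_{-i^\star}})Ax\bigr\|\;=\;|x_{i^\star}|\cdot\bigl\|(\Id-\Id_{S_{-i^\star}})A_{i^\star}\bigr\|\;\ge\;\tfrac{1}{\sqrt n}\, d(A),
\end{equation*}
where the final step uses the definition of $d(A)$ as a minimum over $i$. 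Taking the infimum of the left-hand side over unit $x$ yields $\sigma_{\min}(A)\ge d(A)/\sqrt n$.

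There is essentially no obstacle here: the argument is a one-line application of the pigeonhole principle combined with the fact that projection is norm-non-increasing. The only point requiring any attention is handling the case where $d(A)=0$ (some column lies in the span of the others), which is automatic since the conclusion becomes trivial, and noting that the same argument works regardless of whether $n\le d$ or $n>d$ (in the latter case $\sigma_{\min}(A)=0$ and indeed $d(A)=0$ since each $A_i$ must lie in $S_{-i}$ by dimension count when the columns span a subspace of dimension at most $d<n$, so in fact the bound remains tight).
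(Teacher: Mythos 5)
Your proof is correct and is essentially the standard argument for this fact (the paper itself gives no proof, citing Rudelson--Vershynin, whose proof is exactly this: pick the largest coordinate $|x_{i^\star}|\ge 1/\sqrt n$ of a unit vector $x$ and project $Ax$ orthogonally to $S_{-i^\star}$). One tiny slip in your closing parenthetical: when $n>d$ it is not true that \emph{each} $A_i$ lies in $S_{-i}$ (take a column equal to $0$ and the rest independent), only that \emph{some} $A_i$ with a nonzero coefficient in a linear dependence does, which is all that is needed for $d(A)=0$; this does not affect the main argument.
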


To bound $d(\wtilde{B})$ from below, we use~\cite[Theorem 3.9]{ DBLP:conf/stoc/BhaskaraCMV14} as our main tool.

\begin{theorem}~\cite[Theorem 3.9]{ DBLP:conf/stoc/BhaskaraCMV14}\label{thm:bcmv}
	Let $\delta \in (0,1)$ be a constant and $W$ be an operator from $\R^{n^{\ell}}$ to $\R^{m}$ such that $\sigma_{\delta n^{\ell}}(W)\ge \eta$.
Then for  any $a_1,\dots, a_{\ell}\in \R^d$ and  their $\rho$-perturbations $\tilde{a}_1,\dots, \tilde{a}_{\ell}$,
	\begin{equation}
	\Pr\left[\left\|\vphantom{\big|}W (\tilde{a}_1\otimes \dots \otimes \tilde{a}_{\ell}) \right\| \ge \eta \rho^{\ell} d^{-O(3^{\ell})}\right]\le 1-\exp(-\delta d^{1/3^{\ell}})
	\end{equation}
\end{theorem}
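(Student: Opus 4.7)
\textbf{Proof proposal for Theorem \ref{thm:bcmv}.}

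My plan is to prove the theorem by induction on $\ell$, which is the natural structure because the tensor $\tilde a_1 \otimes \cdots \otimes \tilde a_\ell$ itself is built up one factor at a time, and because the appearance of $3^\ell$ in the exponent hints at a recursion that triples the dimension-loss exponent at each step.

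\emph{Base case ($\ell = 1$).} Let $W \colon \R^d \to \R^m$ satisfy $\sigma_{\delta d}(W) \ge \eta$, and write $W = U \Sigma V^\trans$ in SVD form. Let $\Pi$ be the orthogonal projector onto the span of the top $\delta d$ right singular vectors of $W$, so that $\|W x\| \ge \eta \|\Pi x\|$ for every $x \in \R^d$. Writing $\tilde a_1 = a_1 + (\rho/\sqrt d)\, g$ with $g \sim \cN(0, \Id_d)$, I want an anticoncentration bound for $\|\Pi \tilde a_1\|$. Since $\Pi g$ is a (possibly shifted) Gaussian vector supported on the $\delta d$-dimensional image of $\Pi$ with coordinate variance $1$, a standard Carbery--Wright / Gaussian anticoncentration argument (which handles the arbitrary shift $\Pi a_1$) gives $\Pr[\|\Pi \tilde a_1\| \le t\,\rho] \le (Ct/\sqrt{\delta})^{\delta d}$; choosing $t = d^{-O(1)}$ yields the base case with room to spare.

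\emph{Inductive step.} Assume the theorem for $\ell - 1$. The heart of the argument will be a \emph{slicing lemma}: if $W \colon \R^{d^\ell} \to \R^m$ has $\sigma_{\delta d^\ell}(W) \ge \eta$, then for a $\rho$-smoothed vector $\tilde a_\ell \in \R^d$, with probability at least $1 - \exp(-\Omega(\delta\, d^{1/3^{\ell-1}}))$ the contracted operator $W' \colon \R^{d^{\ell-1}} \to \R^m$ defined by $W'(y) = W(y \otimes \tilde a_\ell)$ satisfies $\sigma_{\delta' d^{\ell-1}}(W') \ge \eta'$, for suitable $\delta' = \Omega(\delta)$ and $\eta' = \eta \cdot \rho \cdot d^{-O(3^{\ell-1})}$. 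Once the slicing lemma is established, conditioning on a good $\tilde a_\ell$ and applying the inductive hypothesis to $W'$ and the remaining vectors $\tilde a_1,\dots,\tilde a_{\ell-1}$ gives the desired lower bound on $\|W(\tilde a_1 \otimes \cdots \otimes \tilde a_\ell)\| = \|W'(\tilde a_1 \otimes \cdots \otimes \tilde a_{\ell-1})\|$, with the two failure probabilities combined by a union bound, and the two dimension-loss factors multiplying to give the $d^{-O(3^\ell)}$ in the statement.

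\emph{Proving the slicing lemma.} This is the main obstacle and the step I expect to require the most care. The natural approach is to pick a collection of roughly $\delta\, d^{\ell-1}$ nearly-orthonormal test directions $y_1,\dots,y_N \in \R^{d^{\ell-1}}$ for which the matrix $M(\tilde a_\ell)$ with columns $W(y_i \otimes \tilde a_\ell)$ can be shown to have large $N$-th singular value with high probability. One way to do this is to reshape $W$ as a matrix on $\R^{d^{\ell-1}} \otimes \R^d$, use the spectral assumption on $W$ to identify a $\delta d^\ell$-dimensional subspace of the input space on which $W$ acts with singular values $\ge \eta$, and then argue that intersecting this subspace with the coset $\R^{d^{\ell-1}} \otimes \tilde a_\ell$ still leaves a subspace of dimension $\gtrsim \delta d^{\ell-1}$ on which $W(\cdot \otimes \tilde a_\ell)$ acts with singular values of size $\eta \rho \cdot d^{-O(1)}$. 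This last ``dimension preservation under random slicing'' step is exactly where the base-case anticoncentration inequality for Gaussian projections re-enters, now applied to the $d$-dimensional slice in the $\ell$-th mode; iterating this argument $\ell$ times is what produces the geometric $3^\ell$ factor.

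\emph{Finishing.} Once both the slicing lemma and the base case are in hand, the induction goes through straightforwardly: at each level one pays a multiplicative factor of $\rho \cdot d^{-O(3^{\ell - i})}$ in the lower bound on $\eta$ and an additive failure probability of $\exp(-\Omega(\delta\, d^{1/3^{\ell - i}}))$; summing the failure probabilities over the $\ell$ levels is dominated by the last term $\exp(-\delta d^{1/3^\ell})$, matching the statement. The main technical work is therefore concentrated in the slicing lemma, and I would spend most of the proof-writing effort making precise the net/anticoncentration argument that preserves an $\Omega(\delta d^{\ell-1})$-dimensional ``good'' subspace through the random contraction.
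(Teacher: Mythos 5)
The paper itself contains no proof of \pref{thm:bcmv}: it is quoted from \cite{DBLP:conf/stoc/BhaskaraCMV14} and used as a black box in the smoothed-analysis argument (the citation is the ``proof''; incidentally the displayed probability bound is a typo for $\ge 1-\exp(-\delta d^{1/3^{\ell}})$). So your proposal is an attempt to re-derive an external theorem and must be judged on its own merits.

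Judged that way, there is a genuine gap, and it sits exactly where you place the entire weight of the argument: the slicing lemma. The base case and the conditioning/union-bound bookkeeping are routine, so the proposal stands or falls with that lemma, and you neither prove it nor indicate a mechanism that could prove it with the parameters you assert. The heuristic you offer --- that the $\delta d^{\ell}$-dimensional top right-singular subspace of $W$, intersected with the slice $\R^{d^{\ell-1}}\otimes \tilde a_{\ell}$, ``still leaves a subspace of dimension $\gtrsim \delta d^{\ell-1}$'' --- is false if read literally: for constant $\delta<1$ the codimension of that subspace exceeds $d^{\ell-1}$, so its generic intersection with the slice is $\{0\}$. What you actually need is a robust statement, namely that the contracted operator $y\mapsto W(y\otimes \tilde a_{\ell})$ has $\Omega(d^{\ell-1})$ singular values above a quantitative threshold with high probability; proving this requires a genuine argument (for instance lower-bounding the Frobenius mass $\sum_j\|W(f_j\otimes\tilde a_{\ell})\|^2$ via a partial-trace computation and pigeonholing against the largest singular value, which is delicate because the unperturbed $a_i$ may have arbitrary norm and can concentrate that mass on very few directions). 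The only tool you invoke at this step --- Gaussian anticoncentration ``applied to the $d$-dimensional slice in the $\ell$-th mode'' --- lower-bounds a single norm, not the number of large singular values of an operator. Moreover the parameters you assert ($\delta'=\Omega(\delta)$, $\eta'=\eta\rho\, d^{-O(3^{\ell-1})}$, failure probability $\exp(-\Omega(\delta d^{1/3^{\ell-1}}))$) are reverse-engineered from the target rather than derived: nothing in the sketch generates the $3^{\ell}$ recursion. For reference, the proof in \cite{DBLP:conf/stoc/BhaskaraCMV14} is also an induction on the number of modes, but its inductive step is quite different: it does not claim that one smoothed contraction preserves a constant fraction of robust rank; instead it extracts structured (block-type) substructure from the subspace and applies anticoncentration mode by mode, and it is that recursive construction that produces the $d^{-O(3^{\ell})}$ and $\exp(-\delta d^{1/3^{\ell}})$ losses in the statement. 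Until your slicing lemma is precisely stated and proved with constants that actually close the induction, the proposal is a plan rather than a proof.
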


Towards bounding the least singular value of $\wtilde{B}$ using Theorem~\ref{thm:bcmv}, we need to address two issues that don't exist in~\cite{DBLP:conf/stoc/BhaskaraCMV14}.
The first one is that
Theorem~\ref{thm:bcmv} requires $\tilde{a}_1,\dots, \tilde{a}_{\ell}$ to be independent perturbations of $a_1,\dots, a_{\ell}$.
However, we need to deal with $\tila_i\otimes \tila_i\otimes\tila_j\otimes\tila_j$ which is a correlated perturbation of $a_i \ot a_i \ot a_j \ot a_j$.
We will use (a simpler version of) the decoupling technique of~\cite{de1999decoupling} and focus on a sub-matrix of $\wtilde{B}$ where the noise is un-correlated.

The second difficulty is that the columns of $\wtilde{B}$ are also correlated since each $\tilde{a}_i$ is used in $n$ columns.
Therefore when the leave-one-out distance of $\wtilde{B}$ is under consideration, the column $\wtilde{B}_{ij}$ and the subspace of the rest of the columns have correlated randomness, which prevents us from using Theorem~\ref{thm:bcmv} directly.
We will address this issue by projecting $\wtilde{B}_{ij}$ into a smaller subspace which is un-correlated with $\wtilde{B}_{ij}$ and then apply Theorem~\ref{thm:bcmv}. %

\begin{proof}[Proof of Theorem~\ref{thm:smooth}]
	We partition $[d]$ into 4 disjoint subsets $L_1,L_2, L_3, L_4$ of size $d/4$.
	Let $\wtilde{B}'$ be the set of rows of $\wtilde{B}$ indexed by $L_1\times L_2\times L_3\times L_4$.
	That is, the columns of $\wtilde{B}'$ are $\tila_{i,L_1}\otimes (\tila_{i,L_2}\otimes \tila_{j,L_3} -\tila_{j,L_2}\otimes \tila_{i,L_3})\otimes \tila_{j,L_4}$, for $i\neq j$, where $\tila_{i,L}$ denotes the restriction of vector $a_i$ to the subset $L$.

	We fix a column $\wtilde{B}_{ij}'$ with $i\neq j$. Let $V = \textup{span}\{\wtilde{B}_{k\ell}' : (k,\ell) \neq (i,j) \}$. Clearly $V$ is correlated with $\wtilde{B}_{ij}'$. We define the following subspace that contains $V$,  %
	\begin{align}
          \hat{V}
          & = \mathrm{span}\biggl\{
            \begin{aligned}[t]
              & \tilde{a}_{j,L_1}\otimes x\otimes y\otimes \tilde{a}_{i,L_4}\mcom\\
              &\tilde{a}_{k,L_1}\otimes \tilde{a}_{k,L_2}\otimes x\otimes y\mcom \\
              &\tilde{a}_{k,L_1}\otimes x\otimes \tilde{a}_{k,L_3}\otimes y\mcom \\
              &x\otimes y\otimes \tilde{a}_{k,L_3}\otimes \tilde{a}_{k,L_4}\mcom\\
              & x\otimes \tilde{a}_{k,L_2}\otimes y\otimes \tilde{a}_{k,L_4} \biggm \vert x,y\otimes \R^{d/4}, k \not \in \{i,j\} \biggr\}
            \end{aligned}
	\end{align}
	Therefore by definition $\hat{V}\supset V$, and thus $\hat{V}^{\perp} \subset V^{\perp}$ where $V^{\perp}$ denotes the subspace orthogonal to $V$.
        Observe that by the definition of $\hat{V}$, we have $\tila_{i,L_1}\otimes \tila_{i,L_2}\otimes \tila_{j,L_3} \otimes \tila_{j,L_4}$ is independent from $\hat{V}$.
        Moreover, $\hat{V}$ has dimension at most $d^2+d^2 + 4nd^2 < d^4/2$. Then by Theorem~\ref{thm:bcmv} we obtain that with probability at least $1-\exp(-d^{\Omega(1)})$,
	\begin{equation}
		\left\|\Id_{\hat{V}^{\perp}} \tila_{i,L_1}\otimes \tila_{i,L_2}\otimes \tila_{j,L_3} \otimes \tila_{j,L_4}\right\|\ge \poly(1/d, \rho)\mper\nonumber
	\end{equation}

	Consequently,
	\begin{align}
		\left\|\Id_{V^{\perp}} \wtilde{B}'_{ij}\right\| \ge 		\left\|\Id_{\hat{V}^{\perp}} \wtilde{B}_{ij}'\right\| = \left\|\Id_{\hat{V}^{\perp}} \tila_{i,L_1}\otimes \tila_{i,L_2}\otimes \tila_{j,L_3} \otimes \tila_{j,L_4}\right\|\ge \poly(1/d, \rho)\mcom\nonumber
	\end{align}
	where the first inequality follows from $V\subset \hat{V}$ and second one follows from the fact that $\tila_{i,L_1}\otimes \tila_{j,L_2}\otimes \tila_{i,L_3}\otimes \tila_{j,L_4}$ is orthogonal to the subspace $\hat{V}^{\perp}$.

	Then taking union bound over all $i\neq j$, we obtain that $d(\wtilde{B}')\ge \poly(1/d,\rho)$ occurs with probability $1-\exp(-d^{\Omega(1)})$.
	Therefore $\smin(\wtilde{B}')\ge \poly(1/d,\rho)$ which in turn implies that $\smin(\wtilde{B})\ge \smin(\wtilde{B}')\ge \poly(1/d,\rho)$.

\end{proof}

\section{Tensor decomposition with general components}\label{sec:rounding_general_components}

In this section we prove \pref{thm:general-tensor-decomposition-general-components} (tensor decomposition with general components).
The key ingredient is a scheme for rounding pseudo-distributions (see \pref{thm:general_components} below) that improves over our previous scheme (\pref{thm:reproducing_bks}):
The improved rounding scheme only requires moments of degree logarithmic in the overcompleteness parameter $\sigma$. 

\subsection{Improved rounding of pseudo-distributions}

\newcommand{\kk}{2}

\begin{theorem}\label{thm:general_components}
  Let $s,\sigma \ge 1$ and $\epsilon\in (0,1)$.
  Let $D$ be a degree-$s$ pseudo-distribution over $\R^d$ that satisfies the constraint $\{\|u\|^2 \le 1\}$, and let $a$ be a unit vector in $\R^d$.
  Suppose that $s \ge O(1/\epsilon)\cdot\log(\sigma/\epsilon)$ and,

		\begin{equation}
    \pE_{D(u)}\inner{a,u}^{2s+2}
    \ge \Omega(1/\sigma)\cdot \Norm{\pE\left[uu^{\trans}\right]}
    \ge d^{-O(1)} \mper
  \end{equation}
  \TMnote{Now it is better right? I just replaced degree $O(s)$ to degree-$s$}

  Then, with probability at least $1/d^{O(s^3)}$ over the choice of independent random variables $g_1,\dots, g_{s} \sim \cN(0,\Id_{d^2})$, the top eigenvector $u^{\star}$ of the following matrix satisfies that $\inner{a,u^{\star}}^2 \ge 1-O(\epsilon)$,
	\begin{equation}
  M_g = \pE_{D(u)} \inner{g_1,u^{\otimes \kk}}\cdots \inner{g_s,u^{\otimes \kk}}\cdot uu^{\trans}
	\end{equation}
\end{theorem}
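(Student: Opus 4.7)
The plan is to follow the template of the proof of Theorem~\ref{thm:reproducing_bks}, replacing a single Gaussian contraction with $s$ independent contractions so that each Gaussian contributes a multiplicative factor-$\Theta(1)$ boost to the signal-to-noise ratio. Compounding these $s$ boosts lets us overcome the overcompleteness factor $\sigma$ with only $s = O(\log(\sigma/\epsilon)/\epsilon)$ Gaussians rather than one Gaussian requiring huge conditioning. The first step is to decompose each $g_i$ as $g_i = \alpha_i\cdot a^{\otimes 2} + g_i'$, where $\alpha_i = \iprod{g_i, a^{\otimes 2}}$ is a standard one-dimensional Gaussian independent of $g_i'$, which itself has the distribution of a standard Gaussian in the hyperplane orthogonal to $a^{\otimes 2}$. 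Expanding gives
\begin{equation}
  M_g = \sum_{S\subseteq[s]} \Paren{\prod_{i\in S}\alpha_i}\cdot M_S,\qquad M_S \defeq \pE_{D(u)}\Brac{\iprod{a,u}^{2|S|}\prod_{i\notin S}\iprod{g_i',\tilde u}\cdot uu^\trans}\mcom
\end{equation}
where $\tilde u \defeq u^{\otimes 2} - \iprod{a,u}^2 a^{\otimes 2}$ is the orthogonal complement to $a^{\otimes 2}$ of $u^{\otimes 2}$ and crucially vanishes when $u = a$.

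Next, I would analyze the ``signal'' term $M_{[s]} = \pE_{D(u)}\iprod{a,u}^{2s}uu^\trans$ in the style of \pref{prop:rounding-bks-1}. Specifically, the diagonal entry $a^\trans M_{[s]} a = t$ is large by hypothesis, while the orthogonal block satisfies
\begin{equation}
  \|\Id_{-1}M_{[s]}\Id_{-1}\| \;\le\; \pE\Brac{\iprod{a,u}^{2s}(\|u\|^2 - \iprod{a,u}^2)} \;\le\; O(1/s)\mcom
\end{equation}
using the univariate polynomial inequality $x^{2s}(1-x^2) \le O(1/s)$ for $x\in[-1,1]$, which has a degree-$(2s+2)$ sum-of-squares proof. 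Together with the PSD-induced off-diagonal bound $\|\Id_{1}M_{[s]}\Id_{-1}\| \le \sqrt{t\cdot O(1/s)}$, this shows $M_{[s]}$ is well-approximated by $t\cdot aa^\trans$ provided $t\gg 1/(s\e)$, which follows from $s = \Omega(\sigma/\e)$.

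The core technical step is bounding the cross-term noise $M_S$ for $|S| = s-j$ with $j\ge 1$. Here I would apply \pref{thm:matrix_concentration_multi_mode} to the tensor $T_S = \pE[uu^\trans \otimes \iprod{a,u}^{2(s-j)}\tilde u^{\otimes j}]$ to obtain $\E_{g'}\|M_S\| \lesssim (\log d)^{j/2} \cdot \max_{\text{bipartition}}\|T_S\|_{\text{bipartition}}$. Using Cauchy--Schwarz for pseudo-expectations together with $\|\tilde u\|^2 \le 1 - \iprod{a,u}^4$ and the SoS-provable univariate inequality $\iprod{a,u}^{2(s-j)}(1-\iprod{a,u}^4)^j \le (O(j/s))^j$ (a Positivstellensatz consequence of $\|u\|^2 \le 1$ at degree $2s+2$), these bipartition norms satisfy
\begin{equation}
  \|T_S\|_{\text{bipartition}}^2 \;\lesssim\; \|\pE[uu^\trans]\|\cdot (O(j/s))^j\mper
\end{equation}

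Finally, I would condition on the event that $\alpha_i \ge \tau$ for every $i\in[s]$, with $\tau = \Theta(\sqrt{\log d}/\e)$. This event has probability at least $e^{-s\tau^2/2}\cdot\tau^{-s} \ge d^{-O(s^3)}$, matching the claimed success probability. Under this conditioning, $\prod_i\alpha_i \gtrsim \tau^s$, so after dividing by $\prod_i\alpha_i$ the signal has strength $t$ along $a$ with $O(1/s)$ orthogonal noise, while each cross term contributes at most $(\log d/\tau^2)^{j/2}\cdot(O(j/s))^{j/2}\cdot\sqrt{\alpha}$ to the spectral norm of the residual. Summing over $j=1,\ldots,s$ yields total cross-term noise at most $O(\e)\cdot t$ exactly when $s \ge O(1/\e)\log(\sigma/\e)$, since each factor of $(\log d/\tau^2)^{1/2}$ is a constant strictly less than one, compounding geometrically. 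Then Davis--Kahan (\pref{thn:davis-kahan-rank-1}) yields $\iprod{a,u^\star}^2 \ge 1 - O(\e)$. The hard part will be simultaneously controlling the full range of bipartition-norm bounds across $j=1,\ldots,s$ with enough slack that the geometric series closes, while ensuring each polynomial inequality used is provable at degree at most that of the pseudo-distribution.
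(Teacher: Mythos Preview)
Your overall architecture matches the paper: decompose $g_i=\alpha_i a^{\otimes 2}+g_i'$, expand $M_g$ as a sum over subsets $S$, condition on all $\alpha_i$ being large, bound the cross terms by matrix concentration (\pref{thm:matrix_concentration_multi_mode}), and finish with Davis--Kahan. However, there is a genuine gap in how you control the $\Id_{-1}$-block, and it shows up already at the line where you write ``which follows from $s=\Omega(\sigma/\e)$.'' The theorem asserts $s\ge O(1/\e)\log(\sigma/\e)$, which is exponentially smaller than $\sigma/\e$; your argument as written does not establish the logarithmic dependence.

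Concretely, your bound on the orthogonal block of the signal term is $\|\Id_{-1}M_{[s]}\Id_{-1}\|\le \pE[\langle a,u\rangle^{2s}(1-\langle a,u\rangle^2)]\le O(1/s)$ (or $O(\alpha/s)$ with $\alpha=\|\pE[uu^\trans]\|$ if you keep one factor of $\langle a,u\rangle^2$). To make this $\le O(\e)t$ you need $1/s\lesssim \e t/\alpha$, and since the hypothesis only gives $t\ge \Omega(1/\sigma)\alpha$, this forces $s\ge \Omega(\sigma/\e)$. The same issue propagates to every cross term: the inequality $\langle a,u\rangle^{2(s-j)}(1-\langle a,u\rangle^4)^j\le (O(j/s))^j$ is correct but yields bounds in units of $\alpha^{1/2}$ or $\alpha$, not in units of $t$, so the geometric series cannot close with only logarithmic $s$.

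The missing idea is a \emph{two-term} matrix SoS inequality at degree $O(1/\e)$:
\[
\{\|u\|^2\le 1\}\;\vdash\;\Paren{\langle a,u\rangle^{2k}-(1-\e)^k}\cdot \dyad{(\Id_{-1}u)}\;\preceq\;O(\e)\cdot\langle a,u\rangle^{2k+2}\cdot\Id\mper
\]
Taking pseudo-expectation gives $\|\pE[\langle a,u\rangle^{2|S|}\Id_{-1}uu^\trans\Id_{-1}]\|\le (1-\e)^{|S|}\alpha + O(\e)\pE[\langle a,u\rangle^{2|S|+2}]$: one piece decays geometrically in $|S|$ and one piece is already $O(\e)$ relative to a higher moment. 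After summing over $S$ with weights $\alpha_S(O(\sqrt{s\log d}))^{|L|}$ (and taking $\tau=s\sqrt{s\log d}$ so that $O(\sqrt{s\log d})/\alpha_j\le 1/s$), the first piece collapses to $\prod_j((1-\e)\alpha_j+O(\sqrt{s\log d}))\cdot\alpha\le (1-\e/2)^s\alpha_{[s]}\cdot\alpha$, and now the choice $s\ge O(1/\e)\log(\sigma/\e)$ makes $(1-\e/2)^s\le \e/\sigma$, killing the $\sigma$ factor. The second piece is handled by a companion scalar inequality with the same split (the paper's Lemma on $x^2\prod_j(\alpha_j x^2+\kappa)$) and sums to $O(\e)\alpha_{[s]}t$. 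Finally, since these bounds are only on the $\Id_{-1}$-projected blocks, the paper does not get $a^\trans M_g a$ for free from the signal term alone; it lower-bounds $\pE[p_g(u)\langle a,u\rangle^2]$ via a Paley--Zygmund second-moment argument, which your proposal does not mention.
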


We start by defining some notations for convenience. Let $$p_g(u) = \inner{g_1,u^{\otimes \kk}}\cdots \inner{g_s,u^{\otimes \kk}}\mper$$ Moreover, Let $$\alpha_j = \inner{g_j, a\tp{\kk}}, \quad~~ g_j' = g_j - \alpha_j a\tp{\kk}, ~~~~\textup{and }~~~~\beta_j = \inner{g_j', u}\mper$$ Therefore we have that $\inner{g_j, u\tp{\kk}} =  \inner{\alpha_ja\tp{\kk}, u\tp{\kk}} + \inner{g_j',u\tp{\kk}} = \alpha_j \inner{a,u}^{\kk} + \beta_j$, and it follows that $p_g(u) = \prod_{1\le j\le s} (\alpha_j \inner{a,u}^{\kk} + \beta_j)$.

Theorem~\ref{thm:general_components} follows from the following proposition  and a variant of Wedin's Theorem (see Lemma~\ref{lem:improved_wedin}).

\begin{proposition}In the setting of Theorem~\ref{thm:general_components}, let $\Id_{-1} = \Id - aa^{\trans}$. Then,  with at least $(\Omega(1/n) - 1/d^{O(1)})\cdot 1/d^{O(s^3)}$ probability over randomness of $g$, we have
	\begin{align}
	\max\left\{\Norm{\pE\left[p_g(u)\Id_{-1}uu^{\trans}\Id_{-1}\right]}, 	\Norm{\pE\left[p_g(u)\Id_{-1}uu^{\trans}\Id_{1}\right]}	\right\}	&\le \epsilon  \pE\left[p_g(u)\inner{u,a}^2\right]\mper\label{eqn:91}
	\end{align}
\end{proposition}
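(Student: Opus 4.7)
The plan is to expand
\[
p_g(u) = \sum_{S\subseteq [s]}\inner{a,u}^{2|S|}\prod_{j\in S}\alpha_j \prod_{j\notin S}\beta_j
\]
using the identity $\inner{g_j,u^{\otimes 2}} = \alpha_j\inner{a,u}^2 + \beta_j$, and to split both $\pE[p_g(u)\,\Id_{-1}uu^{\trans}\Id_{-1}]$ and $\pE[p_g(u)\inner{a,u}^2]$ accordingly into the ``signal'' term $S = [s]$ and the ``noise'' terms $S\neq [s]$. The signal contribution to $\pE[p_g(u)\inner{a,u}^2]$ is exactly $\prod_j\alpha_j\cdot\pE[\inner{a,u}^{2s+2}]$, which is the quantity we want to lower bound. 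The crucial structural fact is that the $\beta_j = \inner{g_j',u^{\otimes 2}}$, viewed as Gaussian functionals of $g_j' \perp a^{\otimes 2}$, are mutually independent and independent of the $\alpha_j$, so we can handle the $\alpha_j$ and the $g_j'$ separately. We condition on the radial event $E_\tau = \{\alpha_j \in [\tau,2\tau]\text{ for all } j \in [s]\}$ for $\tau \asymp \sqrt{s\log d}$; since the $\alpha_j$ are i.i.d.\ standard Gaussians, this event has probability at least $e^{-s\tau^2/2 - O(s)} = 1/d^{O(s^2)}$, which we conservatively bound below by $1/d^{O(s^3)}$. Under $E_\tau$, $\prod_j\alpha_j \ge \tau^s$ and $|\alpha_j|\le 2\tau$ hold deterministically.

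For the signal part of the left-hand side of \pref{eqn:91}, namely $\prod_j\alpha_j\cdot\pE[\inner{a,u}^{2s}\Id_{-1}uu^{\trans}\Id_{-1}]$, we mirror the argument of \pref{prop:rounding-bks-1}. The sum-of-squares bound $\Id_{-1}uu^{\trans}\Id_{-1} \preceq (1-\inner{a,u}^2)\,\Id$ (which follows from $\|u\|^2 \le 1$) reduces matters to bounding $\pE[\inner{a,u}^{2s}(1-\inner{a,u}^2)]$. Optimizing $y^s(1-(1+\epsilon)y)$ over $y \in [0,1]$ gives the univariate inequality $x^{2s}(1-x^2) \le \epsilon\, x^{2s+2} + O(e^{-s\epsilon/2}/s)$, which has a degree-$(2s+2)$ sum-of-squares proof on $\{\|u\|^2 \le 1\}$ after substituting $x = \inner{a,u}$. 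Combined with the hypothesis $\pE[\inner{a,u}^{2s+2}]\ge d^{-O(1)}$ and $s \ge \Omega((1/\epsilon)\log(\sigma/\epsilon))$ to absorb the additive exponential error, this yields $\|\pE[\inner{a,u}^{2s}\Id_{-1}uu^{\trans}\Id_{-1}]\| \le O(\epsilon)\,\pE[\inner{a,u}^{2s+2}]$; the cross-block $\|\pE[\inner{a,u}^{2s}\Id_{-1}uu^{\trans}\Id_1]\|$ follows by the Cauchy--Schwarz step $\|\Id_{-1}M\Id_1\| \le \|\Id_{-1}M\Id_{-1}\|^{1/2}\|\Id_1 M \Id_1\|^{1/2}$ together with $\|\Id_1 M\Id_1\| = \pE[\inner{a,u}^{2s+2}]$, exactly as in \pref{prop:rounding-bks-1} (adjusting constants as needed to convert the $\sqrt{\epsilon}$ loss to $\epsilon$).

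For each noise term with $S\neq [s]$, the matrix $C_S := \pE[\inner{a,u}^{2|S|}\prod_{j\notin S}\beta_j\cdot\Id_{-1}uu^{\trans}\Id_{-1}]$ is an $(s-|S|)$-fold Gaussian contraction of the fixed symmetric tensor $T_S = \pE[\inner{a,u}^{2|S|}\Id_{-1}uu^{\trans}\Id_{-1} \otimes (u^{\otimes 2})^{\otimes (s-|S|)}]$ against the independent variables $\{g_j' : j\notin S\}$. Applying the multi-mode Gaussian matrix concentration \pref{thm:matrix_concentration_multi_mode}, and bounding every relevant tensor unfolding of $T_S$ by $\|\pE uu^{\trans}\|$ via \pref{thm:higher-tensor-norm} extended using \pref{lem:spectral1} to handle the sum-of-squares weight $\inner{a,u}^{2|S|}\ge 0$, we obtain $\E_{g'}[\|C_S\|]\le O(\sqrt{s\log d})^{s-|S|}\cdot\|\pE uu^{\trans}\|$. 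Summing over $S \neq [s]$ with the bound $\prod_{j\in S}|\alpha_j| \le (2\tau)^{|S|}$ then yields $\E_{g'}[\|M_{\mathrm{noise}}\|] \le (2\tau + O(\sqrt{s\log d}))^s\cdot\|\pE uu^{\trans}\| = O(\sqrt{s\log d})^s\cdot\|\pE uu^{\trans}\|$ for $\tau\asymp\sqrt{s\log d}$.

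To finish, we invoke the hypothesis $\pE[\inner{a,u}^{2s+2}]\ge \Omega(1/\sigma)\,\|\pE uu^{\trans}\|$: the ratio of the noise bound to the target signal $\tau^s\,\pE[\inner{a,u}^{2s+2}]$ is at most $\sigma\cdot(O(\sqrt{s\log d})/\tau)^s$, which drops below $\epsilon$ once $\tau/\sqrt{s\log d}$ exceeds a sufficiently large constant --- here $s\ge O((1/\epsilon)\log(\sigma/\epsilon))$ is precisely what makes $(\sigma/\epsilon)^{1/s} = O(1)$. Markov's inequality then yields $\|M_{\mathrm{noise}}\|\le O(\epsilon)\prod_j\alpha_j\,\pE[\inner{a,u}^{2s+2}]$ with constant probability over $g'$, and a parallel second-moment computation (exploiting that the $\beta_j$'s are mean-zero and mutually independent over $g'$) bounds the stochastic fluctuation of the denominator $\pE[p_g(u)\inner{a,u}^2]$ about its signal value and gives $\pE[p_g(u)\inner{a,u}^2]\ge \tfrac12\prod_j\alpha_j\,\pE[\inner{a,u}^{2s+2}]$ with probability at least $\Omega(1/n) - 1/d^{O(1)}$. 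Multiplying by the conditioning probability $1/d^{O(s^3)}$ produces \pref{eqn:91}. The main obstacle is the $O(\sqrt{s\log d})^s$ prefactor from multi-mode matrix concentration, which forces $\tau$ to grow with $\sqrt{s\log d}$ and ultimately produces the $1/d^{O(s^3)}$ loss in the probability.
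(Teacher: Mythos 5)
Your high-level decomposition (expand $p_g(u)$ over subsets $S$, condition on a radial event for the $\alpha_j$, handle signal and noise separately) matches the paper's strategy. The signal-part analysis — converting $\Id_{-1}uu^{\trans}\Id_{-1} \preceq (1-\inner{a,u}^2)\Id$ into a univariate degree-$O(s)$ SoS bound, plus the Paley--Zygmund / second-moment argument for $\pE[p_g(u)\inner{a,u}^2]$ — is essentially the paper's \pref{prop:good_part} and is broadly fine, modulo bookkeeping of the additive $O(e^{-s\eps/2}/s)$ error term.

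The gap is in the noise analysis, and it is not fixable by your choice of $\tau$. You bound each $C_S$ by $O(\sqrt{s\log d})^{s-|S|}\cdot\|\pE uu^{\trans}\|$, and then sum; but this discards all dependence of $\pE[\inner{a,u}^{2|S|}\Id_{-1}\dyad u\Id_{-1}]$ on $|S|$. As a result, the term with $|S|=s-1$ alone has magnitude $\gtrsim \tau^{s-1}\sqrt{s\log d}\cdot\|\pE uu^{\trans}\|$, which divided by the signal $\tau^s\,\pE[\inner{a,u}^{2s+2}]$ gives $\gtrsim (\sqrt{s\log d}/\tau)\cdot\sigma$ — using only $\pE[\inner{a,u}^{2s+2}]\ge \Omega(1/\sigma)\,\|\pE uu^{\trans}\|$. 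Your claimed ratio $\sigma\cdot(O(\sqrt{s\log d})/\tau)^s$ is not what you actually compute: after bounding $\prod_{j\in S}|\alpha_j|\le (2\tau)^{|S|}$ and summing, the numerator is $(2\tau+O(\sqrt{s\log d}))^s$, which divided by the signal prefactor $\tau^s$ gives $(2+O(1/C))^s\ge 2^s$ when $\tau = C\sqrt{s\log d}$. Increasing $C$ does not help; the $2\tau$ term stays. No admissible $\tau$ (i.e., one keeping the conditioning probability at $1/d^{O(s^3)}$ uniformly in $\sigma$) can beat the $\sigma$ loss this way, since the near-signal terms force $\tau\gtrsim\sigma\sqrt{s\log d}\,2^s/\eps$ otherwise.

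What you are missing is exactly the paper's \pref{lem:general_components_matrix_SoS}: a matrix SoS inequality showing
$\{\|u\|^2\le 1\}\vdash\;\big(\inner{a,u}^{2|S|}-(1-\eps)^{|S|}\big)\,\dyad{(\Id_{-1}u)}\preceq O(\eps)\,\inner{a,u}^{2|S|+2}\,\Id.$
Applied before the matrix-concentration step, this replaces your crude bound by
$\Norm{\pE[\inner{a,u}^{2|S|}\Id_{-1}\dyad{u}\Id_{-1}]}\le (1-\eps)^{|S|}\Norm{\pE\dyad u}+O(\eps)\,\pE[\inner{a,u}^{2|S|+2}],$
so that the part of the noise sum that pairs with $\|\pE uu^{\trans}\|$ becomes $\prod_j\big((1-\eps)\alpha_j+O(\sqrt{s\log d})\big)\le (1-\eps/2)^s\prod_j\alpha_j\le(\eps/\sigma)\prod_j\alpha_j$ once $\alpha_j\ge\tau=s\sqrt{s\log d}$ and $s\ge\Omega((1/\eps)\log(\sigma/\eps))$. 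The $(1-\eps)^{|S|}$ geometric decay, not the size of $\tau$, is what cancels $\sigma$. The remaining part of the noise sum, pairing with $\pE[\inner{a,u}^{2|S|+2}]$, is handled by the paper's \pref{lem:general_components_2} and is comparable to the signal by construction. You need to interleave this matrix SoS step inside the $S$-summation, not merely on the $S=[s]$ term.
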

\TMnote{added the cross order term in the statement. }
Proposition above follows from the following two propositions, one of which lowerbounds the RHS of~\eqref{eqn:91} and the other upperbounds the LHS of~\eqref{eqn:91}. 

\begin{proposition}\label{prop:good_part}
		In the setting of Theorem~\ref{thm:general_components}, let $\tau = s\sqrt{s\log d}$. Conditioned on the event that $\alpha_1,\dots, \alpha_s \ge \tau$, we have with at least $\Omega(1/n)$ probability over the choice of $g'$
	\begin{align}
		\pE\left[p_g(u)\inner{u,a}^2\right] \ge  0.9 \alpha_1\dots \alpha_s \pE[\inner{a,u}^{2s+2}]\nonumber
		\end{align}
\end{proposition}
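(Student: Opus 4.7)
The plan is to expand $p_g(u) = \prod_{j=1}^s\bigl(\alpha_j\inner{a,u}^2 + \beta_j\bigr)$ over subsets $S\subseteq[s]$ that record which factor contributes its noise piece $\beta_j = \inner{g_j',u^{\otimes 2}}$. This gives
\begin{equation}
\pE\bigl[p_g(u)\inner{a,u}^2\bigr] = M + \sum_{\emptyset\neq S\subseteq[s]} R_S,
\quad R_S = \Bigl(\prod_{j\notin S}\alpha_j\Bigr)\,\pE\Bigl[\inner{a,u}^{2(s-|S|)+2}\prod_{j\in S}\beta_j\Bigr],
\end{equation}
where $M=\alpha_1\cdots\alpha_s\,\pE[\inner{a,u}^{2s+2}]$ is exactly the target main term. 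The first easy step is to observe that $\E_{g'}[R_S]=0$ for every nonempty $S$, since the $g_j'$ are independent and each $\beta_j$ is linear and mean-zero in its own $g_j'$. Thus $\E_{g'}\bigl[\pE[p_g(u)\inner{a,u}^2]\bigr]=M$, and the task reduces to showing that with probability $\Omega(1/n)$ the total noise $R=\sum_{S\neq\emptyset}R_S$ exceeds $-0.1M$.

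Next I would compute the second moment. For $S\neq T$ some index $j\in S\triangle T$ leaves an unpaired $g_j'$, so $\E_{g'}[R_SR_T]=0$. For $S=T$, independence across $j$ gives
\begin{equation}
\E_{g'}[R_S^2] = \Bigl(\prod_{j\notin S}\alpha_j\Bigr)^2 \pE_u\pE_{u'}\Bigl[\inner{a,u}^{2(s-|S|)+2}\inner{a,u'}^{2(s-|S|)+2}\,K(u,u')^{|S|}\Bigr],
\end{equation}
where $K(u,u')=\inner{u,u'}^2-\inner{u,a}^2\inner{u',a}^2$ is the covariance of $\beta_j(u)$ and $\beta_j(u')$ under $g_j'\sim\cN(0,\Id-a^{\otimes 2}(a^{\otimes 2})^\trans)$. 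Using the pointwise bound $0\le K(u,u')\le \inner{u,u'}^2$ coming from $\Id - a^{\ot 2}(a^{\ot 2})^\trans\preceq\Id$, and the identity $\inner{u,u'}^{2|S|}=\inner{u^{\otimes 2|S|},u'^{\otimes 2|S|}}$, the bilinear pseudo-expectation collapses to $\bigl\|\pE_u[\inner{a,u}^{2(s-|S|)+2}u^{\otimes 2|S|}]\bigr\|^2$. Applying pseudo-Cauchy-Schwarz and then the higher-moment spectral bound \pref{thm:higher-tensor-norm} controls this by $\pE[\inner{a,u}^{4(s-|S|)+4}]\cdot\|\pE[uu^\trans]\|$, which after using the hypothesis $\pE[\inner{a,u}^{2s+2}]\ge\Omega(1/\sigma)\|\pE[uu^\trans]\|$ yields $\E_{g'}[R_S^2]\lesssim \sigma^2\tau^{-2|S|}M^2$.

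Summing over $S\neq\emptyset$ and using $\tau=s\sqrt{s\log d}$ together with the lower bound $s\gtrsim \log(\sigma/\eps)$ from Theorem~\ref{thm:general_components} gives $\E_{g'}[R^2]\le O(n)\,M^2$. Applying Paley-Zygmund to $Z=\pE[p_g(u)\inner{a,u}^2]$ with $\E[Z]=M$ and $\E[Z^2]\le M^2+O(n)M^2$ yields
\begin{equation}
\Pr_{g'}\bigl[Z\ge 0.9\,M\bigr]\ge \frac{(0.1\,M)^2}{\E[Z^2]}=\Omega(1/n),
\end{equation}
which is precisely the claim. I expect the main obstacle to be the variance bound: getting sharp enough control on the bilinear pseudo-expectation involving $K(u,u')^{|S|}$ against the factors $\inner{a,u}^{2(s-|S|)+2}\inner{a,u'}^{2(s-|S|)+2}$ demands a careful iteration of pseudo-Cauchy-Schwarz and repeated use of the spectral bounds for moments of pseudo-distributions established in Section~\ref{sec:matrix_concentration}; the crude bound $\inner{u,u'}^{2|S|}\le 1$ is not tight enough, and replacing it with the tensor inner-product form above is the key step.
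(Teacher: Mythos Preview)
Your approach is workable but takes a genuinely different and more elaborate route than the paper, and it contains one technical slip.

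The paper's proof is much shorter. It applies pseudo-Cauchy--Schwarz once at the outset,
\[
\bigl(\pE_u[p_g(u)\iprod{a,u}^2]\bigr)^2\le \pE_u\bigl[p_g(u)^2\iprod{a,u}^4\bigr],
\]
which reduces the second-moment computation to a \emph{single} copy of $u$. Swapping $\E_{g'}$ with $\pE_u$ and using independence of the $g_j'$ gives $\pE_u\bigl[\iprod{a,u}^4\prod_j(\alpha_j^2\iprod{a,u}^4+\E_{g_j'}[\beta_j(u)^2])\bigr]$; since the noise variance $\E_{g_j'}[\beta_j(u)^2]=\|u\|^4-\iprod{a,u}^4\le 1$ is a polynomial in $u$ alone, the whole product is bounded via the univariate sum-of-squares inequality of \pref{lem:spectralnorm_sos_proof1} (through \pref{lem:general_components_2}, applied with $\alpha_j\to\alpha_j^2$ and $\kappa=1$). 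This yields $\E[Z^2]\le O(\alpha_1^2\cdots\alpha_s^2\,\pE[\iprod{a,u}^{2s+2}])$ directly, and Paley--Zygmund finishes. No bilinear pseudo-expectations, no subset expansion, no appeal to \pref{thm:higher-tensor-norm}.

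Your route (expand over $S$, use orthogonality across $S\ne T$, bound each $\E[R_S^2]$ via a bilinear form) can be made to work, but the claim ``$0\le K(u,u')$'' is false: with $a=e_1$, $u=(e_1+e_2)/\sqrt2$, $u'=(e_1-e_2)/\sqrt2$ one has $K(u,u')=-1/4$, so the pointwise bound $K^{|S|}\le\iprod{u,u'}^{2|S|}$ is unjustified for $|S|\ge 2$. The fix is to skip that bound entirely: since $K(u,u')=\iprod{(\Id-P)u^{\ot 2},(\Id-P)u'^{\ot 2}}$ with $P=\dyad{(a^{\ot 2})}$ a projector, one has $K^{|S|}=\bigl\langle(\Id-P)^{\ot|S|}u^{\ot 2|S|},\,(\Id-P)^{\ot|S|}u'^{\ot 2|S|}\bigr\rangle$, and the bilinear pseudo-expectation collapses directly to $\bigl\|(\Id-P)^{\ot|S|}\pE_u[f(u)u^{\ot 2|S|}]\bigr\|^2\le\bigl\|\pE_u[f(u)u^{\ot 2|S|}]\bigr\|^2$. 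With this repair your per-term variance bound is valid; indeed, using $\pE[\iprod{a,u}^{4(s-|S|)+4}]\le\pE[\iprod{a,u}^{2(s-|S|)+2}]$ and $\|\pE[uu^\trans]\|\le 1$, each of your terms is dominated by the corresponding term in the single-variable expansion $\sum_T\alpha_T^2\,\pE[\iprod{a,u}^{2|T|+2}]$ that the paper bounds, so the same second-moment estimate follows. Your stated intermediate summation ``$\sum_S\sigma^2\tau^{-2|S|}\le O(n)$'' is looser than needed and not obviously true for all parameter regimes; the cleaner route is to compare term-by-term as above.
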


\begin{proposition}\label{prop:noise_part}In the setting of Theorem~\ref{thm:general_components}, let $\tau = s\sqrt{s\log d}$ and  $\Id_{-1} = \Id - aa^{\trans}$. Conditioned on the event that $\alpha_1,\dots, \alpha_s \ge \tau$, we have with at least $1-d^{-\Omega(1)}$ probability over the choice of $g'$,  
	\begin{align}
		\max\left\{\Norm{\pE\left[p_g(u)\Id_{-1}uu^{\trans}\Id_{-1}\right]}, 	\Norm{\pE\left[p_g(u)\Id_{-1}uu^{\trans}\Id_{1}\right]}	\right\}\le O(\epsilon\alpha_1\dots \alpha_s)\cdot 	\pE\left[\inner{a,u}^{2s+2}\right]\mper
	\end{align}
\end{proposition}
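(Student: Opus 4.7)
The plan is to expand $p_g(u)$ by the binomial formula
$$p_g(u) = \sum_{T\subseteq [s]} \Bigl(\prod_{j\notin T}\alpha_j\Bigr) \Bigl(\prod_{j\in T}\beta_j\Bigr) \inner{a,u}^{2(s-|T|)},$$
so that the noise matrix decomposes as $\pE[p_g(u)\Id_{-1}uu^{\trans}\Id_{-1}] = \sum_{T}N_T$. The ``good'' piece ($T=\emptyset$) is already of the desired form, and for each nonempty $T$ with $|T|=m$, the term $N_T$ is an $m$-fold random Gaussian contraction of the tensor $R_T = \pE_{D(u)}[\inner{a,u}^{2(s-m)}\,u^{\otimes 2m} \otimes \Id_{-1}uu^{\trans}\Id_{-1}]$ against $g'_{j_1},\dots,g'_{j_m}$.

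Next I would apply Theorem~\ref{thm:matrix_concentration_multi_mode} with $t = O(\sqrt{s\log d})$ and union-bound over all $2^s$ choices of $T$, so that with probability $1-d^{-\Omega(1)}$ we have, simultaneously for every $T$,
$$\|N_T\| \le \Bigl(\prod_{j\notin T}\alpha_j\Bigr)\cdot O(\sqrt{s\log d})^m \cdot \max_{S_1,S_1^c}\|R_T\|_{S_1,S_1^c}.$$
Each unfolding of $R_T$ is then bounded by first applying SoS Cauchy--Schwarz (in the form of Lemma~\ref{lem:spectral_norm_1}), giving a geometric mean of two symmetric pseudo-moment matrices, and then iteratively applying Lemma~\ref{lem:spectral1} with weight $\inner{a,u}^{2(s-m)}$ to peel off the $u^{\otimes 2m}$ tensor factors. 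This reduces the maximum unfolding to the single quantity $\|\pE[\inner{a,u}^{2(s-m)}\Id_{-1}uu^{\trans}\Id_{-1}]\|$.

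To bound this deterministic factor, I would combine the matrix SoS inequality $\Id_{-1}uu^{\trans}\Id_{-1} \preceq (1-\inner{a,u}^2)\Id$ with two moment bounds on the pseudo-distribution: the hypothesis $\pE[\inner{a,u}^2] \le \|\pE[uu^{\trans}]\| \le O(\sigma)\pE[\inner{a,u}^{2s+2}]$, and the log-convexity of even pseudo-moments $\pE[\inner{a,u}^{2k}] \le \pE[\inner{a,u}^{2s+2}]^{k/(s+1)}$ (which follows from iterating Cauchy--Schwarz together with $\pE[\inner{a,u}^0]=1$). Using $\min$ of these two estimates gives a sharp bound on $\pE[\inner{a,u}^{2(s-m)}(1-\inner{a,u}^2)]$ in terms of $\pE[\inner{a,u}^{2s+2}]$. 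Finally, summing over $T$ using $\prod_{j\notin T}\alpha_j \le \alpha_1\cdots\alpha_s/\tau^m$ and $\tau = s\sqrt{s\log d}$, the binomial factor $\binom{s}{m}$ combines with $O(\sqrt{s\log d})^m/\tau^m$ to give rapid decay in $m$, and the hypothesis $s \ge O(\log(\sigma/\epsilon)/\epsilon)$ absorbs the $\sigma$ factor from the moment bound, yielding a total noise of at most $O(\epsilon\,\alpha_1\cdots\alpha_s)\pE[\inner{a,u}^{2s+2}]$. For the cross term $\pE[p_g(u)\Id_{-1}uu^{\trans}\Id_1]$ the same argument runs in parallel, except that in the Cauchy--Schwarz bound on unfoldings one of the two factors becomes $\|\pE[\inner{a,u}^{2(s-m)}\Id_1 uu^{\trans}\Id_1]\| = \pE[\inner{a,u}^{2(s-m)+2}]$, which is again controlled by $\pE[\inner{a,u}^{2s+2}]$ via log-convexity.

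The main obstacle will be the delicate arithmetic of this final summation: the concentration factor $(\sqrt{s\log d})^m$, the combinatorial factor $\binom{s}{m}$, the $1/\tau^m$ scaling, and the moment bound $\pE[\inner{a,u}^{2(s-m)}] \le \pE[\inner{a,u}^{2s+2}]^{(s-m)/(s+1)}$ must be balanced carefully so that the resulting series is bounded by $O(\epsilon)\pE[\inner{a,u}^{2s+2}]$ for the claimed (essentially optimal) value of $s$. The crude bound $\pE[\inner{a,u}^{2(s-m)}] \le O(\sigma)\pE[\inner{a,u}^{2s+2}]$ alone would force $s \gtrsim \sigma^2/\epsilon^2$; the exponential improvement to $s \gtrsim \log(\sigma/\epsilon)/\epsilon$ comes precisely from exploiting log-convexity, which reshapes the deterministic factor as $\pE[\inner{a,u}^{2s+2}]^{1-(m+1)/(s+1)}$ and makes $\sigma$ enter only logarithmically.
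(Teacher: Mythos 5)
Your outer framework matches the paper's proof quite closely: the binomial expansion of $p_g(u)$ over subsets, the application of \pref{thm:matrix_concentration_multi_mode} with a union bound, and the use of \pref{lem:spectral_norm_1} and \pref{lem:spectral1} to reduce all relevant unfoldings to the single matrix $\pE[\inner{a,u}^{2(s-m)}\Id_{-1}uu^{\trans}\Id_{-1}]$. This part is sound. However, the final step --- bounding that deterministic matrix --- has a gap, and it is precisely where the paper's proof earns its keep.

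You propose to control $\pE[\inner{a,u}^{2(s-m)}(1-\inner{a,u}^2)]$ by taking the $\min$ of the hypothesis bound $m_{2(s-m)} \le \|\pE[uu^\trans]\| \le O(\sigma)\,t$ (where $t = \pE[\inner{a,u}^{2s+2}]$) and the log-convexity bound $m_{2(s-m)} \le t^{(s-m)/(s+1)}$. Neither branch, nor their minimum, produces the required factor of $O(\epsilon)$ in front of $t$. The log-convexity bound gives $m_{2(s-m)}/t \le t^{-(m+1)/(s+1)}$, which is $\ge 1$ for every $m \ge 0$ since $t \le 1$; in fact for $t \approx d^{-\Omega(1)}$ and $s = \Theta(\log(\sigma/\epsilon)/\epsilon)$ this ratio grows polynomially in $d$. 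The $\sigma$-bound gives $m_{2(s-m)}/t \le O(\sigma)$, which is also $\gg \epsilon$. You even flag the $T = \emptyset$ term as ``already of the desired form,'' but that is actually the hardest term: $\|\pE[\inner{a,u}^{2s}\Id_{-1}uu^\trans\Id_{-1}]\|$ carries the full coefficient $\alpha_1\cdots\alpha_s$ and needs to be shown $\le O(\epsilon)\,t$, which neither of your two estimates delivers.

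The missing ingredient is the threshold decomposition of \pref{lem:general_components_matrix_SoS}, which is the SoS form of the univariate inequality $x^{2k} - (1-\epsilon)^k \le 3\,x^{2k+2r}$ for $r = 1/\epsilon$. Combined with $x^{2r}(1-x^2) \le O(1/r)\,x^2$, it yields
\[
\pE\Bigl[\inner{a,u}^{2k}\,\dyad{(\Id_{-1}u)}\Bigr] \;\preceq\; (1-\epsilon)^k\,\pE\bigl[\dyad{(\Id_{-1}u)}\bigr] \;+\; O(\epsilon)\,\pE\bigl[\inner{a,u}^{2k+2}\bigr]\Id\,.
\]
The crucial $(1-\epsilon)^k$ then accumulates multiplicatively across the subsets: the first piece contributes $\|\pE[uu^\trans]\|\prod_j((1-\epsilon)\alpha_j + O(\sqrt{s\log d})) \le O(\sigma)\,t\,(1-\epsilon/2)^s\,\alpha_1\cdots\alpha_s$, and it is exactly the $(1-\epsilon/2)^s$ factor together with $s \ge O(\log(\sigma/\epsilon)/\epsilon)$ that absorbs $O(\sigma)$ and produces $O(\epsilon)\,t$. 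The second piece is controlled separately by \pref{lem:general_components_2}. Log-convexity of pseudo-moments does not substitute for this: it interpolates between $m_0 = 1$ and $m_{2s+2} = t$, which is the wrong pair of anchors for extracting a small factor of $\epsilon$ when $t$ is polynomially small. Without a mechanism that generates a decaying factor depending on the order $s$ (such as the $(1-\epsilon)^k$ threshold), the claimed total bound of $O(\epsilon\,\alpha_1\cdots\alpha_s)\,t$ cannot be recovered.
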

We first prove Proposition~\ref{prop:good_part}. We need the following three lemmas. 
\begin{lemma}\label{lem:spectralnorm_sos_proof1}
	Let $\epsilon \in (0,1/3)$ and $0\le \delta \le \epsilon$. Suppose $0\le \kappa \le \delta \alpha_j$ for every $j\in [s]$, then there exists a SoS proof
	\begin{equation}
	\vdash_{x} x^2 \prod_{j\in [s]} (\alpha_jx^2+ \kappa) \le  \alpha_1\dots \alpha_s\left((1-\epsilon)^s x^2  +  (1+O(\delta))^s x^{2s+2}\right)\nonumber
	\end{equation}

\end{lemma}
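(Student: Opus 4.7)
My plan is to reduce the lemma to a univariate polynomial inequality in $y = x^2 \ge 0$ and then invoke the Markov--Lukacs representation of univariate polynomials non-negative on $[0,\infty)$ to produce the desired SoS certificate.

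First I would factor each $\alpha_j x^2 + \kappa = \alpha_j(x^2 + \delta_j)$ with $\delta_j := \kappa/\alpha_j \in [0,\delta]$. Each factor $x^2 + \delta_j = x^2 + (\sqrt{\delta_j})^2$ is manifestly an SoS in $\R[x]$, so by induction on $j$ I obtain $\vdash_x \prod_{j \in [s]}(x^2 + \delta_j) \le (x^2+\delta)^s$: the inductive step multiplies the previous SoS inequality by the non-negative factor $(x^2 + \delta)$ and uses the trivial inequality $(x^2+\delta) - (x^2+\delta_j) = \delta - \delta_j \ge 0$. After factoring out $\alpha_1 \cdots \alpha_s$, the lemma reduces to proving the SoS statement $\vdash_x x^2(x^2+\delta)^s \le (1-\epsilon)^s x^2 + C\, x^{2s+2}$ with $C = (1+O(\delta))^s$.

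Second, removing the common factor $x^2$ and substituting $y = x^2$, it suffices to establish the scalar inequality $(y+\delta)^s \le (1-\epsilon)^s + C\, y^s$ on $y \ge 0$. At $y = 0$ we use $\delta^s \le (1-\epsilon)^s$, which follows from $\delta \le \epsilon \le 1/3 \le 1-\epsilon$. For general $y > 0$, the tightest admissible value is $C^* := \sup_{y > 0}\bigl[(y+\delta)^s - (1-\epsilon)^s\bigr]/y^s$, attained at the tangent point $y^*$ satisfying $\delta(y^*+\delta)^{s-1} = (1-\epsilon)^s$, at which $C^* = (1+\delta/y^*)^{s-1}$. Asymptotic analysis of the implicit equation for $y^*$ under the hypothesis $\delta \le \epsilon \le 1/3$ gives $y^* + \delta \to 1-\epsilon$ as $s$ grows, so $\delta/y^*$ is bounded above by $\delta/(1-\epsilon-\delta)$, which is $O(\delta)$ with an absolute constant depending only on the margin $\epsilon \le 1/3$; this gives $C^* \le (1+O(\delta))^s$.

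Third, once the pointwise non-negativity of $F(y) := (1-\epsilon)^s + C y^s - (y+\delta)^s$ on $[0,\infty)$ is established, the Markov--Lukacs theorem produces polynomials $p, q$ of degree at most $s/2$ with $F(y) = p(y)^2 + y\,q(y)^2$. Substituting $y = x^2$ (itself a trivial SoS identity) and multiplying by $x^2$ yields the SoS certificate $x^2 p(x^2)^2 + x^4 q(x^2)^2$ in $\R[x]$, of degree at most $2s+2$, as required. The principal obstacle is the quantitative tangency analysis in step two: the large-$s$ limit $\delta/y^* \to \delta/(1-\epsilon-\delta)$ shows that the implicit constant in $O(\delta)$ is unavoidably of the form $\delta/(1-\epsilon-\delta)$, and keeping this uniformly bounded across all $s$ is precisely what requires the hypothesis $\epsilon \le 1/3$ with margin; sharpening this dependence or weakening the hypothesis would require a fundamentally different approach, such as a non-uniform weighted AM-GM over the binomial expansion rather than Markov--Lukacs.
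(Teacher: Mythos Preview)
Your argument is correct, and it ultimately rests on the same key fact the paper uses---that a univariate polynomial nonnegative on $\R$ (or on $[0,\infty)$) admits an SoS representation---but you arrive there by a considerably longer route. The paper's proof simply verifies the original inequality pointwise by splitting into the two cases $x^2 \le 1-2\epsilon$ and $x^2 \ge 1-2\epsilon \ge 1/3$: in the first case each factor satisfies $\alpha_j x^2 + \kappa \le \alpha_j(1-2\epsilon) + \delta\alpha_j \le (1-\epsilon)\alpha_j$, and in the second each factor satisfies $\alpha_j x^2 + \kappa \le \alpha_j x^2(1+3\delta)$; this already shows the univariate polynomial is nonnegative everywhere, and the SoS certificate follows immediately. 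Your preliminary SoS reduction to $(x^2+\delta)^s$ and subsequent tangency optimisation to pin down $C^*$ are correct but unnecessary for the stated lemma, since only a $(1+O(\delta))^s$ bound is needed. One small point: your bound $\delta/y^* \le \delta/(1-\epsilon-\delta)$ is asserted via ``asymptotic analysis as $s$ grows'', but you actually need it uniformly in $s \ge 2$; fortunately it does hold for every such $s$, since the implicit equation gives $y^*+\delta = (1-\epsilon)^{s/(s-1)}\delta^{-1/(s-1)} \ge 1-\epsilon$ directly from $(1-\epsilon)/\delta \ge 1$.
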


\begin{proof}
	
	Since this is a univariate polynomial inequality, it suffice to show that it's true for every $x$, which will imply that there is also a SoS proof. 
	For $x\in \R$ such that $x^2\le 1-2\epsilon$, we have that 
	\begin{align}
	x^2 \prod_{j\in [s]} (\alpha_jx^2+ \kappa) &\le x^2 \prod_{j\in [s]}  (1-\epsilon)\alpha_j \tag{by $\epsilon\ge \delta$ and $\delta \alpha_j \ge \kappa$} \\
	& \le (1-\epsilon)^s \alpha_1\dots \alpha_sx^2 \nonumber
	\end{align}
	For $x\in \R$ such that $x^2 \ge 1-2\epsilon\ge 1/3$, we have that 
	\begin{align}
	x^2 \prod_{j\in [s]} (\alpha_jx^2+ \kappa) &\le x^2 \prod_{j\in [s]}  \alpha_j (1 + O(\delta))x^2\tag{by $x^2 \ge 1/3$ and $\delta \alpha_j \ge \kappa$} \\
	& \le \alpha_1\dots \alpha_s (1+O(\delta))^s x^{2s+2}\nonumber
	\end{align}
	Hence we obtain a proof for the nonnegativity of the target polynomial. It is known that every nonnegative univariate polynomial admits a sum-of-squares proof. %
	Therefore the inequality above has a sum-of-squares proof. 
\end{proof}

\begin{lemma}\label{lem:general_components_2}
	In the setting of Theorem~\ref{thm:general_components}, let $\tau = s\sqrt{s\log d}$, $\kappa = O(\sqrt{s\log d})$. Conditioned on the event that $\alpha_1,\dots, \alpha_j \ge \tau$, we have %
	\begin{align}
	\pE\left[\inner{a,u}^2\prod_{j\in [s]} (\alpha_j\inner{a,u}^2+\kappa)\right] %
	& \le \alpha_1\dots \alpha_s \cdot O(\pE\left[\inner{a,u}^{2s+2}\right] )\mper \nonumber%
	\end{align}
\end{lemma}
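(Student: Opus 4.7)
The plan is to prove Lemma~\ref{lem:general_components_2} by directly applying the polynomial inequality from Lemma~\ref{lem:spectralnorm_sos_proof1} with the substitution $x = \iprod{a,u}$, taking pseudo-expectations, and then using the hypothesis on $\pE[\iprod{a,u}^{2s+2}]$ from Theorem~\ref{thm:general_components} to absorb the ``bad'' linear term into the target $(2s+2)$-th moment.

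Concretely, I will first verify that Lemma~\ref{lem:spectralnorm_sos_proof1} applies. Since $\alpha_j \ge \tau = s\sqrt{s\log d}$ for every $j$ (by conditioning) and $\kappa = O(\sqrt{s\log d})$, we have $\kappa/\alpha_j \le O(1/s)$, so we may set $\delta = O(1/s)$. I will then choose the free parameter of Lemma~\ref{lem:spectralnorm_sos_proof1} to be $\epsilon_1 = c\epsilon$ for a small absolute constant $c$ (recycling the ambient Theorem~\ref{thm:general_components} accuracy parameter). By the assumed lower bound $s \ge C\log(\sigma/\epsilon)/\epsilon$, for $C$ large enough we have $\epsilon_1 \ge O(1/s)$ (so the hypothesis $\delta \le \epsilon_1$ holds), and furthermore $(1-\epsilon_1)^s \le e^{-c\epsilon s} \le e^{-cC\log\sigma} \le 1/\sigma$. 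Also $(1+O(\delta))^s = (1+O(1/s))^s = O(1)$.

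Substituting $x = \iprod{a,u}$ into Lemma~\ref{lem:spectralnorm_sos_proof1} and taking pseudo-expectations (valid since the pseudo-distribution has degree $\ge s$ and the SoS proof in Lemma~\ref{lem:spectralnorm_sos_proof1} needs only degree $2s+2$, which is within the pseudo-distribution's degree), I obtain
\begin{align*}
\pE\Brac{\iprod{a,u}^2 \prod_{j=1}^s (\alpha_j \iprod{a,u}^2 + \kappa)}
&\le \alpha_1\cdots\alpha_s \Paren{(1-\epsilon_1)^s \pE[\iprod{a,u}^2] + (1+O(\delta))^s \pE[\iprod{a,u}^{2s+2}]}\\
&\le \alpha_1\cdots\alpha_s \Paren{\tfrac{1}{\sigma}\cdot \pE[\iprod{a,u}^2] + O(1)\cdot \pE[\iprod{a,u}^{2s+2}]}\mper
\end{align*}

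Finally, to eliminate the $\pE[\iprod{a,u}^2]$ term, I will use the bound $\pE[\iprod{a,u}^2] = a^\trans \pE[uu^\trans] a \le \Norm{\pE[uu^\trans]}$ combined with the hypothesis of Theorem~\ref{thm:general_components} that $\pE[\iprod{a,u}^{2s+2}] \ge \Omega(1/\sigma) \cdot \Norm{\pE[uu^\trans]}$. This yields $\tfrac{1}{\sigma}\pE[\iprod{a,u}^2] \le \tfrac{1}{\sigma}\Norm{\pE[uu^\trans]} \le O(1)\cdot \pE[\iprod{a,u}^{2s+2}]$, giving the desired bound $\alpha_1\cdots\alpha_s \cdot O(\pE[\iprod{a,u}^{2s+2}])$.

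I don't expect any serious obstacle here since the work is essentially bookkeeping: the real content is already encoded in Lemma~\ref{lem:spectralnorm_sos_proof1} (the univariate SoS inequality separating the ``small $x^2$'' and ``large $x^2$'' regimes) and in the hypothesis of Theorem~\ref{thm:general_components} (which connects $\pE[\iprod{a,u}^2]$ to $\pE[\iprod{a,u}^{2s+2}]$ via the overcompleteness $\sigma$). The only mildly delicate point is confirming that $s = \Omega(\log(\sigma)/\epsilon)$ is large enough simultaneously to make $(1-\epsilon_1)^s \le 1/\sigma$ and to allow $\delta \le \epsilon_1$—both follow from the hypothesis $s \ge O(1/\epsilon)\log(\sigma/\epsilon)$ stated in Theorem~\ref{thm:general_components}.
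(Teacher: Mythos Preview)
Your proposal is correct and essentially identical to the paper's proof: both apply Lemma~\ref{lem:spectralnorm_sos_proof1} with $x=\iprod{a,u}$, take pseudo-expectations, and then absorb the $(1-\epsilon)^s\pE[\iprod{a,u}^2]$ term using $(1-\epsilon)^s \le O(1/\sigma)$ together with the hypothesis $\pE[\iprod{a,u}^{2s+2}] \ge \Omega(1/\sigma)\Norm{\pE[uu^{\trans}]}$. You are in fact slightly more careful than the paper in explicitly verifying the parameter constraints $\delta = O(1/s) \le \epsilon_1$ and $(1-\epsilon_1)^s \le 1/\sigma$ from the assumption $s \ge O(1/\epsilon)\log(\sigma/\epsilon)$.
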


\begin{proof}
	By Lemma~\ref{lem:spectralnorm_sos_proof1}, we have, 
	\begin{align}
	\pE\left[\inner{a,u}^2\prod_{j\in [s]} (\alpha_j\inner{a,u}^2+\kappa)\right] &\le \alpha_1\dots \alpha_s \left((1-\epsilon)^s\pE[\inner{a,u}^2] + (1+O(1/s))^s \pE\left[\inner{a,u}^{2s+2}\right] \right)\nonumber \\
	& \le \alpha_1\dots \alpha_s \cdot O(\pE\left[\inner{a,u}^{2s+2}\right] )\tag{by $(1-\epsilon)^s \le \epsilon/\sigma$ and $\pE\left[\inner{u,a}^{2s+2}\right]\ge \frac{1}{\sigma}\Norm{\pE[uu^{\trans}}$}
	\end{align}
\end{proof}

\begin{proof}[Proof of Proposition~\ref{prop:good_part}]
	We have 
	\begin{align}
	\E_{g'}\left[\pE\left[p_g(u)\inner{u,a}^2\right]\mid \alpha_1,\dots, \alpha_s\right] & = 	\E_{g'}\left[\pE\left[\prod_{1\le j\le s} (\alpha_j \inner{a,u}^{\kk} + \beta_j)\inner{u,a}^2\right]\mid \alpha_1,\dots, \alpha_s\right] \nonumber\\
	& = \pE\left[\prod_{j\in [s]}(\alpha_j\inner{a,u}^2 + \E[\beta_j])\cdot \inner{u,a}^2\right] \tag{by  linearity of pseudo-expectation and independence of $g_1',\dots, g_s'$} \\
	& = \alpha_1\dots \alpha_s \pE[\inner{a,u}^{2s+2}]
	\end{align}
	Moreover, we bound the variance,
	\begin{align}
	 \E_{g'}\left[\left(\pE\left[p_g(u)\inner{u,a}^2\right]\right)^2\mid \alpha_1,\dots, \alpha_s\right]  & \le 		\E_{g'}\left[\pE\left[p_g(u)^2\inner{u,a}^4\right]\mid \alpha_1,\dots, \alpha_s\right]\nonumber \\
	& = \pE\left[\prod_{j\in [s]}\E_{g_j}\left[\alpha_j\inner{a,u}^2 + \beta_j\right]^2\cdot \inner{u,a}^4\right] \nonumber\\ %
	& = \pE\left[\prod_{j\in [s]}(\alpha_j^2\inner{a,u}^4 + 1)\cdot \inner{u,a}^4\right] \nonumber\\
		& \le  \pE\left[\prod_{j\in [s]}(\alpha_j^2\inner{a,u}^2 + 1)\cdot \inner{u,a}^2\right] \tag{by $\inner{u,a}^2\le 1$}\\
	& = \alpha_1\dots \alpha_s \pE[\inner{a,u}^{2s+2}] \tag{By Lemma~\ref{lem:general_components_2}}
	\end{align}
	Therefore, by Paley-Zygmund inequality, we have that with probability 
	\begin{equation}
	\Pr_{g'}\left[\pE\left[p_g(u)\inner{u,a}^2\right] \ge 0.9\alpha_1\dots \alpha_s \pE[\inner{a,u}^{2s+2}]\mid \alpha_1, \dots, \alpha_s\right]\ge \frac{1}{100}\alpha_1\dots \alpha_s \pE[\inner{a,u}^{2s+2}] \ge \Omega(1/n)\mcom\nonumber
	\end{equation}
	which completes the proof. 
\end{proof}
Towards proving Proposition~\ref{prop:noise_part}, we start with the following Lemma. 
\begin{lemma}\label{lem:general_components_matrix_SoS}
	Let $\e>0$, $k\in \N$, $a\in \R^d$ with $\norm{a}=1$ and $\cA=\{\norm{u}^2\le 1\}$.
	Then, there exists a matrix sum-of-squares proof,
	\begin{displaymath}
	\cA \vdash_{u,1/\e} \Paren{\langle  a,u \rangle^{2k} - (1-\e)^k} \cdot \dyad{(\Id_{-1}u)}
	\preceq O(\e) \cdot \langle  a,u \rangle^{2k+2}\cdot \Id\mper
	\end{displaymath}
\end{lemma}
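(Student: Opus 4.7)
The plan is to reduce the claimed matrix sum-of-squares inequality to a scalar polynomial inequality verifiable pointwise, and then lift it back to the matrix setting using the rank-one structure of $\dyad{(\Id_{-1}u)}$ together with the positive semidefinite relations implied by $\cA$.

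Writing $x = \iprod{a,u}$, $w = \Id_{-1}u$, and $q(x) := x^{2k} - (1-\e)^k$, the target becomes $\cA \vdash_{u,1/\e} \{O(\e)\,x^{2k+2}\,\Id - q(x)\,\dyad{w} \succeq 0\}$. The underlying scalar fact is: for every $s \in [0, 1-x^2]$ and $x \in [-1,1]$,
\begin{equation*}
q(x)\cdot s \;\le\; O(\e)\, x^{2k+2}.
\end{equation*}
This I verify pointwise: if $x^2 \le 1-\e$ then $q(x) \le 0$ and the left-hand side is nonpositive; if $x^2 > 1-\e$ then $1-x^2 < \e$ and $x^{2k+2} \ge (1-\e)\, x^{2k}$, hence $q(x)\cdot s \le x^{2k}(1-x^2) \le \e\, x^{2k} \le \frac{\e}{1-\e}\,x^{2k+2}$.

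To lift this to the matrix setting I plan to use three ingredients: (i) $\cA \vdash \{\dyad{w} \preceq (1-x^2)\,\Id\}$, obtained by combining the identity $\|w\|^2 = \|u\|^2 - x^2$ with $\cA$ and the spectral bound $\dyad{w} \preceq \|w\|^2 \Id$ from Lemma~\ref{lem:sos_psd_1}; (ii) the factorization $q(x) = (x^2-(1-\e))\,r(x)$ with $r(x) := \sum_{i=0}^{k-1} (1-\e)^{k-1-i} x^{2i}$ manifestly a sum of squares; and (iii) the rewriting $x^2 - (1-\e) = \e - (1-\|u\|^2) - \|w\|^2$. Substituting (ii)--(iii) into $q(x)\dyad{w}$ splits it into an $\e\, r(x)\dyad{w}$ contribution, which appears in $M$ with the sign allowing control via (i), plus two auxiliary pieces $r(x)(1-\|u\|^2)\dyad{w}$ and $r(x)\|w\|^2\dyad{w} = r(x)(\dyad{w})^2$ that are $\succeq 0$ modulo $\cA$ (by the matrix product rules, e.g.\ \pref{lem:tensor_product_psd}) and enter with the sign that cancels the remaining error, so that the leading $O(\e)\,x^{2k+2}\,\Id$ term dominates.

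The main obstacle is producing an explicit SoS certificate of the claimed degree $O(1/\e)$. The scalar inequality is tight to leading order at $x^2 \approx 1-\e$, where the gap between $q(x)(1-x^2)$ and $O(\e)\,x^{2k+2}$ shrinks, forcing the certifying SoS polynomial to interpolate the near-equality; this is the source of the $1/\e$ factor. One concrete route is to apply a classical univariate nonnegativity representation to the polynomial $O(\e)\,x^{2k+2} - q(x)(1-x^2)$, nonnegative on $[-1,1]$, expressing it as $p_0(x) + p_1(x)(1-x^2)$ with $p_0, p_1$ sums of squares, and then lifting by substituting $(1-x^2)\Id = \dyad{w} + [(1-x^2)\Id - \dyad{w}]$ to collect matrix-SoS terms. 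Alternatively, since the inequality holds for every pseudo-distribution satisfying $\cA$ (each quadratic form $v^\top M v$ reduces to the scalar inequality with $s = \iprod{v,w}^2/\|v\|^2 \le 1-x^2$), the completeness of sum-of-squares proofs (\pref{lem:completeness_sos}) yields the required certificate up to arbitrarily small additive slack.
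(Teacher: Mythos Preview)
Your scalar analysis is correct, but the lifting to a matrix sum-of-squares certificate has a real gap, and none of the three routes you sketch actually closes it.

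For the factorization route, you arrive at
\[
M = C\e\, x^{2k+2}\Id \;-\; \e\, r(x)\,\dyad{w} \;+\; r(x)(1-\|u\|^2)\,\dyad{w} \;+\; r(x)\,\|w\|^2\,\dyad{w},
\]
with the last two terms matrix-SoS modulo $\cA$.  But the remaining piece $C\e\,x^{2k+2}\Id - \e\,r(x)\dyad{w}$ is \emph{not} even pointwise PSD: take $u$ a unit vector orthogonal to $a$, so $x=0$ and $w=u$, and it becomes $-\e\,r(0)\,\dyad{u} = -\e(1-\e)^{k-1}\dyad{u} \prec 0$.  So there can be no SoS certificate for that piece alone, and ``the auxiliary pieces cancel the remaining error'' is not a mechanism that the matrix-SoS calculus supports here.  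Equivalently, following your chain $q(x)\dyad{w} \preceq \e\,r(x)\dyad{w} \preceq \e\,r(x)(1-x^2)\Id$ and then asking for $\e\,r(x)(1-x^2) \le O(\e)\,x^{2k+2}$ fails outright at $x=0$.
Your Route~A has the same obstruction from the other side: writing $M = q(x)\bigl[(1-x^2)\Id - \dyad{w}\bigr] + p_0(x)\Id$, the bracket is matrix-SoS under $\cA$, but $q(x)$ takes negative values, so the product is not expressible as $\sum q_i\,\dyad{v_i}$ with $\cA\vdash q_i\ge 0$.  Route~B (completeness) gives no handle on the degree, and in any case a pseudo-distribution is not a single point, so the reduction ``$s=\langle v,w\rangle^2/\|v\|^2\le 1-x^2$'' is not valid.

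The paper's proof hinges on one idea you are missing: replace the sign-indefinite multiplier $q(x)$ by an SoS majorant that also vanishes at $0$.  With $r=\lceil 1/\e\rceil$, the univariate inequality
\[
x^{2k}-(1-\e)^k \;\le\; 3\,x^{2k+2r}
\]
holds for all $x\in\R$ (check $x^2<1-\e$ and $x^2\ge 1-\e$ separately, using $(1-\e)^r\ge 1/3$), hence is SoS.  Now every step is a legitimate matrix-SoS manipulation:
\[
q(x)\,\dyad{w}\;\preceq\;3\,x^{2k+2r}\,\dyad{w}\;\preceq\;3\,x^{2k+2r}(1-x^2)\,\Id\;\preceq\;O(\e)\,x^{2k+2}\,\Id,
\]
the middle step using (i) together with $3x^{2k+2r}$ being SoS, and the last step using the univariate SoS fact $x^{2r-2}(1-x^2)\le O(1/r)=O(\e)$ (this is where the $1/\e$ in the degree enters).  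The crucial difference from your $\e\,r(x)$ is that $3x^{2k+2r}$ vanishes at $x=0$, so the chain never has to compare a positive scalar against $x^{2k+2}$ there.
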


\begin{proof}
	Let $r=1/\e$.
	We may assume $r$ is an integer and that $\epsilon>0$ is small enough such that $(1-\e)^r\ge 1/3$.
	Then, the univariate polynomial inequality $x^{2k}-(1-\epsilon)^k\le 3 x^{2k} \cdot x^{2r}$ holds for all $x\in \R$.
	(For $x^2< 1-\e$, the left-hand side is negative.
	For $x^2 \ge 1-\e$, the right-hand side is at least $x^{2k}$ because $3 x^{2r}\ge x^{2r}/(1-\e)^r\ge 1$.)
	It follows that there exists a sum-of-squares proof
	\begin{equation}
	\label{eq:2}
	\vdash_x x^{2k}-(1-\epsilon)^k\le 3 x^{2k} \cdot x^{2r}.
	\end{equation}
	Similarly, there exists a sum-of-squares proof (see the texts below equation~\eqref{eq:8} as well)
	\begin{equation}
	\label{eq:3}
	\vdash _x x^{2r}(1-x^2)\le O(1/r)\cdot x^2=O(\e)\cdot x^2.
	\end{equation}
	Therefore,
	\begin{align}
	\cA \vdash_{u,1/\e} & \Paren{\langle  a,u \rangle^{2k} - (1-\e)^k} \cdot \dyad{(\Id_{-1}u)} \notag\\
	& \preceq 3 \langle  a,u \rangle^{2k+2r} \cdot \dyad{(\Id_{-1}u)} \tag {by \pref{eq:2}}\\
	& \preceq 3 \langle  a,u \rangle^{2k+2r} \cdot (1-\langle  a,u \rangle^2) \Id \tag {because  $\vdash \dyad v\preceq \norm{v}^2 \Id$ by \pref{lem:matrix-sos}}\\
	& \preceq O(\e) \cdot \langle  a,u \rangle^{2k+2}\cdot \Id\mper \tag{by \pref{eq:3}}
	\end{align}
\end{proof}

\begin{proof}[Proof of Proposition~\ref{prop:noise_part}] We only bound $\Norm{\pE\left[p_g(u)\Id_{-1}uu^{\trans}\Id_{-1}\right]}$. The other term can be controlled similarly and the detailed proof are left to the readers. 
	Let $\alpha_S = \prod_{j\in S}\alpha_j$ and $\beta_S = \prod_{j\in S} \beta_j$. By the fact that $\inner{g_j, u\tp{\kk}} = \alpha_j \inner{a,u}^{\kk} + \beta_j$, we have, %
	\begin{align}
	p_g(u)\Id_{-1}uu^{\trans}\Id_{-1}& = \sum_{S\subset [s], L= S^c}\underbrace{\alpha_S\beta_L\inner{a,u}^{\kk |S|}\Id_{-1}uu^{\trans}\Id_{-1}}_{W_S(u)}\mcom\label{eqn:eqn48}
	\end{align}
	where each summand is denoted by $W_S(u)$. 
	Observe that $W_S$ can be written as 
	\begin{align}
		W_S(u) & =\left(\Id\otimes \Id \otimes \underbrace{g_{j_1}^{'\trans}\otimes \dots \otimes g_{j_r}^{'\trans}}_{\{j_1,\dots, j_r\} = T}\right) \cdot \left( \inner{a,u}^{\kk |S|} (\Id_{-1}u)\otimes (\Id_{-1}u) \otimes u\tp{|L|}\right)\nonumber	\end{align}
	Then by Theorem~\ref{thm:matrix_concentration_multi_mode}, with probability at least $1-2^{-s}d^{-\Omega(1)}$ over the choice of $g_1',\dots,g_s'$ we have , 
	\begin{align}
		\Norm{\pE\left[W_S\right]} & \le  \alpha_S O(s\log d)^{|L|/2}\cdot \max_{J\in [|L|+2]: 1\in J, 2\not\in J}\Norm{\pE\left[\inner{a,u}^{\kk |S|} (\Id_{-1}u)\otimes (\Id_{-1}u) \otimes u\tp{|L|}\right]}_{J,J^c}\nonumber\\
		& \le \alpha_S O(s\log d)^{|L|/2}\cdot \Norm{\pE\left[\inner{a,u}^{\kk |S|} (\Id_{-1}u)\otimes (\Id_{-1}u)\right]}\tag{Lemma~\ref{lem:spectral1} and $\|u\|^2\le 1$.}		%
	\end{align}
	By Lemma~\ref{lem:general_components_matrix_SoS} we have that 
	\begin{displaymath}
	\{\|u\|^2 \le 1\} \vdash_{u,1/\e} \Paren{\langle  a,u \rangle^{2|S|} - (1-\e)^{|S|}} \cdot \dyad{(\Id_{-1}u)}
	\preceq O(\e) \cdot \langle  a,u \rangle^{2|S|+2}\cdot \Id\mper
	\end{displaymath}
	Therefore taking pseudo-expectation, we obtain that 
		\begin{align}
			\pE\left[\inner{a,u}^{\kk |S|} (\Id_{-1}u)\otimes (\Id_{-1}u)\right]\preceq \pE\left[ (1-\epsilon)^{|S|} (\Id_{-1}u)\otimes (\Id_{-1}u)\right] + O(\epsilon)\pE\left[\inner{a,u}^{\kk |S|+2}\Id\right]\label{eqn:eqn47}
		\end{align}

	Then using the fact that
	$$\Norm{\pE\left[ (1-\epsilon)^{|S|} (\Id_{-1}u)\otimes (\Id_{-1}u)\right]} = \Norm{\Id_{-1}\pE\left[ (1-\epsilon)^{|S|} uu^{\trans}\right]\Id_{-1}} \le (1-\epsilon)^{|S|} \Norm{\pE\left[ uu^{\trans}\right]}\mcom$$
	and equation~\eqref{eqn:eqn47}, we have
	\begin{align}
	\Norm{\pE\left[W_S\right]}
	& \le \alpha_S O(s\log d)^{|L|/2}\cdot  \left((1-\epsilon)^{|S|} \Norm{\pE\left[ uu^{\trans}\right]} + O(\epsilon)\pE\left[\inner{a,u}^{\kk |S|+2}\Id\right]\right) \label{eqn:eqn49} %
	\end{align}
	
	Taking union bound over all subset $S$, with probability at least $1-d^{-\Omega(1)}$, we have equation~\eqref{eqn:eqn49} holds for every $S\subset [s]$. Taking the sum of equation~\eqref{eqn:eqn49} over $S$, we conclude that %
	\begin{align}
	& \Norm{\pE\left[p_g(u)\Id_{-1}uu^{\trans}\Id_{-1}\right]}\le  \sum_S\Norm{\pE\left[W_S\right]} \tag{by equation~\eqref{eqn:eqn48}}\\ 
	& \le\Norm{\pE\left[ uu^{\trans}\right]} \cdot  \prod_{j\in [s]} ((1-\epsilon)\alpha_j + O(\sqrt{s\log d})) + O(\epsilon)\pE\left[\inner{a,u}^2\prod_{j\in [s]} (\alpha_j\inner{a,u}^2+O(\sqrt{s\log d}))\right] \label{eqn:eqn51}%
	\end{align}
	
	When $\alpha_j \ge \tau$, where $\tau = s\sqrt{s\log d}$ and $s = \frac{c}{\epsilon}\log (\sigma/\epsilon)$  where $c$ is a sufficiently large absolute constant, using the fact that $\Norm{\pE\left[ uu^{\trans}\right]} \le \sigma/n$, we have $$\Norm{\pE\left[ uu^{\trans}\right]} \prod_{j\in [s]} ((1-\epsilon)\alpha_j + O(\sqrt{s\log d}))  \le (1-\epsilon/2)^s\alpha_1\dots \alpha_s\le \frac{\epsilon}{n}\alpha_1\dots \alpha_s\mper$$ Regarding the second term on the RHS of~\eqref{eqn:eqn51}, by Lemma~\ref{lem:general_components_2}, we have that when $\alpha_j \ge \tau$, 
	\begin{align}
	\pE\left[\inner{a,u}^2\prod_{j\in [s]} (\alpha_j\inner{a,u}^2+O(\sqrt{s\log d}))\right] %
	& \le \alpha_1\dots \alpha_s \cdot O(\pE\left[\inner{a,u}^{2s+2}\right] ) \nonumber%
	\end{align}
	Therefore, plugging in the two bounds above into equation~\eqref{eqn:eqn51}, we obtain that 
	\begin{align}
	\Norm{\pE\left[p_g(u)\Id_{-1}uu^{\trans}\Id_{-1}\right]}\le O(\epsilon\alpha_1\dots \alpha_s)\cdot 	\pE\left[\inner{a,u}^{2s+2}\right]\mper\nonumber\end{align}

\end{proof}

\subsection{Finding all components}

In this section we prove Theorem~\ref{thm:general-tensor-decomposition-general-components} (restated below) using iteratively the rounding scheme that is developed in the subsection before. 
\restatetheorem{thm:general-tensor-decomposition-general-components}

\begin{algorithm}
	\caption{Tensor decomposition with general components}
	\label{alg:tensor-general-general-components}
	
	\textbf{Parameters:}
	numbers $\e>0,\, n\in \N$.
	
	\textbf{Given: } $2k$-th order tensor $T$

	\textbf{Find:}
	Set of vectors $S =\Set{\hat{a}_1,\ldots,\hat{a}_{n'}}\subset \R^d$ with $n'\le n$.	
	
	\begin{itemize}
		\item Let $s = k-1$, $\ell = O(s)$, and $\eta = O(\epsilon)^{1/2}$. 
			\begin{equation}
			\cA =
			\Set{\|u\|^2 = 1}	\cup 
			\Set{\vbig
				\inner{T, u^{2s+2}} \ge 2/3 
			}\mper
			\end{equation}
		\item For $i$ from $1$ to $n$, do the following:
		\begin{enumerate}
			\item Compute a $\ell$-degree pseudo-distribution $D(u)$ over $\R^d$ that
			satisfies the constraints
			\begin{align}
				& \cA %
				\textup{ and } \Norm{\pE\nolimits_{D(u)} \dyad{u}} \le \frac{\sigma}{n-i+1}\mper
				\label{eq:47}
			\end{align}
			\item Repeat $T =  d^{O(s^3)}$ rounds of the following: 
			\begin{itemize}
				\item Choose standard Gaussian vectors $g_1,\ldots,g_s\sim \cN(0,\Id_{d^2})$ and compute the top eigenvectors $a^{\star}$ of the following matrix, 
				\begin{equation}
				\pE_{D(u)} \inner{g_s,u^{\otimes 2}}\dots \inner{g_s,u^{\otimes 2}}\cdot \dyad{u} \in \R^{d\times d}\mper
				\end{equation}
				\item Check if $a^{\star}$ satisfies $\cA$. If yes, let $\hat{a}_i = \hat{a}^\star$ and $S\leftarrow S\cup \{\hat{a}_i\}$, add to $\cA$ the constraint $\{\langle  u, \hat{a}_i \rangle^2 \le 1-5\eta\}$, and break the (inner) loop. %
			\end{itemize}
			\item If no new $\hat{a}_i$ is found in the previous step, stop the algorithm. 
		\end{enumerate}
		
	\end{itemize}
\end{algorithm}

\begin{proof}[Proof of \pref{thm:general-tensor-decomposition-general-components}]
	We analyze \pref{alg:tensor-general-general-components}.
	Let $\eta = c_0\eps^{1/2}$ where $c_0$ is a large enough absolute constant. Let $\mathcal{A}'$ be the constraint that $\sum_i {a_i,u}^{2s+2}\ge 1/3$. Then we have $\mathcal{A}\vdash \mathcal{A}'$. 
	We first observe that as long as a vector $a$ satisfies $\cA$, then $a$ has to be $O(\epsilon^{1/2})$-close to one of the $a_i$'s up to sign flip. This is because 
	\begin{eqnarray}
	1/3 \le \inner{T,u^{\otimes 2s+2}}-1/3\le \inner{u,a_i}^{2s+2}\le \max_{i} \inner{a_i,u}^{2s} \left(\sum_i \inner{a_i,u}^2\right)\le \sigma \max_{i} \inner{a_i,u}^{2s}\mper \nonumber
	\end{eqnarray}
	That is, we can always check whether $a^{\star}$ is what we wanted as in the second bullet of step 2. 
	Therefore, it remains to show that as long as there exists $a_j$ that is $\eta^{1/2}$-far away (up to sign flip) to the set $S$, we will find a new vector after Step 2 in the next iteration. %

	We assume that after iteration $i_0$,  %
the set $W = \{a_j: \forall i\in [i_0], \inner{a_j, \hat{a}_{i}}^2 \le 1-\eta\}$ is not empty. We will show that after iteration $i_0+1$, we will find a new vector in $W$ up to $O(\epsilon)^{1/2}$ error. %
	We claim first that in the $i_0+1$ iteration there exists a pseudo-distribution $D(u)$ that satisfies ~\eqref{eq:47}. Indeed, this is because the actual uniform distribution over the finite set $W$ satisfies constraint~\eqref{eq:47}. Here we used the fact that for every $j\in [n]$ we have $\inner{T, a_j^{\otimes 2k}}\ge \inner{\sum_{i} a_i^{\otimes 2k}, a_j^{\otimes 2k}} - 1/3 \ge \inner{a_j,a_j}^{k} - 1/3 = 2/3$. 
	
	Since constraints~\eqref{eq:47} enforce that for every $i\le i_0$ pseudo-distribution $D(u)$ satisfies that $\inner{u,\hat{a}_i}^2 \le 1- \eta$, and moreover, we have $\|a_i-\tau_i\hat{a}_i\|^2 \le O(\epsilon)$ for some $\tau_i\in \{-1,+1\}$, by Lemma~\ref{lem:helper_lemma1}, we conclude that $D(u)$ also satisfies the constraint that $\inner{u,a_i}^2\le 1-\eta/2$ (here we use the fact that $\eta = c_0\epsilon$ with large enough constant $c_0$). These implies that $D(u)$ satisfies that $\sum_{i=1}^{i_0} \inner{u,a_i}^{2s+2}\le (1-\eta)^{2s}\sum_{i=1}^{i_0}\inner{u,a_i}^2$. Therefore, we have $\pE\left[\sum_{i=1}^{i_0} \inner{u,a_i}^{2s+2}\right]\le (1-\eta)^{2s}\pE\left[ \inner{u,a_i}^2\right]\le \sigma^{-1}\cdot \sigma/(n-i_0+1)\le 1/3$. Thus by constraint~\eqref{eq:47} we have $\pE\left[\sum_{i > i_0} \inner{u,a_i}^{2s+2}\right]\ge 1/3$. Therefore, there exists $i^{\star} > i_0$ such that $\pE\left[\inner{u,a_{i^{\star}}}^{2s+2}\right]\ge \frac{1}{3(n-i_0+1)}$. Then by Theorem~\ref{thm:general_components} we obtain that with $1/d^{O(s^3)}$ probability, in each step of the inner loop we can find $\hat{a}_i$ that is $O(\epsilon^{1/2})$-close to $a_{i^{\star}}$, and therefore at the end of the inner loop with high probability we found a new vector $\hat{a}_{i_0+1}$ which is close $O(\epsilon^{1/2})$-close to $a_{i^{\star}}$.

\end{proof}

\section{Fast orthogonal tensor decomposition without sum-of-squares}

\label{sec:nosos-orthogonal}

In this section, we give an algorithm (see ~\pref{thm:orthogonal-alg}) with quasi-linear running time (in the size of the input) that finds a component of an orthogonal $3$-tensor in the presence of spectral norm error at most $1/\log d$ . 
The previous best known algorithm for orthogonal $3$-tensor is by~\cite[Theorem 5.1]{DBLP:journals/jmlr/AnandkumarGHKT14} which takes similar runtime and tolerates $1/d$ error in injective norm. It is known that for any symmetric tensor $E$ the spectral norm can be bounded by injective norm with multiplicative factor $\sqrt{d}$, that is, $\norm{E}_{\{1\}\{2,3\}}\le \sqrt{d}\cdot \norm{E}_{\{1\}\{2\}\{3\}}$. Therefore, our robustness guarantee is at least $\sqrt{d}$ factor better than tensor power method. 

The key step of Algorithm is the following simple Theorem that finds a single component. It is in fact an analog of  Theorem~\ref{thm:reproducing_bks} without sum-of-squares. Here we analyze the success probability much more carefully for achieving quasi-linear time. 

\begin{theorem}
\label{thm:orthogonal-spectral}
  Let $a_1,\ldots,a_n\in \R^d$ be orthonormal vectors.
  Let $T\in (\R^d)^{\otimes 3}$ be a symmetric $3$-tensor such that $\lVert T-\sum_i a_i ^{\otimes 3}\rVert_{\{1\},\{2,3\}} \le \tau$.
  Let $g$ be a standard $d$-dimensional Gaussian vector.
  Let $\delta\in [0,1]$.
  Then, with probability $1/(d^{1+\delta}(\log d)^{O(1)})$ over the choice of $g$, the top eigenvector of the following matrix is $O(\tau/\delta)$-close to $a_1$,
  \begin{displaymath}
    M_g := (\Id \otimes \Id \otimes \transpose g) T\,.
  \end{displaymath}
  At the same time, the ratio between the top eigenvalue and the second largest eigenvalue in absolute value is at least $1+\delta/3-O(\tau)$.
\end{theorem}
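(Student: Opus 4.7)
The plan is a signal-plus-noise decomposition of $M_g$. Writing $g_i' = \iprod{g, a_i}$, by orthonormality the $g_i'$ are i.i.d.\ standard Gaussians (jointly, after extending $a_1, \ldots, a_n$ to an orthonormal basis), and
\begin{equation*}
M_g \;=\; \sum_{i=1}^n g_i' \cdot \dyad{a_i} \;+\; E_g\mcom \qquad E_g \;:=\; \bigl(\Id \otimes \Id \otimes \transpose{g}\bigr)\Bigl(T - \tsum\nolimits_{i=1}^n a_i^{\otimes 3}\Bigr)\mper
\end{equation*}
Since $\norm{T-\sum_i a_i^{\otimes 3}}_{\{1\},\{2,3\}} \le \tau$, the tail form of \pref{thm:matrix_concentration} gives $\norm{E_g}_{\{1\},\{2\}} \le O(\tau\sqrt{\log d})$ except with probability $d^{-C}$ for any fixed constant $C$.

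The heart of the argument is Gaussian anti-concentration on $(g_1',\ldots,g_n')$. Set $t_1 := \sqrt{2(1+\delta)(\log d + C'\log\log d)}$ and $t_2 := \sqrt{2(\log d + C'\log\log d)}$ for a large constant $C'$. The standard Mills-ratio estimate gives $\Pr[g_1' \ge t_1] = \Omega\bigl(1/(d^{1+\delta}(\log d)^{O(1)})\bigr)$. Meanwhile the Gaussian tail $\Pr[|Z|>t_2] \le 2/(d(\log d)^{C'})$ and a union bound over $i = 2,\ldots,n \le d$ give $\max_{i\ne 1}|g_i'| \le t_2$ except with probability $O(1/(\log d)^{C'})$. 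Since $g_1'$ is independent of $(g_2',\ldots,g_n')$, these events combine multiplicatively, and the noise event costs essentially nothing, so all three hold simultaneously with probability $\Omega(1/(d^{1+\delta}(\log d)^{O(1)}))$. The $a_1 \mapsto -a_1$ symmetry lets us absorb the sign and assume $g_1'\ge t_1$ (rather than only $|g_1'|\ge t_1$), so $a_1$ itself is the target direction.

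On this good event the signal matrix $\sum_i g_i' \dyad{a_i}$ has top eigenvector $a_1$, top eigenvalue $g_1' \ge t_1$, and second-largest absolute eigenvalue at most $t_2$, hence spectral gap at least $t_1 - t_2 \ge (\delta/3)\sqrt{2\log d}$ (using $\sqrt{1+\delta}-1 \ge \delta/3$ for $\delta \in [0,1]$). Since this gap is $\Omega(\delta\sqrt{\log d})$ while $\norm{E_g} = O(\tau\sqrt{\log d})$, the rank-1 Davis--Kahan bound (\pref{thn:davis-kahan-rank-1}) applied to $M_g = (\text{signal}) + E_g$ yields that the top eigenvector of $M_g$ is $O(\norm{E_g}/\text{gap}) = O(\tau/\delta)$-close to $a_1$. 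Weyl's inequality gives the top eigenvalue of $M_g$ at least $t_1 - O(\tau\sqrt{\log d})$ and the second-largest absolute eigenvalue at most $t_2 + O(\tau\sqrt{\log d})$, whose ratio is $\sqrt{1+\delta} - O(\tau) \ge 1 + \delta/3 - O(\tau)$, as required.

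The main technical obstacle is the probability accounting: we must lose only a $(\log d)^{O(1)}$ factor beyond the essentially tight $d^{-(1+\delta)}$ rate for a single Gaussian to exceed $\sqrt{2(1+\delta)\log d}$, while simultaneously controlling $n-1$ other coordinates by a union bound and the noise term by matrix concentration. This is the role of the additive $C'\log\log d$ slack built into both $t_1$ and $t_2$: it is just large enough for the union bound and the noise deviation to cost $(\log d)^{O(1)}$, yet small enough that the signal gap degrades by only a $1+o(1)$ factor, so the final eigenvalue ratio still meets the target $1+\delta/3 - O(\tau)$.
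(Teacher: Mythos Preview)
Your proof is correct and follows essentially the same approach as the paper: the signal-plus-noise decomposition $M_g = \sum_i \iprod{g,a_i}\dyad{a_i} + (\Id\otimes\Id\otimes\transpose{g})E$, matrix concentration (\pref{thm:matrix_concentration}) for the noise term, and Gaussian anti-concentration to isolate a spectral gap in the signal. The only differences are cosmetic: the paper sets $t=\sqrt{2\log d}$ and conditions on $\iprod{g,a_1}\ge(1+\delta/3)t$ and $\max_{i\ge 2}|\iprod{g,a_i}|\le t$ (rather than your $t_1,t_2$ with $\log\log d$ slack), and it cites \cite[Lemma~A.1]{HopkinsSSS16} and eigenvalue interlacing where you invoke Davis--Kahan and Weyl.
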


\begin{proof}
  Let $E=T-\sum_i a_i^{\otimes 3}$.
  Then,
  \begin{equation}
    M_g = (\Id \otimes \Id \otimes \transpose {g})E + \sum _{i=1}^n\langle  g,a_i \rangle \cdot a_i^{\otimes 2}\mcom
  \end{equation}
  Since $E$ is symmetric and $\Norm{E}_{\{1\},\{2,3\}}\le \tau$, \pref{thm:matrix_concentration} implies that with probability at least $1-1/d^2$ over the choice of $g$, %
  \begin{equation}
    \label{eq:10}
    \Norm{\vbig (\Id \otimes \Id \otimes \transpose {g}) E }_{\{1\},\{2\}} \le 2(\log d)^{1/2} \tau\mper
  \end{equation}
  Let $t=\sqrt{ 2 \log d}$.
  By the fact that $\langle  g,a_1 \rangle,\ldots,\langle  g,a_n \rangle$ are independent standard Gaussian variables and standard estimates on their cumulative density function, the following event happens with probability at least  $\frac 1{d^{(1+\delta)} \cdot (\log d)^{O(1)}}$
  \begin{equation}
    \label{eq:11}
\iprod{g,a_1} \ge (1+\delta/3)\cdot t ~~\textup{ and }~~\max_{i\in \{2,\ldots,n\}}\abs{\iprod{g,a_i}} \le t
  \end{equation}
  Conditioned on the events in \pref{eq:10} and \pref{eq:11}, we have the following bound on the spectral norms of $M_g$ and $M_g - \delta/3 \cdot  t \cdot a_1^{\otimes 2}$, which implies that the top eigenvector of $M_g$ is $O(\tau / \delta)$-close to $a_1$ (by~\cite[Lemma A.1]{HopkinsSSS16}),
  \begin{align}
    & \Norm{M_g}_{\{1\},\{2\}} - \Norm{M_g - \frac 1 3\delta t \cdot a_1^{\otimes 2}}_{\{1\},\{2\}} \\
    & \ge \Norm{ \sum\nolimits_{i=1}^n \langle  g,a_i \rangle\cdot \dyad {a_i}} - \Norm{ (\langle  g,a_1 \rangle-\frac 1 3\delta t )\cdot \dyad {a_1} + \sum\nolimits_{i=2}^n \langle  g,a_i \rangle\cdot \dyad {a_i}} - 2 (\log n)^{1/2} \tau
      \tag{conditioned on event in \pref{eq:10}}\\
    & \ge \frac 1 3\delta t - 2(\log d)^{1/2} \tau
      \tag{conditioned on event in \pref{eq:11}}\\
    & \ge (1 - O(\tau/\delta)) \cdot \frac 1 3 \delta t\mper
  \end{align}
  The probability that the events of \pref{eq:10} and \pref{eq:11} happen simultaneously is at least
  \begin{displaymath}
    \frac 1 {d^{1+\delta}(\log d)^{O(1)}}-\frac 1{d^2} \ge  \frac 1 {d^{1+\delta}(\log d)^{O(1)}}\mper
  \end{displaymath}
  This bound implies the first part of the theorem. To see the eigengap bound, we first observe that the largest eigenvalue of $M_g$ is at least $\inner{g,a_1} - \Norm{\vbig (\Id \otimes \Id \otimes \transpose {g}) E }_{\{1\},\{2\}} \ge (1+\delta/3)t - 2(\log d)^{1/2} \tau$. On the other hand, by eigenvalue interlacing, the second larges eigenvalue of $M_g$ is bounded by the top eigenvalue of $M_g - \inner{g,a_1}a_1a_1^{\trans} = (\Id \otimes \Id \otimes \transpose {g}) E + \sum _{i=2}^n\langle  g,a_i \rangle \cdot a_i^{\otimes 2}$, which in turn is bounded above by $2(\log d)^{1/2} \tau + t$. Therefore the eigenvalue gap statement follows by recalling $t = \sqrt{2\log d}$. 
\end{proof}

We remark that we can amplify the success probability of the algorithm by running it repeatedly with independent randomness.

\begin{theorem}
\label{thm:orthogonal-alg}
  There exists a randomized algorithm  with running time $d^3\cdot (\log d)^{O(1)}$ that given a symmetric $3$-tensor $T\in (\R^d)^{\otimes 3}$ such that $\norm{T-\sum_{i=1}^n \cramped{a_i^{\otimes 3}}}_{\{1\},\{2,3\}}\le 1/\log d$ for some set of orthonormal vector $\{a_1,\ldots,a_n\}\subseteq \R^d$ outputs with probability $\Omega(1)$ a vector unit $v$ such that
  \begin{displaymath}
    \min_{i\in[n]} \Norm{v-a_i}^2 \le \frac 1 {2^{d}} + \Norm{T-\sum\nolimits_{i=1}^n a_i^{\otimes 3}}_{\{1\},\{2\},\{3\}}\mper
  \end{displaymath}
  Furthermore, there exists a randomized algorithm with running time $d^{1+\omega}\cdot (\log d)^{O(1)}\le O(d^{3.33})$ that given $T$ as before, with probability at least $\Omega(1)$ outputs a set of vectors $\{a_1',\ldots,a_n'\}$ with Hausdorff distance at most $\frac 1 {2^{d}} + \Norm{T-\sum\nolimits_{i=1}^n a_i^{\otimes 3}}_{\{1\},\{2\},\{3\}}$ from $\{a_1,\ldots,a_n\}$.
  Here $\omega$ is the matrix multiplication exponent.
\end{theorem}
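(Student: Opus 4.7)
For the single-vector bound, the algorithm is a natural combination of \pref{thm:orthogonal-spectral} with two levels of power iteration. We draw $g \sim \cN(0, \Id_d)$, form $M_g = (\Id \otimes \Id \otimes g^\trans) T$ by a single tensor-vector contraction (cost $O(d^3)$), apply ordinary power iteration to $M_g$ to extract its top eigenvector, and then \emph{refine} the resulting vector by tensor power iteration $v_{k+1} = T(v_k, v_k, \cdot)/\|T(v_k, v_k, \cdot)\|$ on $T$ itself.  Conditioning on success in \pref{thm:orthogonal-spectral} (with, say, $\delta = 1$), the matrix $M_g$ has a multiplicative eigenvalue gap of $1 + \Omega(1)$ between its top and second eigenvalues, so ordinary power iteration converges in $O(\log d)$ steps of $O(d^2)$ time each, and its output is $O(\tau) = O(\|E\|_{\{1\},\{2,3\}}) = O(1/\log d)$-close to some $a_i$.

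To control the success probability, I use that the events ``the top eigenvector of $M_g$ is close to $a_i$'' are pairwise disjoint across $i$, so the probability of finding \emph{some} $a_i$ in a single trial is at least $n/(d^2 (\log d)^{O(1)})$.  A preliminary PCA step that projects $T$ onto the span of $a_1,\ldots,a_n$ (using $T_{\{1\},\{2,3\}}T_{\{1\},\{2,3\}}^\trans \approx \sum_i \dyad{a_i}$, computable in $O(d^{1+\omega})$ time) reduces to an effective dimension of $n$, which boosts the per-trial success probability to $\Omega(1/(\log d)^{O(1)})$; thus $(\log d)^{O(1)}$ trials give constant overall success probability.  Now, starting from $v_0$ that is $O(1/\log d)$-close to some $a_i$, the tensor power iteration enjoys quadratic convergence with an injective-norm floor: writing $v_k = \alpha_k a_i + w_k$ with $w_k \perp a_i$, orthonormality of the $a_j$'s gives
\[
T(v_k, v_k, \cdot) = \alpha_k^2 a_i + \sum\nolimits_{j \ne i} \langle a_j, v_k\rangle^2 a_j + E(v_k, v_k, \cdot),
\]
where $\|E(v_k, v_k, \cdot)\| \leq \|E\|_{\{1\},\{2\},\{3\}}$.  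This forces $\|w_{k+1}\| \lesssim \|w_k\|^2 + \|E\|_{\{1\},\{2\},\{3\}}$, so $O(\log d)$ iterations (each $O(d^3)$) suffice to drive the error down to $1/2^d + O(\|E\|_{\{1\},\{2\},\{3\}})$.  Summing, the per-trial cost is $O(d^3 \log d)$ and the total cost is $O(d^3 (\log d)^{O(1)})$.

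For the full decomposition, I iterate the single-vector algorithm with deflation against the previously found components.  A naive bookkeeping gives $O(n \cdot d^3 (\log d)^{O(1)}) = O(d^4 (\log d)^{O(1)})$; to match the claimed $d^{1+\omega} (\log d)^{O(1)}$ bound, I batch the tensor-power-iteration refinement across all candidate vectors simultaneously.  Arranging $n$ candidate vectors as the columns of a matrix $V \in \R^{d \times n}$, one joint refinement step $V \mapsto V'$ with $v'_i = T(v_i, v_i, \cdot)$ is the matrix product $V' = T_{\{1\},\{2,3\}} \cdot U$, where $U \in \R^{d^2 \times n}$ has $i$-th column $v_i \otimes v_i$.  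Splitting $T_{\{1\},\{2,3\}}$ into $d$ blocks of size $d \times d$ reduces this to $d$ square matrix multiplications of size $d \times d$, each in $O(d^\omega)$, for a total of $O(d^{1+\omega})$ per batched iteration.  With $O(\log d)$ batched iterations, the refinement phase costs $O(d^{1+\omega} (\log d)^{O(1)})$, which dominates the $O(d^3)$ cost of extracting the initial candidate vectors (e.g., via one or a few eigendecompositions of $M_g$) and gives the claimed bound.

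\textbf{Main obstacle.}  The principal technical point is the tensor-power-iteration convergence analysis: even though \pref{thm:orthogonal-spectral} only certifies closeness to $a_i$ in terms of the spectral norm $\|E\|_{\{1\},\{2,3\}}$, the refined vector must reach error at most $1/2^d + \|E\|_{\{1\},\{2\},\{3\}}$, i.e., a floor set by the strictly smaller \emph{injective} norm.  A secondary difficulty is the batching/scheduling needed to make the full decomposition $O(d^{1+\omega})$ rather than $O(d^4)$: one must combine deflation, reuse of a single contraction $M_g$ for many candidates, and the rectangular-matrix-multiplication formulation of a batched tensor-power-iteration step sketched above, while keeping the amplification of failure probabilities (which is the other potential source of extra $\log$ factors) under control.
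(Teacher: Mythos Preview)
There is a concrete gap in your success-probability calculation. With $\delta = 1$ in \pref{thm:orthogonal-spectral}, the probability that the top eigenvector of $M_g$ is close to a \emph{fixed} $a_i$ is only $1/(d^{2}(\log d)^{O(1)})$; your PCA step replaces $d$ by $n$, improving this to $1/(n^2(\log n)^{O(1)})$, and summing over the $n$ disjoint events still leaves only $1/(n(\log n)^{O(1)})$ --- not $\Omega(1/(\log d)^{O(1)})$ as you claim. You would therefore need $\Omega(n)$ independent trials, each costing $\tilde O(n^3)$ after dimension reduction, i.e.\ $\tilde O(n^4)$ total, which is too much when $n=\Theta(d)$.

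The paper's remedy is exactly the parameter you fixed casually: it takes $\delta = O(1/\log d)$, so that $d^{\delta}=O(1)$ and the per-vector success probability becomes $1/(d(\log d)^{O(1)})$; summing over the $n$ disjoint events then gives $1/(\log d)^{O(1)}$. The price of this choice is that the eigenvalue gap of $M_g$ is only $1+\Theta(1/\log d)$, so matrix power iteration from a random start now needs $O((\log d)^2)$ steps rather than $O(\log d)$ --- still polylogarithmic. Your tensor-power refinement and its injective-norm floor are correct and match the paper (which invokes \cite[Lemma~5.1]{DBLP:journals/jmlr/AnandkumarGHKT14}).

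For the full decomposition the paper does not deflate: it simply runs $d\cdot(\log d)^{O(1)}$ independent copies of the single-vector procedure and keeps every vector $u$ that passes the check $|\langle u^{\otimes 3},T\rangle|\ge 0.9$; a coupon-collector argument shows this recovers all $a_i$. The $d^{1+\omega}$ speedup then comes from batching these independent trials --- stacking the Gaussian vectors into a $d\times d$ matrix so that all contractions $M_g$ and all power-iteration steps become $(d^2\times d)\cdot(d\times d)$ products. Your deflation-plus-batched-refinement scheme is a reasonable alternative in spirit, but you would additionally need to argue that the accumulated deflation error (which lives at scale $\|E\|_{\{1\},\{2,3\}}$) does not destroy the eigenvalue-gap guarantee of \pref{thm:orthogonal-spectral}; the paper's independent-trials approach sidesteps this entirely.
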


\Dnote{} \TMnote{I guess we could be a bit lazy about this and just leave it as it is.. Hopefully people who have bear with us for the first 9 sections should be fine with this prof... }

\begin{proof}
  We may assume that $d$ is larger than some constant. 
 We run $d^{1+\omega}\cdot O(\log d)^{O(1)}$ iterations of the following procedure which can be carried out in $\tilde{O}(d^3)$ time. We will discuss how to speed the algorithm at the end. 
  	\begin{enumerate}
  		\item Choose a standard Gaussian vector $g$ and compute $M_g=(\Id \otimes \Id \otimes \transpose g) T$.
  		\item Run $O(\log d)^2$ iterations of the matrix power method on $M_g$ viewed as a $d$-by-$d$ matrix (with random initialization) and set $u$ to be the top
  		eigenvector calculated in this way.
  		\item Check that $\abs{\iprod{u^{\otimes 3},T}} \ge 0.9$.
  		\item Run $O(\log \log d)$ iterations of the tensor power method on $T$ starting from $u$. Output the final iterate $v$ of the method.
  	\end{enumerate}
  The analysis of the tensor power method \cite[Lemma 5.1]{DBLP:journals/jmlr/AnandkumarGHKT14} shows that whenever the check in step 3 succeeds then the final output $v$ satisfies the desired accuracy guarantee of the theorem.
  It remains to show that the check in step 3 succeeds with probability at least $1/(\log d)^{O(1)}$ over the randomness of the algorithm.
  (We obtain success probability $\Omega(1)$ by repeating the algorithm $(\log d)^{O(1)}$ times.)
  We apply \pref{thm:orthogonal-spectral} for $\delta=O(1/\log d)$ and $\tau=1/\log d$ such that for every $i\in[n]$, the distance guarantee for the top eigenvector of $M_g$ is at most $0.001$ and the ratio between first and second eigenvalue is at least $1+1/\log d$. %
  By symmetry, for every index $i\in[n]$, the probability that the top eigenvector of $M_{g}$ is $0.001$-close to $a_i$ is at least $1/d(\log d)^{O(1)}$.
  Since the vectors $a_1,\ldots,a_n$ are orthonormal these events are disjoint.
  Therefore, with probability at least $1/(\log d)^{O(1)}$ over the choice of $g$, the top eigenvector of $M_g$ is $0.001$-close to one of the vectors $a_1,\ldots,a_n$.
  Since the multiplicative gap between the top eigenvalue and the remaining eigenvalues of $M_g$ is at least $1+1/\log d$ (by \pref{thm:orthogonal-spectral} for our choice of $\delta$ and $\tau$), it follows that with constant probability over the choice of the random initialization of the matrix power method, the second step of the algorithm recovers a vector that is $0.001$-close to the top eigenvector of $M_g$.
  In this case, the resulting vector $u$ satisfies the check $\abs{\iprod{u^{\otimes 3}, T}}\ge 0.9$.

  In order to find all components in time $d^{1+\omega}\cdot O(\log d)^{O(1)}$ we run $d\cdot (\log d)^{O(1)}$ independent evaluations of the above algorithm.
 Note that each run involves multiplication of a $d^2\times d$ matrix with a $d$ dimensional vector and therefore in total we are to multiply a $d^2\times d$ matrix with $d\times d$ matrix. Therefore, 
  using fast matrix multiplication, we can ``parallelize'' all of the required linear algebra operations and speedup the running time from $d^4 (\log d)^{O(1)}$ to the desired $O(d^{1+\omega})\cdot (\log d)^{O(1)}$.
  \TMnote{Hopefully this explanation is enough}
\end{proof}

\begin{remark}[Extension to other settings]\label{remark:extension}
	The same rounding idea in~\pref{thm:orthogonal-spectral} can be extended to the setting when the components $a_1,\dots, a_n$ are close to isotropic in the sense that $\Norm{\sum_i a_i\transpose{a_i} - \Id_d} \le\sigma$. The success probability will decrease to roughly $1/d^{1+\poly(\sigma)}$, and therefore when $\sigma$ is at most a constant, the overall runtime will remain polynomial in $d$. 
	
	Suppose $a_1,\dots, a_n$ are separate vectors as in the setting of~\pref{thm:separated-unit-vectors}, we can apply the idea in paragraph above to the 3-tensor $\sum_i b_i^{\otimes 3}$ where $b_i = a_i^{\otimes k/3}$ and $k \ge O\left(\frac{1+\log \sigma}{\log \rho}\right) \cdot \log(1/\eta)$ is a multiple of 3. By \pref{lem:separated-spectral-norm} and the condition on $k$, we have that $b_i$ are in nearly isotropic position with $\Norm{\sum_i b_i\transpose{b_i}}\le 1+\eta$. Hence, using idea above we have a spectral algorithm without sum-of-squares for this setting. As noted before (below~\pref{thm:separated-unit-vectors}), the error tolerance of this algorithm is in terms of an \emph{unbalanced} spectral norm: $\norm{T-\sum_{i=1}^n a_i^{\otimes 2k}}_{\{1,\ldots,2k/3\},\{2k/3+1,\ldots,2k\}}$, which limits its application, for example, to dictionary learning.  

\end{remark}

\addreferencesection
\bibliographystyle{amsalpha}
\bibliography{bib/mr,bib/dblp,bib/ads,bib/scholar,bib/zblatt,bib/custom}

\newcommand{\etalchar}[1]{$^{#1}$}
\def\cprime{$'$} \def\cprime{$'$}
  \def\ocirc#1{\ifmmode\setbox0=\hbox{$#1$}\dimen0=\ht0 \advance\dimen0
  by1pt\rlap{\hbox to\wd0{\hss\raise\dimen0
  \hbox{\hskip.2em$\scriptscriptstyle\circ$}\hss}}#1\else {\accent"17 #1}\fi}
  \def\cprime{$'$} \def\cprime{$'$} \def\cprime{$'$} \def\cprime{$'$}
  \def\cprime{$'$} \def\cprime{$'$} \def\cprime{$'$} \def\cprime{$'$}
  \def\cprime{$'$} \def\polhk#1{\setbox0=\hbox{#1}{\ooalign{\hidewidth
  \lower1.5ex\hbox{`}\hidewidth\crcr\unhbox0}}}
  \def\cfac#1{\ifmmode\setbox7\hbox{$\accent"5E#1$}\else
  \setbox7\hbox{\accent"5E#1}\penalty 10000\relax\fi\raise 1\ht7
  \hbox{\lower1.15ex\hbox to 1\wd7{\hss\accent"13\hss}}\penalty 10000
  \hskip-1\wd7\penalty 10000\box7} \def\cprime{$'$} \def\cprime{$'$}
  \def\cprime{$'$} \def\cprime{$'$} \def\cprime{$'$} \def\cprime{$'$}
  \def\cprime{$'$} \def\cprime{$'$} \def\cprime{$'$} \def\cprime{$'$}
  \def\cprime{$'$} \def\cprime{$'$} \def\cprime{$'$}
  \def\ocirc#1{\ifmmode\setbox0=\hbox{$#1$}\dimen0=\ht0 \advance\dimen0
  by1pt\rlap{\hbox to\wd0{\hss\raise\dimen0
  \hbox{\hskip.2em$\scriptscriptstyle\circ$}\hss}}#1\else {\accent"17 #1}\fi}
  \def\cprime{$'$} \def\cprime{$'$} \def\cprime{$'$} \def\cprime{$'$}
\providecommand{\bysame}{\leavevmode\hbox to3em{\hrulefill}\thinspace}
\providecommand{\MR}{\relax\ifhmode\unskip\space\fi MR }
\providecommand{\MRhref}[2]{%
  \href{http://www.ams.org/mathscinet-getitem?mr=#1}{#2}
}
\providecommand{\href}[2]{#2}
\begin{thebibliography}{YWHM08}

\bibitem[AFH{\etalchar{+}}15]{DBLP:journals/algorithmica/AnandkumarFHKL15}
Anima Anandkumar, Dean~P. Foster, Daniel~J. Hsu, Sham~M. Kakade, and Yi{-}Kai
  Liu, \emph{A spectral algorithm for latent dirichlet allocation},
  Algorithmica \textbf{72} (2015), no.~1, 193--214.

\bibitem[AGH{\etalchar{+}}14]{DBLP:journals/jmlr/AnandkumarGHKT14}
Animashree Anandkumar, Rong Ge, Daniel~J. Hsu, Sham~M. Kakade, and Matus
  Telgarsky, \emph{Tensor decompositions for learning latent variable models},
  Journal of Machine Learning Research \textbf{15} (2014), no.~1, 2773--2832.

\bibitem[AGH{\etalchar{+}}15]{DBLP:conf/alt/AnandkumarGHKT15}
Anima Anandkumar, Rong Ge, Daniel~J. Hsu, Sham~M. Kakade, and Matus Telgarsky,
  \emph{Tensor decompositions for learning latent variable models {(A} survey
  for {ALT)}}, {ALT}, Lecture Notes in Computer Science, vol. 9355, Springer,
  2015, pp.~19--38.

\bibitem[AGHK14]{DBLP:journals/jmlr/AnandkumarGHK14}
Animashree Anandkumar, Rong Ge, Daniel~J. Hsu, and Sham~M. Kakade, \emph{A
  tensor approach to learning mixed membership community models}, Journal of
  Machine Learning Research \textbf{15} (2014), no.~1, 2239--2312.

\bibitem[AGJ14]{DBLP:journals/corr/AnandkumarGJ14b}
Anima Anandkumar, Rong Ge, and Majid Janzamin, \emph{Analyzing tensor power
  method dynamics: Applications to learning overcomplete latent variable
  models}, CoRR \textbf{abs/1411.1488} (2014).

\bibitem[AGJ15]{DBLP:conf/colt/AnandkumarGJ15}
Animashree Anandkumar, Rong Ge, and Majid Janzamin, \emph{Learning overcomplete
  latent variable models through tensor methods}, {COLT}, {JMLR} Workshop and
  Conference Proceedings, vol.~40, JMLR.org, 2015, pp.~36--112.

\bibitem[BCMV14]{DBLP:conf/stoc/BhaskaraCMV14}
Aditya Bhaskara, Moses Charikar, Ankur Moitra, and Aravindan Vijayaraghavan,
  \emph{Smoothed analysis of tensor decompositions}, {STOC}, {ACM}, 2014,
  pp.~594--603.

\bibitem[BHK{\etalchar{+}}16]{BHKKMP16}
Boaz Barak, Samuel~B. Hopkins, Jonathan Kelner, Pravesh~K. Kothari, Ankur
  Moitra, and Aaron Potechin, \emph{A nearly tight sum-of-squares lower bound
  for the planted clique problem}, FOCS, IEEE Computer Society, 2016.

\bibitem[BKS15]{DBLP:conf/stoc/BarakKS15}
Boaz Barak, Jonathan~A. Kelner, and David Steurer, \emph{Dictionary learning
  and tensor decomposition via the sum-of-squares method}, {STOC}, {ACM}, 2015,
  pp.~143--151.

\bibitem[BPT13]{zbMATH06125965}
Grigoriy {Blekherman}, Pablo~A. {Parrilo}, and Rekha~R. {Thomas} (eds.),
  \emph{{Semidefinite optimization and convex algebraic geometry.}},
  Philadelphia, PA: Society for Industrial and Applied Mathematics (SIAM), 2013
  (English).

\bibitem[Cim12]{CIMPRIC201289}
J.~Cimpri{\v c}, \emph{Real algebraic geometry for matrices over commutative
  rings}, Journal of Algebra \textbf{359} (2012), 89 -- 103.

\bibitem[DK70]{doi:10.1137/0707001}
Chandler Davis and W.~M. Kahan, \emph{The rotation of eigenvectors by a
  perturbation. iii}, SIAM Journal on Numerical Analysis \textbf{7} (1970),
  no.~1, 1--46.

\bibitem[dlPG99]{de1999decoupling}
V{\'\i}ctor~H de~la Pe{\~n}a and Evarist Gin{\'e}, \emph{Decoupling. from
  dependence to independence. randomly stopped processes. u-statistics and
  processes. martingales and beyond, probability and its applications}, 1999.

\bibitem[EA06]{EladA2006}
Michael Elad and Michal Aharon, \emph{Image denoising via sparse and redundant
  representations over learned dictionaries}, Image Processing, IEEE
  Transactions on \textbf{15} (2006), no.~12, 3736--3745.

\bibitem[EP07]{ArgyriouEP06}
Andreas Argyriou~Theodoros Evgeniou and Massimiliano Pontil, \emph{Multi-task
  feature learning}, Advances in Neural Information Processing Systems 19:
  Proceedings of the 2006 Conference, vol.~19, MIT Press, 2007, pp.~41--48.

\bibitem[Gha14]{gharan2014new}
Shayan~Oveis Gharan, \emph{New rounding techniques for the design and analysis
  of approximation algorithms}, Ph.D. thesis, STANFORD UNIVERSITY, 2014.

\bibitem[GHK15]{DBLP:conf/stoc/GeHK15}
Rong Ge, Qingqing Huang, and Sham~M. Kakade, \emph{Learning mixtures of
  gaussians in high dimensions}, {STOC}, {ACM}, 2015, pp.~761--770.

\bibitem[GLS81]{DBLP:journals/combinatorica/GrotschelLS81}
Martin Gr{\"{o}}tschel, L{\'{a}}szl{\'{o}} Lov{\'{a}}sz, and Alexander
  Schrijver, \emph{The ellipsoid method and its consequences in combinatorial
  optimization}, Combinatorica \textbf{1} (1981), no.~2, 169--197.

\bibitem[GM15]{DBLP:conf/approx/GeM15}
Rong Ge and Tengyu Ma, \emph{Decomposing overcomplete 3rd order tensors using
  sum-of-squares algorithms}, {APPROX-RANDOM}, LIPIcs, vol.~40, Schloss
  Dagstuhl - Leibniz-Zentrum fuer Informatik, 2015, pp.~829--849.

\bibitem[GVX14]{DBLP:conf/stoc/GoyalVX14}
Navin Goyal, Santosh Vempala, and Ying Xiao, \emph{Fourier {PCA} and robust
  tensor decomposition}, {STOC}, {ACM}, 2014, pp.~584--593.

\bibitem[Har70]{harshman1970foundations}
Richard~A Harshman, \emph{Foundations of the parafac procedure: Models and
  conditions for an" explanatory" multi-modal factor analysis}.

\bibitem[H{\aa}s90]{H90}
Johan H{\aa}stad, \emph{Tensor rank is np-complete}, J. Algorithms \textbf{11}
  (1990), no.~4, 644--654.

\bibitem[HL13]{HL13}
Christopher~J. Hillar and Lek{-}Heng Lim, \emph{Most tensor problems are
  np-hard}, J. {ACM} \textbf{60} (2013), no.~6, 45.

\bibitem[HSS15]{DBLP:conf/colt/HopkinsSS15}
Samuel~B. Hopkins, Jonathan Shi, and David Steurer, \emph{Tensor principal
  component analysis via sum-of-square proofs}, {COLT}, {JMLR} Workshop and
  Conference Proceedings, vol.~40, JMLR.org, 2015, pp.~956--1006.

\bibitem[HSSS16]{HopkinsSSS16}
Samuel~B. Hopkins, Tselil Schramm, Jonathan Shi, and David Steurer, \emph{Fast
  spectral algorithms from sum-of-squares proofs: Tensor decomposition and
  planted sparse vectors}, Proceedings of the 48th Annual ACM SIGACT Symposium
  on Theory of Computing (New York, NY, USA), STOC 2016, ACM, 2016,
  pp.~178--191.

\bibitem[Las01]{MR1814045}
Jean~B. Lasserre, \emph{Global optimization with polynomials and the problem of
  moments}, SIAM J. Optim. \textbf{11} (2000/01), no.~3, 796--817. \MR{1814045
  (2002b:90054)}

\bibitem[LCC07]{DBLP:journals/tsp/LathauwerCC07}
Lieven~De Lathauwer, Jos{\'{e}}phine Castaing, and Jean{-}Fran{\c{c}}ois
  Cardoso, \emph{Fourth-order cumulant-based blind identification of
  underdetermined mixtures}, {IEEE} Trans. Signal Processing \textbf{55}
  (2007), no.~6-2, 2965--2973.

\bibitem[LRA93]{zbMATH00446923}
S.E. {Leurgans}, R.T. {Ross}, and R.B. {Abel}, \emph{{A decomposition for
  three-way arrays.}}, {SIAM J. Matrix Anal. Appl.} \textbf{14} (1993), no.~4,
  1064--1083 (English).

\bibitem[LRS15]{Lee:2015:LBS:2746539.2746599}
James~R. Lee, Prasad Raghavendra, and David Steurer, \emph{Lower bounds on the
  size of semidefinite programming relaxations}, Proceedings of the
  Forty-Seventh Annual ACM on Symposium on Theory of Computing (New York, NY,
  USA), STOC '15, ACM, 2015, pp.~567--576.

\bibitem[MLB{\etalchar{+}}08]{MairalLBHP2008}
Julien Mairal, Marius Leordeanu, Francis Bach, Martial Hebert, and Jean Ponce,
  \emph{Discriminative sparse image models for class-specific edge detection
  and image interpretation}, Computer Vision--ECCV 2008, Springer, 2008,
  pp.~43--56.

\bibitem[MRBL07]{RanzatoBL2007}
Y~Marc'Aurelio~Ranzato, Lan Boureau, and Yann LeCun, \emph{Sparse feature
  learning for deep belief networks}, Advances in neural information processing
  systems \textbf{20} (2007), 1185--1192.

\bibitem[OF96a]{OlshausenF96}
Bruno~A Olshausen and David~J Field, \emph{Emergence of simple-cell receptive
  field properties by learning a sparse code for natural images}, Nature
  \textbf{381} (1996), no.~6583, 607--609.

\bibitem[OF96b]{OlshausenF96b}
\bysame, \emph{Natural image statistics and efficient coding*}, Network:
  computation in neural systems \textbf{7} (1996), no.~2, 333--339.

\bibitem[OF97]{OlshausenF97}
Bruno~A. Olshausen and David~J. Field, \emph{Sparse coding with an overcomplete
  basis set: A strategy employed by v1?}, Vision Research \textbf{37} (1997),
  no.~23, 3311 -- 3325.

\bibitem[{Oli}10]{zbMATH05946839}
Roberto~I. {Oliveira}, \emph{{Sums of random Hermitian matrices and an
  inequality by Rudelson.}}, {Electron. Commun. Probab.} \textbf{15} (2010),
  203--212 (English).

\bibitem[Par00]{parrilo2000structured}
Pablo~A Parrilo, \emph{Structured semidefinite programs and semialgebraic
  geometry methods in robustness and optimization}, Ph.D. thesis, Citeseer,
  2000.

\bibitem[RV09]{rudelson2009smallest}
Mark Rudelson and Roman Vershynin, \emph{Smallest singular value of a random
  rectangular matrix}, Communications on Pure and Applied Mathematics
  \textbf{62} (2009), no.~12, 1707--1739.

\bibitem[Sho87]{MR931698}
N.~Z. Shor, \emph{An approach to obtaining global extrema in polynomial
  problems of mathematical programming}, Kibernetika (Kiev) (1987), no.~5,
  102--106, 136. \MR{931698 (89d:90202)}

\bibitem[ST04]{Spielman:2004:SAA:990308.990310}
Daniel~A. Spielman and Shang-Hua Teng, \emph{Smoothed analysis of algorithms:
  Why the simplex algorithm usually takes polynomial time}, J. ACM \textbf{51}
  (2004), no.~3, 385--463.

\bibitem[SW15]{DBLP:journals/corr/SchrammW15}
Tselil Schramm and Benjamin Weitz, \emph{Low-rank matrix completion with
  adversarial missing entries}, CoRR \textbf{abs/1506.03137} (2015).

\bibitem[Tro12]{DBLP:journals/focm/Tropp12}
Joel~A. Tropp, \emph{User-friendly tail bounds for sums of random matrices},
  Foundations of Computational Mathematics \textbf{12} (2012), no.~4, 389--434.

\bibitem[YWHM08]{YangWHY2008}
Jianchao Yang, John Wright, Thomas Huang, and Yi~Ma, \emph{Image
  super-resolution as sparse representation of raw image patches}, Computer
  Vision and Pattern Recognition, 2008. CVPR 2008. IEEE Conference on, IEEE,
  2008, pp.~1--8.

\bibitem[YWS15]{yu2015useful}
Yi~Yu, Tengyao Wang, and Richard~J Samworth, \emph{A useful variant of the
  davis--kahan theorem for statisticians}, Biometrika \textbf{102} (2015),
  no.~2, 315--323.

\end{thebibliography}

\appendix
\section{Toolbox}

\begin{lemma}[sums-of-squares proof for Cauchy-Schwarz inequality]
	\label{lem:cauchy-schwarz}
	Let $x_1, \dots, x_n$ and $y_1, \dots, y_n$ be polynomials in some indeterminates. Then
	\[ \vdash\, \left\{ \left(\sum_{i=1}^n x_iy_i\right)^2 \le \left(\sum_{i=1}^n x_i^2\right)\left(\sum_{i=1}^n y_i^2\right) \right\} \mper \]
\end{lemma}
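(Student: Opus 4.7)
The plan is to use the classical Lagrange identity, which gives an explicit sum-of-squares decomposition of the Cauchy–Schwarz defect. Concretely, I would show the polynomial identity
\[
\left(\sum_{i=1}^n x_i^2\right)\left(\sum_{j=1}^n y_j^2\right) - \left(\sum_{i=1}^n x_i y_i\right)^2 \;=\; \tfrac{1}{2}\sum_{i,j=1}^n (x_i y_j - x_j y_i)^2,
\]
which immediately certifies that the left-hand side is a sum of squares of polynomials, yielding the desired SoS proof.

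The verification is a straightforward expansion: the right-hand side expands to
\[
\tfrac{1}{2}\sum_{i,j}\bigl(x_i^2 y_j^2 - 2 x_i y_j x_j y_i + x_j^2 y_i^2\bigr) \;=\; \sum_{i,j} x_i^2 y_j^2 - \sum_{i,j} x_i y_i x_j y_j,
\]
which factors as $\bigl(\sum_i x_i^2\bigr)\bigl(\sum_j y_j^2\bigr) - \bigl(\sum_i x_i y_i\bigr)^2$, matching the left-hand side. Since each term $(x_i y_j - x_j y_i)^2$ is a square of a polynomial in the original indeterminates, the decomposition is manifestly a sum of squares, and its degree is at most $2\max_i(\deg x_i + \deg y_i)$ so the proof goes through at the appropriate SoS degree.

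No hypotheses beyond $x_i, y_i \in \R[x]$ are needed, and there is no real obstacle—the only care to take is that the identity must be written as a polynomial identity (holding in the polynomial ring), which justifies writing $\vdash$ without any premises.
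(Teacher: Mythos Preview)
Your proof is correct and is essentially the same as the paper's: both exhibit the Lagrange identity expressing the Cauchy--Schwarz defect as a sum of squares $(x_iy_j - x_jy_i)^2$. Your version is in fact slightly more careful, as you include the factor $\tfrac{1}{2}$ when summing over all ordered pairs $(i,j)$, which the paper's displayed identity omits.
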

\begin{proof}
	The difference between the RHS and the LHS is a sum of squares.
	\[ \left(\sum_{i=1}^n x_i^2\right)\left(\sum_{i=1}^n y_i^2\right) - \left(\sum_{i=1}^n x_iy_i\right)^2
	= \sum_{i,j} (x_iy_j - x_jy_i)^2 \mper\qedhere \]
\end{proof}

 \begin{lemma}\label{lem:helper_lemma1}
 	Let $u$ be indeterminate and $a$ be a unit vector.  Let $\cA = \{\|u\|^2 =1, \inner{u,a}^2\le \tau\}$. Then for any unit vector $b$ such that $\|a-b\|\le 2\delta$, we have that 
 	\[\cA\vdash \{\inner{u,b}^2 \le \left(\sqrt{\tau}+\sqrt{\delta}\right)^2\}\mper\]
 \end{lemma}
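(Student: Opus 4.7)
The plan is to decompose $\langle u,b\rangle = \langle u,a\rangle + \langle u,b-a\rangle$, square, and bound the resulting cross term using a weighted sum-of-squares AM--GM inequality. Concretely, for any $\lambda>0$, from $(\sqrt{\lambda}\,\langle u,a\rangle - \lambda^{-1/2}\langle u,b-a\rangle)^2 \ge 0$ one obtains the degree-$2$ SoS identity
\[
2\,\langle u,a\rangle\langle u,b-a\rangle \;\le\; \lambda\,\langle u,a\rangle^2 + \lambda^{-1}\langle u,b-a\rangle^2.
\]
Choosing $\lambda = \sqrt{\delta/\tau}$ and expanding the square yields
\[
\vdash\;\langle u,b\rangle^2 \;\le\; \bigl(1+\sqrt{\delta/\tau}\bigr)\langle u,a\rangle^2 + \bigl(1+\sqrt{\tau/\delta}\bigr)\langle u,b-a\rangle^2.
\]

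Next, I would invoke the sum-of-squares Cauchy--Schwarz inequality (\pref{lem:cauchy-schwarz}) applied to the coordinate polynomials of $u$ and $b-a$, giving $\vdash \langle u,b-a\rangle^2 \le \|u\|^2\cdot \|b-a\|^2$. Combined with the constraint $\|u\|^2 = 1$ from $\cA$, this produces $\cA \vdash \langle u,b-a\rangle^2 \le \|b-a\|^2$. Substituting the bound $\langle u,a\rangle^2 \le \tau$ (directly from $\cA$) and the numerical bound $\|b-a\|^2 \le \delta$ (the intended form of the hypothesis; the stated ``$\|a-b\|\le 2\delta$'' appears to be a typo for $\|a-b\|\le \sqrt{\delta}$, since that is exactly what the target bound on the right-hand side requires), one concludes
\[
\cA \vdash \langle u,b\rangle^2 \;\le\; \tau + \sqrt{\tau\delta} + \delta + \sqrt{\tau\delta} \;=\; (\sqrt{\tau}+\sqrt{\delta})^2.
\]

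There is no real obstacle here; the entire argument is routine SoS bookkeeping. The only subtlety is checking that the AM--GM step is applied at the right weight $\lambda = \sqrt{\delta/\tau}$ so that the two error contributions combine to give the clean perfect-square bound $(\sqrt{\tau}+\sqrt{\delta})^2$, rather than the loose bound one would get from a uniform choice such as $\lambda=1$. Writing out the single SoS certificate, one sees that the proof has degree $2$ in $u$, which is adequate for the uses of this lemma elsewhere in the paper.
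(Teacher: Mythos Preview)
Your proof is correct and takes a genuinely different route from the paper's. The paper decomposes both $u$ and $b$ orthogonally along the direction $a$: it writes $u = \langle u,a\rangle a + u'$ and $b = \langle b,a\rangle a + b'$ with $u',b'$ orthogonal to $a$, then invokes the square-root lemma (\pref{lem:square-root}) to pass from $\langle u,a\rangle^2 \le \tau$ to the two linear bounds $|\langle u,a\rangle| \le \sqrt{\tau}$, and similarly obtains $|\langle u',b'\rangle| \le \sqrt{\delta}$ via Cauchy--Schwarz; finally it multiplies the two linear constraints $\sqrt{\tau}+\sqrt{\delta} \pm \langle u,b\rangle \ge 0$ together. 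Your approach instead decomposes $b = a + (b-a)$ directly, expands the square, and absorbs the cross term with a single weighted AM--GM identity at the optimal weight $\lambda = \sqrt{\delta/\tau}$. This is more economical: it avoids the auxiliary square-root lemma and the explicit orthogonal projection, and it makes the degree-$2$ nature of the certificate transparent. The paper's decomposition, on the other hand, isolates the geometric content (that $b'$ is short because $b$ is close to $a$) more visibly.

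You are also right that the stated hypothesis is garbled. The paper's own proof begins with ``Since $\|b-a\|^2 \le 2\delta$'', so the intended assumption is $\|a-b\|^2 \le O(\delta)$ rather than $\|a-b\| \le 2\delta$; either reading suffices for the applications, and your normalization $\|a-b\|^2 \le \delta$ is the one that makes the constant come out exactly as $(\sqrt{\tau}+\sqrt{\delta})^2$.
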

 \begin{proof}
 	First of all, by Lemma~\ref{lem:square-root}, $\cA \vdash \inner{u,a} \le \sqrt{\tau}$ and $\cA \vdash \inner{u,a} \ge -\sqrt{\tau}$. To bound $\inner{u,b}$, we
 	decompose $u$ and $b$ into their components parallel to and perpendicular to $a$.
 	
 	Let $u' = u - \inner{u,a}a$, so that $u'$ is a vector polynomial in $u$ that satisfies $u = \inner{u,a}a + u'$ and $\inner{u',a} = 0$.
 	Then $\|u\|^2 = \|u'\|^2 + \inner{u,a}^2 \|a\|^2$ and therefore $\cA \vdash \|u'\|^2 \le 1$.

 	Similarly, let $b' = b - \inner{b,a}a$, so that $b = \inner{b,a}a + b'$ and $\inner{b',a} = 0$.
 	Since $\|b-a\|^2 \le 2\delta$, we have $\inner{b,a} \ge 1 - \delta$,
 	meaning that $\|b'\|^2 = \|b\|^2 - \inner{b,a}\|a\|^2 \le 1 - (1 - \delta) = \delta$.

 	Then we are ready to bound $\inner{u,b}$:
 	\begin{align*}
 		\inner{u,b} &= \Big\langle\inner{u,a}a + u',\, \inner{b,a}a + b'\Big\rangle \\
 		&= \inner{u,a}\inner{b,a} + \inner{u',b'} \mper
 		\end{align*}
 	Since $\cA \vdash \inner{u,a}\inner{b,a} \le \sqrt{\tau}$ and also $\inner{u',b}^2 \le \|u'\|^2 \|b'\|^2 \le \delta\|u'\|^2$ which in turn implies (by $\cA \vdash \|u'\|^2 \le 1$ and Lemma~\ref{lem:square-root}) that $\cA \vdash \inner{u',b'}\le \sqrt{\delta}$, we conclude that $\cA \vdash \inner{u,b} \le \sqrt{\tau} + \sqrt{\delta}$.

 	Similarly, $\cA \vdash \inner{u,b}\ge -\sqrt{\tau} -\sqrt{\delta}$.
 	Hence
 	\[ \inner{u,b}^2 - \left(\sqrt{\tau}+\sqrt{\delta}\right)^2 = \left(\inner{u,b}-
 	\left(\sqrt{\tau}+\sqrt{\delta}\right)\right)\left(\inner{u,b}+
 	\left(\sqrt{\tau}+\sqrt{\delta}\right)\right)\le 0\mcom\]
 	as desired.
 \end{proof}

\begin{lemma}\label{lem:square-root}
	For a positive real number $a$, and $x$ be an indeterminate, then we have that 
	\begin{equation}
	\Set{x^2\le a^2} \vdash \Set{x\le a, x\ge -a}
	\end{equation}
\end{lemma}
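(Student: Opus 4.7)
The plan is to give explicit SoS decompositions for both implications. Since $a > 0$ is a positive constant, the factor $\tfrac{1}{2a}$ is a legitimate nonnegative scalar, which is crucial: without the positivity of $a$ we could not divide by $2a$ while staying in SoS.

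First I would look for the identity $2a(a-x) = (a-x)^2 + (a^2 - x^2)$, which one verifies by expanding: $(a-x)^2 + (a^2-x^2) = a^2 - 2ax + x^2 + a^2 - x^2 = 2a^2 - 2ax$. This gives the decomposition
\begin{equation*}
a - x \;=\; \frac{1}{2a}(a-x)^2 \;+\; \frac{1}{2a}(a^2 - x^2).
\end{equation*}
The first summand is $\tfrac{1}{2a}$ times a square (and hence a sum-of-squares polynomial, using $a>0$), and the second summand is a nonnegative scalar multiple of the hypothesis $a^2 - x^2 \ge 0$. This is exactly the form required by the definition of $\{a^2-x^2 \ge 0\}\vdash \{a - x \ge 0\}$.

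The symmetric identity $2a(a+x) = (a+x)^2 + (a^2 - x^2)$ handles the other direction in the same way, yielding
\begin{equation*}
a + x \;=\; \frac{1}{2a}(a+x)^2 \;+\; \frac{1}{2a}(a^2 - x^2),
\end{equation*}
which gives $\{a^2-x^2 \ge 0\}\vdash \{a + x \ge 0\}$. Both proofs have degree $2$, so both conclusions follow at the lowest nontrivial degree. There is no real obstacle here; the only thing to be careful about is invoking $a > 0$ to justify $\tfrac{1}{2a}$ as a valid SoS coefficient, since the statement would fail (in a meaningful sense) without that hypothesis.
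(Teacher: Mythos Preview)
Your proof is correct and is essentially identical to the paper's: both exhibit the decompositions $a-x=\tfrac{1}{2a}\bigl((a-x)^2+(a^2-x^2)\bigr)$ and $a+x=\tfrac{1}{2a}\bigl((a+x)^2+(a^2-x^2)\bigr)$. If anything, your version is more detailed, since you verify the identity by expansion and make explicit the role of $a>0$.
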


\begin{proof}
  The first statement simply follows from the following two polynomial identities, 
  \begin{displaymath}
    a-x = \tfrac 1{2a} \Paren{a^2-x^2 +  (a-x)^2}\mcom
  \end{displaymath}
  and similarly,
  \begin{displaymath}
  x+a = \tfrac 1{2a} \Paren{a^2-x^2 +  (a+x)^2}\mper
  \end{displaymath}

\end{proof}

\begin{theorem}[Consequence of Davis-Kahan Theorem~\cite{doi:10.1137/0707001}. c.f~\cite{yu2015useful}]\label{thn:davis-kahan-rank-1}
	Let $\Sigma, \hat{\Sigma}$ be symmetric matrices in $\R^{d\times d}$. Let $v_1,\tilde{v}_1$ be their top eigenvector respectively and let $\lambda_1\ge \lambda_2\dots$ and $\hat{\lambda}_1\ge \hat{\lambda}_2\dots $ be their eigenvalues, respectively. Then, 
	\begin{align}
		\Norm{v_1-\hat{v}_1}\le \frac{\sqrt{2}\norm{\Sigma-\hat{\Sigma}}}{|\lambda_1-\hat{\lambda}_2|} \mper\nonumber
	\end{align} 
\end{theorem}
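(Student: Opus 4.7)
The plan is to reduce this statement to the classical Davis--Kahan $\sin\Theta$ theorem and then convert the angular bound into a Euclidean bound by a short trigonometric estimate. Throughout, one should implicitly assume the sign of $\hat{v}_1$ has been chosen so that $\iprod{v_1, \hat{v}_1}\ge 0$; without this sign convention the conclusion can fail by a factor of $2$ (take $\hat{v}_1 = -v_1$), and in the applications (e.g.\ \pref{thm:rounding-bks}) the sign is determined by a consistency check against the input tensor.

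First I would prove the $\sin\theta$ bound
\begin{equation*}
\sin\theta \;\le\; \frac{\|\Sigma-\hat{\Sigma}\|}{|\lambda_1 - \hat{\lambda}_2|}\mcom
\end{equation*}
where $\theta\in[0,\pi/2]$ is the angle between $v_1$ and $\hat{v}_1$. The derivation starts from $\Sigma v_1 = \lambda_1 v_1$, which I rewrite as $(\hat{\Sigma}-\lambda_1 I)v_1 = (\hat{\Sigma}-\Sigma)v_1$. Let $\hat{P}_{\perp} = I - \dyad{\hat{v}_1}$ be the projector onto $\hat{v}_1^{\perp}$. Projecting both sides onto $\hat{v}_1^{\perp}$ and using $\hat{P}_{\perp}\hat{\Sigma} = \hat{P}_{\perp}\hat{\Sigma}\hat{P}_{\perp}$ (since $\hat{v}_1$ is an eigenvector of $\hat{\Sigma}$), one obtains
\begin{equation*}
\bigl(\hat{P}_{\perp}\hat{\Sigma}\hat{P}_{\perp}-\lambda_1 \hat{P}_{\perp}\bigr)\,v_1 \;=\; \hat{P}_{\perp}(\hat{\Sigma}-\Sigma)\,v_1\mper
\end{equation*}
The operator $\hat{P}_{\perp}\hat{\Sigma}\hat{P}_{\perp}$ restricted to $\hat{v}_1^{\perp}$ has spectrum $\{\hat{\lambda}_2,\hat{\lambda}_3,\dots\}$, so each of its eigenvalues is at most $\hat{\lambda}_2$; together with $\lambda_1\ge \hat{\lambda}_2$ (which I may assume since otherwise the bound is vacuous once $\|\Sigma-\hat\Sigma\|\ge \lambda_1-\hat\lambda_1 \ge \lambda_1 - \hat\lambda_2$ by Weyl), the operator on the left has smallest singular value at least $|\lambda_1-\hat{\lambda}_2|$ on its range. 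Rearranging gives $\|\hat{P}_{\perp}v_1\|\cdot|\lambda_1-\hat{\lambda}_2|\le\|\Sigma-\hat{\Sigma}\|$, and $\|\hat{P}_{\perp}v_1\|=\sin\theta$.

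Second, I would convert $\sin\theta$ to Euclidean distance using
\begin{equation*}
\|v_1-\hat{v}_1\|^2 \;=\; 2-2\iprod{v_1,\hat{v}_1} \;=\; 2(1-\cos\theta) \;\le\; 2(1-\cos^2\theta) \;=\; 2\sin^2\theta\mcom
\end{equation*}
where the inequality uses $\cos\theta\in[0,1]$ by our sign choice. Combining with the first step yields $\|v_1-\hat{v}_1\|\le \sqrt{2}\,\|\Sigma-\hat{\Sigma}\|/|\lambda_1-\hat{\lambda}_2|$, which is the claim.

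The main obstacle is purely bookkeeping: the ``mixed-gap'' form $|\lambda_1-\hat{\lambda}_2|$ (rather than the more common $|\lambda_1-\lambda_2|$ or $|\hat{\lambda}_1-\hat{\lambda}_2|$) requires placing the projector $\hat{P}_{\perp}$ on the $\hat{\Sigma}$-side and starting from the eigenequation on the $\Sigma$-side; swapping the roles would give $|\hat\lambda_1-\lambda_2|$ instead. Once this asymmetry is set up correctly, the rest is a one-line spectral estimate plus the trigonometric identity above.
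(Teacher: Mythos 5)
Your proof is correct, and it fills a gap in the paper rather than competing with an existing argument: the paper simply cites Davis--Kahan and Yu--Wang--Samworth for \pref{thn:davis-kahan-rank-1} and never proves it. Your two-step derivation---first the $\sin\Theta$ bound $\|\hat P_\perp v_1\|\le\|\Sigma-\hat\Sigma\|/|\lambda_1-\hat\lambda_2|$ by projecting the eigenequation onto $\hat v_1^\perp$, then the elementary estimate $2(1-\cos\theta)\le 2\sin^2\theta$ valid for $\cos\theta\in[0,1]$---is a clean self-contained proof, and the asymmetric placement of the projector is indeed exactly what produces the mixed gap $|\lambda_1-\hat\lambda_2|$ rather than a one-sided gap. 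Your handling of the edge cases is also right: when $\lambda_1<\hat\lambda_2$, Weyl's inequality forces $\|\Sigma-\hat\Sigma\|\ge\hat\lambda_1-\lambda_1\ge\hat\lambda_2-\lambda_1$ so the bound exceeds $\sqrt2$ and is vacuous (under the sign convention, $\|v_1-\hat v_1\|\le\sqrt 2$ always); and the claimed smallest singular value $|\lambda_1-\hat\lambda_2|$ of $\hat P_\perp\hat\Sigma\hat P_\perp-\lambda_1\hat P_\perp$ on $\hat v_1^\perp$ does require $\lambda_1\ge\hat\lambda_2$, which you correctly defer to that case analysis.

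One observation worth keeping: you correctly flag that the theorem as literally stated in the paper is false without fixing the sign of $\hat v_1$ so that $\iprod{v_1,\hat v_1}\ge 0$ (take $\Sigma=\hat\Sigma$ and $\hat v_1=-v_1$). The paper elides this; it is harmless in context because the downstream use in \pref{thm:rounding-bks} checks a solution against the input tensor and so fixes the sign, but as a standalone lemma the hypothesis should be stated. If you wanted to avoid the hypothesis entirely, you could state the conclusion as $\min_{\tau\in\{\pm1\}}\|v_1-\tau\hat v_1\|\le \sqrt2\|\Sigma-\hat\Sigma\|/|\lambda_1-\hat\lambda_2|$, which is what your $\sin\theta$ argument really gives.
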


\begin{lemma}[Consequence of Theorem~\ref{thn:davis-kahan-rank-1}]\label{lem:improved_wedin}
	Let $a$ be unit vector and $\Id_1 = aa^{\top}$, $\Id_{-1} = \Id - aa^{\top}$. Suppose symmetric matrix $M$ satisfies that 
	\begin{align}
	\max\Set{\Norm{\Id_{-1}M\Id_{-1}}, \Norm{\Id_1M\Id_{-1}}}\le \epsilon a^{\top}Ma \nonumber
	\end{align}
	Then, $a$ is $3\sqrt{2}\epsilon$-close to the top eigenvector of $M$ in Euclidean distance. 
\end{lemma}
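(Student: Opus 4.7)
\textbf{Proof proposal for Lemma~\ref{lem:improved_wedin}.}

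The plan is to apply Theorem~\ref{thn:davis-kahan-rank-1} to compare $M$ with the rank-one matrix $\hat M := (a^\top M a)\, aa^\top$, whose top eigenvector is $a$ itself (with eigenvalue $a^\top M a$) and whose second eigenvalue is $0$. The hypothesis on $M$ is designed exactly to make $M - \hat M$ small in spectral norm, while the Rayleigh quotient immediately gives a matching lower bound on the eigengap appearing in the Davis--Kahan denominator.

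First I would decompose $M$ using the identity $\Id = \Id_1 + \Id_{-1}$ on both sides:
\begin{equation*}
M \;=\; \Id_1 M \Id_1 \;+\; \Id_1 M \Id_{-1} \;+\; \Id_{-1} M \Id_1 \;+\; \Id_{-1} M \Id_{-1}.
\end{equation*}
Since $\Id_1 M \Id_1 = (a^\top M a)\, aa^\top = \hat M$, and since $\Id_{-1} M \Id_1 = (\Id_1 M \Id_{-1})^\top$ has the same spectral norm as $\Id_1 M \Id_{-1}$, the triangle inequality combined with the hypothesis yields
\begin{equation*}
\| M - \hat M \| \;\le\; 2\|\Id_1 M \Id_{-1}\| + \|\Id_{-1} M \Id_{-1}\| \;\le\; 3\epsilon\, (a^\top M a).
\end{equation*}

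Next I would lower bound the eigengap quantity $|\lambda_1 - \hat\lambda_2|$ in Theorem~\ref{thn:davis-kahan-rank-1}. Here $\hat\lambda_2 = 0$ (the matrix $\hat M$ has rank one), and the Rayleigh quotient gives $\lambda_1(M) \ge a^\top M a$, so $|\lambda_1 - \hat\lambda_2| \ge a^\top M a$. Note that the hypothesis implicitly forces $a^\top M a \ge 0$ (otherwise $\epsilon \cdot a^\top M a$ could not upper bound a norm), and if $a^\top M a = 0$ then $M$ is zero on each of the four blocks so $M = 0$ and the conclusion is vacuous; thus we may assume $a^\top M a > 0$. Applying Theorem~\ref{thn:davis-kahan-rank-1} with $\Sigma = M$ and $\hat\Sigma = \hat M$, and denoting by $v_1$ the top eigenvector of $M$ (choosing its sign so that $\iprod{v_1,a}\ge 0$),
\begin{equation*}
\|v_1 - a\| \;\le\; \frac{\sqrt{2}\,\|M - \hat M\|}{|\lambda_1 - \hat\lambda_2|} \;\le\; \frac{\sqrt{2}\cdot 3\epsilon\,(a^\top M a)}{a^\top M a} \;=\; 3\sqrt{2}\,\epsilon,
\end{equation*}
which is the desired bound.

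There is no genuine obstacle here; the only care needed is the sign convention on $v_1$ (eigenvectors are defined only up to sign, and the bound should be interpreted with the sign that makes $\iprod{v_1,a}\ge 0$, which is automatic once $3\sqrt{2}\,\epsilon < \sqrt{2}$) and verifying that the cross term $\|\Id_{-1} M \Id_1\|$ contributes the same factor as $\|\Id_1 M \Id_{-1}\|$ by symmetry of $M$. Everything else is a mechanical invocation of Davis--Kahan on a carefully chosen rank-one reference matrix.
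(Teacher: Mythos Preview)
Your proof is correct and follows essentially the same approach as the paper: both define the rank-one reference $\hat M=(a^\top M a)\,aa^\top$, use the block decomposition $M=\Id_1M\Id_1+\Id_1M\Id_{-1}+\Id_{-1}M\Id_1+\Id_{-1}M\Id_{-1}$ to obtain $\|M-\hat M\|\le 3\epsilon\,a^\top Ma$, and then invoke Theorem~\ref{thn:davis-kahan-rank-1}. If anything, your write-up is more careful, as you explicitly verify the eigengap $|\lambda_1-\hat\lambda_2|\ge a^\top Ma$ via the Rayleigh quotient and address the sign and degenerate cases, details the paper leaves implicit.
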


\begin{proof}
	Let $t = a^{\top}Ma$ and $\widehat{M} = taa^{\top} = \Id_1M\Id_1$. Then we have that $M   =  (\Id_1+\Id_{-1})M(\Id_1+\Id_{-1}) = \widehat{M} + \Id_1M\Id_{-1} + \Id_{-1}M\Id_1 + \Id_{-1}M\Id_{-1} $. Therefore by the assumption we have $\norm{\widehat{M} - M}\le 3\epsilon t$. Therefore using Theorem~\ref{thn:davis-kahan-rank-1} with $\Sigma = M$ and $\hat{\Sigma}= \widehat{M}$ we obtain that the top eigenvector of $M$ is $3\sqrt{2}\epsilon$-close to $M$ in Euclidean distance. 
\end{proof}

\section{Missing proofs in Section~\ref{sec:preliminaries}}
\label{sec:prelim_appendix}
\label{app:prelim_appendix}

\begin{proof}[Proof of Theorem~\ref{thm:inequality_sos}]
	Suppose $\cA=\{f_1\ge 0,\ldots,f_m\ge 0\}$ with $f_1,\ldots,f_m\in \R[x]$ and $\deg(f)\le d$. Let $m = |\cA|$ and let $\ell = \max_{i \in [m]} \deg(f_i)$. We use an optimization algorithm to find such pseudo-distribution $D$. 
	Here the variables are all the moments $\pE_{D(x)}\left[\prod_{i\in S} x_i\right]$ for all $S$ with $|S|\le d$.  The constraints are linear constraints over these variables
	
	\begin{displaymath}
	\Set{\pE_D \left(\prod_{i\in S}f_i\right) h^2 \ge 0 \Mid
		S\subseteq [m],\,
		h\in \R[x],\,
		|S|\ell + \deg\left(h^2\right) \le d\}}\mper
	\end{displaymath}
	We can separate over these constraints in time $(n+m)^{O(d)}$.
	Indeed, for every fixed choice of $S$, the set of constraints of the form $\pE_D \left(\prod_{i\in S}f_i\right) h^2 \ge 0$ may be written as a single matrix constraint $\pE_D  \left(\prod_{i\in S}f_i\right) \dyad{\left[(1,x)^{\otimes d-|S|\ell}\right]} \succeq 0$, with the equivalence established by mapping $h$ to a vector of coefficients.
	Therefore, by \pref{thm:sos} and the equivalence of optimization and separation \cite{DBLP:journals/combinatorica/GrotschelLS81}, we can find moments $\pE_{D(x)}(1,x)^{\otimes d}$ of a degree-$d$ pseudo-distribution in time $(n+m)^{O(d)}$.
	A standard multivariate polynomial interpolation argument allows us to recover the underlying pseudo-distribution $D$ \cite{zbMATH06125965}.
\end{proof}

\begin{proof}[Proof of Lemma~\ref{lem:soundness_sos}]
	Suppose $D$ is a degree-$d$ pseudo-distribution.
	Let $\cA = \{f_1 \ge 0, \dots, f_n \ge 0\}$ and let $\cB = \{g_1 \ge 0, \dots, g_m \ge 0\}$.
	Moreover, $\cA \vdash_{\ell'} \cB$ means that for every constraint $\{g_j \ge 0\}$ in $\cB$, there are sums-of-squares polynomials $p_{j,S}$ for each $S \subseteq [n]$ such that $g_j = \sum_{S \subseteq [n]} p_{j,S}\prod_{i \in S}f_i$ where each summand $p_{j,S}\prod_{i \in S}f_i$ has degree at most $\ell'$.
	
	Consider some set $T \subset [m]$ and some sum-of-squares polynomial $h'$ such that $|T|\ell\ell' + \deg h' \le d$.
	We would like to show that \begin{equation}\pE_D \left(\prod_{j \in T}g_j\right)h' \ge 0 \label{eqn:76}\mper\end{equation}
	$\cA \vdash_{\ell'} \cB$ means that for every constraint $\{g_j \ge 0\}$ in $\cB$, there are sums-of-squares polynomials $p_{j,S}$ for each $S \subseteq [n]$ such that $g_j = \sum_{S \subseteq [n]} p_{j,S}\prod_{i \in S}f_i$ where each summand $p_{j,S}\prod_{i \in S}f_i$ has degree at most $\ell'$. Substituting $g_j$ in equation~\eqref{eqn:76}, it suffices to show that 
	\begin{equation}
	\pE_D \left(\prod_{j \in T}\left(\sum_{S \subseteq [n]} p_{j,S}\prod_{i \in S}f_i\right)\right)h' \ge 0 \mper \label{eqn:77}
	\end{equation}
	We expand the outer product over $T$ and see that the polynomial inside the pseudo-expectation is in fact a sum of many polynomials of the form $q_1\dots q_{|T|} \left(\prod_{i\in W} f_i\right)h'$, where each of the $q_i$ is equal to $p_{j,S}$ for some $j\in T$ and some $S\subset[n]$, and where $W\subset S$ is a multi-set, with $\deg\left(q_1\dots q_{|T|} \left(\prod_{i\in W} f_i\right)\right) \le |T|\ell$.
	Moreover, we note that since each $q_i$ is a sum of squares, $q_1\dots q_{|T|} \left(\prod_{i\in W} f_i\right)$ can be written as $q \prod_{i\in W'} f_i$ where $q$ is a sum of squares and $W'$ is the set of elements that appear in $W$ an odd number of times.
	We calculate
	\begin{align*}
	  \deg(q)
	  &= \deg(q_1\dots q_{|T|}) + \deg\bigg(\prod_{i \in W \setminus W'} f_i\bigg)
	\\&\le (|T|\ell - |W|) + (|W| - |W'|)\ell'
	\\&\le |T|\ell\ell' - |W'|\ell' \mcom
	\end{align*}
	where we used $\deg\left(q_1\dots q_{|T|} \left(\prod_{i\in W} f_i\right)\right)\le |T|\ell$ in combination with $\deg\left(\prod_{i \in W} f_i\right) \ge |W|$, along with the fact that therefore $|W| \le |T|\ell$.
	Therefore since $qh'$ is a sum of squares and $|W'|\ell' + \deg(qh') \le d$, by the definition of $D \models_{\ell} \cA$, we have $\pE\left[q_1\dots q_{|T|} \left(\prod_{i\in W} f_i\right)h'\right] = \pE\left[qh'\left(\prod_{i\in W'} f_i\right)\right]\ge 0$. Then by linearity of pseudo-expectation we prove equation~\eqref{eqn:77}, which completes the proof. 
\end{proof}

\begin{proof}[Proof of Lemma~\ref{lem:completeness_sos}]
	We prove the contrapositive.
	Let $\cA = \{ f_1\ge 0,\ldots,f_m \ge 0\}$.
   Assume that $\cA\not \vdash_{d} \{g\ge -\epsilon\}$ for some $\epsilon > 0$. %
	
	A polynomial $h$ satisfies $\cA \vdash_{d} \{h \ge 0\}$ precisely when $h = \sum_{S \subset [m]} p_S \prod_{i \in S} f_i$ for some sum-of-squares polynomials $p_S$ where the degree of each summand is at most $d$.
	We observe that $\cH= \{h \mid \cA \vdash_{d} \{h \ge 0\}\}$ is a  convex cone. Let $\bar{\cH}$ be its closure. We argue that $g\not\in \bar{\cH}$. 
	
	Indeed, if there exists a sequence of polynomial $g_k$ that converges to $g$ (in coefficients), then there exists a sufficiently large $K$ such that for $k\in K$, $\{\|x\|^2 \le B\}\vdash g_k(x)-g(x)\le \epsilon/2$.  Therefore $\cA \vdash g + \epsilon = g_k + (g-g_k+\epsilon) \ge 0$. This contradicts our assumption. %
	Then by the hyperplane separation theorem, there exists a linear functional $L$ over the space of all degree-$d$ polynomials such that $L [g] < 0$ and $L \left[h\right] >   0$ for all $h\in \cH$. %
	Since $1\in \cH$, we have $L(1) \ge 0$. We can scale $L$ properly so that $L(1) = 1$ and therefore $L$ defines a pseudo-distribution $D$. 
	In particular, $D$ is a pseudo-distribution such that $D\models_\ell \cA$ because $(\prod_{i\in S} f_i)h\in \cH$ holds whenever $|S|\ell + \deg(h)\le d$ and thus $\pE_D\left[(\prod_{i\in S} f_i)h\right]\ge 0$. However, we also have $D\not\models_{\ell'} \cB$ since $L(g) < 0$. 
\end{proof}

\begin{proof}[Proof of Lemma~\ref{lem:matrix_sos_soundness}]\TMnote{maybe expand this a bit}Let $\cA = \{f_1\ge 0,\dots, f_k\ge 0\}$. For any vector $z\in \R^p$, we prove $\inner{z,\pE\left[M\right]z} = \pE\left[z^{\trans}Mz\right]\ge 0$. Indeed, $\cA\cal\vdash_{\ell} M$ implies the existence of $q_i,v_i$'s that satisfy equation~\eqref{eqn:def:matrix_sos}, where $q_i$ can be written as $q_i = \sum_{S}p_{i,S}\prod_{j\in S} f_j$. Therefore, $\pE\left[z^{\trans}Mz\right] = \pE\left[\sum_i\sum_{S}\inner{z,v_i(x)}^2p_{i,S}\prod_{j\in S} f_j\right]$. For fixed $i,S$, we have that $\deg(\inner{z,v_i(x)}^2) = 2\deg(v_i)$, and $|S|\le \ell'-2\deg(v_i)$. Therefore, we have $|S|\ell + 2\deg(v_i)\le \ell \ell'\le d$, and by $D\models_{\ell}\cA$ we obtain that  $\pE\left[\sum_i\sum_{S}\inner{z,v_i(x)}^2p_{i,S}\prod_{j\in S} f_j\right]\ge 0$, which completes the proof. %
\end{proof}

\end{document}

